\newcommand{\doctitle}{The Mass Shell of the Nelson Model without Cut-Offs}
\def\arxiv{1}		
\def\todos{0}		
\newcommand{\ifarxiv}[2]{\if\arxiv 1 {#1}\else{#2}\fi}
\newtheorem{theorem}{Theorem}[section]
\newtheorem{lemma}[theorem]{Lemma}
\newtheorem{corollary}[theorem]{Corollary}
\newtheorem{proposition}[theorem]{Proposition}
\newtheorem{definition}[theorem]{Definition}
\newtheorem{remark}[theorem]{Remark}
\renewcommand{\cal}[1]{{\mathcal{#1}}}                    
\newcommand{\bb}[1]{{\mathbb{#1}}}                      
\newcommand{\id}[1]{\mathbbm{1}_{#1}}                   
\newcommand{\charf}[1]{\mathbbm{1}_{#1}}                
\newcommand{\braket}[1]{\left\langle #1\right\rangle}  
\newcommand{\ppref}[1]                                  
{%
  \ifthenelse{\getpagerefnumber{#1}=\thepage}%
  {}{{\tiny{p.\ifpdf\pageref*{#1}\else\pageref{#1}\fi}}}%
}
\newcommand{\pref}[1]{\ref{#1}
}               
\newcommand{\eqn}[1]{(\pref{#1})}                       
\newcommand{\thm}[1]{Theorem \pref{#1}}                 
\newcommand{\lem}[1]{Lemma \pref{#1}}                   
\newcommand{\cor}[1]{Corollary \pref{#1}}               
\newcommand{\dfn}[1]{Definition \pref{#1}}				
\newcommand{\sct}[1]{Section \pref{#1}}                 
\newcommand{\slice}[3]{{#1|_{#3}^{#2}}}					
\newcommand{\spec}[1]{\operatorname{Spec}\left(#1\right)}
\newcommand{\gap}[1]{\operatorname{Gap}\left(#1\right)} 
\newcommand{\restrict}[1]{\upharpoonright{#1}}			
\renewcommand{\Im}{\operatorname{Im}}
\renewcommand{\Re}{\operatorname{Re}}
\begin{document}

\title{\doctitle}

\author{S. Bachmann\thanks{svenbac@math.ucdavis.edu}, D.-A. Deckert\thanks{deckert@math.ucdavis.edu}, A. Pizzo\thanks{pizzo@math.ucdavis.edu}\\
  \small
  Department of Mathematics,
  University of California,\\
  \small
  One Shields Avenue,
  Davis, CA 95616, USA}

\date{\small\today}

\maketitle

\begin{abstract}
The massless Nelson model describes non-relativistic, spinless quantum particles interacting with a relativistic, massless, scalar quantum field. The interaction is linear in the field. We analyze its one particle sector. First, we construct the renormalized mass shell of the non-relativistic particle for an arbitrarily small infrared cut-off that turns off the interaction with the low energy modes of the field. No ultraviolet cut-off is imposed. Second, we implement a suitable  Bogolyubov  transformation of the Hamiltonian in the infrared regime. This transformation depends on the total momentum of the system and is non-unitary as the infrared cut-off is removed. For the transformed Hamiltonian we construct the mass shell in the limit where both the ultraviolet and the infrared cut-off are removed. Our approach is constructive and leads to explicit expansion formulae which are amenable to rigorously control the S-matrix elements.
\begin{center}
\begin{tabular}{rp{4in}}
 \textbf{Keywords:} & Multiscale Perturbation Theory, Nelson Model, Renormalization, Ultraviolet Divergence, Infrared Catastrophe.\\
 \textbf{Grants:} & D.-A.D. gratefully acknowledges funding by the DAAD. S.B. is supported by the NFS Grant \#DMS-0757581. A.P. is supported by the NSF grant \#DMS-0905988.
\end{tabular}
\end{center}
\end{abstract}

\tableofcontents

\makeatletter
\providecommand\@dotsep{5}
\makeatother
\listoftodos\relax

\if\arxiv 1
  \pagestyle{myheadsfoots}
\fi

\section{Introduction and Definition of the Model}

We study the mass shell of a non-relativistic spinless quantum particle interacting with the quantized field of relativistic, massless, scalar bosons, where the interaction is linear in the field. This model originated as an effective description of the  interaction between non-relativistic nucleons and mesons. It is usually referred to as `Nelson model' since E. Nelson (see \cite{nelson_interaction_1964}) showed how to remove the ultraviolet cut-off that turns off the interaction with the high frequency modes of the field. The limiting Hamiltonian is  defined  starting from the quadratic form associated with the so-called Gross transformed Hamiltonian. The latter is obtained from the Nelson Hamiltonian through a unitary dressing transformation \cite{gross_particle-like_1962} after subtracting a constant which is divergent in the ultraviolet (UV) limit. This means that only a ground state energy renormalization is necessary in order to define the local interaction.  This model for only one nucleon is known as the  one particle  sector of the translation invariant Nelson model.

In recent years this model has been extensively studied with regard to quantum electrodynamics (QED). In fact, when the bosons are massless particles (i.e. `scalar photons') the model can be seen as a scalar version of the effective theory (non-relativistic QED) that describes a non-relativistic electron interacting with the quantized radiation field. In the study of the translation invariant, massless Nelson model an ultraviolet cut-off of the order of the rest mass energy of the electron is usually imposed. Otherwise relativistic corrections to the electron dynamics and electron-positron pair creation should be taken into account. In spite of these simplifications, the massless Nelson model gives non-perturbative insights on the infrared properties of QED.  

It is an interesting mathematical problem to clarify whether the results concerning the infrared region, which have been obtained in presence of an ultraviolet cut-off, can be extended to the `renormalized'  Nelson model (i.e. without an ultraviolet cut-off). As presented in \cite{hirokawa_ground_2005} these questions do not in general have a straightforward answer.

For the one particle sector of the renormalized Nelson model the study of the mass shell was carried out by Cannon few years after the appearance of Nelson's paper. In \cite{cannon_quantum_1971} it is proven that a perturbed mass shell exists for sufficiently small values of the coupling constant $g$ and in the spectral region $(E,P)$ for $|P|<1$. Here, $E$ and $P$ are the spectral variables of the Hamiltonian and of the total momentum operator, respectively. In fact, starting from translation invariance, one considers the natural decomposition of the Hilbert space on the spectrum of the total momentum operator and studies the existence of the ground state of the fiber Hamiltonians $H_P$ for $|P|<1$. In his paper, Cannon relies on the spectral gap of the fiber Hamiltonians induced by a meson mass.  The mass shell of the nucleon is then defined by analytic perturbation theory of the ground state eigenvector fiber by fiber  for $|P|<1$ and sufficiently small $g$. The interaction is in fact a small perturbation of type B -- i.e. in the form sense -- with respect to the free Hamiltonian. For this type of perturbation it is in principle possible to control the perturbed spectral projection and to give a meaning to the formal expansion of the ground state vector of the perturbed Hamiltonian. The price for this is a very cumbersome formula (see \cite{kato_perturbation_1995}) making his result almost intractable for applications to scattering theory. As a matter of fact, no explicit expression for the perturbed mass shell is provided in \cite{cannon_quantum_1971}.

Finally, for the massless Nelson model, the result concerning the existence of the mass shell was extended by Fr\"ohlich to arbitrarily small infrared cut-off with no restriction on the coupling constant. The method used in \cite{fraehlich_infrared_1973} is based on a lattice approximation of the boson momentum space which is eventually removed, a technique inspired by earlier works of Glimm and Jaffe. However, Fr\"ohlich's expression for the fiber eigenvectors is only implicit. In recent years the $P$-dependence of the ground state energy in the  massless Nelson model and in non-relativistic QED has been studied in presence of an ultraviolet regularization. \cite{bach_renormalized_2007} and \cite{chen_infrared_2008} use the isospectral renormalization group whereas \cite{abdesselam_analyticity_2010} relies on statistical mechanics methods. \\

\paragraph{We accomplish three main goals:}
(1) By using a multiscale technique for small values of the coupling constant and for a fixed infrared cut-off $\kappa>1$ (in units where the electron mass $m$, the Planck's constant $\hbar$, and the speed of light $c$ all equal one) we first derive the results by Cannon for the massless Nelson model.  Rather than using regular perturbation theory for quadratic forms we employ a multiscale technique for operators inspired by \cite{pizzo_one-particle_2003}. Our construction  yields  more explicit expressions for the `renormalized' mass shell. In particular, they  are amenable to rigorously control the S-matrix elements under the removal of the UV cut-off and to compare them with physicists' perturbation formulae.

(2) We then show how to construct the mass shell for the renormalized model when the interaction is extended to frequency ranges down to an arbitrarily small infrared cut-off. This result at a small but fixed value of the coupling constant $g$ is beyond the reach of the method employed by Cannon \cite{cannon_quantum_1971} because the spectral gap shrinks to zero as the infrared cut-off is removed.

(3) The final part of our analysis concerns the properties of the mass shell in the infrared limit where it is well-known that no \emph{proper} mass shell is present, a fact usually referred to as the \emph{infrared catastrophe}. Following the strategy developed in \cite{pizzo_one-particle_2003}, we implement a suitable Bogolyubov transformation for the field variables corresponding to frequencies below the threshold $\kappa>1$. In contrast to Gross' dressing  this transformation depends on the $P$-fiber and is not unitary in the infrared limit. Fiber by fiber,  we obtain a transformed Hamiltonian where the interaction is not linear in the field anymore both because of the Gross transformation in the UV region (frequencies larger than $\kappa$)  and because of the infrared dressing transformation  (frequencies smaller than $\kappa$). Each  transformed  Hamiltonian has a ground state in the infrared limit, the construction of which requires a delicate control of the interplay between high and low frequency modes. The control of the mass shell associated with these \emph{unphysical} fiber Hamiltonians is crucial to analyze the infraparticle behavior of the renormalized electron in the massless Nelson model and to provide an asymptotic expansion for the scattering amplitudes in `Compton scattering', free from both ultraviolet and infrared divergences.

\paragraph{Definition of the model.}   The Hilbert space of the model is  
\begin{align*}
  \cal H:=L^2(\bb R^3,\bb C;dx)\otimes \cal F(h),
\end{align*}
  where $\cal F(h)$ is the Fock space of scalar bosons
\begin{align*}
  \cal F(h):=\bigoplus_{j=0}^{\infty} \cal F^{(j)},
  && \cal F^{(0)}:=\bb C,
  && \cal F^{j\geq 1}:={\bigodot}_{l=1}^j h, && h:=L^2(\bb R^3,\bb C;dk),
 \end{align*}
 where $\odot$ denotes the symmetric tensor product. 
  Let $a(k),a^*(k)$ be the usual Fock space annihilation and creation operators satisfying the canonical commutation relations (CCR)
\begin{align*}
  [a(k),a^*(l)]=\delta(k-l), \qquad [a(k),a(l)]=[a^*(k),a^*(l)]=0.
\end{align*}
The kinematics of the system is described by: (a)  The position $x$ and the momentum $p$ of the non-relativistic particle that satisfy the Heisenberg commutation relations. (b) The  scalar field $\Phi$ and its conjugate momentum where  
 \begin{align*}
 \Phi(y):= \int  dk\;\rho(k)\left(a(k)e^{iky}+a^*(k)e^{-iky}\right), \qquad   \rho(k):=\frac{1}{(2\pi)^{3/2}}\frac{1}{\sqrt{2\omega(k)}}, \qquad \omega(k):=|k|.
 \end{align*}
 The dynamics is generated  by the Hamiltonian of the Nelson model, 
\begin{align*}
 \slice{H}{\Lambda}{\tau} := \frac{p^2}{2} + H^f +g\slice{\Phi}{\Lambda}{\tau} (x)
\end{align*}
 where
\begin{align*}
 H^f:=\int dk \;\omega(k)a^*(k)a(k)
\end{align*}
is the free field Hamiltonian,  and
\begin{align}\label{eqn:interaction}
  g\slice{\Phi}{\Lambda}{\tau}(x):=g \int_{\cal B_{\Lambda}  \setminus \cal B_{\tau}} dk\;\rho(k)\left(a(k)e^{ikx}+a^*(k)e^{-ikx}\right), \qquad  \rho(k):=\frac{1}{(2\pi)^{3/2}}\frac{1}{\sqrt{2\omega(k)}},
\end{align}
is  the interaction term for $0\leq \tau<\Lambda<\infty$;
here $g\in \mathbb{R}$ is the coupling  constant and for the domain of integration we use the notation $\cal B_{\sigma}:=\{k\in\bb R^3\,|\, |k|< \sigma\}$ for any $\sigma>0$.  Note that for $\Lambda=\infty$ the formal expression of the interaction $\slice{\Phi}{\Lambda}{\kappa}$ is not a well-defined operator on $\cal H$ because the form factor $\rho(k)$ is not square integrable.  

We briefly  recall  some well-known facts about this model. For $0\leq\tau<\Lambda<\infty$  the operator $\slice{H}{\Lambda}{\tau}$ is self-adjoint and its domain coincides with the one of $H_{0}:=\frac{p^2}{2} + H^f$ (see also Proposition 1.1 below). The total momentum operator of the system is
\begin{align*}
  P:=p+ P^f:= p+\int dk\;k\,a^*(k)a(k)
\end{align*}
where $P^f$ is the field momentum. Due to translational invariance of the system the Hamiltonian and the total momentum operator commute. Hence, the Hilbert space $\cal H$ can be decomposed on the joint spectrum of the three components of the total momentum operator, i.e.
\begin{align*}
 \cal H=\int^{\oplus}dP\;\cal H_{P}
\end{align*}
where $\cal H_{P}$ is a copy of the Fock space $\cal F$ carrying the (Fock) representation corresponding to annihilation and creation operators
\begin{align*}
  b(k):=a(k)e^{ikx}, && b^*(k):=a^*(k)e^{-ikx}.
\end{align*}
We will use the same symbol $\cal F$ for all Fock spaces. The fiber Hamiltonian can be expressed as
 \begin{align*}
 \slice{H_{P}}{\Lambda}{\tau}:=\frac{1}{2}\left(P-P^f\right)^2 + H^f + g\int_{\cal B_{\Lambda} \setminus \cal B_{\tau}} dk\;\rho(k)\left(b(k)+b^*(k)\right).
\end{align*}

By construction, the fiber Hamiltonian maps its domain in $\cal H_{P}$ into $\cal H_{P}$. Finally, for later use we define 
\begin{align}\label{eqn:free and diff hamiltonian}
 H_{P,0}:=\frac{(P-P^f)^2}{2}+H^f, && \slice{\Delta H_{P}}{\Lambda}{\tau}:=\slice{H_{P}}{\Lambda}{\tau}-H_{P,0}.
\end{align}

\paragraph{The Gross transformation.}   We use a frequency
\[
1<\kappa<2
\]
to separate  the ultraviolet and the infrared regimes. The renormalization of the Hamiltonian  must cure the divergence which appears in the second order correction to the ground state energy as $\Lambda \to \infty$.  
This logarithmically divergent term 
\begin{align}\label{eqn:selfenergy}
\slice{V_{\mathrm{self}}}{\Lambda}{\kappa}:=-\frac{g^2}{[2(2\pi)^3]}\,\int_{\cal B_{\Lambda}\setminus \cal B_{\kappa}}\,dk\, \frac{1}{|k|\left[\frac{|k|^2}{2}+|k|\right]}
\end{align}
can be separated from the rest of the Hamiltonian by a Bogolyubov transformation $e^{-\slice{T}{\Lambda}{\kappa}}$, acting on all frequencies above $\kappa$, whose skew-adjoint generator is given by
\begin{align}\label{def-beta(k)}
\slice{T}{\Lambda}{\kappa}:=\int_{\cal B_{\Lambda} \setminus \cal B_{\kappa}} dk\; \beta(k)\left(b(k)-b^*(k)\right),\qquad \beta(k):=-g\frac{\rho(k)}{\frac{|k|^2}{2}+\omega(k)}.
\end{align}
Note that for any $1<\kappa<\Lambda\leq\infty$, the operators $\slice{T}{\Lambda}{\kappa}$, $\slice{T^*}{\Lambda}{\kappa}$ are well-defined on $D(H_{P,0})$.
For $1<\kappa<\Lambda<\infty$ the Hamiltonian $\slice{H_{P}}{\Lambda}{\kappa}$ transforms as follows:
\begin{align}
  \slice{H'_{P}}{\Lambda}{\kappa}&:=e^{\slice{T}{\Lambda}{\kappa}}\slice{H_{P}}{\Lambda}{\kappa}e^{-\slice{T}{\Lambda}{\kappa}}-\slice{V_{\mathrm{self}}}{\Lambda}{\kappa}\label{eqn:Vself renorm}\\
\begin{split}
 &=\frac{1}{2}\left(P-P^f\right)^2 + H^f+\frac{1}{2}[(\slice{B}{\Lambda}{\kappa})^2+(\slice{B^*}{\Lambda}{\kappa})^2]+\slice{B^*}{\Lambda}{\kappa}\cdot \slice{B}{\Lambda}{\kappa}\\
 &\quad-(P-P^f)\cdot \slice{B}{\Lambda}{\kappa}-\slice{B^*}{\Lambda}{\kappa}\cdot (P-P^f)
\end{split}\label{eqn:G trafo Hamiltonain}
\end{align}
where
\begin{align}\label{eqn:B}
\slice{B}{\Lambda}{\kappa}:= \int_{\cal B_{\Lambda}\setminus \cal B_{\kappa}} dk\;k\beta(k)b(k).
\end{align}
It is important to note that the operator equality \eqn{eqn:G trafo Hamiltonain} holds on $D(H_{P,0})$ as proven in \cite[Lemma 3]{nelson_interaction_1964}.
 In the following sections we will study the renormalized Hamiltonian
\begin{equation}
  \slice{H'_{P}}{\Lambda}{\kappa}+g\slice{\Phi}{\kappa}{\tau}
  \end{equation}
   
 The proofs of \cite[Lemma 2 and 3]{nelson_interaction_1964} imply:
\begin{proposition}
For $0\leq\tau<\Lambda<\infty$, the operators $\slice{H_{P}}{\Lambda}{\tau}$ and $\slice{H'_{P}}{\Lambda}{\kappa}+g\slice{\Phi}{\kappa}{\tau}$ are self-adjoint and their domain coincide with the one of $H_{P,0}$.
\end{proposition}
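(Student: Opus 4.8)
The plan is to prove both self-adjointness statements by the Kato--Rellich theorem, with $H_{P,0}$ as the unperturbed operator. Recall first that $H_{P,0}=\tfrac12(P-P^f)^2+H^f$ is self-adjoint on $D(H_{P,0})$: its two summands are non-negative and commute (both are functions of the field variables), and a convenient common core is the set of finite-particle vectors with Schwartz $n$-particle wave functions. Since $\tfrac12(P-P^f)^2\ge 0$ we have the form inequality $H^f\le H_{P,0}$, which will be the workhorse for controlling all perturbations.

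For the interaction I would establish the standard relative bound. Writing $f_\tau^{\Lambda}:=g\rho\,\charf{\cal B_{\Lambda}\setminus\cal B_{\tau}}$, the pointwise behaviour $\rho(k)\sim|k|^{-1/2}$ together with the three-dimensional phase-space factor $|k|^2\,d|k|$ shows that both $f_\tau^{\Lambda}$ and $f_\tau^{\Lambda}/\sqrt\omega$ lie in $L^2(\bb R^3)$ as soon as $\Lambda<\infty$ --- crucially also at the infrared endpoint $\tau=0$, because the singularity $|k|^{-2}$ of $|f_\tau^{\Lambda}/\sqrt\omega|^2$ is integrable in dimension three. The elementary Fock-space bounds $\|a(f)\psi\|\le\|f/\sqrt\omega\|_2\,\|(H^f)^{1/2}\psi\|$ and $\|a^*(f)\psi\|\le\|f/\sqrt\omega\|_2\,\|(H^f)^{1/2}\psi\|+\|f\|_2\,\|\psi\|$ (valid verbatim for $b,b^*$, which obey the same CCR) then give $\|g\slice{\Phi}{\Lambda}{\tau}\psi\|\le C\big(\|(H^f)^{1/2}\psi\|+\|\psi\|\big)$. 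Using $H^f\le H_{P,0}$ one has $\|(H^f)^{1/2}\psi\|^2\le\|\psi\|\,\|H_{P,0}\psi\|$, and an elementary interpolation upgrades this to $\|g\slice{\Phi}{\Lambda}{\tau}\psi\|\le\varepsilon\|H_{P,0}\psi\|+C_\varepsilon\|\psi\|$ for every $\varepsilon>0$. This is relative bound zero, so Kato--Rellich yields that $\slice{H_P}{\Lambda}{\tau}=H_{P,0}+g\slice{\Phi}{\Lambda}{\tau}$ is self-adjoint on $D(H_{P,0})$. The same estimate, now on the shell $\cal B_{\kappa}\setminus\cal B_{\tau}$, shows that $g\slice{\Phi}{\kappa}{\tau}$ is $H_{P,0}$-bounded with relative bound zero.

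For the second operator I would invoke the Gross transformation together with \cite[Lemma 3]{nelson_interaction_1964}. Since $1<\kappa<\Lambda<\infty$, the generator $\slice{T}{\Lambda}{\kappa}$ has a bounded form factor $\beta$ with $\beta,\beta/\sqrt\omega\in L^2$, hence is (essentially) skew-adjoint and $e^{-\slice{T}{\Lambda}{\kappa}}$ is a genuine unitary of Weyl type. By the cited lemma the operator identity \eqn{eqn:G trafo Hamiltonain} holds on $D(H_{P,0})$; in particular $e^{\pm\slice{T}{\Lambda}{\kappa}}$ leave $D(H_{P,0})$ invariant. Consequently $\slice{H'_P}{\Lambda}{\kappa}=e^{\slice{T}{\Lambda}{\kappa}}\slice{H_P}{\Lambda}{\kappa}e^{-\slice{T}{\Lambda}{\kappa}}-\slice{V_{\mathrm{self}}}{\Lambda}{\kappa}$ is unitarily equivalent to the self-adjoint operator $\slice{H_P}{\Lambda}{\kappa}$ up to the finite constant $\slice{V_{\mathrm{self}}}{\Lambda}{\kappa}$, hence self-adjoint on $D(H_{P,0})$. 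As $\slice{H'_P}{\Lambda}{\kappa}$ and $H_{P,0}$ are both self-adjoint on the same domain, their graph norms are equivalent (closed graph theorem), so the relative bound zero of $g\slice{\Phi}{\kappa}{\tau}$ with respect to $H_{P,0}$ transfers to $\slice{H'_P}{\Lambda}{\kappa}$. A final application of Kato--Rellich gives self-adjointness of $\slice{H'_P}{\Lambda}{\kappa}+g\slice{\Phi}{\kappa}{\tau}$ on $D(H_{P,0})$.

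The only genuinely non-elementary ingredient is the cited Lemma 3 of \cite{nelson_interaction_1964}: the verification that conjugating $H_{P,0}$ by the Weyl unitary $e^{-\slice{T}{\Lambda}{\kappa}}$ produces exactly the remainder displayed in \eqn{eqn:G trafo Hamiltonain}, with each of the bounded terms $\tfrac12[(\slice{B}{\Lambda}{\kappa})^2+(\slice{B^*}{\Lambda}{\kappa})^2]$, $\slice{B^*}{\Lambda}{\kappa}\cdot\slice{B}{\Lambda}{\kappa}$ and the cross terms $(P-P^f)\cdot\slice{B}{\Lambda}{\kappa}$, $\slice{B^*}{\Lambda}{\kappa}\cdot(P-P^f)$ under control on $D(H_{P,0})$ --- here one exploits that $|P^f|\le H^f$ (because $\omega(k)=|k|$), so that both $\slice{B}{\Lambda}{\kappa}$ and $(P-P^f)$ are relatively $H_{P,0}^{1/2}$-bounded --- and that no domain is lost in the process. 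Everything else is the routine Fock-space estimate followed by Kato--Rellich; the one point worth flagging is that the infrared endpoint $\tau=0$ causes no trouble, thanks to the three-dimensional momentum measure.
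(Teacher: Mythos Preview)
Your argument is correct and is precisely the content of the paper's own treatment: the paper does not give an independent proof but simply cites \cite[Lemmas~2 and~3]{nelson_interaction_1964}, and what you have written is essentially a self-contained unpacking of those two lemmas (the Kato--Rellich bound for $g\slice{\Phi}{\Lambda}{\tau}$, and the domain preservation under the Gross unitary) together with the closed-graph transfer of the relative bound from $H_{P,0}$ to $\slice{H'_P}{\Lambda}{\kappa}$.
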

By \cite[Main Theorem]{nelson_interaction_1964} there exists an ultraviolet renormalized Hamiltonian:
\begin{theorem}\label{thm:nelson}
 For all $\tau\geq 0$, there is a unique self-adjoint operator $\slice{H_P}{\infty}{\tau}$ on $\cal F$ that generates the unitary group defined by
\begin{align*}
 e^{-it\slice{H_P}{\infty}{\tau}}:=\operatorname{s-lim}\limits_{\Lambda\to\infty}e^{-it(\slice{H_{P}}{\Lambda}{\tau}-\slice{V_{\mathrm{self}}}{\Lambda}{\kappa})}, \quad t\in\bb R.
\end{align*} 
 The domain of $\slice{H_P}{\infty}{\tau}$ is a dense subset of the domain of $H_{P,0}^{1/2}$, and $\slice{H_P}{\infty}{\tau}$ is bounded from below.
\end{theorem}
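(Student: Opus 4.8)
\emph{The plan} is to obtain the fiber statement from Nelson's renormalization \cite{nelson_interaction_1964}, organized around the Gross transformation \eqn{eqn:Vself renorm}, which already isolates the ultraviolet divergence into the explicit constant $\slice{V_{\mathrm{self}}}{\Lambda}{\kappa}$ and reduces the limit $\Lambda\to\infty$ to the behaviour of $\slice{H'_P}{\Lambda}{\kappa}$ (the operator studied in the remainder of the paper) together with the $\Lambda$-independent, ultraviolet-harmless term $g\slice{\Phi}{\kappa}{\tau}$. Take first $0\le\tau<\kappa$ (for $\tau\ge\kappa$ one replaces $\slice{T}{\Lambda}{\kappa}$ by $\slice{T}{\Lambda}{\tau}$ and notes that $\slice{V_{\mathrm{self}}}{\Lambda}{\kappa}-\slice{V_{\mathrm{self}}}{\Lambda}{\tau}$ is a finite, $\Lambda$-independent constant; the argument is otherwise identical). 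Splitting the interaction over $\cal B_\Lambda\setminus\cal B_\tau$ into its parts on $\cal B_\kappa\setminus\cal B_\tau$ and $\cal B_\Lambda\setminus\cal B_\kappa$ gives $\slice{H_P}{\Lambda}{\tau}=\slice{H_P}{\Lambda}{\kappa}+g\slice{\Phi}{\kappa}{\tau}$, and since $\slice{T}{\Lambda}{\kappa}$ acts only on the modes $\kappa<|k|<\Lambda$ it commutes with $g\slice{\Phi}{\kappa}{\tau}$. Hence, by \eqn{eqn:Vself renorm}, $e^{\slice{T}{\Lambda}{\kappa}}(\slice{H_P}{\Lambda}{\tau}-\slice{V_{\mathrm{self}}}{\Lambda}{\kappa})e^{-\slice{T}{\Lambda}{\kappa}}=\slice{H'_P}{\Lambda}{\kappa}+g\slice{\Phi}{\kappa}{\tau}$ on $D(H_{P,0})$, and since both sides are self-adjoint with this domain (Proposition~1.1) and $e^{\slice{T}{\Lambda}{\kappa}}$ is unitary, this is a unitary equivalence, so that
\begin{align*}
 e^{-it(\slice{H_P}{\Lambda}{\tau}-\slice{V_{\mathrm{self}}}{\Lambda}{\kappa})}=e^{-\slice{T}{\Lambda}{\kappa}}\,e^{-it(\slice{H'_P}{\Lambda}{\kappa}+g\slice{\Phi}{\kappa}{\tau})}\,e^{\slice{T}{\Lambda}{\kappa}},\qquad t\in\bb R.
\end{align*}

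Next I would pass to the limit factor by factor. Since $\beta(k)=\bigoh(|k|^{-5/2})$ one has $\beta\in L^2(\cal B_\infty\setminus\cal B_\kappa)$, so $\slice{T}{\Lambda}{\kappa}$ converges strongly on $D(H_{P,0})$ to $\slice{T}{\infty}{\kappa}$, which is essentially skew-adjoint and whose closure generates a unitary Weyl operator, and $e^{\pm\slice{T}{\Lambda}{\kappa}}\to e^{\pm\slice{T}{\infty}{\kappa}}$ strongly. The substantial input concerns the middle factor: by Nelson's estimates \cite[Lemmata 2 and 3]{nelson_interaction_1964} the family $\slice{H'_P}{\Lambda}{\kappa}$ is bounded below uniformly in $\Lambda$ and converges, as $\Lambda\to\infty$, in the strong resolvent sense to a self-adjoint operator $\slice{H'_P}{\infty}{\kappa}$ that is bounded below with form domain $D(H_{P,0}^{1/2})$; adding the symmetric term $g\slice{\Phi}{\kappa}{\tau}$, which is infinitesimally form-bounded relative to $H_{P,0}$, preserves all of this (KLMN, together with the stability of strong resolvent convergence under such perturbations), and Trotter--Kato then yields $e^{-it(\slice{H'_P}{\Lambda}{\kappa}+g\slice{\Phi}{\kappa}{\tau})}\to e^{-it(\slice{H'_P}{\infty}{\kappa}+g\slice{\Phi}{\kappa}{\tau})}$ strongly and uniformly for $t$ in compacts. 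All of Nelson's bounds use only $H^f\ge0$, the CCR and Cauchy--Schwarz, hence hold with $P$-independent constants, so the above is valid for every $P\in\bb R^3$.

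Combining the two steps --- and using that strong convergence is uniform on compact sets while Trotter--Kato supplies the local uniformity in $t$ --- $e^{-it(\slice{H_P}{\Lambda}{\tau}-\slice{V_{\mathrm{self}}}{\Lambda}{\kappa})}$ converges strongly, locally uniformly in $t$, to $U_P(t):=e^{-\slice{T}{\infty}{\kappa}}e^{-it(\slice{H'_P}{\infty}{\kappa}+g\slice{\Phi}{\kappa}{\tau})}e^{\slice{T}{\infty}{\kappa}}$, which is a one-parameter group of unitaries (the group law and unitarity pass to the limit, and strong continuity follows from the local uniformity of the convergence). By Stone's theorem $U_P$ has a unique self-adjoint generator $\slice{H_P}{\infty}{\tau}=e^{-\slice{T}{\infty}{\kappa}}(\slice{H'_P}{\infty}{\kappa}+g\slice{\Phi}{\kappa}{\tau})e^{\slice{T}{\infty}{\kappa}}$, which is bounded below, being unitarily equivalent to a bounded-below operator; that its domain is a dense subset of $D(H_{P,0}^{1/2})$ is part of Nelson's Main Theorem \cite{nelson_interaction_1964} and descends to the fiber unchanged.

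The one genuine obstacle, which I would not reprove, is the $\Lambda$-uniform lower bound and strong resolvent convergence of $\slice{H'_P}{\Lambda}{\kappa}$. This cannot be read off term by term from \eqn{eqn:G trafo Hamiltonain}: although $\beta\in L^2(\cal B_\infty\setminus\cal B_\kappa)$, one has $k\beta(k)\notin L^2(\cal B_\infty\setminus\cal B_\kappa)$ --- the integral $\int_{|k|>\kappa}|k\beta(k)|^2\,dk$ diverges logarithmically --- so $\slice{B}{\Lambda}{\kappa}$ of \eqn{eqn:B} does not converge individually and the terms $(\slice{B}{\Lambda}{\kappa})^2$, $(\slice{B^*}{\Lambda}{\kappa})^2$ fail to extend to bounded forms on $D(H_{P,0}^{1/2})$, even though the $\omega^{-1}$-weighted pairing $\int_{|k|>\kappa}|k\beta(k)|^2|k|^{-1}\,dk$ that controls the number-preserving piece $\slice{B^*}{\Lambda}{\kappa}\cdot\slice{B}{\Lambda}{\kappa}$ and the cross terms with $P-P^f$ is finite. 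The single divergent $c$-number generated by normal ordering is precisely cancelled by $\slice{V_{\mathrm{self}}}{\Lambda}{\kappa}$, but the remaining operator terms must be controlled jointly through a resolvent estimate --- this is Nelson's renormalization, which we invoke as a black box.
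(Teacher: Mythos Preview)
Your proposal is correct and follows the same route as the paper: the paper does not give an independent proof of this theorem but simply attributes it to \cite[Main Theorem]{nelson_interaction_1964}, noting afterwards that it will not be used and that, for $|P|<P_{\mathrm{max}}$ and small $|g|$, the conclusion is recovered by the multiscale analysis. Your write-up is in fact more detailed than the paper's own treatment --- you spell out the fiber decomposition, the role of the Gross transformation, the strong convergence of $e^{\pm\slice{T}{\Lambda}{\kappa}}$, and the KLMN/Trotter--Kato step for adding $g\slice{\Phi}{\kappa}{\tau}$ --- while correctly flagging that the genuine analytic input (the $\Lambda$-uniform bounds and resolvent convergence for $\slice{H'_P}{\Lambda}{\kappa}$) is Nelson's renormalization, invoked as a black box.
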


However, we will not make use of \thm{thm:nelson}. In the case of $|P|<P_{max}$ defined in (\ref{Pmax})   and for sufficiently small $|g|$ this result will follow from our multiscale analysis. 

\section{Main Results}\label{sec:main results}

 Since the particle is non-relativistic we restrict the total momentum to the ball
\begin{equation}
|P|\leq P_{\mathrm{max}}:=\frac{1}{4}.\label{Pmax}
\end{equation} 
  
\paragraph{The ultraviolet and infrared scaling.} We shall introduce a scaling that divides the interaction term into slices of boson momenta for which, step by step, we apply analytic perturbation theory. In the ultraviolet regime, this scaling is defined by the sequence
\begin{align*}
\sigma_n:=\kappa\beta^n, && 1<\beta, && n\in\bb N ,
\end{align*}
while in the infrared regime we use
\begin{align*}
\tau_m:=\kappa\gamma^m, && 0<\gamma<\frac{1}{2}, && m\in\bb N.
\end{align*}

With respect to these scalings we shall use the following notation for Hamiltonians and Fock spaces:

\begin{center}
\renewcommand{\arraystretch}{1.75}
\begin{tabular}{c c c c}
IR& UV& Hamiltonian & Fock space\\
\hline
\hline
$\kappa$ & $\sigma_n$ & $\slice{H'_{P}}{n}{0}:=\slice{H'_P}{\sigma_n}{\kappa}$ & $\slice{\cal F}{n}{0}:=\cal F(L^2(\cal B_{\sigma_n}\setminus\cal B_{\kappa}))$\\
\hline
$\tau_m$ & $\sigma_n$ & $\slice{H'_{P}}{n}{m}:=\slice{H'_{P}}{n}{0}+g\slice{\Phi}{\kappa}{\tau_m}$ & $\slice{\cal F}{n}{m}:=\cal F(L^2(\cal B_{\sigma_n}\setminus\cal B_{\tau_m}))$\\
\hline
\end{tabular}
\end{center}
\vskip.3cm

\noindent  The normalized vacuum vector in each of these Fock spaces is denoted by the same symbol $\Omega$. We shall exclusively use the index $n$ to denote the ultraviolet cut-off $\sigma_n$ and the index $m$ to denote the infrared cut-off $\tau_m$, e.g. 
\begin{align*}
 \slice{\cal F}{n}{n-1}:=\cal F(L^2(\cal B_{\sigma_n}\setminus\cal B_{\sigma_{n-1}})), && \slice{\cal F}{m-1}{m}:=\cal F(L^2(\cal B_{\tau_{m-1}}\setminus\cal B_{\tau_m})).
\end{align*}
For a vector $\psi$ in $\slice{\cal F}{n-1}{0}$ and an operator $O$ on $\slice{\cal F}{n-1}{0}$ we shall use the same symbol to denote the vector $\psi\otimes\Omega$ in $\slice{\cal F}{n}{0}$ and the operator $O\otimes\id{\slice{\cal F}{n}{n-1}}$ on $\slice{\cal F}{n}{0}$, respectively. 

Moreover, the Fock space slices and the related interaction terms are given by
\begin{center}
\renewcommand{\arraystretch}{1.75}
\begin{tabular}{c | c c c}
& Slice & Interaction & Fock space\\
\hline
\hline
UV & $[\sigma_{n-1},\sigma_{n})$ & $\slice{\Delta H'_P}{n}{n-1}:=\slice{H'_{P}}{n}{0}-\slice{H'_{P}}{n-1}{0}$ & $\slice{\cal F}{n}{n-1}$\\
\hline
IR & $(\tau_{m},\tau_{m-1}]$ & $g\slice{\Phi}{m-1}{{m}}:=g\slice{\Phi}{\tau_{m-1}}{\tau_{m}}$ & $\slice{\cal F}{m-1}{m}$\\
\hline
\end{tabular}
\end{center}
\vskip.3cm
Similarly we shall use $\slice{}{n}{m}$, $\slice{}{n}{n-1}$, $\slice{}{m-1}{m}$ instead of $\slice{}{\sigma_n}{\tau_m}$, $\slice{}{\sigma_n}{\sigma_{n-1}}$, $\slice{}{\tau_{m-1}}{\tau_m}$, respectively, as short-hand notation to denote the range of boson momenta associated with the interaction.

For a self-adjoint operator $A$ which is bounded from below we define the spectral gap as
\begin{align*}
 \gap{A}:=\inf\{\spec A\setminus\{\inf\spec A\}\}-\inf\spec A.
\end{align*}
 Moreover, we denote 
\begin{align}\label{eqn:energies}
 \slice{E_P}{n}{m}:=\inf\spec{ \slice{H_P}{n}{m} \restrict{\slice{\cal F}{n}{m}}}, && \slice{E'_P}{n}{m}:=\inf\spec{ \slice{H'_P}{n}{m} \restrict{\slice{\cal F}{n}{m}}}=\slice{E_P}{n}{m}-\slice{V_{\mathrm{self}}}{n}{0}
\end{align}
where $\spec{A\restrict{X}}$ denotes the spectrum of the linear operator $A$ restricted to the subspace $X$. If $\slice{E'_P}{n}{m}$ is a non-degenerate eigenvalue of the Hamiltonian $\slice{H'_P}{n}{m}$ we shall denote a (possibly unnormalized) corresponding eigenvector by $\slice{\Psi'_P}{n}{m}$. In this situation we have
\begin{align*}
 \gap{\slice{H'_P}{n}{m}\restrict{\slice{\cal F}{n}{m}}} = \inf_{\psi\perp\slice{\Psi'_P}{n}{m}}\braket{\slice{H'_P}{n}{m}-\slice{E'_P}{n}{m}}_\psi
\end{align*}
where the infimum is taken over the vectors $\psi$ in the domain of $\slice{H'_P}{n}{m}\restrict{\slice{\cal F}{n}{m}}$, and we have used the notation
\[
  \braket{A}_\psi=\frac{\braket{\psi,A\psi}}{\braket{\psi,\psi}}
\]
for any operator $A$ and $\psi\in D(A)$.

\paragraph{The Mass Shell of $\slice{H'_P}{\infty}{0}$.} The multiscale perturbation theory that we use here relies on the control of the spectral gap as more and more slices of the interaction Hamiltonian are added.  In the construction of  the mass shell eigenvectors   one observes a major difference between removing the ultraviolet and the infrared cut-off. In the infrared limit the main problem is that the gap closes and the infimum of the spectrum is not an eigenvalue anymore (see \cite{pizzo_one-particle_2003}). In the ultraviolet limit the main problem is that the whole spectrum moves towards $-\infty$. The latter is caused by the well-known logarithmic divergence  in  \eqn{eqn:selfenergy}. In order to gain control on the gap it is necessary to extract this divergent term which, as it is also well-known, can be accomplished via the Gross transformation. At first, we shall therefore apply the multiscale perturbation theory to the Gross transformed Hamiltonians $\slice{H'_{P}}{n}{0}$, and then use unitarity to inherit all results for the back-transformed Nelson Hamiltonians
\begin{align*}
 \slice{H_{P}}{n}{0}:=e^{-\slice{T}{n}{0}}\slice{H'_{P}}{n}{0}e^{\slice{T}{n}{0}}+\slice{V_{\mathrm{self}}}{n}{0}, \quad n\in\bb N.
\end{align*}

The iterative analytic perturbation theory, which was successfully applied for the infrared regime \cite{pizzo_one-particle_2003}, can be adapted to the ultraviolet regime using the following induction:
 
Suppose that, for a given and appropriately chosen real sequence $(\xi_n)_{n\in\bb N}$ bounded from below by a positive constant, we know that the following holds for the $(n-1)$-th step of the induction:
\begin{enumerate}[(i)]
 \item $\slice{\Psi'_{P}}{n-1}{0}$ is the unique ground state of $\slice{H'_{P}}{n-1}{0}$ with energy $\slice{E'_P}{n-1}{0}$.
 \item $\gap{\slice{H'_{P}}{n-1}{0}\restrict{\slice{\cal F}{n-1}{0}}}\geq \xi_{n-1}$.
\end{enumerate}
In order to show the induction step $(n-1)\Rightarrow n$, we first estimate the new spectral gap while adding the slice $\slice{\cal F}{n}{n-1}$ of boson Fock space without modifying the Hamiltonian. An a priori variational argument yields $\gap{\slice{H'_{P}}{n-1}{0}\restrict{\slice{\cal F}{n}{0}}}\geq \xi_{n-1}$. With this at hand we apply analytic perturbation theory \`a la Kato to construct the ground state of $\slice{H'_{P}}{n}{0}\restrict{\slice{\cal F}{n}{0}}$.   More precisely, we show that the Neumann series of the resolvent 
\begin{align}\label{eqn:res}
\frac{1}{\slice{H'_{P}}{n}{0}-z}=\frac{1}{\slice{H'_{P}}{n-1}{0}-z}\sum_{j=0}^{\infty}[-\Delta H'_{P}|^{n}_{n-1}\frac{1}{\slice{H'_{P}}{n-1}{0}-z}]^j
\end{align}
is well-defined for all $z$ in the domain
\begin{align*}
\frac{1}{2}\xi_{n}\leq |\slice{E'_P}{n-1}{0}-z|\leq \xi_{n}<\xi_{n-1}.
\end{align*}
Step by step we show the convergence of the Neumann series  for a sufficiently small $|g|$ (and $\beta$ sufficiently close to one) but  uniformly in $n$. In the control of the resolvent  in \eqn{eqn:res} a convenient definition of $(\xi_n)_{n\in\bb N}$ turns out to be crucial.   Kato's perturbation theory ensures the existence of a projection $\slice{\cal Q_{P}'}{n}{0}$ onto the unique ground state $\slice{\Psi'_P}{n}{0}$ with eigenvalue $\slice{E'_P}{n}{0}$. Since an a priori variational argument yields $\slice{E'_P}{n}{0}\leq \slice{E'_P}{n-1}{0}$, we conclude that $\gap{\slice{H'_{P}}{n}{0}\restrict{\slice{\cal F}{n}{0}}}\geq \xi_{n}$.

This way we construct a convergent sequence of ground states corresponding to $\slice{H'_{P}}{n}{0}$, $n\in\bb N$,
\begin{align*}
 \slice{\Psi'_P}{n}{0} := \slice{\cal Q_{P}'}{n}{0}\slice{\cal Q_{P}'}{n-1}{0}\cdots \slice{\cal Q_{P}'}{1}{0}\Omega
\end{align*}
where $\Omega$ is the ground state of $H'_{P,0}$. The projections $\slice{\cal Q'_P}{n}{0}$ will be given explicitly in \eqn{eqn:projection}. Finally, the unitarity of the Gross transformation implies that
\begin{align*}
 \slice{\Psi_P}{n}{0}:=e^{-\slice{T}{n}{0}}\slice{\Psi'_P}{n}{0},\;n\in\bb N,
\end{align*}
is a sequence of ground states of $\slice{H_{P}}{n}{0}$ that also converges, say to a $\slice{\Psi_P}{\infty}{0}\in\cal F$. Furthermore, we prove the convergence of $\slice{H'_{P}}{n}{0}$ in the norm resolvent sense to a limiting Hamiltonian $\slice{H'_{P}}{\infty}{0}$, the unique ground state of which is $\slice{\Psi'_P}{\infty}{0}$. Precisely, we prove:
\begin{theorem}\label{thm:main uv}
Let $|P|\leq P_{\mathrm{max}}$. There is a constant $g_{\mathrm{max}}>0$  such  such that for all $|g|<g_{\mathrm{max}}$ the following holds true:
\begin{enumerate}[(i)]
 \item  The sequence of operators $(\slice{H_{P}}{n}{0}-\slice{V_{\mathrm{self}}}{n}{0})_{n\in\bb N}$ converges in the norm resolvent sense to a self-adjoint operator $\slice{H_{P}}{\infty}{0}$ acting on $\cal F$.
 \item The limit $\slice{\Psi_P}{\infty}{0}:=\lim_{n\to\infty}\slice{\Psi_P}{n}{0}$ exists in $\cal F$ and is non-zero.
 \item $\slice{E_P}{\infty}{0}:=\lim_{n\to\infty}(\slice{E_P}{n}{0}-\slice{V_{\mathrm{self}}}{n}{0})$ exists.
 \item $\slice{E_P}{\infty}{0}$ is the non-degenerate ground state energy of the Hamiltonian $\slice{H_{P}}{\infty}{0}$ with corresponding ground state $\slice{\Psi_P}{\infty}{0}$. Moreover,   the spectral gap of $\slice{H_{P}}{\infty}{0}\restrict{\slice{\cal F}{\infty}{0}}$ is  bounded from below by $\frac{1}{16}\kappa$
.
\end{enumerate}
\end{theorem}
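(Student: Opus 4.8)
The plan is to run the induction over ultraviolet scales $n\in\bb N$ outlined before the statement, applied to the Gross‑transformed fiber Hamiltonians $\slice{H'_P}{n}{0}$, and then to pass to $\slice{H_P}{n}{0}=e^{-\slice{T}{n}{0}}\slice{H'_P}{n}{0}e^{\slice{T}{n}{0}}+\slice{V_{\mathrm{self}}}{n}{0}$ by the Gross transformation. For $n=1$ the operator $H_{P,0}\restrict{\slice{\cal F}{1}{0}}$ has the simple eigenvalue $\tfrac12P^2$, the remainder of its spectrum lying in $[(1-P_{\mathrm{max}})\kappa,\infty)$, so its gap is at least $\tfrac34\kappa$; since $\slice{\Delta H'_P}{1}{0}$ is $H_{P,0}$‑form‑bounded with bound $O(|g|)$, Kato's analytic perturbation theory gives (i)–(ii) for $n=1$, say with $\xi_1:=\tfrac12\kappa$. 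For the step $(n-1)\Rightarrow n$ one first performs the a priori step of \sct{sec:main results}: on $\slice{\cal F}{n}{0}=\slice{\cal F}{n-1}{0}\otimes\slice{\cal F}{n}{n-1}$ the operator $\slice{H'_P}{n-1}{0}$ acts as $\slice{H'_P}{n-1}{0}\otimes\id{\slice{\cal F}{n}{n-1}}+\id{\slice{\cal F}{n-1}{0}}\otimes(H^f\restrict{\slice{\cal F}{n}{n-1}})$, and since one boson in the new shell costs energy $\ge\sigma_{n-1}>\xi_{n-1}$ its spectrum is $\{\slice{E'_P}{n-1}{0}\}\cup[\slice{E'_P}{n-1}{0}+\xi_{n-1},\infty)$, whence $\gap{\slice{H'_P}{n-1}{0}\restrict{\slice{\cal F}{n}{0}}}\ge\xi_{n-1}$. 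Throughout I will use the shell estimates $\|k\beta(k)\|^2_{L^2(\cal B_{\sigma_n}\setminus\cal B_{\sigma_{n-1}})}=O(g^2\log\beta)$, $\|k\beta(k)/\sqrt{\omega(k)}\|^2_{L^2(\cal B_{\sigma_n}\setminus\cal B_{\sigma_{n-1}})}=O(g^2/\sigma_{n-1})$ and $\|\beta(k)\|^2_{L^2(\cal B_\infty\setminus\cal B_{\sigma_n})}=O(g^2/\sigma_n^2)$, which are the ultraviolet counterparts of the estimates in \cite{nelson_interaction_1964} and reflect that the Gross transformation has replaced the non–square‑integrable form factor $\rho$ by $k\beta$.

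The analytic core is the uniform bound $\|\slice{\Delta H'_P}{n}{n-1}\,(\slice{H'_P}{n-1}{0}-z)^{-1}\|\le\epsilon_n\le\tfrac12$, valid for every $n$ and every $z$ in the annulus $\tfrac12\xi_n\le|z-\slice{E'_P}{n-1}{0}|\le\xi_n<\xi_{n-1}$, with $\epsilon_n=O(|g|)$ once $\beta$ is close enough to $1$. Writing $\slice{\Delta H'_P}{n}{n-1}$ out from \eqn{eqn:G trafo Hamiltonain} and \eqn{eqn:B}, every summand carries at least one of the shell operators $\slice{B}{n}{n-1}$, $\slice{B^*}{n}{n-1}$ (squared, or paired with its counterpart on $\slice{\cal F}{n-1}{0}$ or with $(P-P^f)$). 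The estimate then follows from $\|b(f)\psi\|\le\|f/\sqrt\omega\|\,\|(H^f)^{1/2}\psi\|$ and $\|b^*(f)\psi\|\le\|f\|\,\|(N+1)^{1/2}\psi\|$ together with two facts: $\slice{H'_P}{n-1}{0}$ commutes with the number operator $N\restrict{\slice{\cal F}{n}{n-1}}$ and with $H^f\restrict{\slice{\cal F}{n}{n-1}}$, so its resolvent is block diagonal in the shell occupation and, on the subspaces with $j\ge1$ shell bosons, is $O(1/(j\sigma_{n-1}))$ (here one needs a lower bound on $\slice{E'_q}{n-1}{0}$ uniform in $q\in\bb R^3$ and $n$); and on $\slice{\cal F}{n-1}{0}$ the operators $N,H^f,(P-P^f)^2$ are $H_{P,0}$‑, hence $(\slice{H'_P}{n-1}{0}+C)$‑bounded with $n$‑independent constants. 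Granting this, the Neumann series \eqn{eqn:res} converges on the annulus, and Kato's theory produces a non‑degenerate eigenvalue $\slice{E'_P}{n}{0}$ of $\slice{H'_P}{n}{0}\restrict{\slice{\cal F}{n}{0}}$, a Riesz projection $\slice{\cal Q'_P}{n}{0}$ onto it, and $\slice{\Psi'_P}{n}{0}:=\slice{\cal Q'_P}{n}{0}(\slice{\Psi'_P}{n-1}{0}\otimes\Omega)$; an a priori variational argument gives $\slice{E'_P}{n}{0}\le\slice{E'_P}{n-1}{0}$.

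Since $\epsilon_n$ is only $O(|g|)$ and not summable — the terms $\tfrac12(\slice{B^*}{n}{n-1})^2$ and $\slice{B^*}{n}{n-1}\!\cdot\slice{B}{n-1}{\kappa}$ already contribute $O(g^2\log\beta)$ uniformly — the iteration of the gap and the $n\to\infty$ limits must exploit the extra structure $\braket{\psi\otimes\Omega,\,\slice{\Delta H'_P}{n}{n-1}(\varphi\otimes\Omega)}=0$ for $\psi,\varphi\in\slice{\cal F}{n-1}{0}$ (every summand of $\slice{\Delta H'_P}{n}{n-1}$ either changes the shell occupation or annihilates $\Omega\in\slice{\cal F}{n}{n-1}$). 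Consequently, in every term of the Riesz/Neumann expansions of $\slice{\cal Q'_P}{n}{0}-\slice{\cal Q'_P}{n-1}{0}\otimes\ket\Omega\bra\Omega$ and of $\slice{E'_P}{n}{0}-\slice{E'_P}{n-1}{0}$ at least one factor $(\slice{H'_P}{n-1}{0}-z)^{-1}$ is evaluated on a subspace with $\ge1$ shell boson and produces the gain $O(1/\sigma_{n-1})$ above, yielding $|\slice{E'_P}{n}{0}-\slice{E'_P}{n-1}{0}|+\|\slice{\cal Q'_P}{n}{0}-\slice{\cal Q'_P}{n-1}{0}\otimes\ket\Omega\bra\Omega\|\le C\,g^2\operatorname{poly}(n)\,\sigma_{n-1}^{-1}$, which is summable; the analogous Schur reduction to the shell‑vacuum subspace gives $\gap{\slice{H'_P}{n}{0}\restrict{\slice{\cal F}{n}{0}}}\ge\xi_{n-1}-C\,g^2\operatorname{poly}(n)\,\sigma_{n-1}^{-1}=:\xi_n$. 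Hence $(\xi_n)$ decreases to $\xi_\infty\ge\xi_1-C\,g^2\sum_m\operatorname{poly}(m)\,\sigma_{m-1}^{-1}\ge\tfrac{1}{16}\kappa$ for $|g|<g_{\mathrm{max}}$, $\slice{E'_P}{n}{0}$ converges to some $\slice{E'_P}{\infty}{0}$, and $\slice{\Psi'_P}{n}{0}$ is Cauchy and converges to a non‑zero $\slice{\Psi'_P}{\infty}{0}$. For the operator convergence one uses the second resolvent identity, iterated once so that a block‑diagonal factor $(\slice{H'_P}{n-1}{0}-z)^{-1}$ follows $\slice{\Delta H'_P}{n}{n-1}$ into the shell‑occupied subspace; this produces a summable bound on $\|(\slice{H'_P}{n}{0}-z)^{-1}-(\slice{H'_P}{n-1}{0}-z)^{-1}\|$ for $z$ in a fixed disc around $\slice{E'_P}{\infty}{0}$ inside the gap, whose limit is the resolvent of a self‑adjoint $\slice{H'_P}{\infty}{0}$ with non‑degenerate ground state $\slice{\Psi'_P}{\infty}{0}$, ground energy $\slice{E'_P}{\infty}{0}$, and gap $\ge\tfrac{1}{16}\kappa$.

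It remains to transfer to the Nelson Hamiltonians. Since $\|\beta\|^2_{L^2(\cal B_\infty\setminus\cal B_\kappa)}<\infty$, $e^{-\slice{T}{\infty}{0}}$ extends to a unitary operator, $\slice{\Psi_P}{n}{0}=e^{-\slice{T}{n}{0}}\slice{\Psi'_P}{n}{0}$ has norm $\|\slice{\Psi'_P}{n}{0}\|$, and $e^{-\slice{T}{n}{0}}\to e^{-\slice{T}{\infty}{0}}$ strongly. Writing $e^{\slice{T}{n}{0}}=e^{\slice{T}{n-1}{0}}W_n$ with $W_n:=e^{\slice{T}{n}{n-1}}$ the Weyl operator of $\beta\,\charf{\cal B_{\sigma_n}\setminus\cal B_{\sigma_{n-1}}}$, which commutes with $e^{\slice{T}{n-1}{0}}$, and using the one‑parameter group identity $(W_n-1)\psi=\int_0^1 e^{t\slice{T}{n}{n-1}}\slice{T}{n}{n-1}\psi\,dt$, one gets $\|(W_n-1)\psi\|\le2\|\beta\|_{L^2(\cal B_{\sigma_n}\setminus\cal B_{\sigma_{n-1}})}\|(N+1)^{1/2}\psi\|$, which on $\slice{\cal F}{n-1}{0}$ (no shell bosons) and against $(\slice{H'_P}{n-1}{0}+E)^{-1/2}$ (using $\|(N+1)^{1/2}(\slice{H'_P}{n-1}{0}+E)^{-1/2}\|=O(1)$) is summable of order $O(|g|/\sigma_{n-1})$; hence $\|e^{-\slice{T}{n}{0}}(\slice{H'_P}{n}{0}-z)^{-1}e^{\slice{T}{n}{0}}-e^{-\slice{T}{n-1}{0}}(\slice{H'_P}{n-1}{0}-z)^{-1}e^{\slice{T}{n-1}{0}}\|$ and $\|\slice{\Psi_P}{n}{0}-\slice{\Psi_P}{n-1}{0}\|$ are summable. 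Thus $(\slice{H_P}{n}{0}-\slice{V_{\mathrm{self}}}{n}{0})_n$ converges in norm resolvent sense to $\slice{H_P}{\infty}{0}:=e^{-\slice{T}{\infty}{0}}\slice{H'_P}{\infty}{0}e^{\slice{T}{\infty}{0}}$, giving (i); $\slice{\Psi_P}{n}{0}\to\slice{\Psi_P}{\infty}{0}:=e^{-\slice{T}{\infty}{0}}\slice{\Psi'_P}{\infty}{0}\ne0$, giving (ii); $\slice{E_P}{n}{0}-\slice{V_{\mathrm{self}}}{n}{0}=\slice{E'_P}{n}{0}\to\slice{E'_P}{\infty}{0}=:\slice{E_P}{\infty}{0}$, giving (iii); and by unitary equivalence $\slice{E_P}{\infty}{0}$ is the non‑degenerate ground energy of $\slice{H_P}{\infty}{0}$, with ground state $\slice{\Psi_P}{\infty}{0}$ and $\gap{\slice{H_P}{\infty}{0}\restrict{\slice{\cal F}{\infty}{0}}}=\gap{\slice{H'_P}{\infty}{0}\restrict{\slice{\cal F}{\infty}{0}}}\ge\tfrac{1}{16}\kappa$, giving (iv). The main obstacle is the bound underlying the last two paragraphs: extracting, uniformly in $n$, the factor $\sigma_{n-1}^{-1}$ from the energy cost of a boson in the new shell — the sole mechanism that makes the accumulated ultraviolet corrections summable despite the logarithmic divergence that forced the renormalization — while simultaneously controlling the now non‑quadratic residual interaction without ever bounding $N$ or $H^f$ by a resolvent with an $n$‑independent constant; the uniform lower bound on $\slice{E'_q}{n-1}{0}$ and the base‑case gap $\tfrac34\kappa$ are what produce the explicit $\tfrac{1}{16}\kappa$.
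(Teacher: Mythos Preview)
Your outline matches the paper's strategy: induction over UV scales on $\slice{H'_P}{n}{0}$, Kato theory on a shrinking annulus, then transfer via the Gross transformation. The transfer step and the deduction of (i)--(iv) from the primed statements are essentially as in the paper. However, two points in the induction step are not correctly justified.

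First, the claim that $\slice{H'_P}{n-1}{0}$ acts on $\slice{\cal F}{n}{0}$ as $\slice{H'_P}{n-1}{0}\otimes\id{}+\id{}\otimes(H^f\restrict{\slice{\cal F}{n}{n-1}})$ is false: the kinetic term $\tfrac12(P-P^f)^2$ contains the shell momenta, so on a state with fixed shell momenta $k_1,\dots,k_j$ the operator restricts to $\slice{H'_{P-\sum k_i}}{n-1}{0}+\sum_i|k_i|$. The a~priori gap therefore requires not a uniform lower bound on $\slice{E'_q}{n-1}{0}$ over $q\in\bb R^3$, but rather $\slice{E'_{P-\sum k_i}}{n-1}{0}-\slice{E'_P}{n-1}{0}\ge -c$ with $c<\sigma_{n-1}$. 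The paper obtains this from Gross's inequality $\slice{E'_0}{n-1}{0}\le\slice{E'_Q}{n-1}{0}$ together with the inductive bound $|\nabla\slice{E'_P}{n-1}{0}|\le\tfrac34$ (\lem{lem:gap}), and this gradient bound must itself be propagated through the induction.

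Second, and more seriously, the uniform bound $\epsilon_n=O(|g|)$ on the full annulus is not established and is in tension with your later definition of $\xi_n$. The cross terms $\slice{B^*}{n}{n-1}\!\cdot\!\slice{B^*}{n-1}{0}$ and $\slice{B^*}{n-1}{0}\!\cdot\!\slice{B}{n}{n-1}$ in $\slice{\Delta H'_P}{n}{n-1}$ involve $\slice{B^*}{n-1}{0}$, whose relative bound carries a factor $\|k\beta\|_{L^2(\cal B_{\sigma_{n-1}}\setminus\cal B_\kappa)}\sim|g|(\ln\sigma_{n-1})^{1/2}\sim|g|\,n^{1/2}(\ln\beta)^{1/2}$; this growth is \emph{not} covered by your list ``$N,H^f,(P-P^f)^2$ are $(\slice{H'_P}{n-1}{0}+C)$-bounded with $n$-independent constants''. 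Near the outer edge of the annulus the resolvent on the shell-vacuum block is of size $(\Delta\xi_n)^{-1}$, so the sandwich produces a factor that must be balanced against the $\sigma_n^{-1/2}$ decay from the shell operator. Your a~posteriori choice $\Delta\xi_n=Cg^2\operatorname{poly}(n)\,\sigma_{n-1}^{-1}$ is circular --- the gap defining $\xi_n$ is precisely what the Neumann bound on the annulus is supposed to establish --- and makes $(\Delta\xi_n)^{-1}$ exponentially large. The paper resolves both issues by fixing $\Delta\xi_n:=\tfrac{(\beta-1)^2}{2\beta}\,n\beta^{-n}$ \emph{a~priori} (\dfn{def:gap bounds}) and imposing the coupling constraint $|g|\le(\beta-1)$; with this calibration the symmetric sandwich $\|R^{1/2}\slice{\Delta H'_P}{n}{n-1}R^{1/2}\|$ is bounded by $|g|^{1/2}C$ uniformly on the annulus (\lem{lem:neumann}), while on the inner contour $\Gamma_n$ of radius $\tfrac12\xi_n\ge\tfrac{\kappa}{32}$ the improved bound $C|g|\big(\tfrac{(\beta-1)n}{\beta^n}\big)^{1/2}$ (\cor{lem:resolvent diff}) directly yields the summable estimates for $\slice{\Psi'_P}{n}{0}-\slice{\Psi'_P}{n-1}{0}$, $\slice{E'_P}{n}{0}-\slice{E'_P}{n-1}{0}$ and $\nabla\slice{E'_P}{n}{0}-\nabla\slice{E'_P}{n-1}{0}$ --- no separate Schur or vacuum-orthogonality argument is needed.
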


\paragraph{The Mass Shell of $\slice{H'_P}{\infty}{m}$ for $m\in\bb N$.} Starting from the ground states $\slice{\Psi'_P}{n}{0}$ of the Hamiltonian $\slice{H'_{P}}{n}{0}$, we continue to add interaction slices $g\slice{\Phi}{\tau_{m-1}}{\tau_m}$, $m\in\bb N$, now below the frequency $\kappa$ and construct the family of 
ground states $\slice{\Psi'_P}{n}{m}$ of the Hamiltonians $\slice{H'_P}{n}{m}$ with eigenvalue $\slice{E'_P}{n}{m}$, i.e.
\[
  \slice{H'_P}{n}{m}\slice{\Psi'_P}{n}{m}=\slice{E'_P}{n}{m}\slice{\Psi'_P}{n}{m}.
\]
For arbitrarily large but fixed $m\in\bb N$, we prove results analogous to \thm{thm:main uv}: Norm resolvent convergence of $(\slice{H'_P}{n}{m})_{n\in\bb N}$ is shown in \lem{lem:ir res conv}. The existence of $\slice{\Psi'_P}{\infty}{m}$, $m\in\bb N$, is shown in \thm{thm:ir induction}. In particular, the spectral gap of $\slice{H'_P}{n}{m}$ is bounded from below by a constant times $\tau_m$ uniformly for all $n\in\bb N\cup\{\infty\}$. This is proven in \lem{lem:ir gap}.

\paragraph{The Mass Shell of $\slice{H^{W'}_P}{\infty}{\infty}$.} As it is well-known (see \cite{fraehlich_infrared_1973,pizzo_one-particle_2003}), for every $n\in\bb N\cup\{\infty\}$ the ground state $\frac{\slice{\Psi'_P}{n}{m}}{\|\slice{\Psi'_P}{n}{m}\|}$ weakly converge to zero as $m\to\infty$. This is linked to the infamous infrared catastrophe problem in QED.  In fact, in the infrared limit the interaction turns out to be \emph{marginal} according to renormalization group terminology.  On the other hand it was proven in \cite{fraehlich_infrared_1973} that
\begin{align}\label{eqn:froehlich}
 b(k)\slice{\Psi'_P}{n}{m}=g\;\rho(k)\frac{1}{\slice{E'_P}{n}{m}-|k|-\slice{H'_{P-k}}{n}{m}}\slice{\Psi'_P}{n}{m}
\end{align}
which implies that
\begin{equation}\label{eqn:coherent}
 b(k)\slice{\Psi'_P}{n}{m}\approx \alpha_m(\nabla\slice{E'_P}{n}{m},k)\slice{\Psi'_P}{n}{m},\qquad
\alpha_m(Q,k):= -g\;\frac{\rho(k)}{\omega(k)}\frac{\charf{\cal B_\kappa\setminus\cal B_{\tau_m}}(k)}{1-\widehat k\cdot Q}
\end{equation}
up to higher order terms as $k\to 0$. This motivates a strategy to analyze the infrared limit by using the Bogolyubov transformation $W_m(\nabla\slice{E'_P}{n}{m})$ defined as follows:
for $Q\in\bb R^3$, $|Q|<1$,
\begin{align}\label{eqn:W trafo}
 W_m(Q)\;b^{\#}(k)\;W_m(Q)^* &:= b^{\#}(k) + \alpha_m(Q,k)\quad b^{\#}(k)=b(k),b^{*}(k).
\end{align}
Instead of studying $\slice{H'_P}{n}{m}$ directly one considers the transformed Hamiltonian
\begin{equation}
\slice{H_P^{W'}}{n}{m} := W_m(\nabla\slice{E'_P}{n}{m})\;\slice{H'_P}{n}{m}\;W_m(\nabla\slice{E'_P}{n}{m})^*.
\end{equation}
Note that the transformation acts non-trivially only on boson momenta below $\kappa$. For any finite $m$, the operator $W_m(Q)$ is unitary but this property does not hold anymore in the limit $m\to\infty$.  Furthermore, for  $Q\neq Q'$ the function $\alpha_m(Q,k)-\alpha_m(Q',k)$ is not square integrable as $m\to\infty$. 

Most importantly, the interaction term
\begin{equation}
\slice{H_P^{W'}}{n}{m}-H_{P,0}
\end{equation}
of the transformed Hamiltonian is now \emph{superficially marginal} in the infrared limit, in contrast to the  interaction $\slice{H'_{P}}{n}{m}-H_{P,0}$.  At a fixed ultraviolet cut-off and at a small coupling constant $g$, it has been proven in \cite{pizzo_one-particle_2003} that the sequence of ground states $(\slice{\phi_P}{n}{m})_{m\in\bb N}$, i.e.
\begin{equation}
\slice{H_{P}^{W'}}{n}{m}\slice{\phi_P}{n}{m}=\slice{E'_P}{n}{m}\slice{\phi_P}{n}{m},
\end{equation}
converges in the limit $m\to \infty$ while the spectral gap closes. Consequently, infrared asymptotic freedom holds. This result requires a sophisticated proof by induction. In the present paper we prove the same result while simultaneously removing the ultraviolet cut-off. Before sketching the main technical difficulties in dealing with the construction of the states $\slice{\phi_P}{\infty}{\infty}$ let us briefly explain their physical relevance.

With the states $\slice{\phi_P}{n}{m}$ and the Bogolyubov transformation $W_m(\nabla\slice{E'_P}{n}{m})$ at hand it is possible to control the properties of the physical mass shell given by the states $\slice{\Psi'_P}{n}{m}$ in the infrared limit, i.e. $m\to \infty$, namely the dependence on the total momentum $P$. This spectral  information represents the key ingredient to construct the scattering states for the so-called \emph{infraparticles}   (see \cite{pizzo_one-particle_2003} and  \cite{chen_infraparticle_2009}). The QED analogue of the transformation of  the field variables in  \eqn{eqn:W trafo} is related to the Li\'enard-Wiechert fields carried by the charged particle and to the infrared radiation emitted in Compton scattering; see \cite{chen_infraparticle_2009} for precise mathematical statements.

More technically, while simultaneously removing the infrared and the ultraviolet cut-off in $\slice{\phi_P}{n}{m}$ a major difficulty arises in the induction mentioned above. In fact, the proof of the limit of $\slice{\phi_P}{n}{m}$ as $m\to\infty$ relies on a suitable rearrangement  of the terms in the Hamiltonian $ \slice{H_P^{W'}}{n}{m}$ given by
\begin{align}
  \slice{H^{W'}_P}{n}{m}&=\frac{1}{2}\slice{\Gamma_P}{n}{m}^2+ H^f-\nabla \slice{E'_P}{n}{m} \cdot P^f + C_{P,m}^{(n)} + \slice{R_P}{n}{m}\,,
\end{align}
see \eqn{eqn:Hw} in Section \ref{sec:W ground states}, where the vector operator $\slice{\Gamma_P}{n}{m}$ has the crucial property
\begin{equation}
\langle \slice{\phi_P}{n}{m},\slice{\Gamma_P}{n}{m}\slice{\phi_P}{n}{m}\,\rangle=0 .
\end{equation}
However, the operator $\slice{\Gamma_P}{n}{m}$ is ill-defined in the limit $n\to \infty$. This suggests the following strategy for the simultaneous removal of the cut-offs, for sufficiently small $g$ but uniform in $n$ and $m$:
\begin{enumerate}[(i)]
\item First show that  $(\slice{\phi_P}{n}{m})_{m\in \mathbb{N}}$ is a Cauchy sequence uniformly in $n$;
\item then provide bounds of the form
\begin{equation}
\|\slice{\phi_P}{n}{m}-\slice{\phi_P}{n-1}{m}\|\leq f_1(n,m),
\end{equation}
and
\begin{equation}
|\nabla\slice{E'_P}{n}{m}-\nabla\slice{E'_P}{n-1}{m}| \leq f_2(n,m),
\end{equation}
where $ f_1(n,m)$ and $f_2(n,m)$ are such that for the scaling $n(m):=\alpha m$ with $\alpha$ sufficiently large both $(\slice{\phi_P}{n(m)}{m})_{n\in\mathbb{N}}$ and $(\nabla\slice{E'_P}{n(m)}{m})_{n\in \mathbb{N}}$ are Cauchy sequences.
\end{enumerate}
This program will be carried out in Sections \ref{sec:W ground states} and \ref{sec:Winfinity}. It will yield the second main result:
 \begin{theorem}\label{thm:ir}
 Let $|P|\leq P_{\mathrm{max}}$. For $|g|$ sufficiently small the following holds true:
\begin{enumerate}[(i)]
 \item There exists an $\alpha_{\mathrm{min}}> 0$ such that for any integer $\alpha'> \alpha_{\mathrm{min}}$ and $n(m) = \alpha' m$, the limit
\begin{align*}
 \slice{\phi_{P}}{\infty}{\infty}:=\lim_{m\to\infty}\slice{\phi_P}{n(m)}{m}
\end{align*}
 exists in $\cal F$ and is non-zero.
 \item $E'_{P,\infty}:=\lim_{m\to\infty} \slice{E'_P}{\infty}{m}$ exists and is the ground state energy corresponding to the eigenvector $\slice{\phi_{P}}{\infty}{\infty}$
 of the self-adjoint operator
\begin{align*}
 \slice{H^{W'}_{P}}{\infty}{\infty}:=\lim_{m\to \infty}\slice{H^{W'}_{P}}{n(m)}{m},
\end{align*}
 where the limit is understood in the norm resolvent sense.
\end{enumerate}
\end{theorem}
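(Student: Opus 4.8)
The plan is to obtain $\slice{\phi_P}{\infty}{\infty}$ by gluing, along the diagonal scaling $n(m)=\alpha' m$, the two one-parameter constructions that are already at our disposal. For each fixed $m$ the ultraviolet family $(\slice{H'_P}{n}{m})_{n\in\bb N}$ is treated exactly as in \thm{thm:main uv}, the only difference being that its spectral gap is now bounded below only by a constant times $\tau_m$ (\lem{lem:ir gap}) instead of by $\kappa/16$; conjugating with the Bogolyubov transformation $W_m(\nabla\slice{E'_P}{n}{m})$, which is unitary at finite $m$, turns this into a construction of $\slice{\phi_P}{n}{m}$, $\slice{E'_P}{n}{m}$ and of their $n\to\infty$ limits $\slice{\phi_P}{\infty}{m}$, $\slice{E'_P}{\infty}{m}$. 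For each fixed $n$ the infrared family $(\slice{H^{W'}_P}{n}{m})_{m\in\bb N}$ is treated by the inductive scheme of \cite{pizzo_one-particle_2003} applied to the rearranged Hamiltonian
\begin{align*}
  \slice{H^{W'}_P}{n}{m}=\frac{1}{2}\,\slice{\Gamma_P}{n}{m}^2+H^f-\nabla\slice{E'_P}{n}{m}\cdot P^f+C_{P,m}^{(n)}+\slice{R_P}{n}{m},
\end{align*}
whose point is that the added slice $g\slice{\Phi}{m-1}{m}$, although the gap closes like $\tau_m$, remains a small enough perturbation because $\langle\slice{\phi_P}{n}{m},\slice{\Gamma_P}{n}{m}\slice{\phi_P}{n}{m}\rangle=0$ and the coefficient $\alpha_m$ makes the interaction superficially marginal. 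Running this induction --- jointly controlling $\slice{\phi_P}{n}{m}$, the ground state energy, its gradient $\nabla\slice{E'_P}{n}{m}$ and the H\"older modulus of the latter in $P$, as in \cite{pizzo_one-particle_2003} --- should yield, for $|g|$ small and with constants \emph{uniform in $n$},
\begin{align*}
  \|\slice{\phi_P}{n}{m}-\slice{\phi_P}{n}{m-1}\|+|\nabla\slice{E'_P}{n}{m}-\nabla\slice{E'_P}{n}{m-1}|\leq C\,\gamma^{\delta m}
\end{align*}
for some $\delta>0$, and similarly $|\slice{E'_P}{n}{m}-\slice{E'_P}{n}{m-1}|\leq C\gamma^{\delta m}$. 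Summing over $m$ proves step~(i), that $(\slice{\phi_P}{n}{m})_{m\in\bb N}$ is Cauchy uniformly in $n$; letting $n\to\infty$ in the energy estimate and summing then gives the existence of $E'_{P,\infty}=\lim_m\slice{E'_P}{\infty}{m}$.

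Next I would estimate the ultraviolet increments at fixed $m$. Re-running the multiscale construction underlying \thm{thm:main uv} at infrared cut-off $\tau_m<\kappa$, where every resolvent estimate now picks up inverse powers of the gap $\sim\tau_m$, and adding the contribution of the change of the conjugating transformation, controlled by
\begin{align*}
  \big\|\big(W_m(\nabla\slice{E'_P}{n}{m})-W_m(\nabla\slice{E'_P}{n-1}{m})\big)\slice{\Psi'_P}{n-1}{m}\big\|\lesssim|\nabla\slice{E'_P}{n}{m}-\nabla\slice{E'_P}{n-1}{m}|\;\tau_m^{-1/2}
\end{align*}
(using $\|\alpha_m(Q,\cdot)-\alpha_m(Q',\cdot)\|_{L^2}\lesssim|Q-Q'|\,\tau_m^{-1/2}$ together with the uniform bound on the field energy of $\slice{\Psi'_P}{n-1}{m}$), should give step~(ii), i.e. bounds $\|\slice{\phi_P}{n}{m}-\slice{\phi_P}{n-1}{m}\|\leq f_1(n,m)$, $|\nabla\slice{E'_P}{n}{m}-\nabla\slice{E'_P}{n-1}{m}|\leq f_2(n,m)$ and $|\slice{E'_P}{n}{m}-\slice{E'_P}{n-1}{m}|\leq f_2(n,m)$ with
\begin{align*}
  f_1(n,m)+f_2(n,m)\leq C\,|g|\,\tau_m^{-q}\,\beta^{-\delta' n}
\end{align*}
for fixed exponents $q,\delta'>0$. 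Summing the geometric series from $n(m)=\alpha' m$ upward gives $\sum_{n\geq\alpha' m}(f_1+f_2)\lesssim|g|\,\kappa^{-q}\big(\gamma^{-q}\beta^{-\delta'\alpha'}\big)^m$, so it suffices to take the integer $\alpha'>\alpha_{\mathrm{min}}:=q\log(1/\gamma)/(\delta'\log\beta)$ --- finite precisely because $\beta>1$ and $\gamma<1/2$ --- for these tails to be summable and to vanish as $m\to\infty$.

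With both steps in hand the diagonal limit assembles. For $m'>m$ and $n(m)=\alpha' m$ with $\alpha'>\alpha_{\mathrm{min}}$ write
\begin{align*}
  \slice{\phi_P}{n(m')}{m'}-\slice{\phi_P}{n(m)}{m}=\big(\slice{\phi_P}{n(m')}{m'}-\slice{\phi_P}{n(m')}{m}\big)+\sum_{j=n(m)}^{n(m')-1}\big(\slice{\phi_P}{j+1}{m}-\slice{\phi_P}{j}{m}\big);
\end{align*}
the first term is small once $m,m'$ are large by step~(i) (uniformly in $n$, in particular for $n=n(m')$), and the second is bounded by $\sum_{j\geq n(m)}f_1(j,m)$, which tends to $0$ by step~(ii). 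Hence $(\slice{\phi_P}{n(m)}{m})_m$ converges in $\cal F$ to $\slice{\phi_P}{\infty}{\infty}$, non-zero because the construction keeps $\|\slice{\phi_P}{n}{m}\|$ bounded away from $0$ and $\infty$ uniformly in $n$ and $m$; the same bookkeeping gives $\slice{E'_P}{n(m)}{m}\to E'_{P,\infty}$. Feeding the identical two-step decomposition into the second resolvent identity --- combined with the $m$-direction norm resolvent convergence of $(\slice{H^{W'}_P}{n}{m})_m$ adapted from \cite{pizzo_one-particle_2003} uniformly in $n$, with \lem{lem:ir res conv}, and with the increment bounds of step~(ii) --- yields norm resolvent convergence $\slice{H^{W'}_P}{n(m)}{m}\to\slice{H^{W'}_P}{\infty}{\infty}$, a self-adjoint operator as the norm resolvent limit of the self-adjoint operators $\slice{H^{W'}_P}{n(m)}{m}=W_m(\nabla\slice{E'_P}{n(m)}{m})\,\slice{H'_P}{n(m)}{m}\,W_m(\nabla\slice{E'_P}{n(m)}{m})^{*}$. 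Passing to the limit in $\slice{H^{W'}_P}{n(m)}{m}\slice{\phi_P}{n(m)}{m}=\slice{E'_P}{n(m)}{m}\slice{\phi_P}{n(m)}{m}$ shows that $\slice{\phi_P}{\infty}{\infty}$ is an eigenvector of $\slice{H^{W'}_P}{\infty}{\infty}$ with eigenvalue $E'_{P,\infty}$, and since each $\slice{E'_P}{n(m)}{m}$ is the ground state energy of $\slice{H^{W'}_P}{n(m)}{m}$, lower semicontinuity of the bottom of the spectrum under norm resolvent convergence forces $E'_{P,\infty}=\inf\spec{\slice{H^{W'}_P}{\infty}{\infty}}$.

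I expect the main obstacle to be the uniformity in $n$ asserted for the infrared increment estimates of step~(i). The rearrangement that tames the closing gap is written through $\slice{\Gamma_P}{n}{m}$, an operator that diverges as $n\to\infty$, so one must verify that the perturbative control of the transition $\tau_{m-1}\to\tau_m$ does not inherit this divergence: only data already controlled uniformly in $n$ by \thm{thm:main uv} --- the energy $\slice{E'_P}{n}{0}$ and its gradient, the field energy of $\slice{\Psi'_P}{n}{0}$, and the gap at scale $\kappa$ --- may be allowed to enter, the slice of Fock space that actually participates consisting only of momenta $|k|\in(\tau_m,\tau_{m-1}]$ well below $\kappa$. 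Equivalently, one has to show that the ultraviolet and infrared renormalizations decouple at the level of the increment estimates even though they plainly do not decouple at the level of the objects themselves; the resulting competition --- geometric ultraviolet decay $\beta^{-\delta' n}$ against the polynomial infrared blow-up $\tau_m^{-q}$ of the gap-dependent constants --- is exactly what fixes $\alpha_{\mathrm{min}}$ and forces the diagonal scaling $n(m)=\alpha' m$.
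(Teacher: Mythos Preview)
Your high-level plan --- IR increments uniform in $n$, UV increments at fixed $m$ with $m$-dependent constants, then diagonal telescoping --- matches the paper's architecture exactly, and your step (i) is precisely the content of \thm{thm:key theorem}. The gap is in your proposed route to step (ii).

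You write that the UV increment $\|\slice{\phi_P}{n}{m}-\slice{\phi_P}{n-1}{m}\|$ should follow from ``re-running the multiscale construction underlying \thm{thm:main uv} at infrared cut-off $\tau_m$''. But $\slice{\phi_P}{n}{m}$ is \emph{not} produced by any UV iteration: by \eqn{eqn:ir sequence} it is built by the IR recursion, and its very normalization depends on the entire history of projections $\slice{\widetilde{\cal Q'}_P}{n}{j}$ and intertwiners $W_j(\nabla\slice{E'_P}{n}{j})W_j(\nabla\slice{E'_P}{n}{j-1})^*$ for $1\leq j\leq m$, each of which carries its own $n$-dependence. An independent UV multiscale at level $\tau_m$ would produce a ground state of $\slice{H'_P}{n}{m}$ parallel to $\slice{\eta_P}{n}{m}:=W_m(\nabla\slice{E'_P}{n}{m})^*\slice{\phi_P}{n}{m}$ but with a different norm, and the $n$-increment of the factor $\|\slice{\phi_P}{n}{m}\|$ is precisely what cannot be read off from \thm{thm:main uv} alone.

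The paper resolves this by propagating the UV increment \emph{through} the IR recursion. \lem{lem: Delta eta} and \lem{lem: Delta grad} establish, for $n>\alpha m$, the coupled bounds
\begin{align*}
\|\slice{\eta_P}{n}{m+1}-\slice{\eta_P}{n-1}{m+1}\|&\leq\|\slice{\eta_P}{n}{m}-\slice{\eta_P}{n-1}{m}\|+K_1\Big[\Big(\tfrac{n}{\beta^n\gamma^{m+1}}\Big)^{1/2}+\big|\nabla\slice{E'_P}{n}{m}-\nabla\slice{E'_P}{n-1}{m}\big|\Big],\\
\big|\nabla\slice{E'_P}{n}{m}-\nabla\slice{E'_P}{n-1}{m}\big|&\leq K_2\Big[\Big(\tfrac{n}{\beta^n\gamma^m}\Big)^{1/2}+\|\slice{\eta_P}{n}{m}-\slice{\eta_P}{n-1}{m}\|+\big|\nabla\slice{E'_P}{n}{m-1}-\nabla\slice{E'_P}{n-1}{m-1}\big|\Big],
\end{align*}
and \thm{thm:IRUV induction} closes this recursion by induction in $m$, yielding a bound $K^{3m+1}(n/(\beta^n\gamma^m))^{1/2}$ for both quantities. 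The factor $K^{3m}$ (one constant $K$ accumulated per $m$-step) is what your heuristic $\tau_m^{-q}$ would have to absorb, and it is what ultimately fixes $\alpha_{\mathrm{min}}$ via the condition $K^3/(\beta^{\alpha'}\gamma)^{1/2}<1$.

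Two smaller inaccuracies. First, $\|\alpha_m(Q,\cdot)-\alpha_m(Q',\cdot)\|_{L^2}\sim|Q-Q'|\,|\ln\tau_m|^{1/2}$, not $|Q-Q'|\,\tau_m^{-1/2}$; the paper (\lem{lem:ir trafo diff}) obtains the sharper bound $\|(W_m(Q)-W_m(Q'))\slice{\Psi'_P}{n}{m}\|\leq|g|C|Q-Q'|\,|\ln\tau_m|$ by exploiting Fr\"ohlich's identity \eqn{eqn:froehlich} rather than a bare $L^2$ estimate. Second, the quantity propagated inductively in step (i) alongside the vector increments is not a H\"older modulus of $\nabla\slice{E'_P}{n}{m}$ in $P$ but the expectation $|g|^{\delta}\big|\langle\slice{\Gamma_P^{(i)}}{n}{m}\slice{\phi_P}{n}{m},(\slice{H^{W'}_P}{n}{m}-z)^{-1}\slice{\Gamma_P^{(i)}}{n}{m}\slice{\phi_P}{n}{m}\rangle\big|$ (item (iii) of \thm{thm:key theorem}); this is the device that keeps the constants in step (i) independent of $n$ despite the divergence of $\slice{\Gamma_P}{n}{m}$ that you correctly flag in your last paragraph.
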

At this point, we emphasize that at least within the scope of the presented multiscale technique, the given scaling to remove both UV and IR cut-offs simultaneously is natural. The method indeed relies on the control of the spectral gap, and as the gap closes in the IR limit, the UV limit must be taken at a comparatively fast enough rate.\\

For the notation throughout this paper, the reader is advised to consult the list below.
\paragraph{Notation.}
\begin{enumerate}
 \item  By convention $0\notin\bb N$.
 \item  The symbol $C$ denotes any universal constant. Any appearing $C$ is independent of the indices $m$ and $n$ and of all parameters in the paper, i.e. $g, \beta,\gamma$ and $\zeta$, at least in prescribed neighborhoods.
 \item The bars $|\cdot|$, $\|\cdot\|$ denote the euclidean and the Fock space norm, respectively. The brackets $\langle \cdot,\cdot \rangle$ denote the scalar product of vectors in $\cal F$. Given a subspace $\cal K\subseteq \cal F$ and an operator $A$ on $\cal F$ we use the notation
 \[
   \|A\|_{\cal K}=\|A\restrict{\cal K}\|.
  \]
 \item For a vector operator $A=(A^{(1)},A^{(2)},A^{(3)})$ with components $A^{(i)}:D(A^{(i)})\to\cal F$, $1\leq i\leq 3$,  we use the notation
 \[
   \|A\psi\|^2=\sum_{i=1}^3\|A^{(i)}\psi\|^2.
 \]
\end{enumerate}

\section{Tools} 

We recall some standard operator inequalities which are frequently used. For every square integrable function $f$ the estimates
\begin{align}
\begin{split}\label{eqn:standard ineq}
  \left\|\int_{\cal B_{\Lambda}\setminus\cal B_{\tau}} dk\; f(k) b(k)\psi\right\|&\leq \left(\int_{\cal B_{\Lambda}\setminus\cal B_{\tau}} dk\left|\frac{f(k)}{\sqrt{|k|}}\right|^2\right)^{1/2}\|(\slice{H^f}{\Lambda}{\tau})^{1/2}\psi\|,\\
    \left\|\int_{\cal B_{\Lambda}\setminus\cal B_{\tau}} dk\; f(k) b^*(k)\psi\right\|&\leq \left(\int_{\cal B_{\Lambda}\setminus\cal B_{\tau}} dk\left|\frac{f(k)}{\sqrt{|k|}}\right|^2\right)^{1/2}\|(\slice{H^f}{\Lambda}{\tau})^{1/2}\psi\|\\
    &\quad+\left(\int_{\cal B_{\Lambda}\setminus\cal B_{\tau}} dk\left|f(k)\right|^2\right)^{1/2}\|\psi\|
\end{split}
\end{align}
hold true for all $0\leq \tau<\Lambda\leq \infty$ and $\psi$ in the domain of $H_{P,0}^{1/2}$ whenever the integrals on the right-hand side of (\ref{eqn:standard ineq})  are well defined. 

 The following two results are crucial ingredients in the proofs presented in the next sections. The first one, \thm{thm:ground state energies}, is a classical result by L. Gross that turns out to be the main non-perturbative ingredient for the gap estimates that we obtain by iterative analytic perturbation theory; see Sections \ref{sec:main proof} and \ref{sec:ground states with ir cutoff}.

\begin{theorem}\label{thm:ground state energies}
 For $0\leq \tau<\Lambda<\infty$ and all $P\in\bb R^3$ the ground state energies $\slice{E_P}{\Lambda}{\tau}:=\inf \spec{\slice{H_P}{\Lambda}{\tau}}$ fulfill $\slice{E_0}{\Lambda}{\tau}\leq \slice{E_P}{\Lambda}{\tau}$.
\end{theorem}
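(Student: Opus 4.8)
The plan is to prove $\slice{E_0}{\Lambda}{\tau}\leq\slice{E_P}{\Lambda}{\tau}$ by a variational argument that exploits the explicit form of the fiber Hamiltonian together with the positivity structure of the vacuum representation. Recall that
\[
 \slice{H_P}{\Lambda}{\tau}=\frac{1}{2}(P-P^f)^2+H^f+g\int_{\cal B_\Lambda\setminus\cal B_\tau}dk\;\rho(k)(b(k)+b^*(k)),
\]
so that the only $P$-dependence sits in the kinetic term $\frac{1}{2}(P-P^f)^2$. First I would take any normalized $\psi\in D(H_{P,0})$ and write $\braket{\slice{H_P}{\Lambda}{\tau}}_\psi=\frac{1}{2}\braket{(P-P^f)^2}_\psi+\braket{H^f+g\Phi}_\psi$, and expand the kinetic part as $\frac12|P|^2-P\cdot\braket{P^f}_\psi+\frac12\braket{(P^f)^2}_\psi$. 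The aim is to produce, from $\psi$, a competitor trial state $\tilde\psi$ at total momentum $0$ whose energy is no larger.

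The key step is to use the rotational and reflection symmetry of $H_0^{\Lambda,\tau}$ to kill the linear term $-P\cdot\braket{P^f}_\psi$. Concretely, I would average over the phases/signs: consider the trial state built by symmetrizing $\psi$ under the sign flips $k\mapsto -k$ (equivalently, conjugating by the unitary that implements $b(k)\mapsto b(-k)$, which leaves $H^f$ and $\int dk\,\rho(k)(b(k)+b^*(k))$ invariant since $\rho$ and $\omega$ are even, but sends $P^f\mapsto -P^f$ and $(P^f)^2\mapsto(P^f)^2$). More cleanly: if $\psi$ is a trial state for $\slice{H_P}{\Lambda}{\tau}$ with low energy, then $\tilde\psi:=\psi$ viewed as a trial state for $\slice{H_0}{\Lambda}{\tau}$ gives $\braket{\slice{H_0}{\Lambda}{\tau}}_\psi=\braket{\slice{H_P}{\Lambda}{\tau}}_\psi-\frac12|P|^2+P\cdot\braket{P^f}_\psi$; pairing this with the reflected state $\psi'$ (where $P^f\mapsto-P^f$) and averaging, one gets a state whose $\slice{H_0}{\Lambda}{\tau}$-expectation is $\le \braket{\slice{H_P}{\Lambda}{\tau}}_\psi-\tfrac12|P|^2+\tfrac12(\braket{(P^f)^2}_\psi)\cdot 0$-type bound; more directly, completing the square, $\frac12\braket{(P^f)^2}_{\tilde\psi}\ge \frac12|\braket{P^f}_{\tilde\psi}|^2$ by Cauchy–Schwarz (Jensen), hence $\frac12\braket{(P-P^f)^2}_\psi\ge \frac12|P-\braket{P^f}_\psi|^2 \ge$ (after the symmetrization making $\braket{P^f}=0$) $\ge \frac12|P|^2\ge 0=$ the corresponding value at $P=0$. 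So after symmetrizing to arrange $\braket{P^f}_{\tilde\psi}=0$, one obtains $\braket{\slice{H_0}{\Lambda}{\tau}}_{\tilde\psi}\le \braket{\slice{H_P}{\Lambda}{\tau}}_\psi$. Taking the infimum over $\psi$ and using the variational characterization of $\slice{E_0}{\Lambda}{\tau}$ yields the claim.

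Finally, I would note the technical points that need care: (a) the reflection unitary must genuinely leave the interaction term invariant, which hinges on $\rho(-k)=\rho(k)$ and $\omega(-k)=\omega(k)$ — both immediate; (b) one must check $\tilde\psi$ stays in $D(H_{P,0})=D(H_0^{\Lambda,\tau})$, which is clear since the reflection unitary preserves this domain; (c) passing the infimum through is legitimate since $\slice{E_0}{\Lambda}{\tau}$ and $\slice{E_P}{\Lambda}{\tau}$ are both finite (self-adjointness and boundedness below from Proposition 1.1). The main obstacle — really the only subtle point — is organizing the symmetrization so that simultaneously $\braket{P^f}_{\tilde\psi}=0$ and $\braket{H^f+g\Phi}_{\tilde\psi}\le\braket{H^f+g\Phi}_\psi$; this is handled by averaging over the discrete reflection group $\{k_i\mapsto\pm k_i\}$ coordinatewise, under which $H^f$ and the interaction are invariant while each component of $\braket{P^f}$ is odd and thus averages to zero, and the convexity inequality $\frac12\braket{(P^f)^2}\ge\frac12|\braket{P^f}|^2$ does the rest. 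No genuinely hard estimate is involved; it is a soft symmetry-plus-convexity argument.
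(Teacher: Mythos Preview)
Your argument has a genuine gap. The symmetrization over reflections $k\mapsto -k$ (or coordinatewise $k_i\mapsto -k_i$) does make $\braket{P^f}_{\tilde\psi}=0$, but it also leaves $\braket{(P^f)^2}$, $\braket{H^f}$ and $\braket{g\slice{\Phi}{\Lambda}{\tau}}$ \emph{unchanged}, since all of these are reflection-invariant. Consequently $\braket{\slice{H_0}{\Lambda}{\tau}}_{\tilde\psi}=\braket{\slice{H_0}{\Lambda}{\tau}}_{\psi}$: the symmetrization buys you nothing for the $P=0$ energy. Writing out the difference directly,
\[
\braket{\slice{H_0}{\Lambda}{\tau}}_{\psi}-\braket{\slice{H_P}{\Lambda}{\tau}}_{\psi}
=\tfrac12\braket{(P^f)^2}_\psi-\tfrac12\braket{(P-P^f)^2}_\psi
= P\cdot\braket{P^f}_\psi-\tfrac12|P|^2,
\]
which has no definite sign. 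Your Jensen step $\braket{(P-P^f)^2}\ge |P-\braket{P^f}|^2$ is a \emph{lower} bound on the $H_P$ kinetic term, whereas what you would need is an \emph{upper} bound on $\braket{(P^f)^2}$ in terms of $\braket{(P-P^f)^2}$. Concretely, if $\psi$ is the true ground state of $\slice{H_P}{\Lambda}{\tau}$ then by Feynman--Hellmann $\braket{P^f}_\psi=P-\nabla\slice{E_P}{\Lambda}{\tau}$; at strong coupling the effective mass is large, $|\nabla\slice{E_P}{\Lambda}{\tau}|$ is small, and hence $P\cdot\braket{P^f}_\psi\approx |P|^2>\tfrac12|P|^2$, so the needed inequality fails.

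The paper does not supply its own proof here; it cites Gross \cite{gross_existence_1972}, Theorem 8. That argument is not variational: it relies on the positivity structure of the model (Feynman--Kac--Nelson functional integral, or equivalently the positivity-improving property of $e^{-t\slice{H_0}{\Lambda}{\tau}}$). In outline, one obtains a path-integral representation in which passing from $P=0$ to general $P$ introduces an oscillatory factor $e^{iP\cdot(\cdot)}$ under a positive measure, and a diamagnetic-type bound $|\braket{\Omega,e^{-t\slice{H_P}{\Lambda}{\tau}}\Omega}|\le \braket{\Omega,e^{-t\slice{H_0}{\Lambda}{\tau}}\Omega}$ then yields $\slice{E_0}{\Lambda}{\tau}\le \slice{E_P}{\Lambda}{\tau}$. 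A purely ``soft symmetry-plus-convexity'' argument of the type you sketch does not seem to suffice.
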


\begin{proof}
See   \cite[Theorem 8]{gross_existence_1972}.
\end{proof}

The second one,  
\lem{lem:a priori},  plays  a role in Sections \ref{sec:ground states with ir cutoff}, \ref{sec:W ground states}, \ref{sec:Winfinity} where we consider the interaction both in the ultraviolet and in the infrared regime. It is a crucial ingredient  to prove statements (i), (ii) in \cor{cor:ir ground state energies}  .  We stress that the multiscale technique which we apply in \sct{sec:main proof} to remove the ultraviolet cut-off at $m=0$ does not refer to \cor{cor:ir ground state energies}  (i),(ii), and only relies on \thm{thm:ground state energies} and on  a weaker estimate given in \eqn{eqn:remaining resolvent} that follows from \eqn{eqn:standard ineq}.

\begin{lemma}\label{lem:a priori}\label{lem:ir a priori}
There exist finite constants $\namel{a}{c_a},\namel{b}{c_b}>0$ such that  
\begin{align}\label{eqn:a priori}
 \braket{\psi,H_{P,0}\psi}\leq \frac{1}{1-|g|\namer{a}}\bigg[\braket{\psi,\slice{H'_P}{n}{m}\psi}+|g|\namer{b}\braket{\psi,\psi}\bigg]
\end{align}
 for $|g|\leq 1,\frac{1}{\namer{a}}$ and $\psi\in D(H_{P,0}^{1/2})$ with $m,n\in\bb N$.
\end{lemma}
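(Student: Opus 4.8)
The plan is to estimate the difference $\slice{\Delta H'_P}{n}{m} := \slice{H'_P}{n}{m} - H_{P,0}$ as a small (form-)perturbation of $H_{P,0}$, uniformly in $n$ and $m$. Reading off the explicit expression from \eqn{eqn:G trafo Hamiltonain} together with the infrared interaction term $g\slice{\Phi}{\kappa}{\tau_m}$, the perturbation $\slice{\Delta H'_P}{n}{m}$ is a sum of: (a) the quadratic terms $\tfrac12[(\slice{B}{n}{0})^2+(\slice{B^*}{n}{0})^2]+\slice{B^*}{n}{0}\cdot\slice{B}{n}{0}$; (b) the cross terms $-(P-P^f)\cdot\slice{B}{n}{0}-\slice{B^*}{n}{0}\cdot(P-P^f)$; and (c) the linear infrared term $g\int_{\cal B_\kappa\setminus\cal B_{\tau_m}}dk\,\rho(k)(b(k)+b^*(k))$. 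The key point is that every coupling function appearing here is square integrable \emph{uniformly} in the cut-offs: for (a) and (b) the relevant functions are $k\beta(k)$ and $k\beta(k)/\sqrt{|k|}$ on $\cal B_n\setminus\cal B_\kappa$, which by the definition of $\beta(k)$ in \eqn{def-beta(k)} behave like a constant times $\rho(k)|k|/(|k|^2/2+|k|)\sim |k|^{-1/2}$ near $k=0$ and decay like $|k|^{-5/2}$ at infinity, so their $L^2$-norms (and the $L^2$-norms after dividing by $\sqrt{|k|}$) are bounded by a universal constant $C$ independent of $n$; for (c) the function is $\rho(k)$ on the \emph{bounded} region $\cal B_\kappa\setminus\cal B_{\tau_m}\subseteq\cal B_\kappa$, whose $L^2$ and $\|\cdot/\sqrt{|k|}\|_{L^2}$ norms are bounded uniformly in $m$ by a constant depending only on $\kappa$ (hence universal, since $1<\kappa<2$).

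The main steps are then: First, apply the standard estimates \eqn{eqn:standard ineq} to each term, using $(H^f)^{1/2}\leq H_{P,0}^{1/2}$ and $\|(H^f)^{1/2}\psi\|\leq \|H_{P,0}^{1/2}\psi\|\leq \braket{\psi,H_{P,0}\psi}^{1/2}$ together with $\|\psi\|$. For the single creation/annihilation terms this directly gives a bound of the form $C(\braket{\psi,H_{P,0}\psi}^{1/2}\|\psi\| + \|\psi\|^2)$. For the $B^*\cdot B$ term one writes $\braket{\psi,\slice{B^*}{n}{0}\cdot\slice{B}{n}{0}\psi}=\|\slice{B}{n}{0}\psi\|^2\leq C\,\|(H^f)^{1/2}\psi\|^2\leq C\braket{\psi,H_{P,0}\psi}$; for the $B^2$, $(B^*)^2$ terms one similarly bounds $|\braket{\psi,(\slice{B}{n}{0})^2\psi}|$ and $|\braket{\psi,(\slice{B^*}{n}{0})^2\psi}|$ by pulling out two factors and using \eqn{eqn:standard ineq} twice, picking up $C\braket{\psi,H_{P,0}\psi}$ plus lower-order $\|\psi\|^2$ terms. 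For the cross term $(P-P^f)\cdot\slice{B}{n}{0}$ one uses that $|P|\leq P_{\max}$ is bounded and that $\|P^f\psi\|^2 = \||P^f|\psi\|^2$; since $\omega(k)=|k|\geq c\,k^2$ on any bounded region and globally $|P^f|\leq$ (const)$\cdot H^f$ is false, but $\||P^f|^{1/2}\psi\|^2 = \braket{\psi,|P^f|\psi\,}$ — better: write $(P-P^f)^2/2 \le H_{P,0}$, so $\|(P-P^f)\psi\|\le \sqrt{2}\,\braket{\psi,H_{P,0}\psi}^{1/2}$, and then $|\braket{\psi,(P-P^f)\cdot\slice{B}{n}{0}\psi}|\leq \sqrt2\braket{\psi,H_{P,0}\psi}^{1/2}\cdot C\|(H^f)^{1/2}\psi\|\leq C\braket{\psi,H_{P,0}\psi}$, using Cauchy–Schwarz in the one-particle variable together with the bound on $k\beta(k)/\sqrt{|k|}$.

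Collecting everything, we obtain constants $C_1,C_2$ (universal, since all the integrals are bounded uniformly in $n,m$, and $|P|,\kappa$ are bounded) with
\begin{align*}
 |\braket{\psi,\slice{\Delta H'_P}{n}{m}\psi}|\leq |g|\,C_1\,\braket{\psi,H_{P,0}\psi}+|g|\,C_2\,\braket{\psi,H_{P,0}\psi}^{1/2}\|\psi\|+|g|\,C_3\|\psi\|^2.
\end{align*}
Absorbing the cross term by Young's inequality, $|g|C_2\braket{\psi,H_{P,0}\psi}^{1/2}\|\psi\|\leq |g|C_1\braket{\psi,H_{P,0}\psi}+|g|(C_2^2/4C_1)\|\psi\|^2$, and setting $\namer{a}:=3C_1$, $\namer{b}:=C_3+C_2^2/4C_1$ (say), gives $\braket{\psi,H_{P,0}\psi}-|g|\namer{a}\braket{\psi,H_{P,0}\psi}\leq \braket{\psi,\slice{H'_P}{n}{m}\psi}+|g|\namer{b}\braket{\psi,\psi}$, which rearranges to \eqn{eqn:a priori} for $|g|<1/\namer{a}$. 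The only genuinely delicate point — and the one I would be most careful about — is extracting the uniformity of all constants in $n$ and $m$: this rests entirely on the infrared integrability of $k\beta(k)/\sqrt{|k|}$ near $k=0$ (so the $B$-dependent terms are controlled as $n\to\infty$, i.e. the UV end, where the extra $|k|$ from $k\beta(k)$ supplies the decay) and on the boundedness of $\cal B_\kappa$ (so the linear infrared term is controlled as $m\to\infty$). No cancellation is needed; the estimate is a brute-force form bound, which is exactly why it only gives the weaker control alluded to before \cor{cor:ir ground state energies} and must be supplemented by \thm{thm:ground state energies} for the actual gap estimates.
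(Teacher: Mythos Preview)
There is a genuine gap, and it is precisely at the point you flag as ``the only genuinely delicate point''. Your claim that $k\beta(k)$ decays like $|k|^{-5/2}$ at infinity is wrong: from $\rho(k)\sim|k|^{-1/2}$ and $(|k|^2/2+|k|)^{-1}\sim|k|^{-2}$ one gets $|k\beta(k)|\sim |g|\,|k|^{-3/2}$, so
\[
\int_{\cal B_{\sigma_n}\setminus\cal B_\kappa}|k\beta(k)|^2\,dk\ \sim\ g^2\int_\kappa^{\sigma_n}r^{-1}\,dr\ \sim\ g^2\ln\sigma_n
\]
diverges as $n\to\infty$. Consequently, when you ``pull out two factors'' and write $|\braket{\psi,(\slice{B}{n}{0})^2\psi}|\le \|\slice{B^*}{n}{0}\psi\|\,\|\slice{B}{n}{0}\psi\|$, the $\|\slice{B^*}{n}{0}\psi\|$ factor picks up the term $\|k\beta(k)\|_{L^2}\|\psi\|$ from the second line of \eqn{eqn:standard ineq}, and your final constant in front of $\|\psi\|^2$ grows like $\ln\sigma_n$. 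What you actually prove this way is the \emph{weaker} bound \eqn{eqn:a priori-2}, not \eqn{eqn:a priori}; the paper explicitly distinguishes the two (see the paragraph following \lem{lem:a priori}).

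The paper's proof avoids this by a trick going back to Nelson: for the $(\slice{B}{n}{0})^2$ term one inserts $(\slice{N}{n}{0}+3)^{1/2}(\slice{N}{n}{0}+3)^{-1/2}$ and estimates $\|(\slice{N}{n}{0}+3)^{-1/2}(\slice{B}{n}{0})^2\psi\|$ by a direct Fock-space computation exploiting the symmetry of the wave functions. The point is that the extra $(\slice{N}{n}{0}+3)^{-1/2}$ effectively trades one power of the divergent $\|k\beta(k)\|_{L^2}$ for the \emph{convergent} integral $\int|k\beta(k)|^2\omega(k)^{-1/2}\,dk$ (behaving like $\int r^{-3/2}\,dr$ at infinity), while the compensating factor $\|(\slice{N}{n}{0}+3)^{1/2}\psi\|$ is controlled by $\|(H^f)^{1/2}\psi\|$ since $|k|\ge\kappa>1$ on the support. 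Your brute-force Cauchy--Schwarz cannot see this cancellation; the number-operator insertion is the missing idea.
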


\begin{proof}
See Appendix A.
\end{proof}


\section{Ground States of the Gross Transformed Hamiltonians $\slice{H'_{P}}{\infty}{0}$}\label{sec:main proof}

This section provides the proof of \thm{thm:main uv} in \sct{sec:main results}. We start by introducing a sequence of gap bounds.
\begin{definition}\label{def:gap bounds}
We define the sequence of gap bounds
\begin{align}\label{eqn:def gap bounds}
 \xi_n:=\frac{1}{8}
 \kappa\left(1-\sum_{j=1}^n\Delta\xi_j\right), && \quad \Delta\xi_n:=\frac{(\beta-1)^2}{2\beta}\frac{n}{\beta^{n}}
\end{align}
for $n\in\bb N$ with the scaling parameter $\beta>1$. Furthermore, we impose the constraint
\begin{align}\label{eqn:uv constraint}
 |g|\leq (\beta -1), \qquad 1<\beta<2.
\end{align}
\end{definition}

The definition of the sequence of gap bounds $(\xi_n)_{n\in\bb N}$ in (\ref{eqn:def gap bounds}) will be motivated in  \lem{lem:neumann}. Note that $\sum_{j=1}^\infty\Delta\xi_j=\frac{1}{2}$ implies
\begin{align}\label{gap bounds-2}
\frac{1}{16}
\kappa\leq\xi_n\leq\frac{1}{8}
\kappa.
\end{align}

 \begin{remark}
In this section  the constraints  $|P|<P_{\mathrm{max}}$ and $1<\kappa<2$ are implicitly assumed. 
 \end{remark}
\begin{lemma}\label{lem:gap}
For an integer $n>1$ assume:
\begin{enumerate}[(i)]
 \item $\slice{E'_P}{n-1}{0}$ is the non-degenerate eigenvalue of $\slice{H'_{P}}{n-1}{0}\restrict{\slice{\cal F}{n-1}{0}}$ with eigenvector $\slice{\Psi'_{P}}{n-1}{0}$.
\item $\gap{\slice{H'_{P}}{n-1}{0}\restrict{\slice{\cal F}{n-1}{0}}}\geq\xi_{n-1}$.
\item $\slice{E'_P}{n-1}{0}$ is differentiable in $P$ and  $| \nabla \slice{E'_P}{n-1}{0}|\leq \namel{c:cp}{C_{\nabla E}}\equiv\frac{3}{4}$.
\end{enumerate}
This implies that $\slice{E'_P}{n-1}{0}$ is also the non-degenerate ground state energy of $\slice{H'_{P}}{n-1}{0}\restrict{\slice{\cal F}{n}{0}}$ with eigenvector $\slice{\Psi'_{P}}{n-1}{0}\otimes \Omega$. Furthermore,
\begin{align}\label{estimate-gap}
\begin{split}
\gap{\slice{H'_{P}}{n-1}{0}\restrict{\slice{\cal F}{n}{0}}}
&\geq \inf_{\slice{\cal F}{n}{0}\ni\psi\perp\slice{\Psi'_{P}}{n-1}{0}\otimes\Omega}\braket{\slice{H_{P}'}{n-1}{0}-\theta\slice{H^f}{n}{n-1}-\slice{E'_P}{n-1}{0}}_\psi\geq \xi_{n-1}
\end{split}
\end{align}
 where $0<\theta<\frac{1}{8}$ and the infimum is taken over $\psi\in D(H_{P,0})$.
\end{lemma}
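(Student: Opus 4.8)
The statement has two parts: first, that passing from $\slice{\cal F}{n-1}{0}$ to the larger Fock space $\slice{\cal F}{n}{0}$ does not introduce new spectrum below $\slice{E'_P}{n-1}{0}$, so that $\slice{\Psi'_{P}}{n-1}{0}\otimes\Omega$ remains the (non-degenerate) ground state vector; and second, the quantitative gap bound \eqn{estimate-gap}. The whole argument is variational and rests on the tensor product structure $\slice{\cal F}{n}{0}=\slice{\cal F}{n-1}{0}\otimes\slice{\cal F}{n}{n-1}$ together with the fact that, by the hypotheses of the lemma, the Hamiltonian $\slice{H'_{P}}{n-1}{0}$ does not yet act on the slice $\slice{\cal F}{n}{n-1}$ — we have not added the interaction in $[\sigma_{n-1},\sigma_n)$ yet, only enlarged the Hilbert space. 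So on $\slice{\cal F}{n}{0}$ the operator in question is $\slice{H'_{P}}{n-1}{0}\otimes\id{\slice{\cal F}{n}{n-1}}$ plus the free kinetic/field pieces that genuinely couple the two factors (the $(P-P^f)^2/2$ term mixes the field momenta of both slices). The plan is to control these coupling terms by a fraction $\theta$ of the free field energy $\slice{H^f}{n}{n-1}$ in the new slice, which is what produces the operator $\slice{H'_{P}}{n-1}{0}-\theta\slice{H^f}{n}{n-1}$ appearing in \eqn{estimate-gap}.

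First I would decompose a general $\psi\in\slice{\cal F}{n}{0}$ along excitation number in the slice $\slice{\cal F}{n}{n-1}$, or more simply note that the creation/annihilation operators $b(k),b^*(k)$ for $k\in\cal B_{\sigma_n}\setminus\cal B_{\sigma_{n-1}}$ commute with everything in $\slice{H'_{P}}{n-1}{0}$ except through $P^f$. Writing $P^f=P^f_{<}+P^f_{\mathrm{slice}}$, where $P^f_{\mathrm{slice}}=\int_{\cal B_{\sigma_n}\setminus\cal B_{\sigma_{n-1}}}dk\,k\,b^*(k)b(k)$, the term $\tfrac12(P-P^f)^2$ expands and the cross term $-(P-P^f_<)\cdot P^f_{\mathrm{slice}}$ plus $\tfrac12(P^f_{\mathrm{slice}})^2$ are the genuinely new pieces. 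The key estimate is that these are bounded below by $-\theta\,\slice{H^f}{n}{n-1}$ for a suitable small $\theta$: using $|k|\geq\sigma_{n-1}=\kappa\beta^{n-1}>\kappa>1$ on the slice, the number operator $N_{\mathrm{slice}}$ is dominated by $\slice{H^f}{n}{n-1}=\int\omega(k)b^*(k)b(k)$, and $|P^f_{\mathrm{slice}}|\leq \slice{H^f}{n}{n-1}$ as well since $|k|=\omega(k)$; combined with $|P-P^f_<|\leq |P| + |\nabla\slice{E'_P}{n-1}{0}|$-type a priori control (hypothesis (iii) and a standard pull-through/commutator bound on $\slice{\Psi'_{P}}{n-1}{0}$) one gets the coupling terms bounded by $C(\varepsilon)+\varepsilon\,\slice{H^f}{n}{n-1}$ for any $\varepsilon$, and choosing $\varepsilon=\theta<\tfrac18$ works. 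This gives the first inequality in \eqn{estimate-gap}, $\gap{\slice{H'_{P}}{n-1}{0}\restrict{\slice{\cal F}{n}{0}}}\geq \inf_{\psi\perp\slice{\Psi'_{P}}{n-1}{0}\otimes\Omega}\braket{\slice{H'_{P}}{n-1}{0}-\theta\slice{H^f}{n}{n-1}-\slice{E'_P}{n-1}{0}}_\psi$, and simultaneously shows $\slice{E'_P}{n-1}{0}=\inf\spec(\slice{H'_{P}}{n-1}{0}\restrict{\slice{\cal F}{n}{0}})$ is still attained exactly at $\slice{\Psi'_{P}}{n-1}{0}\otimes\Omega$ (because any excitation in the new slice costs at least $\sigma_{n-1}(1-\theta)$ in field energy minus the small coupling loss, which is positive, and on the $\Omega$-sector the new terms all vanish).

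For the second inequality, that this infimum is $\geq\xi_{n-1}$: I would split over whether the trial vector $\psi$ has a component in the slice or not. Decompose $\psi=\psi_0\otimes\Omega + \psi^\perp$ where $\psi^\perp$ lives in the orthogonal complement of $\slice{\cal F}{n-1}{0}\otimes\Omega$ inside $\slice{\cal F}{n}{0}$, i.e. has at least one boson in $\slice{\cal F}{n}{n-1}$. On the $\psi_0\otimes\Omega$ part, $\slice{H^f}{n}{n-1}$ and all the coupling terms annihilate $\Omega$, so $\braket{\cdot}_{\psi_0\otimes\Omega}=\braket{\slice{H'_{P}}{n-1}{0}-\slice{E'_P}{n-1}{0}}_{\psi_0}\geq\xi_{n-1}$ by hypothesis (ii) provided $\psi_0\perp\slice{\Psi'_{P}}{n-1}{0}$, which is forced by $\psi\perp\slice{\Psi'_{P}}{n-1}{0}\otimes\Omega$. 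On the $\psi^\perp$ part, $\slice{H^f}{n}{n-1}\geq\sigma_{n-1}>\kappa$ while $\slice{H'_{P}}{n-1}{0}-\slice{E'_P}{n-1}{0}\geq0$, so $\braket{\slice{H'_{P}}{n-1}{0}-\theta\slice{H^f}{n}{n-1}-\slice{E'_P}{n-1}{0}}_{\psi^\perp}\geq -\theta\,\sigma_{n-1}$ is the wrong sign — so the naive split fails and one instead keeps a fraction of $\slice{H^f}{n}{n-1}$ positive: rewrite the operator as $(\slice{H'_{P}}{n-1}{0}-\slice{E'_P}{n-1}{0}) + (1-2\theta)\slice{H^f}{n}{n-1} - (1-\theta)\slice{H^f}{n}{n-1}$... cleaner is to observe $\slice{H^f}{n}{n-1}\geq \sigma_{n-1}\geq\kappa\geq 8\xi_{n-1}$ (since $\xi_{n-1}\leq\kappa/8$ by \eqn{gap bounds-2}), hence on $\psi^\perp$ we have $-\theta\slice{H^f}{n}{n-1}$ is more than compensated once we keep the full positivity: actually $\braket{\slice{H'_{P}}{n-1}{0}-\slice{E'_P}{n-1}{0}}_{\psi^\perp}\geq 0$ is too weak. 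The correct bookkeeping, which I expect to be the only slightly delicate point, is that the a priori gap estimate applies on the whole $\slice{\cal F}{n}{0}$ with $\slice{H'_{P}}{n-1}{0}$ acting trivially on the slice: for $\psi^\perp$ with at least one excitation, $\slice{H'_{P}}{n-1}{0}\restrict{\slice{\cal F}{n}{0}}$ restricted to that sector has infimum $\geq \slice{E'_P}{n-1}{0}$ again (it's $\slice{H'_{P}}{n-1}{0}$ tensored with identity, whose spectrum starts at $\slice{E'_P}{n-1}{0}$), but we additionally gain $(1-\theta)\slice{H^f}{n}{n-1}\geq(1-\theta)\sigma_{n-1}\geq(1-\tfrac18)\kappa>\kappa/2 > \xi_{n-1}$; combining, $\braket{\slice{H'_{P}}{n-1}{0}-\theta\slice{H^f}{n}{n-1}-\slice{E'_P}{n-1}{0}}_{\psi^\perp}\geq (1-\theta)\sigma_{n-1}\geq \xi_{n-1}$. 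Since both sectors give $\geq\xi_{n-1}$ and they are orthogonal, a standard convexity/2-by-2 argument over the decomposition $\psi=\psi_0\otimes\Omega+\psi^\perp$ (the cross terms in $\braket{\psi,(\slice{H'_{P}}{n-1}{0}-\theta\slice{H^f}{n}{n-1}-\slice{E'_P}{n-1}{0})\psi}$ vanish because the operator preserves the excitation-number grading of the slice) yields $\braket{\cdot}_\psi\geq\xi_{n-1}$. The main obstacle, then, is not any single estimate but the careful verification that the coupling terms from $\tfrac12(P-P^f)^2$ genuinely preserve (or are controlled within) the slice grading enough that the two-sector split is legitimate and that the loss $\theta\,\slice{H^f}{n}{n-1}$ truly suffices to absorb them with $\theta<\tfrac18$ — this is where hypothesis (iii) on $|\nabla\slice{E'_P}{n-1}{0}|\leq\tfrac34$ and the constraint $|P|\leq P_{\max}=\tfrac14$ enter, via a pull-through formula controlling $\|P^f_<\slice{\Psi'_{P}}{n-1}{0}\|$ and more generally $(P-P^f_<)$-moments on vectors in $\slice{\cal F}{n-1}{0}$.
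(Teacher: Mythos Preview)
Your overall two-sector split (zero slice-excitations versus at least one) is the right skeleton, and you correctly note that the operator preserves the slice-number grading so cross terms vanish. But there are two genuine problems.

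First, a minor one: the \emph{first} inequality in \eqn{estimate-gap} is immediate from $\slice{H^f}{n}{n-1}\ge 0$ --- subtracting a nonnegative operator can only lower the infimum. You try to earn it by ``absorbing the coupling terms from $\tfrac12(P-P^f)^2$ into $\theta\slice{H^f}{n}{n-1}$'', but no absorption is needed; the $\theta\slice{H^f}{n}{n-1}$ is inserted deliberately to be spent later (in \lem{lem:neumann}), not because it is forced here.

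Second, and this is the real gap: your treatment of the $\psi^\perp$ sector is wrong. You assert that on that sector ``$\slice{H'_P}{n-1}{0}$ is $\slice{H'_P}{n-1}{0}$ tensored with identity, whose spectrum starts at $\slice{E'_P}{n-1}{0}$'', and then add the free gain $(1-\theta)\slice{H^f}{n}{n-1}\ge(1-\theta)\sigma_{n-1}$. But $\slice{H'_P}{n-1}{0}$ on $\slice{\cal F}{n}{0}$ is \emph{not} of tensor-product form --- you yourself observed earlier that $(P-P^f)^2$ couples the slices. On a product state $\varphi\otimes\eta$ with $\eta$ carrying definite momenta $k_1,\dots,k_m$ in the new slice, the operator acts on $\varphi$ as $\slice{H'_{P-\sum_j k_j}}{n-1}{0}$, so the relevant infimum over $\varphi$ is $\slice{E'_{P-\sum_j k_j}}{n-1}{0}$, not $\slice{E'_P}{n-1}{0}$. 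Since $\sum_j k_j$ can be arbitrary, you need a \emph{global} lower bound on the shifted ground state energy. This is precisely where the paper invokes \thm{thm:ground state energies} (Gross: $E_0\le E_Q$ for all $Q$) together with hypothesis (iii) to obtain
\[
\slice{E'_{P-\sum_j k_j}}{n-1}{0}-\slice{E'_P}{n-1}{0}\ \ge\ \slice{E'_0}{n-1}{0}-\slice{E'_P}{n-1}{0}\ \ge\ -\namer{c:cp},
\]
whence the expectation on $\varphi\otimes\eta$ is $\ge (1-\theta)\sum_j|k_j|-\namer{c:cp}\ge(1-\theta-\namer{c:cp})\sigma_{n-1}\ge\tfrac18\kappa\ge\xi_{n-1}$. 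Your suggested ``pull-through formula controlling $(P-P^f_<)$-moments'' does not supply this: it gives information about the ground state $\slice{\Psi'_P}{n-1}{0}$, not a lower bound on $\slice{E'_Q}{n-1}{0}$ at arbitrary shifted $Q$. Gross's theorem is the essential non-perturbative input here, and it is missing from your argument.
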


\begin{proof}
Using (i), a direct computation yields
\begin{align*}
\slice{H'_{P}}{n-1}{0}(\slice{\Psi'_{P}}{n-1}{0}\otimes\Omega)=\slice{E'_P}{n-1}{0}(\slice{\Psi'_{P}}{n-1}{0}\otimes\Omega)
\end{align*}
 as the interaction is cut off at $\sigma_{n-1}$. Hence, $\slice{E'_P}{n-1}{0}$ is an eigenvalue of $\slice{H'_{P}}{n-1}{0}\restrict{\slice{\cal F}{n}{0}}$ with eigenvector $\slice{\Psi'_{P}}{n-1}{0}\otimes\Omega$. Let us consider
\begin{align}\label{eqn:gap n-1 to n}
 \inf_{\slice{\cal F}{n}{0}\ni\psi\perp\slice{\Psi'_{P}}{n-1}{0}\otimes\Omega}\braket{\slice{H'_{P}}{n-1}{0}-\slice{E'_P}{n-1}{0}}_\psi.
\end{align}
As the Gross transformation is unitary and does not affect $\slice{\cal F}{n}{n-1}$, and since $\slice{H^f}{n}{n-1}$ is positive,  we have
\begin{align}\label{eqn:inf}
 \eqn{eqn:gap n-1 to n}\geq \inf_{\slice{\cal F}{n}{0}\ni\psi\perp\slice{\Psi_P}{n-1}{0}\otimes\Omega}\braket{\slice{H_{P}}{n-1}{0}-\theta\slice{H^f}{n}{n-1}-\slice{E_{P}}{n-1}{0}}_\psi.
\end{align}
We subtract the term $\theta\slice{H^f}{n}{n-1}$ for a technical reason which will become clear in \lem{lem:neumann}. 

Now, the right-hand side of \eqn{eqn:inf} is bounded from below by
\begin{align*}
 \min\left\{\gap{\slice{H'_{P}}{n-1}{0}\restrict{\slice{\cal F}{n-1}{0}}}, \inf_{\psi=\varphi\otimes\eta}\braket{\slice{H_{P}}{n-1}{0}-\theta\slice{H^f}{n}{n-1}-\slice{E_{P}}{n-1}{0}}_\psi\right\},
\end{align*}
where $\varphi\in\slice{\cal F}{n-1}{0}$, $\eta\in\slice{\cal F}{n}{n-1}$, $\varphi\otimes\eta$ belongs to $D(H_{P,0})$ and $\eta$ is a vector with a definite, strictly positive number of bosons. For $m\geq 1$ bosons in the vector $\eta$ we estimate
\begin{align}
 &\inf_{\psi=\varphi\otimes\eta}\braket{\slice{H_{P}}{n-1}{0}-\theta\slice{H^f}{n}{n-1}-\slice{E_{P}}{n-1}{0}}_\psi \nonumber\\
&\geq\inf_{\varphi, k_j\in[\sigma_{n-1},\sigma_n)}\braket{\frac{1}{2}\left(P-P^f-\sum_{j=1}^m k_j\right)^2+H^f+g \slice{\Phi}{n-1}{0}+(1-\theta)\sum_{j=1}^m|k_j|-\slice{E_{P}}{n-1}{0}}_\varphi \nonumber\\
  &\geq\inf_{k_j\in[\sigma_{n-1},\sigma_n)}\left[(1-\theta)\sum_{j=1}^m|k_j|+\slice{E_{P-\sum_{j=1}^m k_j}}{n-1}{0}-\slice{E_{P}}{n-1}{0}\right] \label{eq: gradnograd1}\\
  &\geq(1-\theta-\namer{c:cp})\sigma_{n-1}\geq\frac{1}{8}\kappa \label{eq: gradnograd2}
\end{align}
 where the steps \eqn{eq: gradnograd1} and \eqn{eq: gradnograd2} follow from:
\begin{enumerate}
 \item $\sigma_{n-1}\geq\kappa$, $0<\theta<\frac{1}{8}$ and $\namer{c:cp}=\frac{3}{4}$.
 \item The estimate 
\[
 \slice{E_{P-\sum_{j=1}^m k_j}}{n-1}{0}-\slice{E_{P}}{n-1}{0}=\slice{E_{P-\sum_{j=1}^m k_j}}{n-1}{0}-\slice{E_{0}}{n-1}{0}+\slice{E_{0}}{n-1}{0}-\slice{E_{P}}{n-1}{0}\geq \slice{E_{0}}{n-1}{0}-\slice{E_{P}}{n-1}{0}
\]
which holds by \thm{thm:ground state energies}.
 \item The estimate
\[
 \slice{E'_0}{n-1}{0}-\slice{E_{P}}{n-1}{0}\geq- \sup_{|Q|\leq P_{\mathrm{max}}}|\nabla \slice{E'_Q}{n}{0}|\geq -\namer{c:cp}
\]
since $\slice{E'_P}{n-1}{0}$ is differentiable in $P$ and  $|P|<1$.
\end{enumerate}

First, this implies that \eqn{eqn:gap n-1 to n} is bounded from below by $\min\left\{\xi_{n-1},\frac{\kappa}{8}\right\}=\xi_{n-1}$; see  (\ref{gap bounds-2}). Second, it turns out that $\slice{\Psi'_{P}}{n-1}{0}$ is the non-degenerate ground state of $\slice{H'_{P}}{n-1}{0}\restrict{\slice{\cal F}{n}{0}}$ with
\begin{align*}
 \gap{\slice{H'_{P}}{n-1}{0}\restrict{\slice{\cal F}{n}{0}}}\geq \xi_{n-1}.
\end{align*}

\end{proof}


%
%

\begin{remark}
Under the assumptions of \lem{lem:gap} it follows that  for $j,n \in \mathbb{N}$
\begin{align*}
 \slice{E'_P}{n}{0}=\inf\spec{\slice{H'_{P}}{n}{0}\restrict{\slice{\cal F}{n}{0}}}=\inf\spec{\slice{H'_{P}}{n}{0}\restrict{\cal F_{n+j}}}\,.
\end{align*}
\end{remark}

\begin{lemma}\label{lem:neumann}
 Let  $n\geq 1$. For $n=1$, set $\slice{H'_{P}}{n-1}{0}:=H'_{P,0}$, $\slice{E'_P}{n-1}{0}:=P^2/2$,  and $\xi_{n-1}:=\kappa/2$. Assume that for some universal constant $\namel{c:ce}{C_{E'}}$ the bound $|\slice{E'_P}{n-1}{0}|<\namer{c:ce}$ holds true. Then there exist $\beta_{\mathrm{max}}>1$ and $g_{\mathrm{max}}>0$ such that, for all $1<\beta\leq\beta_{\mathrm{max}}$ and $|g|\leq g_{\mathrm{max}}$, the assumptions (i), (ii) in   \lem{lem:gap}
imply that
\begin{align}\label{eqn:domain}
 \frac{1}{\slice{H'_{P}}{n}{0}-z}\restrict{\slice{\cal F}{n}{0}}, \quad\frac{\xi_{n}}{2}\leq|\slice{E'_P}{n-1}{0}-z|\leq \xi_{n},
\end{align} 
is well-defined.
\end{lemma}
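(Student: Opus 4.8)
\emph{Strategy.} I would prove \lem{lem:neumann} by showing that the Neumann series \eqn{eqn:res} for $(\slice{H'_P}{n}{0}-z)^{-1}$, obtained by expanding around the resolvent of $\slice{H'_P}{n-1}{0}$ on the enlarged space $\slice{\cal F}{n}{0}$, converges in operator norm whenever $\tfrac12\xi_n\le|\slice{E'_P}{n-1}{0}-z|\le\xi_n$. This reduces to controlling $\|\Delta H'_P|^n_{n-1}\,(\slice{H'_P}{n-1}{0}-z)^{-1}\|_{\slice{\cal F}{n}{0}}$ uniformly in $n$ and in $z$ in that annulus; since the slice creation and annihilation operators only couple neighbouring boson-number sectors of $\slice{\cal F}{n}{n-1}$, it is in fact enough to bound a fixed power of this operator by some $q<1$, so that the geometric series still converges. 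The crucial input is \lem{lem:gap}: under hypotheses (i), (ii), $\slice{\Psi'_P}{n-1}{0}\otimes\Omega$ is the non-degenerate ground state of $\slice{H'_P}{n-1}{0}$ on $\slice{\cal F}{n}{0}$ and---this is the ``technical reason'' behind the term $-\theta\slice{H^f}{n}{n-1}$ subtracted there---on the orthogonal complement of this vector one has the shifted bound $\slice{H'_P}{n-1}{0}-\slice{E'_P}{n-1}{0}\ge\theta\,\slice{H^f}{n}{n-1}+\xi_{n-1}$. Since $\slice{H'_P}{n-1}{0}$ commutes with $\slice{H^f}{n}{n-1}$, functional calculus then yields on that complement both $\|(\slice{H'_P}{n-1}{0}-z)^{-1}\|\le(\xi_{n-1}-\xi_n)^{-1}$ and the decisive estimate $\|(\slice{H^f}{n}{n-1})^{1/2}(\slice{H'_P}{n-1}{0}-z)^{-1}\|\le C(\theta\sigma_{n-1})^{-1/2}$, the decay coming from the fact that a nonzero eigenvalue of $\slice{H^f}{n}{n-1}$ is at least $\sigma_{n-1}=\kappa\beta^{n-1}$; the comparison $\slice{E'_{P-k}}{n-1}{0}\ge\slice{E'_0}{n-1}{0}$ of \thm{thm:ground state energies} locates the bottom of the spectrum on the multi-boson sectors, as in the proof of \lem{lem:gap}. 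On the ground-state line the resolvent is just $(\slice{E'_P}{n-1}{0}-z)^{-1}$, of modulus $\le 2/\xi_n$.

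\emph{Controlling the slice.} I would then expand $\Delta H'_P|^n_{n-1}$ through \eqn{eqn:G trafo Hamiltonain} into monomials that are linear or quadratic in $\slice{B}{n}{n-1},\slice{B^*}{n}{n-1}$, some of them paired with $\slice{B}{n-1}{0},\slice{B^*}{n-1}{0}$ or with $P-P^f$, and bound each of them by \eqn{eqn:standard ineq} using $\|k\beta(k)/\sqrt{|k|}\|^2_{L^2(\cal B_{\sigma_n}\setminus\cal B_{\sigma_{n-1}})}\le Cg^2(\beta-1)\,\sigma_{n-1}^{-1}$ and $\|k\beta(k)\|^2_{L^2(\cal B_{\sigma_n}\setminus\cal B_{\sigma_{n-1}})}\le Cg^2(\beta-1)$ (both using $\ln\beta\le\beta-1$). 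Hence each slice annihilation contributes a factor $Cg(\beta-1)^{1/2}\sigma_{n-1}^{-1/2}(\slice{H^f}{n}{n-1})^{1/2}$ and each slice creation an extra $Cg(\beta-1)^{1/2}$, while the below-$\kappa$ operators contribute only their $(\slice{H^f}{n-1}{0})^{1/2}$-relative norms, which remain bounded in $n$ (their bare $L^2$-norms grow like $\sqrt{n\ln\beta}$, the logarithmic ultraviolet divergence, but this is absorbable). Composing with the resolvent bounds above and tracking one round of the Neumann iteration, the only dangerous factor, $(\xi_{n-1}-\xi_n)^{-1}\sim\beta^{n}/n$ on the orthogonal zero-slice-boson sector, is paired with the decay $\sigma_{n-1}^{-1/2}\sim\beta^{-n/2}$ produced whenever a slice boson is present, and the $\sqrt n$ carried by the below-$\kappa$ operators is offset by the $1/n$ sitting inside $\Delta\xi_n$; this is precisely why $\Delta\xi_n=\tfrac{(\beta-1)^2}{2\beta}n\beta^{-n}$ and the constraints $|g|\le\beta-1$, $1<\beta<2$ are imposed. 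The net effect is $q\le C|g|$, uniform in $n$, so for $\beta_{\max}$ close to $1$ and $g_{\max}$ small enough $q\le\tfrac12$, \eqn{eqn:res} converges, and $(\slice{H'_P}{n}{0}-z)^{-1}\restrict{\slice{\cal F}{n}{0}}$ is bounded. (The hypothesis $|\slice{E'_P}{n-1}{0}|<C_{E'}$ serves only to absorb the energy shift between the two scales into universal constants; the case $n=1$ is trivial because $\slice{H'_P}{0}{0}=H'_{P,0}$ carries no interaction.)

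\emph{Main obstacle.} The heart of the matter is this uniformity in $n$. The lower bound $\xi_{n-1}$ for the spectral gap of $\slice{H'_P}{n-1}{0}$ decreases towards the outer radius $\xi_n$ of the annulus, their difference being the \emph{small} quantity $\tfrac{\kappa}{8}\Delta\xi_n\to0$, so the free resolvent is only bounded by the \emph{diverging} $(\xi_{n-1}-\xi_n)^{-1}$; the entire weight of the estimate lies in exhibiting, along the Neumann iteration, the compensation between this shrinking spectral margin and the diverging ultraviolet scale $\sigma_{n-1}$ that a boson of the new slice necessarily carries. Carrying out this bookkeeping---bounding the resolvent weighted by $(\slice{H^f}{n}{n-1})^{1/2}$ on the excited subspace and keeping the below-$\kappa$ logarithmic divergence under control---is the substance of the proof and is what fixes the precise form of the gap sequence $(\xi_n)_{n\in\bb N}$.
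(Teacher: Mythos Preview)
Your strategy matches the paper's: expand the resolvent in a Neumann series, decompose $\slice{\Delta H'_P}{n}{n-1}$ into monomials in $\slice{B}{n}{n-1},\slice{B^*}{n}{n-1}$ paired with $(P-P^f)$ or $\slice{B^{(\ast)}}{n-1}{0}$, and exploit the $\theta\slice{H^f}{n}{n-1}$--improved gap of \lem{lem:gap} to balance the shrinking margin $\Delta\xi_n$ against the UV scale $\sigma_n$. You also correctly identify why the particular form of $\Delta\xi_n$ is forced.

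The one substantive difference is that the paper works with the \emph{symmetrized} sandwich $\|R^{1/2}\,\slice{\Delta H'_P}{n}{n-1}\,R^{1/2}\|$ (with $R=(\slice{H'_P}{n-1}{0}-z)^{-1}$) rather than your one-sided $\|\slice{\Delta H'_P}{n}{n-1}R\|$. This is not cosmetic. For the term $\slice{B^*}{n}{n-1}\cdot(P-P^f)\,R$ your scheme would require $\|(P-P^f)R\|$, and on the orthogonal zero--slice-boson sector this is of order $\|H_{P,0}^{1/2}R\|\sim 1/\Delta\xi_n\sim\beta^n/((\beta-1)^2n)$; multiplying by the $B^*$ contribution $g(\beta-1)^{1/2}$ gives a factor that \emph{diverges} in $n$, and no slice boson is present yet to trigger the $\sigma_{n-1}^{-1/2}$ decay you invoke. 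The symmetric formulation sidesteps this because $R^{1/2}\slice{B^*}{n}{n-1}=(\slice{B}{n}{n-1}(R^*)^{1/2})^*$, so every slice operator is paired with a resolvent and carries the factor $g((\beta-1)/\sigma_n)^{1/2}$; the remaining $\|(P-P^f)R^{1/2}\|$ is only $\sim\Delta\xi_n^{-1/2}$, and the product stays bounded. Your ``fixed power'' remark hints at a sector-by-sector compensation, but as written it does not explain how to avoid this blow-up without effectively inserting the missing $R^{1/2}$ on the left---which is the paper's move.
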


\begin{proof}
 Let $z$ be in the domain given in \eqn{eqn:domain}. In order to control the expansion of the resolvent $(\slice{H'_{P}}{n}{0}-z)^{-1}$, i.e. 
\begin{align*}
\frac{1}{\slice{H'_{P}}{n-1}{0}-z}\sum_{j=0}^\infty \left[-\slice{\Delta H_P'}{n}{n-1} \frac{1}{\slice{H'_{P}}{n-1}{0}-z}\right]^j\restrict{\cal F|^{0}_{n}},
\end{align*}
it is sufficient to prove that
\begin{align}\label{eqn:neumann term}
 \left\|\left(\frac{1}{\slice{H'_{P}}{n-1}{0}-z}\right)^{1/2}\slice{\Delta H_P'}{n}{n-1}\left(\frac{1}{\slice{H'_{P}}{n-1}{0}-z}\right)^{1/2}\right\|_{\slice{\cal F}{n}{0}}<1.
\end{align}
As we shall show now, this can be achieved by a convenient choice of $\beta$ and $g$ (uniformly in $n$) using the gap bounds $(\xi_n)_{n\in\bb N}$ from \dfn{def:gap bounds}. We can express the interaction term by
\begin{align}\label{eqn:delta H}
\begin{split}
  \slice{\Delta H_P'}{n}{n-1}=&\frac{1}{2}\left((\slice{B}{n}{n-1})^2+(\slice{B^*}{n}{n-1})^2\right)+\slice{B}{n-1}{0}\cdot\slice{B}{n}{n-1}+\slice{B^*}{n}{n-1}\cdot\slice{B^*}{n-1}{0}\\
 &-(P-P^f)\cdot\slice{B}{n}{n-1}-\slice{B^*}{n}{n-1}\cdot(P-P^f)\\
 &+\slice{B^*}{n}{n-1}\cdot\slice{B}{n}{n-1}+\slice{B^*}{n-1}{0}\cdot\slice{B}{n}{n-1}+\slice{B^*}{n}{n-1}\cdot\slice{B}{n-1}{0}.
\end{split}
\end{align}
Hence, the left-hand side of \eqn{eqn:neumann term} is bounded by
\begin{align}\label{eqn:neumann bounds}
 &\left\|\slice{B}{n}{n-1}\left(\frac{1}{\slice{H'_{P}}{n-1}{0}-z}\right)^{1/2}\right\|_{\slice{\cal F}{n}{0}}\times\\
 &\quad\times\Bigg[\left\|\slice{B^*}{n}{n-1}\left(\frac{1}{\slice{H'_{P}}{n-1}{0}-z}\right)^{1/2}\right\|_{\slice{\cal F}{n}{0}}+\left\|\slice{B}{n}{n-1}\left(\frac{1}{\slice{H'_{P}}{n-1}{0}-z}\right)^{1/2}\right\|_{\slice{\cal F}{n}{0}}\\
 &\quad\quad+2\left\|\slice{B^*}{n-1}{0}\left(\frac{1}{\slice{H'_{P}}{n-1}{0}-z}\right)^{1/2}\right\|_{\slice{\cal F}{n}{0}}+2\left\|\slice{B}{n-1}{0}\left(\frac{1}{\slice{H'_{P}}{n-1}{0}-z}\right)^{1/2}\right\|_{\slice{\cal F}{n}{0}}+\label{eqn:neumann bounds1}\\
 &\quad\quad+2\left\|(P-P^f)\left(\frac{1}{\slice{H'_{P}}{n-1}{0}-z}\right)^{1/2}\right\|_{\slice{\cal F}{n}{0}}\Bigg]\label{eqn:neumann bounds2}.
\end{align}
Notice that  the standard inequalities in \eqn{eqn:standard ineq} yield
\begin{align}
\begin{split}\label{eqn:uv standard ineq}
  \|\slice{B}{n}{m}\psi\|&\leq |g|\;C\left(\frac{1}{\sigma_m}-\frac{1}{\sigma_n}\right)^{1/2}\|(\slice{H^f}{n}{m})^{1/2}\psi\|, \\
 \|\slice{B^*}{n}{m}\psi\|&\leq |g|\;C\left(\left(\frac{1}{\sigma_m}-\frac{1}{\sigma_n}\right)^{1/2}\|(\slice{H^f}{n}{m})^{1/2}\psi\|+(\ln\sigma_n-\ln\sigma_m)^{1/2}\|\psi\|\right)
\end{split}
\end{align}
for all $\psi$ in the domain of $H_{P,0}^{1/2}$. 
Then expression \eqn{eqn:neumann bounds}-\eqn{eqn:neumann bounds2} can be controlled as follows:
\begin{enumerate}[1.]
 \item We estimate
\begin{align}\label{eqn:b n n-1 term}
 \left\|\slice{B}{n}{n-1}\left(\frac{1}{\slice{H'_{P}}{n-1}{0}-z}\right)^{1/2}\right\|_{\slice{\cal F}{n}{0}}\leq |g|C\left(\frac{\beta-1}{\sigma_{n}}\right)^{1/2}\left\|(\slice{H^f}{n}{n-1})^{1/2}\left(\frac{1}{\slice{H'_{P}}{n-1}{0}-z}\right)^{1/2}\right\|_{\slice{\cal F}{n}{0}}.
\end{align}
Furthermore, since $\slice{H^f}{n}{n-1}$ and $\slice{H'_{P}}{n-1}{0}$ commute, we have that
 \begin{align}
 &\left\|(\slice{H^f}{n}{n-1})^{1/2}\left(\frac{1}{\slice{H'_{P}}{n-1}{0}-z}\right)^{1/2}\right\|_{\slice{\cal F}{n}{0}} \label{eqn-teo-spectr}\\
 &\leq \theta^{-1/2}\;\left\|\left(\frac{\theta\slice{H^f}{n}{n-1}}{\slice{H'_{P}}{n-1}{0}-\theta\slice{H^f}{n}{n-1}-\slice{E'_P}{n-1}{0}+\theta\slice{H^f}{n}{n-1}+\slice{E'_P}{n-1}{0}-z)}\right)^{1/2}\right\|_{\slice{\cal F}{n}{0}}\nonumber \\
 &\leq \theta^{-1/2}\;\left\|\left(\frac{\theta\slice{H^f}{n}{n-1}}{\xi_{n-1}-\xi_{n}+\theta\slice{H^f}{n}{n-1})}\right)^{1/2}\right\|_{\slice{\cal F}{n}{0}} \leq  \theta^{-1/2}\nonumber
\end{align}
for, e.g. $\theta=\frac{1}{16}$. This is true  because of \lem{lem:gap}, the constraints on $z$ given in \eqn{eqn:domain}, and the bound $\Delta\xi_n=\xi_{n-1}-\xi_{n}>0$ (see \dfn{def:gap bounds}).
 \item Next we consider the bounds
 \begin{align}\label{eqn:B res}
 \left\|\slice{B}{n-1}{0}\left(\frac{1}{\slice{H'_{P}}{n-1}{0}-z}\right)^{1/2}\right\|_{\slice{\cal F}{n}{0}}&\leq |g|C \left\|(\slice{H^f}{n-1}{0})^{1/2}\left(\frac{1}{\slice{H'_{P}}{n-1}{0}-z}\right)^{1/2}\right\|_{\slice{\cal F}{n}{0}},
\end{align}
and
\begin{align}\label{eqn:Bstar res}
\begin{split}
 \left\|\slice{B^*}{n-1}{0}\left(\frac{1}{\slice{H'_{P}}{n-1}{0}-z}\right)^{1/2}\right\|_{\slice{\cal F}{n}{0}}&\leq |g|C\Bigg(\left\|(\slice{H^f}{n-1}{0})^{1/2}\left(\frac{1}{\slice{H'_{P}}{n-1}{0}-z}\right)^{1/2}\right\|_{\slice{\cal F}{n}{0}}
\\
&\hskip2cm+\left(\ln\beta^{n-1}\right)^{1/2}\left\|\left(\frac{1}{\slice{H'_{P}}{n-1}{0}-z}\right)^{1/2}\right\|_{\slice{\cal F}{n}{0}}\Bigg). 
\end{split}
\end{align}
 
 Terms including $\slice{H^f}{n-1}{0}$ or $(P-P^f)$ can be estimated as follows:
 \begin{align}\label{first}
 \left\|(\slice{H^f}{n-1}{0})^{1/2}\left(\frac{1}{\slice{H'_{P}}{n-1}{0}-z}\right)^{1/2}\right\|_{\slice{\cal F}{n}{0}}&\leq\left\|H_{P,0}^{1/2}\left(\frac{1}{\slice{H'_{P}}{n-1}{0}-z}\right)^{1/2}\right\|_{\slice{\cal F}{n}{0}},\\
 \left\|(P-P^f)\left(\frac{1}{\slice{H'_{P}}{n-1}{0}-z}\right)^{1/2}\right\|_{\slice{\cal F}{n}{0}}&\leq \sqrt 2 \left\|H_{P,0}^{1/2}\left(\frac{1}{\slice{H'_{P}}{n-1}{0}-z}\right)^{1/2}\right\|_{\slice{\cal F}{n}{0}}.\label{second}
 \end{align}
In order to estimate the right-hand side in (\ref{first}) and (\ref{second}), we observe that the standard inequalities \eqn{eqn:uv standard ineq} readily imply  that
 there exists a n-indepedent finite constant $c_{uv}$ such that,  for $|g|\leq1$ and   $|g|<\frac{1}{c_{uv}}$, $\psi\in D(H_{P,0}^{1/2})$ and $n\in\bb N$, it holds 
\begin{align}\label{eqn:a priori-2}
 \braket{\psi,H_{P,0}\psi}\leq \frac{1}{1-|g|c_{uv}}\bigg[\braket{\psi,\slice{H'_P}{n}{0}\psi}+g^2c_{uv}^2\ln \sigma_n\braket{\psi,\psi}\bigg].
\end{align}
Consequently, for $|g|$ sufficiently small, we can estimate
 \begin{align}
 \left\|H_{P,0}^{1/2}\left(\frac{1}{\slice{H'_{P}}{n-1}{0}-z}\right)^{1/2}\right\|^2_{\slice{\cal F}{n}{0}}
 &
 \leq C\sup_{\|\psi\|=1}\braket{\psi,\left[1+\left(|z|+|g|\ln \sigma_n\right)\left|\frac{1}{\slice{H'_{P}}{n-1}{0}-z}\right|\right]\psi}\label{eqn:H0 est 1}
\end{align}
where $\psi\in\slice{\cal F}{n}{0}$.
Moreover, the right-hand side of
\begin{align*}
 |z|\leq|\slice{E'_P}{n-1}{0}-z|+|\slice{E'_P}{n-1}{0}|
\end{align*}
is uniformly bounded  because, first, $|\slice{E'_P}{n-1}{0}-z|\leq\xi_{n-1}\leq \frac{1}{2}\kappa$, and,  second, $|\slice{E'_P}{n-1}{0}|\leq \namer{c:ce}$ by assumption.
Hence, we get
\begin{align}
  \left\|H_{P,0}^{1/2}\left(\frac{1}{\slice{H'_{P}}{n-1}{0}-z}\right)^{1/2}\right\|^2_{\slice{\cal F}{n}{0}}\leq C\left(1+(1+|g|\ln \sigma_n)\left\|\left(\frac{1}{\slice{H'_{P}}{n-1}{0}-z}\right)^{1/2}\right\|^2_{\slice{\cal F}{n}{0}}\right)\label{eqn:remaining resolvent}.
\end{align}

Finally, the remaining norm in \eqn{eqn:remaining resolvent} can be controlled by
\begin{align}
 \left\|\left(\frac{1}{\slice{H'_{P}}{n-1}{0}-z}\right)^{1/2}\right\|^2_{\slice{\cal F}{n}{0}}&\leq \max\left\{\frac{1}{|\slice{E'_P}{n-1}{0}-z|},\frac{1}{\gap{\slice{H'_{P}}{n-1}{0}\restrict{\slice{\cal F}{n}{0}}}-|\slice{E'_P}{n-1}{0}-z|}\right\}\leq \frac{C}{\Delta\xi_n}\label{eqn:uv res est}
\end{align}
which is due to \lem{lem:gap} and the domain of $z$ given in \eqn{eqn:domain}.
\end{enumerate}

We recall that by \dfn{def:gap bounds} the sequence $(\Delta\xi_n)_{n\in\bb N}$ tends to zero, which is a necessary ingredient in the induction scheme in the proof of \thm{thm:main uv}. Hence, the terms proportional to $(\Delta\xi_n)^{-1/2}$ must be treated cautiously. 
It turns out that the sum of the terms in \eqn{eqn:neumann bounds}-\eqn{eqn:neumann bounds2} is bounded by
\begin{align}
 \cal O \left(|g|\left(\frac{(\beta-1)}{\sigma_n\Delta\xi_n}\right)^{1/2}\right)+\cal O \left(|g|\left(\frac{(\beta-1)\ln\beta^{n-1}}{\sigma_n\Delta\xi_n}\right)^{1/2}\right)\leq |g|^{1/2}  C\left(\frac{(\beta-1)^2n}{\beta^n\Delta\xi_n}\right)^{1/2}\label{eqn:uv int est}
\end{align}
 for $|g|\leq (\beta -1)$; see \eqn{eqn:uv constraint}. This  dictates the choice $\Delta\xi_n:=\frac{(\beta-1)^2}{2\beta}\frac{n}{\beta^{n}}$ made in \dfn{def:gap bounds}. Hence, for all $n\in\bb N$ we get
\begin{align}\label{main-inequality-lemma4.4}
 \left\|\left(\frac{1}{\slice{H'_{P}}{n-1}{0}-z}\right)^{1/2}\slice{\Delta H_P'}{n}{n-1}\left(\frac{1}{\slice{H'_{P}}{n-1}{0}-z}\right)^{1/2}\right\|_{\slice{\cal F}{n}{0}}
 &\leq |g|^{1/2}C \left(\frac{(\beta-1)^2n}{\beta^n\;\Delta\xi_n}\right)^{1/2}\leq |g|^{1/2} C.
\end{align}
Therefore, \eqn{eqn:neumann term} holds for $|g|$  sufficiently small which proves the claim.
\end{proof}

\begin{definition}\label{definition-4.3}
 For $n\in\bb N$ we define the contour
\begin{align*}
 \Gamma_n := \left\{z\in\bb C\;\bigg|\;|\slice{E'_P}{n-1}{0}-z|=\frac{1}{2}\xi_n\right\}.
\end{align*}
\end{definition}

The bound in (\ref{eqn:uv int est}) was delicate because the outer boundary of the domain of $z$ might be close to the spectrum. However, when considering $z$ being further away from the spectrum we get a much better estimate:
\begin{corollary}\label{lem:resolvent diff}
Let $g$, $\beta$ fulfill the conditions of \lem{lem:neumann} and $z\in\Gamma_n$ or $z=\slice{E'_P}{n}{0}+i\lambda$ with $\lambda\in \mathbb{R}, |\lambda|=1$  for $n\in\bb N$. The following estimates  hold true
 \begin{align}\label{eqn:uv res est}
 \left\|\left(\frac{1}{\slice{H'_{P}}{n-1}{0}-z}\right)^{1/2}\slice{\Delta H_P'}{n}{n-1}\left(\frac{1}{\slice{H'_{P}}{n-1}{0}-z}\right)^{1/2}\right\|_{\slice{\cal F}{n}{0}}&\leq C|g| \left(\frac{(\beta-1)n}{\beta^n}\right)^{1/2},\\
\label{eqn:res diff}
\left\|\frac{1}{\slice{H'_{P}}{n}{0}-z}-\frac{1}{\slice{H'_{P}}{n-1}{0}-z}\right\|_{\slice{\cal F}{n}{0}}&\leq C|g| \left(\frac{(\beta-1)n}{\beta^n}\right)^{1/2}.  
\end{align}
\end{corollary}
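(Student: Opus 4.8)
\textbf{Proof proposal for Corollary \pref{lem:resolvent diff}.}

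The plan is to exploit the fact that, for $z\in\Gamma_n$ or $z=\slice{E'_P}{n}{0}+i\lambda$ with $|\lambda|=1$, the quantity $|\slice{E'_P}{n-1}{0}-z|$ is bounded \emph{below} by a universal constant (order $\kappa$), rather than by $\Delta\xi_n$ as in \lem{lem:neumann}. Indeed, on $\Gamma_n$ one has $|\slice{E'_P}{n-1}{0}-z|=\tfrac12\xi_n\geq\tfrac{1}{32}\kappa$ by \eqn{gap bounds-2}; and for $z=\slice{E'_P}{n}{0}+i\lambda$ we have $|\slice{E'_P}{n-1}{0}-z|\geq|\lambda|=1$. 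First I would rerun the chain of estimates in the proof of \lem{lem:neumann} verbatim, with the sole change that every occurrence of the resolvent bound \eqn{eqn:uv res est} (the displayed line ending in $C/\Delta\xi_n$) is replaced by the much better estimate
\[
 \left\|\left(\frac{1}{\slice{H'_{P}}{n-1}{0}-z}\right)^{1/2}\right\|^2_{\slice{\cal F}{n}{0}}\leq C
\]
coming from the lower bound on $|\slice{E'_P}{n-1}{0}-z|$ together with \lem{lem:gap}. Propagating this through \eqn{eqn:remaining resolvent} kills the factor $(\Delta\xi_n)^{-1/2}$ in \eqn{eqn:uv int est}, so the bound \eqn{eqn:uv standard ineq} on $\slice{B}{n}{n-1},\slice{B^*}{n}{n-1}$ leaves only the genuinely small prefactor $|g|\,C\big(\tfrac{(\beta-1)}{\sigma_n}\big)^{1/2}$ from the new slice, dressed by at most a logarithm $(\ln\beta^{n-1})^{1/2}$; writing $1/\sigma_n=\kappa^{-1}\beta^{-n}$ and absorbing $\ln\beta^{n-1}\leq Cn$ gives exactly the claimed $C|g|\big((\beta-1)n/\beta^n\big)^{1/2}$, which establishes \eqn{eqn:uv res est}. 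The one point requiring a little care is that I must still have available the conclusions of \lem{lem:gap} (the gap bound $\xi_{n-1}$ for $\slice{H'_{P}}{n-1}{0}\restrict{\slice{\cal F}{n}{0}}$), but those hold under the inductive hypotheses (i), (ii) of \lem{lem:neumann}, which are assumed here as well.

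For the resolvent difference \eqn{eqn:res diff}, I would use the second resolvent identity in the form
\[
 \frac{1}{\slice{H'_{P}}{n}{0}-z}-\frac{1}{\slice{H'_{P}}{n-1}{0}-z}
 =-\left(\frac{1}{\slice{H'_{P}}{n}{0}-z}\right)^{1/2}
   \left[\left(\frac{1}{\slice{H'_{P}}{n}{0}-z}\right)^{1/2}\slice{\Delta H_P'}{n}{n-1}\left(\frac{1}{\slice{H'_{P}}{n-1}{0}-z}\right)^{1/2}\right]
   \left(\frac{1}{\slice{H'_{P}}{n-1}{0}-z}\right)^{1/2},
\]
so that
\[
 \left\|\frac{1}{\slice{H'_{P}}{n}{0}-z}-\frac{1}{\slice{H'_{P}}{n-1}{0}-z}\right\|_{\slice{\cal F}{n}{0}}
 \leq \left\|\frac{1}{\slice{H'_{P}}{n}{0}-z}\right\|_{\slice{\cal F}{n}{0}}^{1/2}
      \cdot\big(\text{bound }\eqn{eqn:uv res est}\big)\cdot
      \left\|\frac{1}{\slice{H'_{P}}{n-1}{0}-z}\right\|_{\slice{\cal F}{n}{0}}^{1/2}.
\]
Here one needs a bound on $\|(\slice{H'_{P}}{n}{0}-z)^{-1}\|$ itself. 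For the unperturbed resolvent $\|(\slice{H'_{P}}{n-1}{0}-z)^{-1}\|\leq C$ on both contours, exactly as above. For $\|(\slice{H'_{P}}{n}{0}-z)^{-1}\|$ I would invoke \lem{lem:neumann}: its Neumann-series bound \eqn{main-inequality-lemma4.4} shows that the full resolvent differs from the unperturbed one by a factor $(1+|g|^{1/2}C)^{-1}$, whence $\|(\slice{H'_{P}}{n}{0}-z)^{-1}\|\leq C$ on the relevant $z$-set as well (note the points $z$ considered here lie inside the annular domain \eqn{eqn:domain} where \lem{lem:neumann} applies, or can be connected to it while keeping $\slice{E'_P}{n}{0}$ isolated since the gap of $\slice{H'_{P}}{n}{0}$ is at least $\xi_n$ by the induction). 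Substituting the three bounds gives precisely the right-hand side $C|g|\big((\beta-1)n/\beta^n\big)^{1/2}$ of \eqn{eqn:res diff}.

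The main obstacle I anticipate is purely bookkeeping rather than conceptual: making sure that at the auxiliary points $z=\slice{E'_P}{n}{0}+i\lambda$, $|\lambda|=1$ — which may lie \emph{outside} the annulus \eqn{eqn:domain} of \lem{lem:neumann} — one still controls $\|(\slice{H'_{P}}{n}{0}-z)^{-1}\|$ and $\|(\slice{H'_{P}}{n-1}{0}-z)^{-1}\|$ and the operator-norm of the ``middle factor'' \eqn{eqn:uv res est}. This is handled by noting that at such $z$ one has $\mathrm{dist}(z,\spec(\slice{H'_{P}}{n}{0}\restrict{\slice{\cal F}{n}{0}}))\geq|\lambda|=1$ and $\mathrm{dist}(z,\spec(\slice{H'_{P}}{n-1}{0}\restrict{\slice{\cal F}{n}{0}}))\geq 1-|\slice{E'_P}{n}{0}-\slice{E'_P}{n-1}{0}|$, and the energy shift $|\slice{E'_P}{n}{0}-\slice{E'_P}{n-1}{0}|$ is itself $O(|g|)$ small (it is $\leq$ the contour estimate applied to the ground-state projection, or follows from $\slice{E'_P}{n}{0}\leq\slice{E'_P}{n-1}{0}$ together with a first-order perturbation bound), so both distances are bounded below by a universal constant for small $|g|$. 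With these lower bounds in hand the $H_{P,0}$-form estimate \eqn{eqn:a priori-2} again yields $\|H_{P,0}^{1/2}(\slice{H'_{P}}{n-1}{0}-z)^{-1/2}\|^2_{\slice{\cal F}{n}{0}}\leq C(1+|g|\ln\sigma_n\cdot C)$, and combining with \eqn{eqn:uv standard ineq} for the new slice gives the middle-factor bound $C|g|((\beta-1)n/\beta^n)^{1/2}$ uniformly in $n$. The rest is the algebra of multiplying three uniformly bounded (resp. small) factors.
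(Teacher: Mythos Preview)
Your argument for \eqn{eqn:uv res est} is correct and matches the paper's one-line proof exactly: rerun the estimates of \lem{lem:neumann}, replacing the factor $1/\Delta\xi_n$ in \eqn{eqn:uv res est} by a universal constant because on the contours in question $|\slice{E'_P}{n-1}{0}-z|$ is bounded below uniformly.

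For \eqn{eqn:res diff} your approach is correct but more elaborate than the paper's. The paper simply remarks that \eqn{eqn:res diff} follows from \eqn{eqn:uv res est} for $|g|$ small, the implicit mechanism being the Neumann expansion already set up in \lem{lem:neumann}:
\[
 \frac{1}{\slice{H'_{P}}{n}{0}-z}-\frac{1}{\slice{H'_{P}}{n-1}{0}-z}
 =\left(\frac{1}{\slice{H'_{P}}{n-1}{0}-z}\right)^{1/2}\sum_{j\geq 1}\Big[-\big(\tfrac{1}{\slice{H'_{P}}{n-1}{0}-z}\big)^{1/2}\slice{\Delta H_P'}{n}{n-1}\big(\tfrac{1}{\slice{H'_{P}}{n-1}{0}-z}\big)^{1/2}\Big]^j\left(\frac{1}{\slice{H'_{P}}{n-1}{0}-z}\right)^{1/2},
\]
whose norm is $\leq \|R_{n-1}\|\cdot q/(1-q)$ with $q$ the \emph{symmetric} sandwich bound \eqn{eqn:uv res est}. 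This uses only quantities already controlled ($\|R_{n-1}\|\leq C$ and $q$ small), and avoids your detour through the asymmetric sandwich $R_n^{1/2}\slice{\Delta H_P'}{n}{n-1}R_{n-1}^{1/2}$ and the separate bound on $\|R_n\|$. Your route works too---the asymmetric middle factor can indeed be bounded by redoing the slice estimates with \eqn{eqn:a priori-2} applied to $\slice{H'_P}{n}{0}$ rather than $\slice{H'_P}{n-1}{0}$, and $\|R_n\|\leq C$ follows from the gap of $\slice{H'_P}{n}{0}$---but the Neumann form is the shorter path and is what the paper has in mind.
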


\begin{proof}
It is enough to notice that in the estimate of the left-hand side of (\ref{eqn:uv res est}) one can just replace $\Delta\xi_n$ in (\ref{eqn:uv int est}) by a constant and use that $1< \beta< 2$, see (\ref{eqn:uv constraint}). For $|g|$ small enough, the inequality in \eqn{eqn:res diff} follows from (\ref{eqn:uv res est}).
\end{proof}

With these lemmas at hand we prove the induction step for the removal of the ultraviolet cut-off.

\begin{theorem}\label{thm:uv induction}
 Let $g$, $\beta$ fulfill the assumptions of  \lem{lem:neumann}. Then for $|g|$ sufficiently small the following holds true for all $n\in\bb N$:
\begin{enumerate}[(i)]
 \item $\slice{E'_P}{n}{0}:=\inf\spec{\slice{H'_{P}}{n}{0}\restrict{\slice{\cal F}{n}{0}}}$ is a non-degenerate eigenvalue of $\slice{H'_{P}}{n}{0}\restrict{\slice{\cal F}{n}{0}}$.
 \item $\gap{\slice{H'_{P}}{n}{0}\restrict{\slice{\cal F}{n}{0}}}\geq\xi_{n}$.
 \item The vectors
\begin{align}
 \slice{\Psi'_P}{0}{0}&:=\Omega,\nonumber\\
 \slice{\Psi'_P}{j}{0}&:=\slice{\cal Q'_{P}}{j}{0}\slice{\Psi'_P}{j-1}{0}, \qquad \slice{\cal Q'_{P}}{j}{0}:=-\frac{1}{2\pi i}\oint_{\Gamma_{j}}\frac{dz}{H'_{P,j}-z}, \qquad j\geq 1,\label{eqn:projection}
\end{align}
  are well-defined and $\slice{\Psi'_P}{n}{0}$ is the unique ground state of $\slice{H_{P}'}{n}{0}\restrict{\slice{\cal F}{n}{0}}$.
  \item The following holds:
\begin{align}\label{diff-uv}
  \left\|\slice{\Psi'_P}{n}{0}-\slice{\Psi'_{P}}{n-1}{0}\right\|&\leq C|g|\left(\frac{(\beta-1)n}{\beta^n}\right)^{1/2},\\
 \|\slice{\Psi'_P}{n}{0}\|&\geq C_{\Psi'} \label{bound from below}
 \end{align} 
  where $0<C_{\Psi'}<1$.
  \item
$\slice{E'_P}{n}{0}$ is analytic in $P$ for all $n\in\bb N$ and the following bounds hold true
  \begin{align}
|\slice{E'_P}{n}{0}-\slice{E'_P}{n-1}{0}|&\leq C|g|^{2}\frac{(\beta-1)n}{\beta^n}\label{eqn:EnShift},\qquad |\slice{E'_P}{n}{0}| < \namer{c:ce}\left(>\frac{P^2}{2}\right), \\
|\nabla \slice{E'_P}{n}{0}-\nabla \slice{E'_P}{n-1}{0}|&\leq C|g|^{2}\frac{(\beta-1)n}{\beta^n},\qquad  |\nabla\slice{E'_P}{n}{0}|\leq \namer{c:cp}\left(=\frac{3}{4}\right), \label{diff-grad-uv}
\end{align}
where $\slice{E_{P}'}{0}{0}\equiv \frac{P^{2}}{2}$ and $\nabla\slice{E'_P}{0}{0}\equiv P$.
\end{enumerate}
\end{theorem}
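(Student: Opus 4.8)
The plan is to prove (i)--(v) simultaneously by induction on $n$, each step being an application of Kato's analytic perturbation theory built on the three preceding results \lem{lem:gap}, \lem{lem:neumann} and \cor{lem:resolvent diff}. The base case $n=1$ is immediate: on $\slice{\cal F}{1}{0}$ the operator $\slice{H'_P}{0}{0}=H'_{P,0}$ carries no interaction, so $\Omega$ is its non-degenerate ground state with energy $P^2/2$, and since $H^f\geq\kappa$ on the orthogonal complement of $\Omega$ while $\tfrac12(P-P^f)^2\geq0$, the gap exceeds $\kappa-P_{\mathrm{max}}^2/2>\xi_1$; the remaining items at $n=1$ then follow from the general step below with the $n=1$ conventions of \lem{lem:neumann}. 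So assume (i)--(v) hold for all indices $<n$. From (i), (ii) at step $n-1$ and the bounds $|\nabla\slice{E'_P}{n-1}{0}|\leq\namer{c:cp}$, $|\slice{E'_P}{n-1}{0}|<\namer{c:ce}$ of (v), the hypotheses of \lem{lem:gap} and \lem{lem:neumann} are met; hence \lem{lem:gap} gives that $\slice{E'_P}{n-1}{0}$ is still the non-degenerate ground state energy of $\slice{H'_P}{n-1}{0}\restrict{\slice{\cal F}{n}{0}}$ with gap $\geq\xi_{n-1}$, and \lem{lem:neumann} that the resolvent of $\slice{H'_P}{n}{0}\restrict{\slice{\cal F}{n}{0}}$ is well-defined on $\{\,\tfrac12\xi_n\leq|\slice{E'_P}{n-1}{0}-z|\leq\xi_n\,\}$, in particular on $\Gamma_n$.

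To obtain (i)--(iv), define $\slice{\cal Q'_P}{n}{0}$ as in \eqn{eqn:projection}, the Riesz projection of $\slice{H'_P}{n}{0}\restrict{\slice{\cal F}{n}{0}}$ along $\Gamma_n$. Comparing it with the rank-one spectral projection $-\tfrac{1}{2\pi i}\oint_{\Gamma_n}(\slice{H'_P}{n-1}{0}-z)^{-1}\,dz$ onto $\slice{\Psi'_P}{n-1}{0}\otimes\Omega$ and integrating the resolvent difference \eqn{eqn:res diff} of \cor{lem:resolvent diff} over $\Gamma_n$ gives $\|\slice{\cal Q'_P}{n}{0}-(\,\cdot\,)\|\leq C|g|(\tfrac{(\beta-1)n}{\beta^n})^{1/2}=:\delta_n$, which is $<1$ for $|g|$ small; hence $\slice{\cal Q'_P}{n}{0}$ has rank one and $\slice{H'_P}{n}{0}\restrict{\slice{\cal F}{n}{0}}$ has exactly one (simple) spectral point $\tilde E_n$ in the disk $|z-\slice{E'_P}{n-1}{0}|<\tfrac12\xi_n$, the rest of its spectrum avoiding the whole disk of radius $\xi_n$. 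To identify $\tilde E_n$ with $\slice{E'_P}{n}{0}:=\inf\spec{\slice{H'_P}{n}{0}\restrict{\slice{\cal F}{n}{0}}}$: inspection of \eqn{eqn:delta H} shows that $\slice{\Delta H'_P}{n}{n-1}(\slice{\Psi'_P}{n-1}{0}\otimes\Omega)$ always contains at least one boson of the slice $[\sigma_{n-1},\sigma_n)$, so $\braket{\slice{\Delta H'_P}{n}{n-1}}_{\slice{\Psi'_P}{n-1}{0}\otimes\Omega}=0$ and thus $\slice{E'_P}{n}{0}\leq\slice{E'_P}{n-1}{0}$; conversely \lem{lem:a priori} gives $\slice{H'_P}{n}{0}\geq-|g|\namer{b}$, which together with $\slice{E'_P}{n-1}{0}\leq P^2/2+C|g|$ (telescoping \eqn{eqn:EnShift} over the previous steps) and $\xi_n\geq\tfrac1{16}\kappa>\tfrac1{16}$ (see \eqn{gap bounds-2}) forces $\inf\spec{\slice{H'_P}{n}{0}\restrict{\slice{\cal F}{n}{0}}}>\slice{E'_P}{n-1}{0}-\xi_n$ for $|g|$ small. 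Hence $\tilde E_n=\slice{E'_P}{n}{0}$ is the non-degenerate ground state energy, giving (i), and the remaining spectrum, lying above $\slice{E'_P}{n-1}{0}+\xi_n\geq\slice{E'_P}{n}{0}+\xi_n$, gives (ii). Setting $\slice{\Psi'_P}{n}{0}:=\slice{\cal Q'_P}{n}{0}\slice{\Psi'_P}{n-1}{0}$ yields the unique ground state; since $\slice{\cal Q'_P}{n}{0}$ is the orthogonal projection onto it, $\|\slice{\Psi'_P}{n}{0}\|\geq(1-\delta_n)\|\slice{\Psi'_P}{n-1}{0}\|\geq\prod_{j\geq1}(1-\delta_j)=:C_{\Psi'}>0$ (the product converges since $\sum_j\delta_j<\infty$), while $\|\slice{\Psi'_P}{n}{0}-\slice{\Psi'_P}{n-1}{0}\|=\|(\slice{\cal Q'_P}{n}{0}-(\,\cdot\,))\slice{\Psi'_P}{n-1}{0}\|\leq\delta_n$, which is \eqn{diff-uv}. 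This proves (iii) and (iv).

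For (v): on the $P$-independent domain $D(H_{P,0})$ the operator $\slice{H'_P}{n}{0}$ is a polynomial of degree two in $P$, hence an analytic family of type (A); the estimates of \lem{lem:neumann} and \cor{lem:resolvent diff} persist for $P$ in a fixed complex neighbourhood of $\{|P|\leq P_{\mathrm{max}}\}$, so the Riesz projection and therefore $\slice{E'_P}{n}{0}$ are analytic there. The shift \eqn{eqn:EnShift} comes from second-order perturbation theory: the first-order term vanishes as above, and the second-order term acts on $\slice{\Psi'_P}{n-1}{0}\otimes\Omega$ through the reduced resolvent of $\slice{H'_P}{n-1}{0}-\slice{E'_P}{n-1}{0}$ restricted to the sector with $\geq1$ boson of the slice $[\sigma_{n-1},\sigma_n)$, where this resolvent has norm $\leq C/\sigma_{n-1}$ by the proof of \lem{lem:gap}; estimating the remaining factors by \eqn{eqn:uv standard ineq} and the higher orders by the convergence of the Neumann series (\lem{lem:neumann}) yields the $O\!\big(|g|^2\tfrac{(\beta-1)n}{\beta^n}\big)$ bound. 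Differentiating in $P$ and applying Cauchy's estimate on the complex $P$-ball turns \eqn{eqn:EnShift} into \eqn{diff-grad-uv}. Finally, telescoping from $\slice{E'_P}{0}{0}=P^2/2$ and $\nabla\slice{E'_P}{0}{0}=P$, and using $\sum_{j\geq1}j\beta^{-j}=\beta(\beta-1)^{-2}$ together with the constraint $|g|\leq\beta-1$ of \eqn{eqn:uv constraint} to absorb the factor $(\beta-1)^{-1}$, gives $|\slice{E'_P}{n}{0}|\leq P_{\mathrm{max}}^2/2+C|g|<\namer{c:ce}$ and $|\nabla\slice{E'_P}{n}{0}|\leq P_{\mathrm{max}}+C|g|<\tfrac34=\namer{c:cp}$ for $|g|$ small, which closes the induction.

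I expect the main obstacle to be the spectral bookkeeping that makes $\tilde E_n$ the \emph{bottom} of the spectrum with the gap still $\geq\xi_n$: the convergent Neumann expansion controls the resolvent only in a thin annulus around $\slice{E'_P}{n-1}{0}$ and cannot by itself exclude low-lying spectrum, so one must combine it with the non-perturbative bound $\slice{E_0}{\Lambda}{\tau}\leq\slice{E_P}{\Lambda}{\tau}$ of \thm{thm:ground state energies} (through \lem{lem:gap}, which also supplies the crucial $\gtrsim\sigma_{n-1}$ lower bound on the many-slice-boson sector used for \eqn{eqn:EnShift}), the a-priori form bound of \lem{lem:a priori}, and the vanishing of the first-order energy correction. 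The other delicate point, which is the very content of \dfn{def:gap bounds} and \lem{lem:neumann}, is to calibrate the gap sequence $(\xi_n)$ and the coupling so that the errors $\delta_n$ are summable while each $\xi_n$ stays bounded below by a fixed positive constant.
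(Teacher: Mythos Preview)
Your inductive strategy matches the paper's exactly: feed the hypotheses into \lem{lem:gap} and \lem{lem:neumann}, use the Riesz projection along $\Gamma_n$ together with \cor{lem:resolvent diff}, and close the induction by telescoping. The paper's proof of (i)--(iv) is essentially what you wrote. Two points deserve comment.

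\textbf{Excluding spectrum below the annulus.} You correctly identify this as the delicate step and resolve it with the a~priori lower bound $\slice{H'_P}{n}{0}\geq -|g|\namer{b}$ from \lem{lem:a priori}. This is valid, but the paper deliberately avoids \lem{lem:a priori} in Section~4 (see the remark preceding \cor{cor:ir ground state energies}). Instead the paper appeals directly to ``Kato's theorem'': since the estimates in \lem{lem:neumann} are uniform as the coupling in the slice interaction $\slice{\Delta H'_P}{n}{n-1}$ is scaled from $0$ to $1$, the spectrum moves continuously and cannot cross the annulus, so the unique eigenvalue that starts at $\slice{E'_P}{n-1}{0}$ remains the infimum. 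Combined with the variational inequality $\slice{E'_P}{n}{0}\leq\slice{E'_P}{n-1}{0}$ (your observation, recorded in the paper as \cor{cor:ground state energies}(iii)), this gives (i) and (ii). Your route is shorter but imports an extra ingredient; the paper's route keeps Section~4 self-contained.

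\textbf{The gradient bound \eqn{diff-grad-uv}.} Here you and the paper genuinely diverge. You propose to get $|\nabla\slice{E'_P}{n}{0}-\nabla\slice{E'_P}{n-1}{0}|$ from the energy shift via Cauchy's estimate on a complex $P$-ball. For this to work you need \eqn{eqn:EnShift} uniformly on a \emph{fixed} complex neighbourhood of $\{|P|\leq P_{\mathrm{max}}\}$, which in turn requires the gap estimate of \lem{lem:gap} there. But \lem{lem:gap} rests on Gross's inequality (\thm{thm:ground state energies}), which is stated only for real $P$; extending it to complex $P$ is not automatic and would need a separate stability argument (small complex shift of $P$ as a bounded perturbation relative to the real-$P$ gap). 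The paper sidesteps this entirely: it uses the Feynman--Hellmann formula
\[
\nabla\slice{E'_P}{n}{0}=P-\braket{P^f+\slice{B}{n}{0}+\slice{B^*}{n}{0}}_{\slice{\Psi'_P}{n}{0}}
\]
and estimates the difference at steps $n$ and $n-1$ directly, combining \eqn{eqn:order of psi diff} with the resolvent bounds \eqn{eqn:b n n-1 term}--\eqn{eqn:a priori-2} on $\Gamma_n$. This is more hands-on but avoids any complex-$P$ detour. Your Cauchy argument is salvageable, but as written it has a gap: you have not established the uniform complex-$P$ extension you invoke.
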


\begin{proof}
  We prove this by induction: Statements (i)-(v) for $(n-1)$ will be referred to as assumptions A(i)-A(v) while the same statements for $n$ are claims C(i)-C(v). For $n=1$ the claims can be verified by direct computation and by using \lem{lem:neumann}. Let $n> 1$ and suppose A(i)-A(v) hold. 

\begin{enumerate}[1.]
 \item Because of A(i), A(ii), and A(v) \lem{lem:gap} states that
\begin{align*}
 \gap{\slice{H'_{P}}{n-1}{0}\restrict{\slice{\cal F}{n}{0}}}\geq \xi_{n-1}. 
\end{align*}
 \lem{lem:neumann} ensures that the resolvent $(\slice{H'_{P}}{n}{0}-z)^{-1}$ is well-defined for $\frac{1}{2}\xi_{n}\leq|\slice{E'_P}{n-1}{0}-z|\leq \xi_n$. 
 \item Hence, Kato's theorem yields claims C(i) and C(iii). As a consequence, the spectrum of $\slice{H'_{P}}{n}{0}\restrict{\slice{\cal F}{n}{0}}$ is contained in $\{\slice{E'_P}{n}{0}\}\cup(\slice{E'_P}{n-1}{0}+\xi_n,\infty)$ because $\slice{E'_P}{n}{0}\leq \slice{E'_P}{n-1}{0}$ by (iii) of \cor{cor:ground state energies} for $m=0$, which proves claim C(ii).   
 \item Next, we prove C(iv). By A(iii)  we have
\begin{align}
 \|\slice{\Psi'_P}{n}{0}-\slice{\Psi'_{P}}{n-1}{0}\| &\leq \|(\cal Q'_{n}-\cal Q'_{n-1})\slice{\Psi'_{P}}{n-1}{0}\|=\cal O\left(|g|\left(\frac{(\beta-1)n}{\beta^n}\right)^{1/2}\right)\label{eqn:order of psi diff}
\end{align}
where we have used \lem{lem:resolvent diff} and that $\|\slice{\Psi'_{P}}{n-1}{0}\|\leq 1$ holds by construction. Furthermore, starting from the identity
\begin{align}\label{eqn:norm psi}
  \|\slice{\Psi'_P}{n}{0}\|^2=\|\slice{\Psi'_{P}}{n-1}{0}\|^2+\|\slice{\Psi'_P}{n}{0}-\slice{\Psi'_{P}}{n-1}{0}\|^2+2\Re\braket{\slice{\Psi'_{P}}{n-1}{0},\slice{\Psi'_P}{n}{0}-\slice{\Psi'_{P}}{n-1}{0}}
\end{align}
we conclude that
\begin{align}
  \|\slice{\Psi'_P}{n}{0}\|^2-\|\slice{\Psi'_{P}}{n-1}{0}\|^2=\cal O\left(|g|^{2}\frac{(\beta-1)n}{\beta^n}\right).
\end{align}
Finally, since $\|\slice{\Psi'_P}{0}{0}\|=1$ by definition,
\begin{align*}
 \|\slice{\Psi'_P}{n}{0}\|^2\geq 1-\sum_{j=1}^n\left|\|\Psi'_{P}|_{0}^{j}\|^2-\|\Psi'_{P}|_{0}^{j-1}\|^2\right|\geq 1-C|g|^{2}\sum_{j=0}^{n}\frac{(\beta-1) j}{\beta^j}\geq 1-\cal O(|g|)\geq C_{\Psi'}>0
\end{align*}
for some positive constant $C_{\Psi'}$,  and $|g|$ sufficiently small and subject to the constraint $|g|\leq (\beta-1)$; see \eqn{eqn:uv constraint}.

\item In order to prove C(v), first by using (\ref{eqn:uv res est}) and  (\ref{bound from below}) we can estimate the energy shift  as follows
\begin{equation}
 |\slice{E'_P}{n}{0}-\slice{E'_P}{n-1}{0}|=\left|\frac{\braket{\slice{\Psi'_P}{n}{0},\slice{\Delta H_P'}{n}{n-1}\slice{\Psi'_{P}}{n-1}{0}}}{\braket{\slice{\Psi'_P}{n}{0},\slice{\Psi'_{P}}{n-1}{0}}} \right|\label{eqn:en diff0}=\cal O\left(|g|^{2}\frac{(\beta-1) n}{\beta^n}\right)\nonumber
\end{equation}
This readily implies
\begin{equation}\label{energy-bound-section-4}
|\slice{E'_P}{n}{0}|\leq \frac{P^2}{2}+C|g|^{2}\sum_{j=0}^{n}\frac{ (\beta-1) j}{\beta^j}\leq \namer{c:ce}
\end{equation}
for some constant $\namer{c:ce}$.

Since $(\slice{H'_P}{n}{0})_{|P|\leq P_{\mathrm{max}}}$ is an analytic family of type A and $\slice{E'_P}{n}{0}$ is an isolated eigenvalue, $\slice{E'_P}{n}{0}$ is an analytic function of $P$ and
\begin{align}\label{gradient-at-m=0}
\nabla \slice{E'_P}{n}{0}=P-\braket{[P^f+\slice{B}{n}{0}+\slice{B^*}{n}{0}]}_{\slice{\Psi'_{P}}{n}{0}}.
\end{align}

By using equations (\ref{eqn:b n n-1 term}), (\ref{eqn-teo-spectr}),  (\ref{eqn:B res}), (\ref{second}), (\ref{eqn:a priori-2}) for $z\in \Gamma_n$ (see Definition \ref{definition-4.3}), and (\ref{eqn:order of psi diff}), for $|g|$ sufficiently small
one can easily prove that
\begin{align*}
\nabla \slice{E'_P}{n}{0}-\nabla \slice{E'_P}{n-1}{0}
&= -\braket{[\slice{B}{n}{n-1}+\slice{B^*}{n}{n-1}]}_{\slice{\Psi'_{P}}{n}{0}}\\
& \qquad+\braket{[P-P^f+\slice{B}{n-1}{0}+\slice{B^*}{n-1}{0}]}_{\slice{\Psi'_{P}}{n}{0}}-\braket{[P-P^f+\slice{B}{n-1}{0}+\slice{B^*}{n-1}{0}]}_{\slice{\Psi'_{P}}{n-1}{0}}\\
&= \cal O\left(|g|^{2}\frac{ (\beta-1) n}{\beta^n}\right)
\end{align*}
\end{enumerate}
 and finally the bound $|\nabla \slice{E'_P}{n}{0}|\leq\frac{3}{4}=\namer{c:cp}$.
\end{proof}

We can now prove the first main result.
\begin{proof}[Proof of \thm{thm:main uv} in \sct{sec:main results}]
\mbox{}\\
\begin{enumerate}[(i)]
 \item Recall that $\slice{\Psi_P}{n}{0}:=e^{-\slice{T}{n}{0}}\slice{\Psi'_P}{n}{0}$. By unitarity of the Gross transformation
\begin{align*}
 \|\slice{\Psi_P}{n}{0}-\slice{\Psi_P}{n-1}{0}\|&=\|\slice{\Psi'_P}{n}{0}-e^{\slice{T}{n}{n-1}}\slice{\Psi'_{P}}{n-1}{0}\|\\
 &\leq\|(e^{\slice{T}{n}{n-1}}-1)\slice{\Psi'_P}{n-1}{0}\|+\|\slice{\Psi'_P}{n}{0}-\slice{\Psi'_{P}}{n-1}{0}\|
\end{align*}
holds. The convergence of  $(\slice{\Psi'_P}{n}{0})_{n\in\bb N}$ to a non-zero vector (see \thm{thm:uv induction}) and 
\begin{align*}
 \|(e^{\slice{T}{n}{n-1}}-1)\slice{\Psi'_P}{n-1}{0}\|&\leq \int_0^1d\lambda\;\| e^{\lambda\slice{T}{n}{n-1}}\slice{T}{n}{n-1}\slice{\Psi'_P}{n-1}{0}\|\\
 &\leq \|\slice{T}{n}{n-1}\slice{\Psi'_P}{n-1}{0}\|\xrightarrow[n\to\infty]{}0
\end{align*}
imply the claim.
 \item  Again the unitarity of the Gross transformation and \eqn{eqn:Vself renorm} implies 
\begin{align}
 \slice{E_P}{n}{0}-\slice{V_{\mathrm{self}}}{n}{0}:=\inf \spec{\slice{H_{P}}{n}{0}\restrict{\slice{\cal F}{n}{0}}}-\slice{V_{\mathrm{self}}}{n}{0}=\slice{E'_P}{n}{0}.
\end{align}
Since the right-hand side of (\ref{eqn:EnShift}) in \thm{thm:uv induction} is summable, the sequence $(\slice{E'_P}{n}{0})$ is convergent.
\item By \cor{lem:resolvent diff} the resolvent $(\slice{H'_{P}}{n}{0}-z)^{-1}$, for $z= \slice{E'_P}{n}{0}+i\lambda$, $\lambda\in \mathbb{R}$ and $|\Im \lambda|=1$, converges as $n\to\infty$. Furthermore, for every $n$ the range of $(\slice{H'_{P}}{n}{0}-z)^{-1}$ is given by $D(H_{P,0})$ which is dense in $\cal F$. Hence, the Trotter-Kato Theorem \cite[Theorem VIII.22]{reed_methods_1981} ensures the existence of a limiting self-adjoint Hamiltonian $\slice{H'_P}{\infty}{0}$ on $\cal F$. Because of the unitarity of the Gross transformation, the family of Hamiltonians $\slice{H_{P}}{n}{0}-\slice{V_{\mathrm{self}}}{n}{0}$, $n\in\bb N$, converges to $\slice{H_{P}}{\infty}{0}:=e^{\slice{-T}{\infty}{\kappa}}\slice{H'_P}{\infty}{0} e^{\slice{T}{\infty}{\kappa}}$ in the norm resolvent sense as $n\to\infty$.
 \item By (iii), $\slice{\Psi_P}{\infty}{0}$ is  a ground state of $\slice{H_{P}}{\infty}{0}$. Moreover, \thm{thm:uv induction} ensures 
\begin{align*}
\spec{(\slice{H'_{P}}{n}{0}-\slice{E'_P}{n}{0})\restrict{\slice{\cal F}{n}{0}}}\subset\{0\}\cup(\xi_n,\infty).
\end{align*}
 Since $\xi_n\geq \frac{1}{16}\kappa$ the set $(-\infty,0)\cup(0,\frac{1}{16}\kappa)$ is not part of the spectrum of $(\slice{H'_{P}}{n}{0}-\slice{E'_P}{n}{0})\restrict{\slice{\cal F}{n}{0}}$ for any $n\in\bb N$. As the spectrum cannot suddenly expand in the limit \cite[Theorem VIII.24]{reed_methods_1981}, this proves the claimed gap bound. The gap bound and the resolvent convergence imply that $\slice{E'_P}{\infty}{0}$ is a non-degenerate eigenvalue.
\end{enumerate}
\end{proof}

\section{Ground States of the Gross Transformed Hamiltonians $\slice{H'_P}{\infty}{m}$ for $m\in\bb N$}\label{sec:expansions}\label{sec:ground states with ir cutoff}

So far we have studied the Gross transformed Hamiltonian $\slice{H'_{P}}{n}{0}$  for an arbitrary large $n$. In the following we want to add interaction slices below the frequency $\kappa$. As a preparation for this we state some important properties of the Hamiltonian
\begin{align*}
 \slice{H'_P}{n}{m}:=\slice{H'_{P}}{n}{0}+g\slice{\Phi}{0}{m}
\end{align*}
for any $m\in\bb N\cup\{\infty\}$ and $n\in\bb N$. Note that for all such cut-offs the operator $\slice{H'_P}{n}{m}$ is a Kato small perturbation of $H_{P,0}$ and therefore self-adjoint on $D(H_{P,0})$. We collect these facts including the limiting case $n\to\infty$ in the next lemma.
 \begin{remark}
In this section  we implicitly assume  the constraints  $|P|<P_{\mathrm{max}}$ and $1<\kappa<2$. Furthermore, $g$ and $\beta$ are such that all  the results of \sct{sec:main proof} hold true. 
 \end{remark}
\begin{lemma}\label{lem:ir res conv}
 Let  $|g|$ be sufficiently small. For $n\in\bb N$, $m\in\bb N\cup\{\infty\}$ there exists $\lambda\in \mathbb{R}$ such  the operator
\begin{align*}
\frac{1}{\slice{H'_P}{n}{m}-\slice{E'_P}{n}{0}\pm i\lambda}
\end{align*}
has range $D(H_{P,0})$ and converges in norm as $n\to\infty$. Therefore, the sequence of operators $\slice{H'_P}{n}{m}$, $n\in\bb N$, converges to a self-adjoint operator acting on $\cal F$ in the norm resolvent sense.
\end{lemma}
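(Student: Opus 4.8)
The plan is to establish the statement of \lem{lem:ir res conv} by reducing it, via a Neumann-series argument, to the norm-resolvent convergence of the $m=0$ Hamiltonians $\slice{H'_P}{n}{0}$ already obtained in \thm{thm:main uv} (iii). First I would note that the added interaction $g\slice{\Phi}{0}{m}$ lives entirely in the fixed, $n$-independent frequency slice $\cal B_\kappa\setminus\cal B_{\tau_m}$ (or $\cal B_\kappa$ when $m=\infty$), with a form factor $\rho$ that is square integrable over this region. Consequently, by the standard inequalities \eqn{eqn:standard ineq}, $g\slice{\Phi}{0}{m}$ is bounded relative to $(\slice{H^f}{\kappa}{\tau_m})^{1/2} \leq H_{P,0}^{1/2}$ with an $n$-independent bound, so $\slice{H'_P}{n}{m}$ is a Kato-small perturbation of $H_{P,0}$ and self-adjoint on $D(H_{P,0})$ for all $n\in\bb N\cup\{\infty\}$; the range statement for the resolvent is then automatic.

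Next I would choose $\lambda$ large enough (and $|g|$ small enough), uniformly in $n$, so that
\[
  \left\|g\slice{\Phi}{0}{m}\left(\frac{1}{\slice{H'_P}{n}{0}-\slice{E'_P}{n}{0}\pm i\lambda}\right)\right\|_{\slice{\cal F}{n}{0}} \leq \frac{1}{2}.
\]
This uniform bound follows by writing $g\slice{\Phi}{0}{m} = g\slice{\Phi}{0}{m}\, H_{P,0}^{1/2}\cdot H_{P,0}^{-1/2}$, using the relative bound from the previous step together with the a priori estimate \eqn{eqn:a priori-2} (or \lem{lem:a priori}) to control $\|H_{P,0}^{1/2}(\slice{H'_P}{n}{0}-\slice{E'_P}{n}{0}\pm i\lambda)^{-1}\|$ by $C/\sqrt{\lambda}$ for $\lambda\geq 1$, the point being that the constants here are $n$-independent because $\slice{E'_P}{n}{0}$ is uniformly bounded by \thm{thm:uv induction} (v). With this one expands
\[
  \frac{1}{\slice{H'_P}{n}{m}-\slice{E'_P}{n}{0}\pm i\lambda}
  = \frac{1}{\slice{H'_P}{n}{0}-\slice{E'_P}{n}{0}\pm i\lambda}\sum_{j=0}^\infty
  \left[-g\slice{\Phi}{0}{m}\frac{1}{\slice{H'_P}{n}{0}-\slice{E'_P}{n}{0}\pm i\lambda}\right]^j,
\]
a norm-convergent series with $n$-uniform geometric decay.

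To conclude convergence as $n\to\infty$, I would argue that each factor on the right converges in norm: the resolvent $(\slice{H'_P}{n}{0}-\slice{E'_P}{n}{0}\pm i\lambda)^{-1}$ converges by \thm{thm:main uv} (iii) combined with the convergence of $\slice{E'_P}{n}{0}$ from \thm{thm:main uv} (ii), and multiplication by the fixed bounded operator $g\slice{\Phi}{0}{m}$ preserves norm convergence. Since the series converges uniformly in $n$ (geometric bound $2^{-j}$) and term-by-term, it converges in norm to the corresponding series built from the limits; this identifies the limit as the resolvent at $\slice{E'_P}{\infty}{0}\pm i\lambda$ of a self-adjoint operator $\slice{H'_P}{\infty}{m}$, by the Trotter--Kato theorem \cite[Theorem VIII.22]{reed_methods_1981} as in the proof of \thm{thm:main uv} (iii). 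I expect the only genuine subtlety to be the $n$-uniformity of the relative bound on $g\slice{\Phi}{0}{m}$: one must make sure the spectral parameter $\slice{E'_P}{n}{0}$ stays in a fixed compact set and that the a priori estimate \eqn{eqn:a priori-2} is invoked with the logarithmic term $g^2 c_{uv}^2\ln\sigma_n$ harmlessly absorbed, since $g\slice{\Phi}{0}{m}$ only sees frequencies in the fixed window below $\kappa$ and hence its bound does not grow with $n$; the $m=\infty$ case requires no separate treatment because $\rho$ remains square integrable down to $k=0$.
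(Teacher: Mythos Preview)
Your overall strategy---Neumann series in $S:=-g\slice{\Phi}{0}{m}$ around $R_n:=(\slice{H'_P}{n}{0}-\slice{E'_P}{n}{0}\pm i\lambda)^{-1}$, followed by Trotter--Kato---is exactly the paper's. Two points, however, need repair.

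\emph{The uniform smallness bound.} The paper does not route through $H_{P,0}^{1/2}$. It uses instead that $\slice{H^f}{0}{m}$ \emph{commutes} with $\slice{H'_P}{n}{0}$ (the IR free-field energy is untouched by the UV interaction), together with a variational estimate in the style of \lem{lem:gap}: choosing $\theta'$ with $1-\theta'-\namer{c:cp}>0$ one shows $\slice{H'_P}{n}{0}-\theta'\slice{H^f}{0}{m}-\slice{E'_P}{n}{0}\geq 0$, and functional calculus then bounds $\|(\slice{H^f}{0}{m})^{1/2}R_n^{1/2}\|$ uniformly in $n$. Your proposed route via $H_{P,0}^{1/2}$ and \eqn{eqn:a priori-2} fails: the term $g^2c_{uv}^2\ln\sigma_n$ is incurred the moment you pass through $H_{P,0}^{1/2}$, and it is \emph{not} ``harmlessly absorbed'' just because $S$ lives below $\kappa$---the logarithm belongs to the bound on $H_{P,0}^{1/2}R_n$, not to $S$. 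Your alternative of invoking \lem{lem:a priori} (whose constants are $n$-independent) would rescue the argument, but the commutativity route is what the paper actually does and is cleaner.

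\emph{Convergence of the series.} The operator $g\slice{\Phi}{0}{m}$ is unbounded, only relatively bounded, so ``multiplication by the fixed bounded operator $g\slice{\Phi}{0}{m}$ preserves norm convergence'' is false as written: $SR_n\to SR_\infty$ does not follow from $R_n\to R_\infty$. The paper closes this with the telescoping identity
\[
  \|R_l(SR_l)^{j+1}-R_n(SR_n)^{j+1}\|\leq \|SR_l\|\,\|R_l(SR_l)^j-R_n(SR_n)^j\|+\|R_nS\|^{j+1}\,\|R_l-R_n\|,
\]
which by induction on $j$ reduces term-by-term convergence to the norm convergence of $R_n$ together with the uniform bounds $\|SR_n\|,\|R_nS\|<1$. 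With these two fixes your argument coincides with the paper's.
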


\begin{proof}
Let $m\in\bb N\cup\{\infty\}$. The only non-straightforward case is $n\to\infty$. First, we show the validity of the Neumann expansion
\begin{align}\label{eqn:ir neumann expansion}
 \frac{1}{\slice{H'_P}{n}{m}-\slice{E'_P}{n}{0}\pm i\lambda}=\frac{1}{\slice{H'_{P}}{n}{0}+g\slice{\Phi}{0}{m}-\slice{E'_P}{n}{0}\pm i\lambda}&=R_n\sum_{j=0}^\infty (S R_n)^j
\end{align}
for
\begin{align*}
 R_n:=\frac{1}{\slice{H'_{P}}{n}{0}-\slice{E'_P}{n}{0} \pm i\lambda} && \text{and} && S=-g\slice{\Phi}{0}{m}.
\end{align*}
With the standard inequalities \eqn{eqn:standard ineq} 
%
%
we estimate
\begin{align}
 \|SR_n\|\leq C|g|\,\left\|(\slice{H^f}{0}{m})^{1/2}\left(\frac{1}{\slice{H'_{P}}{n}{0}-\slice{E'_P}{n}{0} \pm i\lambda}\right)^{1/2}\right \| \left(\frac{1}{|\lambda|}\right)^{1/2}
+C|g|\frac{1}{|\lambda|}\label{eqn:res conv bound}.
\end{align}
Fix a $\theta'$ such that $1-\theta'-\namer{c:cp}>0$. From an analogous computation as conducted in the proof of \lem{lem:gap}  one finds
\begin{align*}
 \inf_{\|\psi\|=1}\braket{\psi,(\slice{H'_{P}}{n}{0} - \theta'\slice{H^f}{0}{m} - \slice{E'_P}{n}{0})\psi}\geq 0
\end{align*}
where the infimum is taken over $\psi\in D(H_{P,0})$. Consequently, we get that 
\begin{align*}
 \left\|\left(\frac{\theta'\slice{H^f}{0}{m}}{\slice{H'_{P}}{n}{0} - \theta'\slice{H^f}{0}{m} - \slice{E'_P}{n}{0} + \theta'\slice{H^f}{0}{m} \pm i\lambda}\right)^{1/2}\right\|^2\leq \frac{1}{|\lambda|}
\end{align*}
holds because $\slice{H^f}{0}{m}$ and $\slice{H'_{P}}{n}{0}$ commute.
For $| \lambda|$ sufficiently large this gives
\begin{align}\label{eqn:RSRbound}
 \eqn{eqn:res conv bound}\leq\frac{|g |C \theta'^{-1/2}+|g|C}{| \lambda|}<1
\end{align}
so that the Neumann expansion in  \eqn{eqn:ir neumann expansion} is well-defined for all $n\in\bb N$. Moreover, the limit of \eqn{eqn:ir neumann expansion} for $n\to\infty$ exists because:
\begin{enumerate}
  \item The sequence $(R_n)_{n\in\bb N}$, converges in norm; see \thm{thm:main uv}
  \item $\|R_l S\|,\| SR_l\| < 1$ for all $l\in\bb N$, see \eqn{eqn:RSRbound}
  \item For any $j\geq 1$ we have
\begin{equation*}
 \|R_l(SR_l)^{j+1}-R_n(SR_n)^{j+1}\|\leq \|SR_l\|\;\|R_l(SR_l)^j-R_n(SR_n)^j\|+\|R_nS\|^{j+1}\;\|R_l-R_n\|.
\end{equation*}

\end{enumerate}
For all $n\in\bb N$ the range of the resolvent $(\slice{H'_P}{n}{m}- \slice{E'_P}{n}{0} \pm i\lambda)^{-1}$
 equals $D(H_{P,0})$ and therefore it is dense. Finally the Trotter-Kato Theorem \cite[Theorem VIII.22]{reed_methods_1981} ensures the existence of a self-adjoint limiting operator $\slice{H'_P}{\infty}{m}$ bounded from below.
\end{proof}

For the Hamiltonian $\slice{H'_P}{n}{m}$, where the infrared cut-off $\tau_m$ is arbitrarily small but strictly larger than zero, we construct the corresponding ground state $\slice{\Psi'_P}{n}{m}$.  For this construction we introduce a new parameter $\zeta$ and provide necessary constraints on the infrared scaling parameter $\gamma$ depending on the coupling constant $g$. 
\begin{definition}\label{def:gap params}
We consider an infrared scaling parameter $\gamma$ that obeys
\begin{align}\label{eqn:constraint}
 0<\gamma<\frac{1}{2}, && |g|\leq \gamma^2, && \sum_{j=1}^\infty \gamma^{\frac{j}{4}}(1+j)\leq \frac{1}{2}.
\end{align}
Furthermore, we fix the auxiliary constant $0<\zeta<\frac{1}{16}$ such that
\begin{align*}
 1-\theta-\namer{c:cp}\geq 2\zeta
\end{align*}
where  $0<\theta<\frac{1}{8}$ and $\namer{c:cp}=\frac{3}{4}$.
\end{definition}

As we shall see later, the upper bound on $\zeta$ is constrained by the ultraviolet gap estimate; see (iv) in \thm{thm:main uv}.

In the iterative construction of the ground state we use \cor{cor:ir ground state energies} below that relies on \lem{lem:a priori} and on \thm{thm:ground state energies} for statements  (i),(ii).  The estimate in (iii) is  based on a simple variational argument.

\begin{corollary}\label{cor:ground state energies}\label{cor:ir ground state energies}
 Let $|g|$ be sufficiently small. For all  $n,m\in\bb N$  the following holds true:
\begin{enumerate}[(i)]
 \item $-|g|\namer{b} \leq \slice{E'_P}{n}{m} \leq \frac{1}{2}P^2$, where $c_b$ is the constant introduced in \lem{lem:a priori}.
 \item There is a $g_{\mathrm{max}}>0$ such that for $0\leq |g|<g_{\mathrm{max}}$  and all $k\in\bb R^3$ 
\begin{equation}
\label{eq:Grad E}
\slice{E'_{P-k}}{n}{m}- \slice{E'_{P}}{n}{m}\geq -\namer{c:cp}|k|.
\end{equation}
 \item Assume that $\slice{E'_P}{n+1}{m}, \slice{E'_P}{n}{m+1}$, and $\slice{E'_P}{n}{m}$ are eigenvalues of $\slice{H'_P}{n+1}{m}\restrict{\slice{\cal F}{ n+1 }{m}}$, $\slice{H'_P}{n}{m+1}\restrict{\slice{\cal F}{n}{m+1}}$, and $\slice{H'_P}{n}{m}\restrict{\slice{\cal F}{n}{m}}$, respectively; then $\slice{E'_P}{n+1}{m},\slice{E'_P}{n}{m+1}\leq \slice{E'_P}{n}{m}$.

\end{enumerate}
\end{corollary}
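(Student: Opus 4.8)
\textbf{Proof proposal for Corollary \ref{cor:ir ground state energies}.}

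The plan is to treat the three statements in turn, each relying on a different ingredient available to us. For statement (i), the lower bound $\slice{E'_P}{n}{m}\geq-|g|\namer{b}$ follows directly from \lem{lem:a priori}: applied to a normalized approximate ground state $\psi$ of $\slice{H'_P}{n}{m}\restrict{\slice{\cal F}{n}{m}}$, the inequality \eqn{eqn:a priori} together with the positivity of $H_{P,0}$ gives $0\leq\braket{\psi,H_{P,0}\psi}\leq\frac{1}{1-|g|\namer{a}}(\braket{\psi,\slice{H'_P}{n}{m}\psi}+|g|\namer{b})$, hence $\braket{\psi,\slice{H'_P}{n}{m}\psi}\geq-|g|\namer{b}$, and taking the infimum over $\psi$ yields the claim (here one must check that the full Hamiltonian restricted to $\slice{\cal F}{n}{m}$ has the same infimum as on the larger Fock space, which follows from the tensor-product structure and positivity of the additional field energy, exactly as in the Remark after \lem{lem:gap}). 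For the upper bound, one uses the vacuum $\Omega\in\slice{\cal F}{n}{m}$ as a trial state and the fact that the interaction $\slice{\Delta H'_P}{n}{m}$ has vanishing diagonal matrix element in $\Omega$ in the sense that $\braket{\Omega,\slice{H'_P}{n}{m}\Omega}=\braket{\Omega,H_{P,0}\Omega}=\frac{1}{2}P^2$ — the linear (in $b,b^*$) terms and the $B^2,(B^*)^2$ terms have zero vacuum expectation, while $B^*\cdot B$ annihilates $\Omega$ — so $\slice{E'_P}{n}{m}\leq\frac{1}{2}P^2$ by the variational principle.

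For statement (ii), I would first note that $\slice{E'_{P-k}}{n}{m}=\slice{E_{P-k}}{n}{m}-\slice{V_{\mathrm{self}}}{n}{0}$ and similarly for $P$, so the self-energy shift cancels in the difference and the claim is equivalent to $\slice{E_{P-k}}{n}{m}-\slice{E_P}{n}{m}\geq-\namer{c:cp}|k|$. Splitting this as $(\slice{E_{P-k}}{n}{m}-\slice{E_0}{n}{m})+(\slice{E_0}{n}{m}-\slice{E_P}{n}{m})$ and using \thm{thm:ground state energies} (Gross's result, which applies to the full Nelson fiber Hamiltonian with both cut-offs present) to drop the first nonnegative bracket, it remains to show $\slice{E_0}{n}{m}-\slice{E_P}{n}{m}\geq-\namer{c:cp}|k|$; since $|k|$ can be arbitrarily small this would in fact force $\slice{E_P}{n}{m}\geq\slice{E_0}{n}{m}$, which is again \thm{thm:ground state energies} — so the bound holds with the much weaker constant and in particular with $\namer{c:cp}=3/4$. (If instead one wants the genuinely $k$-dependent Lipschitz statement with a gradient, one invokes differentiability of $\slice{E'_P}{n}{m}$ in $P$ together with the uniform gradient bound $|\nabla\slice{E'_P}{n}{m}|\leq\namer{c:cp}$, which for $m=0$ is part of \thm{thm:uv induction}(v) and for general $m$ will have been established in the iteration of the next section; but at the level of this corollary the crude two-term splitting suffices.)

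Statement (iii) is the variational monotonicity already used informally in the proof of \thm{thm:uv induction}: the Fock space $\slice{\cal F}{n}{m}$ embeds isometrically into $\slice{\cal F}{n+1}{m}$ via $\psi\mapsto\psi\otimes\Omega$, and on this image $\slice{H'_P}{n+1}{m}$ acts as $\slice{H'_P}{n}{m}\otimes\id{}+ \slice{H^f}{n+1}{n}\otimes\cdots$ plus the new interaction slice, which has zero expectation in the $\Omega$-factor; hence $\braket{\slice{H'_P}{n+1}{m}}_{\psi\otimes\Omega}=\braket{\slice{H'_P}{n}{m}}_\psi$ and taking the infimum gives $\slice{E'_P}{n+1}{m}=\inf\spec{\slice{H'_P}{n+1}{m}\restrict{\slice{\cal F}{n+1}{m}}}\leq\slice{E'_P}{n}{m}$. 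The argument for $\slice{E'_P}{n}{m+1}\leq\slice{E'_P}{n}{m}$ is identical, using instead the embedding $\psi\mapsto\psi\otimes\Omega$ into $\slice{\cal F}{n}{m+1}$ along the added infrared slice $\slice{\cal F}{m}{m+1}$ and the vanishing vacuum expectation of $g\slice{\Phi}{m}{m+1}$; here one uses that the assumption ``$\slice{E'_P}{n}{m+1}$ is an eigenvalue'' guarantees it coincides with the bottom of the spectrum so that the variational inequality is an equality between infima.

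\medskip
\noindent The main obstacle, and the only place any real work is hidden, is statement (i): everything rests on \lem{lem:a priori}, whose proof is deferred to Appendix A, and on the bookkeeping that the infimum of $\slice{H'_P}{n}{m}$ over the physical Fock space $\slice{\cal F}{n}{m}$ is not lowered by the extra field degrees of freedom — i.e.\ that restricting to $\slice{\cal F}{n}{m}$ is legitimate. Statements (ii) and (iii) are then essentially immediate consequences of \thm{thm:ground state energies} and of elementary variational comparisons, respectively, and require no estimate beyond what has already appeared.
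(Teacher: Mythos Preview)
Your treatment of (i) and (iii) is correct and matches the paper's argument essentially line for line.

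Your proof of (ii), however, has a genuine gap. After the splitting $\slice{E_{P-k}}{n}{m}-\slice{E_P}{n}{m}\geq \slice{E_0}{n}{m}-\slice{E_P}{n}{m}$ you claim the remaining inequality $\slice{E_0}{n}{m}-\slice{E_P}{n}{m}\geq -\namer{c:cp}|k|$ for all $k$ ``would force $\slice{E_P}{n}{m}\geq\slice{E_0}{n}{m}$, which is again \thm{thm:ground state energies}''. This is backwards: sending $|k|\to 0$ would force $\slice{E_0}{n}{m}\geq\slice{E_P}{n}{m}$, the \emph{opposite} of what Gross's theorem provides. Equivalently, the fixed nonpositive number $\slice{E_0}{n}{m}-\slice{E_P}{n}{m}$ cannot dominate $-\namer{c:cp}|k|$ for arbitrarily small $|k|$ unless it vanishes. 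So the crude splitting simply fails for small $k$. Your parenthetical fallback --- invoking the gradient bound $|\nabla\slice{E'_P}{n}{m}|\leq \namer{c:cp}$ from the subsequent iteration --- is circular, since that bound (\lem{lem:grad E}(ii)) is itself deduced from part (ii) of the present corollary.

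The paper handles $|k|<1$ by a direct variational comparison: one writes
\[
\slice{E'_{P-k}}{n}{m}-\slice{E'_P}{n}{m}\;\geq\;\inf_{\|\varphi\|=1}\Big[\tfrac{k^2}{2}-|k|\,\big|\langle\varphi,(P-P^f+\slice{B}{n}{0}+\slice{B^*}{n}{0})\varphi\rangle\big|+\langle\varphi,\slice{H'_P}{n}{m}\varphi\rangle-\slice{E'_P}{n}{m}\Big],
\]
bounds the middle term by $(\sqrt{2}+\mathcal O(|g|))\|H_{P,0}^{1/2}\varphi\|$ via the standard inequalities, and then uses \lem{lem:a priori} to control $\|H_{P,0}^{1/2}\varphi\|^2$ by $\langle\varphi,\slice{H'_P}{n}{m}\varphi\rangle+|g|\namer{b}$. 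This reduces the problem to a one-variable minimization in $\lambda:=\langle\varphi,\slice{H'_P}{n}{m}\varphi\rangle-\slice{E'_P}{n}{m}\geq 0$, and the minimum (using only (i) and $|P|\leq P_{\max}=\tfrac14$) is $\geq -c|k|$ for any $c>\tfrac12$. Your Gross-based splitting \emph{is} used, but only for $|k|\geq 1$: there $\slice{E'_0}{n}{m}-\slice{E'_P}{n}{m}\geq -\namer{c:cp}|P|\geq -\namer{c:cp}P_{\max}\geq -\namer{c:cp}|k|$ by applying the $|k|<1$ result with $k$ replaced by $P$.
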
 

\begin{proof}
See Appendix A.
\end{proof}

\begin{lemma}\label{lem:ir gap}
Let $|g|$ be sufficiently small and $n\in\bb N\cup\{\infty\}$. For an integer $m\geq 1$, assume:
\begin{enumerate}[(i)]
 \item $\slice{E'_P}{n}{m-1}$ is the non-degenerate eigenvalue of $\slice{H'_P}{n}{m-1}\restrict{\slice{\cal F}{n}{m-1}}$ with eigenvector $\slice{\Psi'_P}{n}{m-1}$.
\item $\gap{\slice{H'_P}{n}{m-1}\restrict{\slice{\cal F}{n}{m-1}}}\geq\zeta\tau_{m-1}$.
\end{enumerate}
This implies that $\slice{E'_P}{n}{m-1}$ is also the non-degenerate ground state energy of $\slice{H'_P}{n}{m-1}\restrict{\slice{\cal F}{n}{m}}$ with eigenvector $\slice{\Psi'_P}{n}{m-1}\otimes \Omega$. Furthermore, it holds:
\begin{align}
\gap{\slice{H'_P}{n}{m-1}\restrict{\slice{\cal F}{n}{m}}}
&\geq \inf_{\slice{\cal F}{n}{m}\ni \psi\perp\slice{\Psi'_{P}}{n}{m-1}\otimes\Omega}\braket{\slice{H'_P}{n}{m-1}-\theta
\slice{H^f}{m-1}{m}-\slice{E'_P}{n}{m-1}}_\psi \nonumber \\
&\geq2\zeta\tau_m 
\label{ir estimate-gap}
\end{align}
where the infimum is taken over $\psi\in D(H_{P,0})$.
\end{lemma}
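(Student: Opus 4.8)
The plan is to mimic closely the structure of the proof of \lem{lem:gap}, now with the infrared slice $\slice{\cal F}{m-1}{m}$ playing the role that $\slice{\cal F}{n}{n-1}$ played in the ultraviolet case. First I would observe that, since the interaction $g\slice{\Phi}{0}{m-1}$ in $\slice{H'_P}{n}{m-1}$ does not touch the modes in $[\tau_m,\tau_{m-1})$, the vector $\slice{\Psi'_P}{n}{m-1}\otimes\Omega$ is an eigenvector of $\slice{H'_P}{n}{m-1}\restrict{\slice{\cal F}{n}{m}}$ with the same eigenvalue $\slice{E'_P}{n}{m-1}$; here the first factor is the eigenvector supplied by assumption (i) and $\Omega$ is the vacuum of $\slice{\cal F}{m-1}{m}$. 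The content of the lemma is then the gap bound on the orthogonal complement. Because $\slice{H^f}{m-1}{m}\geq 0$, we get the first inequality in \eqn{ir estimate-gap} for free after subtracting $\theta\slice{H^f}{m-1}{m}$, exactly as in \eqn{eqn:inf}; the subtraction of this term is again a technical device needed later to bound a resolvent against $(\slice{H^f}{m-1}{m})^{1/2}$.

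Next I would decompose an arbitrary trial vector $\psi\in\slice{\cal F}{n}{m}$ orthogonal to $\slice{\Psi'_P}{n}{m-1}\otimes\Omega$ according to the tensor product $\slice{\cal F}{n}{m}=\slice{\cal F}{n}{m-1}\otimes\slice{\cal F}{m-1}{m}$ and reduce, by positivity of $\slice{H^f}{m-1}{m}$ and a direct-integral/diagonalization argument, to two cases: either the $\slice{\cal F}{m-1}{m}$-component carries zero bosons — in which case the quadratic form is at least $\gap{\slice{H'_P}{n}{m-1}\restrict{\slice{\cal F}{n}{m-1}}}\geq\zeta\tau_{m-1}$ by assumption (ii) — or it carries $p\geq 1$ bosons with momenta $k_1,\dots,k_p\in(\tau_m,\tau_{m-1}]$. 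In the latter case I would move the momenta $\sum_j k_j$ into the kinetic term $\frac12(P-P^f)^2$, bound the remaining field energy $\sum_j\omega(k_j)$ from below by $(1-\theta)\sum_j|k_j|$, and recognize the remainder as $\slice{H'_{P-\sum_j k_j}}{n}{m-1}-\slice{E'_P}{n}{m-1}$ up to the subtracted $\theta$-term, so that the infimum over $\varphi$ gives at least
\[
(1-\theta)\sum_{j=1}^p|k_j|+\slice{E'_{P-\sum_j k_j}}{n}{m-1}-\slice{E'_P}{n}{m-1}.
\]
Now I would invoke \cor{cor:ir ground state energies}(ii) (or, equivalently, the convexity/Gross-type inequality $\slice{E'_{P-k}}{n}{m-1}-\slice{E'_P}{n}{m-1}\geq-\namer{c:cp}|k|$ together with $|\nabla\slice{E'_P}{n}{m-1}|\leq\namer{c:cp}$) to bound this below by $(1-\theta-\namer{c:cp})\sum_j|k_j|\geq(1-\theta-\namer{c:cp})\tau_m\geq 2\zeta\tau_m$, where the last step is exactly the defining property $1-\theta-\namer{c:cp}\geq 2\zeta$ imposed in \dfn{def:gap params}.

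Finally I would combine the two cases: the gap is at least $\min\{\zeta\tau_{m-1},2\zeta\tau_m\}$, and since $\tau_{m-1}=\tau_m/\gamma>2\tau_m$ (because $0<\gamma<\tfrac12$), this minimum equals $2\zeta\tau_m$, yielding the second inequality in \eqn{ir estimate-gap} and, in particular, that $\slice{\Psi'_P}{n}{m-1}\otimes\Omega$ is the non-degenerate ground state of $\slice{H'_P}{n}{m-1}\restrict{\slice{\cal F}{n}{m}}$. The main obstacle is not conceptual but bookkeeping: one must make sure that the $P$-shift estimate from \cor{cor:ir ground state energies}(ii) is available at the relevant cut-off pair $(n,m-1)$ — which is why the lemma is stated under the inductive hypothesis (i) that $\slice{E'_P}{n}{m-1}$ is already an eigenvalue — and, for the case $n=\infty$, that the variational manipulations and the direct-integral decomposition over the boson momenta in $(\tau_m,\tau_{m-1}]$ are justified on the form domain $D(H_{P,0}^{1/2})$, using \lem{lem:ir res conv} to make sense of $\slice{H'_P}{\infty}{m-1}$ and its fibers $\slice{H'_{P-k}}{\infty}{m-1}$.
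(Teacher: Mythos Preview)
Your proposal is correct and follows essentially the same approach as the paper: mimic the proof of \lem{lem:gap}, replace the gradient-plus-Gross argument by the uniform energy-shift inequality \eqn{eq:Grad E} from \cor{cor:ir ground state energies}(ii), obtain $(1-\theta-\namer{c:cp})\tau_m\geq 2\zeta\tau_m$ via \dfn{def:gap params}, and take $\min\{\zeta\tau_{m-1},2\zeta\tau_m\}=2\zeta\tau_m$ using $\gamma<\tfrac12$. The only difference is in the handling of $n=\infty$: the paper first proves the gap bound for all finite $n$ and then invokes the norm resolvent convergence of \lem{lem:ir res conv} together with the non-expansion of the spectrum under such limits, which is cleaner than attempting the variational and fiber-decomposition argument directly for the limiting Hamiltonian.
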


\begin{proof}
  Mimicking the steps in the proof \lem{lem:gap} and the inequality in (\ref{eq:Grad E}) we get the bound 
\begin{align*}
\inf_{\slice{\cal F}{n}{m}\ni\psi\perp\slice{\Psi'_P}{n}{m-1}\otimes\Omega}\braket{\slice{H'_{P}}{n}{0} + g\slice{\Phi}{0}{{m-1}} -\theta
\slice{H^f}{m-1}{{m}}-\slice{E'_P}{n}{m-1}}_\psi\geq (1-\theta-\namer{c:cp})\tau_m\geq 2\zeta\tau_m
.
\end{align*}
This gives the estimate
\begin{equation*}
  \gap{\slice{H'_P}{n}{m-1}\restrict{\slice{\cal F}{n}{m}}}=\gap{\left(\slice{H'_{P}}{n}{0}+g\slice{\Phi}{0}{{m-1}}\right)\restrict{\slice{\cal F}{n}{m}}} \geq \min\left\{\zeta\tau_{m-1},2\zeta\tau_{m}\right\}= 2\zeta\tau_{m}
\end{equation*}
where in the last step we have used that $\gamma<\frac{1}{2}$; see \eqn{eqn:constraint}. This proves the claim for any finite $n,m$. But the resolvent convergence proved in \lem{lem:ir res conv} ensures that the statements remain true in the limit $n\to\infty$ as the spectrum cannot suddenly expand in the limit \cite[Theorem VIII.24]{reed_methods_1981}.
\end{proof}

\begin{lemma}\label{lem:ir neumann}
For $n\in\bb N\cup\{\infty\}$ and $m\geq 1$ there is a $g_{\mathrm{max}}>0$ such that, for $|g|<g_{\mathrm{max}}$ and $\gamma$ fulfilling the constraints in \eqn{eqn:constraint}, the assumptions of \lem{lem:ir gap}  imply that the resolvent
\begin{equation*}
 \frac{1}{\slice{H'_P}{n}{m}-z}, 
\end{equation*}
restricted to $\slice{\cal F}{n}{m}$, is well-defined in the domain
\begin{equation}\label{eqn:ir domain}
\frac{1}{4}\zeta\tau_{m}\leq|\slice{E'_P}{n}{m-1}-z|\leq \zeta\tau_{m}.
\end{equation}

\end{lemma}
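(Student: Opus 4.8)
The plan is to mimic the structure of the proof of \lem{lem:neumann}, but now with the infrared interaction slice $g\slice{\Phi}{m-1}{m}$ playing the role of the ultraviolet slice $\slice{\Delta H'_P}{n}{n-1}$, and with $\zeta\tau_m$ replacing $\xi_n$ as the relevant energy scale. Concretely, I would write the resolvent $(\slice{H'_P}{n}{m}-z)^{-1} = (\slice{H'_P}{n}{m-1}-z)^{-1}\sum_{j\geq 0}\bigl[-g\slice{\Phi}{m-1}{m}(\slice{H'_P}{n}{m-1}-z)^{-1}\bigr]^j$ restricted to $\slice{\cal F}{n}{m}$, and reduce its convergence (and hence well-definedness of the resolvent) to the single bound
\begin{equation*}
  \left\|\left(\frac{1}{\slice{H'_P}{n}{m-1}-z}\right)^{1/2} g\slice{\Phi}{m-1}{m}\left(\frac{1}{\slice{H'_P}{n}{m-1}-z}\right)^{1/2}\right\|_{\slice{\cal F}{n}{m}}<1.
\end{equation*}
By \lem{lem:ir gap}, under the hypotheses of that lemma the vector $\slice{\Psi'_P}{n}{m-1}\otimes\Omega$ is the non-degenerate ground state of $\slice{H'_P}{n}{m-1}\restrict{\slice{\cal F}{n}{m}}$ with gap at least $2\zeta\tau_m$, so for $z$ in the annulus \eqn{eqn:ir domain} the resolvent $(\slice{H'_P}{n}{m-1}-z)^{-1}$ is well-defined on $\slice{\cal F}{n}{m}$ with $\|(\slice{H'_P}{n}{m-1}-z)^{-1}\|_{\slice{\cal F}{n}{m}}\leq C/(\zeta\tau_m)$.

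The core estimate splits the field operator $g\slice{\Phi}{m-1}{m} = g\int_{\cal B_{\tau_{m-1}}\setminus\cal B_{\tau_m}}dk\,\rho(k)(b(k)+b^*(k))$ into its annihilation and creation parts and applies the standard inequalities \eqn{eqn:standard ineq}. Since $\rho(k)=O(|k|^{-1/2})$, one computes $\int_{\cal B_{\tau_{m-1}}\setminus\cal B_{\tau_m}}|\rho(k)/\sqrt{|k|}|^2\,dk = O(\tau_{m-1})=O(\gamma^{-1}\tau_m)$ and $\int_{\cal B_{\tau_{m-1}}\setminus\cal B_{\tau_m}}|\rho(k)|^2\,dk = O(\tau_{m-1}^2)=O(\tau_m^2)$. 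This gives $\|g\slice{\Phi}{m-1}{m}\psi\|\leq C|g|\bigl[(\tau_m/\gamma)^{1/2}\|(\slice{H^f}{m-1}{m})^{1/2}\psi\|+\tau_m\|\psi\|\bigr]$. To absorb the $(\slice{H^f}{m-1}{m})^{1/2}$ factor against the resolvent I would use, exactly as in step 1 of \lem{lem:neumann}, that $\slice{H^f}{m-1}{m}$ commutes with $\slice{H'_P}{n}{m-1}$ together with the spectral estimate from \lem{lem:ir gap} with the subtracted $\theta\slice{H^f}{m-1}{m}$: namely $\slice{H'_P}{n}{m-1}-\theta\slice{H^f}{m-1}{m}-\slice{E'_P}{n}{m-1}\geq 0$ on $\slice{\cal F}{n}{m}$, hence $\bigl\|(\theta\slice{H^f}{m-1}{m})^{1/2}(\slice{H'_P}{n}{m-1}-z)^{-1/2}\bigr\|_{\slice{\cal F}{n}{m}}\leq C$ uniformly (using that on the annulus \eqn{eqn:ir domain} the real part shift is controlled by $\xi_{n-1}-\xi_n>0$—here $\zeta\tau_{m-1}-\zeta\tau_m>0$). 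Collecting the pieces, the left-hand side of the core bound is $O\bigl(|g|(\tau_m/\gamma)^{1/2}/(\zeta\tau_m)^{1/2}\bigr)+O\bigl(|g|\tau_m/(\zeta\tau_m)\bigr)=O\bigl(|g|\gamma^{-1/2}\zeta^{-1/2}\bigr)+O(|g|/\zeta)$, which under the constraint $|g|\leq\gamma^2$ from \eqn{eqn:constraint} is $O(\gamma^{3/2}\zeta^{-1/2})+O(\gamma^2/\zeta)$, strictly less than $1$ for $|g|$ (equivalently $\gamma$) sufficiently small. This yields \eqn{eqn:neumann term}, the Neumann series converges in operator norm on $\slice{\cal F}{n}{m}$, and since $(\slice{H'_P}{n}{m-1}-z)^{-1}$ has range $D(H_{P,0})$ so does the full resolvent; the case $n=\infty$ follows verbatim since \lem{lem:ir gap} and the bound $\|(\slice{H'_P}{n}{m-1}-z)^{-1}\|\leq C/(\zeta\tau_m)$ are stated uniformly in $n\in\bb N\cup\{\infty\}$.

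The main subtlety — the point I expect to be the real obstacle — is that the infrared energy scale $\tau_m=\kappa\gamma^m$ shrinks to zero, so unlike the ultraviolet case the small denominator $\zeta\tau_m$ is genuinely small and one must check that the numerator shrinks at least as fast. The factor $(\tau_m/\gamma)^{1/2}$ from the phase-space volume and the $\tau_m$ from the $L^2$-norm of $\rho$ do provide this, but only because $\rho$ is infrared-\emph{regular} at the level of the Gross-transformed (renormalized) interaction — the whole point being that one has \emph{not} yet performed the $W_m$ transformation that would make the form factor $\alpha_m$ marginal. So the bookkeeping must confirm that one power of $\tau_m^{1/2}$ genuinely survives in the ratio, and that the surviving constant is uniform in both $n$ and $m$; this is where the hypothesis $|g|\leq\gamma^2$ and the smallness of $\zeta$ (tied to the ultraviolet gap $\frac{1}{16}\kappa$ via \dfn{def:gap params}) are used. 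Once this uniformity is in hand, there is no further obstacle: the argument is otherwise a direct transcription of \lem{lem:neumann}.
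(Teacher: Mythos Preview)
Your proposal is correct and follows essentially the same route as the paper: reduce to the sandwich bound, apply the standard inequalities \eqn{eqn:standard ineq} to $g\slice{\Phi}{m-1}{m}$, control the $(\slice{H^f}{m-1}{m})^{1/2}$ factor via the commutation with $\slice{H'_P}{n}{m-1}$ and the $\theta$-subtracted gap estimate \eqn{ir estimate-gap}, bound the bare resolvent by $C/(\zeta\tau_m)$, and conclude that the sandwich is $O(|g|\gamma^{-1/2})\leq C|g|^{3/4}$ under \eqn{eqn:constraint}. Two small inaccuracies to clean up: the $L^2$ integral $\int|\rho|^2$ is $O(\tau_{m-1}^2)=O(\gamma^{-2}\tau_m^2)$, not $O(\tau_m^2)$, so the subleading term is $O(|g|\gamma^{-1}/\zeta^{1/2})$ rather than $O(|g|/\zeta)$ (harmless under $|g|\leq\gamma^2$); and in your parenthetical the relevant margin is not $\zeta\tau_{m-1}-\zeta\tau_m$ but simply that the gap $2\zeta\tau_m$ from \lem{lem:ir gap} exceeds the outer radius $\zeta\tau_m$ of the annulus, which is what makes $\|(\slice{H^f}{m-1}{m})^{1/2}(\slice{H'_P}{n}{m-1}-z)^{-1/2}\|$ uniformly bounded.
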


\begin{proof}
It is sufficient to show that
 \begin{align}\label{eqn:ir res in contour}
  \left\|\left(\frac{1}{\slice{ H'_P}{n}{m-1}-z}\right)^{1/2}\;g\slice{\Phi}{m-1}{m}\;\left(\frac{1}{\slice{ H'_P}{n}{m-1}-z}\right)^{1/2}\right\|_{\slice{\cal F}{n}{m}}
\end{align}
is less than one for all $z$ in the given domain. For $g$ sufficiently small this is true because:
\begin{enumerate}[1.]
 \item By standard inequalities in \eqn{eqn:standard ineq} the estimate
\begin{align}\label{eqn:ir phi}
 &\left\|g\slice{\Phi}{{m-1}}{m}\left(\frac{1}{\slice{ H'_P}{n}{m-1}-z}\right)^{1/2}\right\|_{\slice{\cal F}{n}{m}}\leq |g|C\left((1-\gamma)\tau_{m-1}\right)^{1/2}\left\|(\slice{H^f}{m-1}{m})^{1/2}\left(\frac{1}{\slice{ H'_P}{n}{m-1}-z}\right)^{1/2}\right\|_{\slice{\cal F}{n}{m}}
\end{align}
holds true. Since $\slice{H^f}{m-1}{m}$ commutes with $\slice{ H'_P}{n}{m-1}$ and using \eqn{ir estimate-gap}, the spectral theorem yields 
\begin{align}\label{eqn:eqn:hf ir}
 \left\|(\slice{H^f}{m-1}{m})^{1/2}\left(\frac{1}{\slice{ H'_P}{n}{m-1}-z}\right)^{1/2}\right\|_{\slice{\cal F}{n}{m}}\leq C.
\end{align}
 \item Using \lem{lem:ir gap} we get 
\begin{equation}
\left\|\left(\frac{1}{\slice{ H'_P}{n}{m-1}-z}\right)^{1/2}\right\|^2_{\slice{\cal F}{n}{m}}\leq \max\left\{\frac{1}{\frac{1}{4}\zeta\tau_{m}},\frac{1}{\zeta\tau_m}\right\}\leq \frac{4}{\zeta\tau_{m}}\label{eqn:gap core}.
\end{equation}

\end{enumerate}
Combining (\ref{eqn:ir phi}), (\ref{eqn:eqn:hf ir}),  and (\ref{eqn:gap core}) we find
\begin{align*}
\eqn{eqn:ir res in contour}\leq C |g| \left(\frac{\tau_{m-1}}{\tau_{m}}\right)^{1/2}=C |g| \gamma^{-1/2}\leq C |g|^{3/4}. 
\end{align*}
where we have used the constraints in \eqn{eqn:constraint}. This proves the claim.
\end{proof}
 Inside the domain where the resolvent is well-defined, let us now introduce the integration contour that is used to iteratively construct the ground state vectors in \thm{thm:ir induction} below.
\begin{definition}\label{def:IR contour}
 For $m\in\bb N$ we define the contour
\begin{align*}
 \Delta_m := \left\{z\in\bb C\;\bigg|\;|\slice{E'_{P}}{n}{m-1}-z|=\frac{1}{2}\zeta\tau_m\right\}.
\end{align*}
\end{definition}

\begin{theorem}\label{thm:ir induction}
 Let $n\in\bb N\cup\{\infty\}$ and $g,\gamma$ sufficiently small such that the constraints in \eqn{eqn:constraint} are fulfilled. Then for all $m\geq 0$ the following holds true:
\begin{enumerate}[(i)]
 \item $\slice{E'_P}{n}{m}:=\inf\spec{\slice{ H'_P}{n}{m}\restrict{\slice{\cal F}{n}{m}}}$ is the non-degenerate ground state energy of $\slice{H'_P}{n}{m}\restrict{\slice{\cal F}{n}{m}}$.
 \item $\gap{\slice{H'_P}{n}{m}\restrict{\slice{\cal F}{n}{m}}}\geq\zeta\tau_{m}$.
 \item The vectors
\begin{align}
 \slice{\Psi'_P}{n}{0}&:=\slice{\Psi'_P}{n}{0},\nonumber\\
 \slice{\Psi'_P}{n}{m}&:=\slice{\cal Q_{P}'}{n}{m}\slice{\Psi'_P}{n}{m-1}, \qquad\slice{\cal Q_{P}'}{n}{m}:=-\frac{1}{2\pi i}\oint_{\Delta_{m}}\frac{dz}{\slice{ H'_P}{n}{m}-z}, \qquad m\geq 1,\label{eqn:projection}
\end{align}
are well-defined and non-zero. The vector $\slice{\Psi'_P}{n}{m}$ is the unique ground state of $\slice{H'_P}{n}{m}\restrict{\slice{\cal F}{n}{m}}$.
\end{enumerate}
\end{theorem}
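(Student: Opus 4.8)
The plan is to prove (i)--(iii) by induction on $m$, in complete analogy with the proof of \thm{thm:uv induction}, with the shrinking gap bounds $\zeta\tau_m$ now playing the role of the $\xi_n$. For the base case $m=0$, statements (i) and (iii) are exactly \thm{thm:uv induction}(i),(iii) when $n\in\bb N$ and \thm{thm:main uv}(iv) when $n=\infty$, while (ii) follows because the gap there is bounded below by $\tfrac1{16}\kappa$ and $\zeta\tau_0=\zeta\kappa<\tfrac1{16}\kappa$ by \dfn{def:gap params}. Since \lem{lem:ir gap} and \lem{lem:ir neumann} are already formulated for all $n\in\bb N\cup\{\infty\}$, the induction can be carried out uniformly in $n$; the case $n=\infty$ additionally relies on the norm resolvent convergence of \lem{lem:ir res conv} (and \thm{thm:main uv} at $m=0$) together with the fact that spectra cannot suddenly expand in such a limit.

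For the step $(m-1)\Rightarrow m$, assume (i),(ii) at level $m-1$. First, \lem{lem:ir gap} — whose hypotheses are precisely these two statements, the needed a priori bounds on $\slice{E'_P}{n}{m-1}$ in $P$ being guaranteed by \cor{cor:ir ground state energies}(ii) — shows that, after adjoining the slice $\slice{\cal F}{m-1}{m}$ without changing the Hamiltonian, $\slice{E'_P}{n}{m-1}$ is still a simple isolated eigenvalue of $\slice{H'_P}{n}{m-1}\restrict{\slice{\cal F}{n}{m}}$ with eigenvector $\slice{\Psi'_P}{n}{m-1}\otimes\Omega$ and $\gap{\slice{H'_P}{n}{m-1}\restrict{\slice{\cal F}{n}{m}}}\geq 2\zeta\tau_m$. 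Then \lem{lem:ir neumann} shows the Neumann series for $(\slice{H'_P}{n}{m}-z)^{-1}$ converges, and hence the resolvent is bounded, on the annulus $\tfrac14\zeta\tau_m\leq|\slice{E'_P}{n}{m-1}-z|\leq\zeta\tau_m$, in particular on the contour $\Delta_m$ of \dfn{def:IR contour}, which lies in its interior. Analytic perturbation theory \`a la Kato now applies: along the path $\slice{H'_P}{n}{m-1}+t\,g\slice{\Phi}{m-1}{m}$, $t\in[0,1]$ — a holomorphic family of type A for finite $n$, since $g\slice{\Phi}{m-1}{m}$ is a small $H_{P,0}$-perturbation — the resolvent stays bounded on $\Delta_m$, so the rank of the associated Riesz projection is constant and equal to one. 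Hence $\slice{\cal Q'_P}{n}{m}$ is a well-defined rank-one projection and $\slice{H'_P}{n}{m}\restrict{\slice{\cal F}{n}{m}}$ has exactly one, simple, spectral point in the disc bounded by $\Delta_m$ and none in the surrounding annulus. The variational estimate $\inf\spec{\slice{H'_P}{n}{m}\restrict{\slice{\cal F}{n}{m}}}\leq\braket{\slice{H'_P}{n}{m}}_{\slice{\Psi'_P}{n}{m-1}\otimes\Omega}=\slice{E'_P}{n}{m-1}$ (the slice interaction has zero expectation in this state), combined with the continuity of $\inf\spec$ along the path, identifies this spectral point with $\slice{E'_P}{n}{m}$, which proves (i); then (ii) follows from a mild strengthening of the resolvent bound of \lem{lem:ir neumann}, valid up to $|\slice{E'_P}{n}{m-1}-z|$ slightly below $2\zeta\tau_m$ (using the unperturbed gap and the same square-root control of $g\slice{\Phi}{m-1}{m}$), together with the smallness of the shift $\slice{E'_P}{n}{m-1}-\slice{E'_P}{n}{m}$. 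The monotonicity $\slice{E'_P}{n}{m}\leq\slice{E'_P}{n}{m-1}$ is also recovered from \cor{cor:ir ground state energies}(iii).

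For (iii) it remains to see that $\slice{\Psi'_P}{n}{m}:=\slice{\cal Q'_P}{n}{m}\slice{\Psi'_P}{n}{m-1}$ is non-zero. By \lem{lem:ir gap} the unperturbed Riesz projection $\Pi_m:=-\tfrac1{2\pi i}\oint_{\Delta_m}(\slice{H'_P}{n}{m-1}-z)^{-1}dz$ equals the rank-one projection onto $\slice{\Psi'_P}{n}{m-1}\otimes\Omega$, so $\Pi_m\slice{\Psi'_P}{n}{m-1}=\slice{\Psi'_P}{n}{m-1}$, and it suffices to show $\|\slice{\cal Q'_P}{n}{m}-\Pi_m\|<1$ for $|g|$ small, uniformly in $n,m$. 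This follows from $\|\slice{\cal Q'_P}{n}{m}-\Pi_m\|\leq\tfrac1{2\pi}|\Delta_m|\,\sup_{z\in\Delta_m}\|(\slice{H'_P}{n}{m}-z)^{-1}-(\slice{H'_P}{n}{m-1}-z)^{-1}\|$: by the second resolvent identity and the square-root bounds \eqn{eqn:standard ineq} for $g\slice{\Phi}{m-1}{m}$, handled exactly as in the proofs of \lem{lem:ir neumann} and \cor{lem:resolvent diff}, the resolvent difference on $\Delta_m$ is of order $|g|\,\gamma^{-1/2}\tau_m^{-1}$, and the factor $\tau_m^{-1}$ is cancelled by $|\Delta_m|\propto\tau_m$; since $|g|\,\gamma^{-1/2}\leq|g|^{3/4}$ by \eqn{eqn:constraint}, the bound is $\leq C|g|^{3/4}<1$. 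Consequently $\|\slice{\Psi'_P}{n}{m}\|\geq(1-C|g|^{3/4})\,\|\slice{\Psi'_P}{n}{m-1}\|>0$ by the inductive non-vanishing of $\slice{\Psi'_P}{n}{m-1}$, and uniqueness of the ground state is immediate from (i),(ii).

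The main obstacle — and the reason the delicate \lem{lem:ir gap} and \lem{lem:ir neumann} are indispensable rather than a naive perturbative bound — is that the relevant spectral gap $\zeta\tau_m$ shrinks to zero as $m\to\infty$ while the slice interaction $g\slice{\Phi}{m-1}{m}$ is \emph{not} small compared with $\zeta\tau_m$ in operator norm; the convergence of the Neumann series, the closeness of the Riesz projections, and the control of the energy shift must all be extracted from the off-diagonal (``square-root'') structure of the field operator, which supplies the uniform gain $|g|\,\gamma^{-1/2}$. A subsidiary point is the propagation of all statements to $n=\infty$, which does not go through a type-A holomorphic family (there $\slice{H'_P}{\infty}{0}$ is only form-bounded relative to $H_{P,0}$) but through the norm resolvent convergence already established and the impossibility of the spectrum expanding in the limit.
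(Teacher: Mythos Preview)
Your proposal is correct and follows essentially the same approach as the paper: induction on $m$, with \lem{lem:ir gap} and \lem{lem:ir neumann} providing the induction step exactly as \lem{lem:gap} and \lem{lem:neumann} did in the proof of \thm{thm:uv induction}, and the non-vanishing of $\slice{\Psi'_P}{n}{m}$ obtained from $\|\slice{\cal Q'_P}{n}{m}-\Pi_m\|\leq C|g|^{3/4}<1$, which is precisely the bound $\|\slice{\Psi'_P}{n}{m}\|\geq C\|\slice{\Psi'_P}{n}{m-1}\|$ the paper states.

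One small simplification: your argument for (ii) is more elaborate than needed. You propose a ``mild strengthening'' of \lem{lem:ir neumann} pushing the outer radius toward $2\zeta\tau_m$, combined with smallness of the energy shift. The paper's argument (inherited from step~2 of the proof of \thm{thm:uv induction}) is cleaner: the resolvent bound from \lem{lem:ir neumann} already places the rest of the spectrum above $\slice{E'_P}{n}{m-1}+\zeta\tau_m$, and the monotonicity $\slice{E'_P}{n}{m}\leq\slice{E'_P}{n}{m-1}$ from \cor{cor:ir ground state energies}(iii) --- which you do cite, just not at this point --- gives $\gap{\slice{H'_P}{n}{m}\restrict{\slice{\cal F}{n}{m}}}\geq\zeta\tau_m$ directly, with no need to extend the annulus or estimate the shift.
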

\begin{proof}
The proof is by induction and it relies on \cor{cor:ground state energies}, \lem{lem:ir gap}, and \lem{lem:ir neumann}. Since the rationale can be inferred from similar steps in the proof of  \thm{thm:uv induction},  we do not provide the details. 

\noindent
The main difference with respect to  \thm{thm:uv induction} is the fact the sequence of vectors does not converge. Moreover, here we only prove that the norm of the vector $ \slice{\Psi'_P}{n}{m}$ is nonzero for all finite $m$ that follows from the bound $\|\slice{\Psi'_P}{n}{m}\|\geq C\|\slice{\Psi'_P}{n}{m-1}\|$. The same type of argument is shown   for the vectors  $ \slice{\phi_P}{n}{m}$ (with $n$ finite) in the next section. We refer the reader to equations (\ref{eqn:tilde diff})--(\ref{bound-norm-below}).
\end{proof}
%

An auxiliary result needed for the next section is:

\begin{lemma}\label{lem:grad E}
 Let $|g|$ be sufficiently small. Then for all $n,m\in\bb N$ 
\begin{enumerate}[(i)]
 \item \begin{equation}\label{energy-shift-section-5}
 \big|\slice{E'_P}{n}{m+1}-\slice{E'_P}{n}{m}\big|\leq Cg^2\gamma^m
 \end{equation}
 \item \begin{equation}\label{gradient-bound-section-5}
 \big|\nabla\slice{E'_P}{n}{m}\big|\leq \namer{c:cp}
 \end{equation}
\end{enumerate}
hold true,  where $\nabla \slice{E'_P}{n}{m}$ is given by
\begin{align}\label{eqn:grad e formuar}
\nabla \slice{E'_P}{n}{m}=P-\braket{[P^f+\slice{B}{n}{0}+\slice{B^*}{n}{0}]}_{\slice{\Psi'_{P}}{n}{m}}.
\end{align}
\end{lemma}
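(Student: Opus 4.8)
The plan is to treat (i) and (ii) separately, with inputs the pull-through identity \eqn{eqn:froehlich}, the a priori bound \lem{lem:a priori}, and the energy comparisons in \cor{cor:ir ground state energies}. First I would justify the gradient formula \eqn{eqn:grad e formuar}: since $\slice{H'_P}{n}{m}=\slice{H'_P}{n}{0}+g\slice{\Phi}{0}{m}$ depends on $P$ only polynomially --- through $\tfrac12(P-P^f)^2$ and the cross terms $(P-P^f)\cdot\slice{B}{n}{0}$, the soft interaction being $P$-independent --- it is an analytic family of type A, and by \thm{thm:ir induction} $\slice{E'_P}{n}{m}$ is a non-degenerate isolated eigenvalue; hence $\slice{E'_P}{n}{m}$ is analytic in $P$ and $\nabla\slice{E'_P}{n}{m}=\braket{\nabla_P\slice{H'_P}{n}{m}}_{\slice{\Psi'_P}{n}{m}}=P-\braket{P^f+\slice{B}{n}{0}+\slice{B^*}{n}{0}}_{\slice{\Psi'_P}{n}{m}}$, exactly as in \thm{thm:uv induction}(v); see \eqn{gradient-at-m=0}.

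For (ii) I would estimate $|\nabla\slice{E'_P}{n}{m}|\le|P|+|\braket{P^f}_{\slice{\Psi'_P}{n}{m}}|+|\braket{\slice{B}{n}{0}+\slice{B^*}{n}{0}}_{\slice{\Psi'_P}{n}{m}}|$ termwise: $|P|\le P_{\mathrm{max}}=\tfrac14$; since $\omega(k)=|k|$ one has $|\braket{\psi,P^f\psi}|\le\braket{\psi,H^f\psi}\le\braket{\psi,H_{P,0}\psi}$, and \lem{lem:a priori} together with $\slice{E'_P}{n}{m}\le\tfrac12 P^2$ from \cor{cor:ir ground state energies}(i) gives $\braket{H_{P,0}}_{\slice{\Psi'_P}{n}{m}}\le(1-|g|\namer{a})^{-1}(\tfrac12 P^2+|g|\namer{b})\le\tfrac1{16}$ for $|g|$ small. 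For the $\slice{B}{n}{0}+\slice{B^*}{n}{0}$ term the point is to \emph{not} estimate $\slice{B^*}{n}{0}$ via the second line of \eqn{eqn:uv standard ineq}, which carries a divergent $(\ln\sigma_n)^{1/2}$; instead, componentwise $\braket{\psi,\slice{B^*}{n}{0}\psi}=\overline{\braket{\psi,\slice{B}{n}{0}\psi}}$, so the sum equals $2\,\Re\braket{\slice{B}{n}{0}}_{\slice{\Psi'_P}{n}{m}}$, bounded by $2\|\slice{B}{n}{0}\slice{\Psi'_P}{n}{m}\|/\|\slice{\Psi'_P}{n}{m}\|\le C|g|\,\|(\slice{H^f}{n}{0})^{1/2}\slice{\Psi'_P}{n}{m}\|/\|\slice{\Psi'_P}{n}{m}\|\le C|g|$ by the first line of \eqn{eqn:uv standard ineq} and the $\braket{H^f}$ bound just obtained. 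Collecting, $|\nabla\slice{E'_P}{n}{m}|\le\tfrac14+\tfrac1{16}+C|g|<\tfrac34=\namer{c:cp}$ for $|g|$ sufficiently small.

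For (i) I would combine \cor{cor:ir ground state energies}(iii), which gives $\slice{E'_P}{n}{m+1}\le\slice{E'_P}{n}{m}$, with \lem{lem:ir gap}, which identifies $\slice{E'_P}{n}{m}=\inf\spec{\slice{H'_P}{n}{m}\restrict{\slice{\cal F}{n}{m+1}}}$: since $\slice{\Psi'_P}{n}{m+1}\in\slice{\cal F}{n}{m+1}$ lies in $D(H_{P,0})$ and $\slice{H'_P}{n}{m+1}=\slice{H'_P}{n}{m}+g\slice{\Phi}{m}{m+1}$, taking the normalized expectation in $\slice{\Psi'_P}{n}{m+1}$ yields $0\le\slice{E'_P}{n}{m}-\slice{E'_P}{n}{m+1}\le |g\braket{\slice{\Phi}{m}{m+1}}_{\slice{\Psi'_P}{n}{m+1}}|$. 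To bound the right side I would use \eqn{eqn:froehlich} at scale $(n,m+1)$: for $k\in\cal B_{\tau_m}\setminus\cal B_{\tau_{m+1}}$, $b(k)\slice{\Psi'_P}{n}{m+1}=g\rho(k)\,\cal R_k\,\slice{\Psi'_P}{n}{m+1}$ with $\cal R_k:=(\slice{E'_P}{n}{m+1}-|k|-\slice{H'_{P-k}}{n}{m+1})^{-1}$; using that $\cal R_k$ is self-adjoint and $\rho$ real, $\braket{\slice{\Phi}{m}{m+1}}_{\slice{\Psi'_P}{n}{m+1}}=2g\,\|\slice{\Psi'_P}{n}{m+1}\|^{-2}\int_{\cal B_{\tau_m}\setminus\cal B_{\tau_{m+1}}}\rho(k)^2\braket{\slice{\Psi'_P}{n}{m+1},\cal R_k\slice{\Psi'_P}{n}{m+1}}\,dk$. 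By \cor{cor:ir ground state energies}(ii), $\slice{H'_{P-k}}{n}{m+1}\ge\slice{E'_{P-k}}{n}{m+1}\ge\slice{E'_P}{n}{m+1}-\namer{c:cp}|k|$, so $\slice{E'_P}{n}{m+1}-|k|-\slice{H'_{P-k}}{n}{m+1}\le-(1-\namer{c:cp})|k|=-\tfrac14|k|$ and hence $\|\cal R_k\|\le 4/|k|$. Therefore $|g\braket{\slice{\Phi}{m}{m+1}}_{\slice{\Psi'_P}{n}{m+1}}|\le 8g^2\int_{\cal B_{\tau_m}\setminus\cal B_{\tau_{m+1}}}\rho(k)^2|k|^{-1}\,dk\le Cg^2(\tau_m-\tau_{m+1})\le Cg^2\gamma^m$, which is (i). Note that $\|\slice{\Psi'_P}{n}{m+1}\|^2$ cancels, so no lower bound on the norm of the ground state is needed.

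I expect the main obstacle to lie in the two places where a superficially present divergence must be shown to be harmless. In (ii) this is the $(\ln\sigma_n)^{1/2}$ from the creation part of $\slice{B^*}{n}{0}$, which vanishes once $\slice{B}{n}{0}$ and $\slice{B^*}{n}{0}$ are combined into a single real part and estimated through $\braket{H^f}$ alone. In (i) it is producing the full $g^2$ and the decay $\gamma^m$ (rather than a mere $g$ and $\tau_m^{1/2}$): both fall out at once from \eqn{eqn:froehlich} --- one power of $g$ from the identity itself, one from the explicit coupling in $g\slice{\Phi}{m}{m+1}$, and $\gamma^m$ from the shrinking slice weighed against the $|k|^{-1}$ growth of $\cal R_k$ --- provided the uniform resolvent bound $\|\cal R_k\|\le 4/|k|$, which rests on the a priori gradient bound \cor{cor:ir ground state energies}(ii), is in place.
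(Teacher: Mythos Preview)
Your proposal is correct, and both parts use genuinely different routes from the paper's proof.

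For (ii), the paper is much shorter: it simply observes that since $\slice{E'_P}{n}{m}$ is differentiable and \cor{cor:ir ground state energies}(ii) gives $\slice{E'_{P-k}}{n}{m}-\slice{E'_P}{n}{m}\ge -\namer{c:cp}|k|$ for all $k$, the bound $|\nabla\slice{E'_P}{n}{m}|\le\namer{c:cp}$ follows by taking $k\to 0$ along an arbitrary direction. Your termwise estimate via \lem{lem:a priori} works (the crucial point, which you identified, is avoiding the $(\ln\sigma_n)^{1/2}$ by passing to $2\,\Re\braket{\slice{B}{n}{0}}$), but it reproduces by hand information already encoded in \cor{cor:ir ground state energies}(ii).

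For (i), the paper again proceeds by operator inequalities rather than the pull-through. From \lem{lem:ir gap} one has $\slice{H'_P}{n}{m}-\theta\slice{H^f}{m}{m+1}\ge\slice{E'_P}{n}{m}$ on $\slice{\cal F}{n}{m+1}$, and completion of the square gives $\theta\slice{H^f}{m}{m+1}+g\slice{\Phi}{m}{m+1}+g^2\int_{\cal B_{\tau_m}\setminus\cal B_{\tau_{m+1}}}\rho(k)^2/(\theta\omega(k))\,dk\ge 0$; adding these yields $\slice{H'_P}{n}{m+1}\ge\slice{E'_P}{n}{m}-Cg^2\gamma^m/\theta$, hence $\slice{E'_P}{n}{m+1}\ge\slice{E'_P}{n}{m}-Cg^2\gamma^m$. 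Combined with \cor{cor:ir ground state energies}(iii) this gives \eqn{energy-shift-section-5}. Your route via \eqn{eqn:froehlich} and the resolvent bound $\|\cal R_k\|\le 4/|k|$ is also valid and yields the same integral $\int\rho(k)^2/|k|\,dk$ over the slice; it is slightly more elaborate since it brings in Fr\"ohlich's identity and \cor{cor:ir ground state energies}(ii), whereas the paper's argument stays at the level of quadratic-form inequalities and only needs the gap estimate already established in \lem{lem:ir gap}.
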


\begin{proof}
\begin{enumerate}[(i)]
 \item The claim can be seen from:
   \begin{enumerate}
 \item The gap estimate \eqn{ir estimate-gap} and (i) in \cor{cor:ground state energies}.
 \item The bound
\begin{align*}
 \theta\slice{H^f}{m}{m+1}+g\slice{\Phi}{m}{m+1}+g^2\int_{\mathcal{B}_{\tau_{m}}\setminus \mathcal{B}_{\tau_{m+1}}} dk \frac{\rho(k)^2}{\theta\omega(k)}\geq 0
\end{align*}
which can be inferred from completion of the square.
\item The inequality \begin{align*}
 \int_{\mathcal{B}_{\tau_{m}}\setminus \mathcal{B}_{\tau_{m+1}}}  dk \frac{\rho(k)^2}{\theta \omega(k)}\leq \frac{C}{\theta}\gamma^m.
\end{align*}

 \end{enumerate}

\item Since $(\slice{H'_P}{n}{m})_{|P|\leq P_{\mathrm{max}}}$, is an analytic family of type A and $\slice{E'_P}{n}{m}$ is an isolated eigenvalue, equation \eqn{eqn:grad e formuar} holds by analytic perturbation theory.  Moreover, \eqn{gradient-bound-section-5} follows immediately from \cor{cor:ground state energies} (ii).


\end{enumerate}
\end{proof}

\section{Ground States of the Transformed Hamiltonians $\slice{H^{W'}_P}{n}{\infty}$ for $n\in\bb N$}\label{sec:W ground states}

 This section provides the key result for \sct{sec:Winfinity} where we remove both limits simultaneously. Here (\sct{sec:W ground states}) we generalize the strategy employed in \cite{pizzo_one-particle_2003} to perform the limit of a vanishing infrared cut-off $\tau_m$ uniformly in the ultraviolet  cut-off $\sigma_n$.

\begin{remark}
In this section  we implicitly assume  the constraints  $|P|<P_{\mathrm{max}}$ and $1<\kappa<2$. Furthermore, $g, \beta$, and  $\gamma$ are such that all  the results of Sections \ref{sec:main proof} and \ref{sec:ground states with ir cutoff} hold true. 
 \end{remark}
  
\paragraph{Preliminaries.} We collect the definitions of the transformed operators and vectors, and we explain some of their properties:
\begin{center}
\renewcommand{\arraystretch}{1.75}
\begin{tabular}{c c c c}
Hamiltonian & Fock space\\
\hline
\hline
$\slice{H_P^{W'}}{n}{m}:=W_{m}(\nabla\slice{E'_P}{n}{m})\;\slice{H'_P}{n}{m}\;W_{m}(\nabla \slice{E'_P}{n}{m})^*$ & $\slice{\cal F}{n}{m}:=\cal F(L^2(\cal B_{\sigma_n}\setminus\cal B_{\tau_m}))$\\
\hline
$\slice{\widetilde H_P^{W'}}{n}{m}:=W_{m}(\nabla \slice{E'_P}{n}{m-1})\;\slice{H'_P}{n}{m}\;W_{m}(\nabla \slice{E'_P}{n}{m-1})^*$ & $\slice{\cal F}{n}{m}$\\
\hline
\end{tabular}
\end{center}
\vskip.3cm
Notice that
\begin{align}\label{eq:Htilde_H}
\slice{\widetilde H_P^{W'}}{n}{m}=W_m(\nabla\slice{E'_P}{n}{m-1})W_m(\nabla\slice{E'_P}{n}{m})^* \slice{H_P^{W'}}{n}{m}W_m(\nabla\slice{E'_P}{n}{m})W_m(\nabla\slice{E'_P}{n}{m-1})^*.
\end{align}
The transformation $W_m(Q)$, $Q\in\bb R^3$ and $|Q|\leq 1$, was defined in \eqn{eqn:W trafo} and it is unitary for all finite $m$. For $n,m\in\bb N$ we iteratively define the vectors
\begin{align}
\begin{split}
 \slice{\phi_{P}}{n}{0}&:=\frac{\slice{\Psi'_P}{n}{0}}{\|\slice{\Psi'_P}{n}{0}\|},\\
 \slice{\widetilde\phi_{P}}{n}{m}&:=\slice{\widetilde{\cal Q'}_P}{n}{m}\slice{\phi_{P}}{n}{m-1}, \qquad \slice{\widetilde{\cal Q'}_P}{n}{m}:=-\frac{1}{2\pi i}\oint_{\Delta_m}\frac{dz}{\slice{\widetilde H^{W'}_{P}}{n}{m}-z}\\
 \slice{\phi_{P}}{n}{m}&:=W_m(\nabla \slice{E'_P}{n}{m})W_m(\nabla \slice{E'_P}{n}{m-1})^*\slice{\widetilde \phi_{P}}{n}{m} 
\end{split}\label{eqn:ir sequence}
\end{align}
where the contour $\Delta_{m}$ was introduced in \dfn{def:IR contour}.  This family of vectors is well-defined because of the unitarity of the transformations $W_m$ and of the results of  \sct{sec:ground states with ir cutoff}. If the vectors $\slice{\phi_{P}}{n}{m}$ and $\slice{\widetilde\phi_{P}}{n}{m}$ are non-zero they are by construction the (unnormalized) ground states of $\slice{H^{W'}_{P}}{n}{m}$ and $\slice{\widetilde H^{W'}_{P}}{n}{m}$, respectively. Assuming that these vectors are non-zero we introduce the following auxiliary definitions:
\begin{gather}
\begin{split}
\label{eqn:def A Ck}
 A_{P,m}^{(n)} := \int dk\;k\alpha_m(\nabla \slice{E'_P}{n}{m},k)[b(k)+b^*(k)], \qquad
 C^{(k,n)}_{P,m} := \int dk\;k\alpha_m(\nabla \slice{E'_P}{n}{m},k)^2,\\
 C^{(\omega,n)}_{P,m} := \int dk\;\omega(k)\alpha_m(\nabla \slice{E'_P}{n}{m},k)^2, \qquad
 C^{(\rho,n)}_{P,m}:= 2g\int dk\;\rho(k)\alpha_m(\nabla \slice{E'_P}{n}{m},k).
\end{split}
\end{gather}
where the function 
\begin{align*}
\alpha_m(\nabla \slice{E'_P}{n}{m},k):= -g\;\frac{\rho(k)}{\omega(k)}\frac{\charf{\cal B_\kappa\setminus\cal B_{\tau_m}}(k)}{1-\widehat k\cdot \nabla \slice{E'_P}{n}{m}}
\end{align*}
was introduced in (\ref{eqn:coherent}). Furthermore, we define
\begin{align}
 \slice{R_P}{n}{m} &:= -\nabla \slice{E'_P}{n}{m}\cdot(\slice{B}{n}{0}+\slice{B^*}{n}{0})-\frac{1}{2}\left([\slice{B}{n}{0},P-P^f]+[P-P^f,\slice{B^*}{n}{0}]+[\slice{B}{n}{0},\slice{B^*}{n}{0}]\right),\nonumber\\
 \slice{\Pi_P}{n}{m} &:= P^f+A_{P,m}^{(n)}+\slice{B}{n}{0}+\slice{B^*}{n}{0}\label{eqn:pi def}\\
 &=W_m(\nabla\slice{E'_P}{n}{m})\left(P^f+\slice{B}{n}{0}+\slice{B^*}{n}{0}\right)W_m(\nabla\slice{E'_P}{n}{m})^*-C^{(k,n)}_{P,m},\nonumber\\
 \slice{\Gamma_P}{n}{m} &:=\slice{\Pi_P}{n}{m}-\braket{\slice{\Pi_P}{n}{m}}_{\slice{\phi_P}{n}{m}},\label{eqn:def Gamma}\\
 C_{P,m}^{(n)} &:= \frac{P^2}{2}-\frac{1}{2}\left(P-\nabla\slice{E'_P}{n}{m}\right)^2-\nabla \slice{E'_P}{n}{m}\cdot C^{(k,n)}_{P,m}+C^{(\omega,n)}_{P,m}+C^{(\rho,n)}_{P,m}.\nonumber
\end{align}
Using these abbreviations and a formal computation carried out in Appendix \ref{sec:formal}, one can prove that the identity 
\begin{align}\label{eqn:Hw}
  \slice{H^{W'}_P}{n}{m}&=\frac{1}{2}\slice{\Gamma_P}{n}{m}^2+
  H^f-\nabla \slice{E'_P}{n}{m} \cdot P^f + C_{P,m}^{(n)} + \slice{R_P}{n}{m}
\end{align}
holds on $D(H_{P,0})$ for all $n,m\in\bb N$. As in \cite{pizzo_one-particle_2003} the `normal ordered' operator $\slice{\Gamma_P}{n}{m}$ will play a crucial role in the next steps.

Analogously, one can verify that on $D(H_{P,0})$ and for $n,m\in\bb N$ the following identity holds true:
\begin{align}
\label{eqn:tilde H}
  \slice{\widetilde H^{W'}_P}{n}{m}&=\frac{1}{2}\left(\slice{\Gamma_P}{n}{m-1}+\widetilde A_{P,m}^{(n)}-A_{P,m-1}^{(n)}+\widetilde C^{(k,n)}_{P,m}-C^{(k,n)}_{P,m-1}\right)^2+H^f-\nabla \slice{E'_P}{n}{m-1} \cdot P^f+\widetilde C_{P,m}^{(n)}+\slice{R_P}{n}{m-1};
\end{align}
here we have similarly introduced, for any fixed $n\in\bb N$,
\begin{gather}
\begin{split}
\label{eqn:def tilde A Ck}
 \widetilde A_{P,m}^{(n)} := \int dk\;k\alpha_m(\slice{\nabla E_P}{n}{m-1},k)[b(k)+b^*(k)], \qquad
 \widetilde C^{(k,n)}_{P,m} := \int dk\;k\alpha_m(\nabla \slice{E'_P}{n}{m-1},k)^2,\\
 \widetilde C^{(\omega,n)}_{P,m} := \int dk\;\omega(k)\alpha_m(\slice{\nabla E_P}{n}{m-1},k)^2, \qquad
 \widetilde C^{(\rho,n)}_{P,m}:= 2g\int dk\;\rho(k)\alpha_m(\slice{\nabla E_P}{n}{m-1},k),
\end{split}
\end{gather}
which differ from those in (\pref{eqn:def A Ck}) only in the argument of $\alpha_m$. We also define
\begin{align}
 \slice{\widetilde\Pi_P}{n}{m} &:= P^f+\widetilde A_{P,m}^{(n)}+\slice{B}{n}{0}+\slice{B^*}{n}{0}\nonumber\\
 &=W_m(\nabla\slice{E'_P}{n}{m-1})\left(P^f+\slice{B}{n}{0}+\slice{B^*}{n}{0}\right)W_m(\nabla\slice{E'_P}{n}{m-1})^*-{\widetilde C}^{(k,n)}_{P,m},\nonumber\\
 \slice{\widetilde \Gamma_P}{n}{m} &:=\slice{\widetilde\Pi_P}{n}{m}-\braket{\slice{\widetilde\Pi_P}{n}{m}}_{\slice{\widetilde\phi_P}{n}{m}},\label{eqn:def tilde Gamma}\\
 \widetilde C_{P,m}^{(n)} &:= \frac{P^2}{2}-\frac{1}{2}\left(P-\nabla\slice{E'_P}{n}{m-1}\right)^2-\nabla \slice{E'_P}{n}{m-1}\cdot C^{(k,n)}_{P,m}+\widetilde C^{(\omega,n)}_{P,m}+\widetilde C^{(\rho,n)}_{P,m}.\nonumber
\end{align}
Notice that using (\ref{eqn:grad e formuar}) we have the following identities
\begin{align}
\braket{\slice{\Pi_P}{n}{m}}_{\slice{\phi_P}{n}{m}}&=P-\nabla \slice{E'_P}{n}{m}-C^{(k,n)}_{P,m}, \\
 \slice{\Gamma_P}{n}{m}&=W_m(\nabla\slice{E'_P}{n}{m})\left(P^f+\slice{B}{n}{0}+\slice{B^*}{n}{0}\right)W_m(\nabla\slice{E'_P}{n}{m})^*-P+\nabla \slice{E'_P}{n}{m}, \label{gamma-identity} \\
 \slice{\widetilde \Gamma_P}{n}{m}&=W_m(\nabla\slice{E'_P}{n}{m-1})W_m(\nabla\slice{E'_P}{n}{m})^* \slice{\Gamma_P}{n}{m}W_m(\nabla\slice{E'_P}{n}{m})W_m(\nabla\slice{E'_P}{n}{m-1})^*,\label{eqn:gamma tilde}
 \\
\label{eqn:gamma diff}
 \slice{\widetilde \Gamma_P}{n}{m}-\slice{\Gamma_{P}}{n}{m-1}&= (\nabla \slice{E'_P}{n}{m}-\nabla \slice{E'_P}{n}{m-1}) +(\widetilde A_{P,m}^{(n)}-A_{P,m-1}^{(n)})+(\widetilde C^{(k,n)}_{P,m}-C^{(k,n)}_{P,m-1}).
\end{align}

To start with, we show that for any finite $m$, the vectors $\slice{\phi_{P}}{n}{m}$ and $\slice{\widetilde\phi_{P}}{n}{m}$ are non-zero. Namely,  by starting from $\slice{\phi_P}{n}{0}$, we estimate the norm difference
\begin{align}\label{eqn:diff of res}
 \|\slice{\widetilde\phi_P}{n}{m}-\slice{\phi_P}{n}{m-1}\|=\left\|-\frac{1}{2\pi i}\oint_{\Delta_m}\frac{dz}{\slice{\widetilde H^{W'}_{P}}{n}{m}-z}\slice{\phi_{P}}{n}{m-1}-\slice{\phi_{P}}{n}{m-1}\right\|.
\end{align}
In \eqn{eqn:diff of res} we expand the resolvent with respect to 
 \begin{align}\label{eqn:tilde h}
 \slice{\Delta \widehat H_P^{W'}}{m-1}{m}&:=\slice{\widetilde H_P^{W'}}{n}{m}-\slice{H_P^{W'}}{n}{m-1}-\widetilde C_{P,m}^{(n)}+C_{P,m-1}^{(n)}\\
 &=\frac{1}{2}\left(\widetilde A_{P,m}^{(n)}-A_{P,m-1}^{(n)}+\widetilde C^{(k,n)}_{P,m}- C^{(k,n)}_{P,m-1}\right)^2+\label{eqn:quad term}\\
 &\quad+\frac{1}{2}\left[\widetilde A_{P,m}^{(n)}-A_{P,m-1}^{(n)},\slice{\Gamma_P}{n}{m-1}\right]+\label{eqn:quad term 2}\\
 &\quad+\left(\widetilde A_{P,m}^{(n)}-A_{P,m-1}^{(n)}\right)\cdot\slice{\Gamma_P}{n}{m-1}+\left(\widetilde C^{(k,n)}_{P,m}- C^{(k,n)}_{P,m-1}\right)\cdot\slice{\Gamma_P}{n}{m-1}.\label{eqn:mixed term 2}
\end{align}

Given the form of $\slice{\Delta \widehat H_P^{W'}}{m-1}{m}$ it is convenient to replace the integration contour $\Delta_m$ with $\widehat\Delta_m$ defined below:

\begin{definition}
 For $m\in\bb N$ define
\begin{align*}
 \widehat\Delta_m:=\left\{z-(C_{P,m-1}^{(n)}-\widetilde C_{P,m}^{(n)})\;\big|\;z\in\Delta_m\right\}.
\end{align*}
\end{definition}

In the same fashion as Theorem \pref{thm:ir induction}  we ensure the bounds
\begin{align}\label{eqn:energy est}
 \frac{1}{4}\zeta\tau_{m}\leq |\slice{E'_P}{n}{m-1}-z+\widetilde C_{P,m}^{(n)}-C_{P,m-1}^{(n)}|\leq \zeta\tau_m.
\end{align}
for $z$ in the original integration contour $\Delta_m$.
For this we observe that
\begin{align}\label{eqn:c est}
 |\widetilde C_{P,m}^{(n)}-C_{P,m-1}^{(n)}|\leq g^2C\tau_{m-1},
\end{align}
and hence, for $|g|$ sufficiently small,
\begin{align*}
 \zeta\tau_{m}\geq \frac{1}{2}\zeta\tau_{m}+g^2 C\tau_{m-1}\geq |\slice{E'_P}{n}{m-1}-z+\widetilde C_{P,m}^{(n)}-C_{P,m-1}^{(n)}|\geq \frac{1}{2}\zeta\tau_m-g^2 C\tau_{m-1}\geq \frac{1}{4}\zeta\tau_{m}
\end{align*}
where in the last step we have used the constraints in \eqn{eqn:constraint}. The upper bound \eqn{eqn:energy est} follows from \eqn{eqn:c est} by a similar argument. Hence, we can use the shifted contours $\widehat\Delta_m$ instead of $\Delta_m$ and
 estimate
\begin{align}
\|\slice{\widetilde\phi_{P}}{n}{m}-\slice{\phi_{P}}{n}{m-1}\| &\leq 
   \Bigg\|\frac{1}{2\pi i}\oint_{\widehat\Delta_m}dz\;\sum_{j=1}^\infty \left(\frac{1}{\slice{E'_P}{n}{m-1}-z}\right)^{1/2}\left(\frac{1}{\slice{H^{W'}_P}{n}{m-1}-z}\right)^{1/2}\times\label{eqn:tilde diff}\\
   &\quad\times\left[\left(\frac{1}{\slice{H^{W'}_P}{n}{m-1}-z}\right)^{1/2}(-\slice{\Delta \widehat H^{W'}_n}{m-1}{m})\left(\frac{1}{\slice{H^{W'}_P}{n}{m-1}-z}\right)^{1/2}\right]^j\slice{\phi_{P}}{n}{m-1}\Bigg\|\nonumber\\
 &\leq 
   C\gamma^m \sup_{z\in\widehat\Delta_m} \left|\frac{1}{\slice{E'_P}{n}{m-1}-z}\right|^{1/2} \; \left\|\left(\frac{1}{\slice{H^{W'}_P}{n}{m-1}-z}\right)^{1/2}\right\|_{\slice{\cal F}{n}{m}} \times\label{eqn:ir diff 1}\\
   &\quad\times\sum_{j=1}^\infty\left\|\left(\frac{1}{\slice{H^{W'}_P}{n}{m-1}-z}\right)^{1/2}\slice{\Delta \widehat H^{W'}_n}{m-1}{m}\left(\frac{1}{\slice{H^{W'}_P}{n}{m-1}-z}\right)^{1/2}\right\|^{j-1}_{\slice{\cal F}{n}{m}}\times\label{eqn:ir diff 2}\\
   &\quad\times\left\|\left(\frac{1}{\slice{H^{W'}_P}{n}{m-1}-z}\right)^{1/2}\slice{\Delta \widehat H^{W'}_n}{m-1}{m}\left(\frac{1}{\slice{H^{W'}_P}{n}{m-1}-z}\right)^{1/2}\slice{\phi_{P}}{n}{m-1}\right\|\label{eqn:sandwich}.
\end{align}
Firstly, the gap estimate in \eqn{eqn:gap core} immediately yields
\begin{align*}
 \sup_{z\in\widehat\Delta_m} \left|\frac{1}{\slice{E'_P}{n}{m-1}-z}\right|^{1/2}\;\left\|\left(\frac{1}{\slice{H^{W'}_P}{n}{m-1}-z}\right)^{1/2}\right\|_{\slice{\cal F}{n}{m}} \leq \frac{C}{\gamma^{m}}
\end{align*}
so that \eqn{eqn:ir diff 1} is bounded by a constant. Secondly, we show that the series in \eqn{eqn:ir diff 2} is convergent. We remark that $(\widetilde A_{P,m}^{(n)}-A_{P,m-1}^{(n)})$ commutes with $W_{m-1}(\nabla\slice{E'_P}{n}{m-1})$ so that
\begin{multline*}
\left\|\left(\frac{1}{\slice{H^{W'}_P}{n}{m-1}-z}\right)^{1/2}\left(\widetilde A_{P,m}^{(n)}-A_{P,m-1}^{(n)}\right)\cdot\slice{\Gamma_P}{n}{m-1}\left(\frac{1}{\slice{H^{W'}_P}{n}{m-1}-z}\right)^{1/2}\right\|_{\slice{\cal F}{n}{m}}\\
=\Bigg\|\left(\frac{1}{\slice{H'_P}{n}{m-1}-z}\right)^{1/2}\left(\widetilde A_{P,m}^{(n)}-A_{P,m-1}^{(n)}\right)\cdot\left(P^f+\slice{B}{n}{0}+\slice{B^*}{n}{0}+\nabla\slice{E'_P}{n}{m-1}-P\right)\left(\frac{1}{\slice{H'_P}{n}{m-1}-z}\right)^{1/2}\Bigg\|_{\slice{\cal F}{n}{m}},
\end{multline*}
where we used again the unitarity of $W_{m-1}$. Since $(\widetilde A_{P,m}^{(n)}-A_{P,m-1}^{(n)})$ commutes with $\slice{B}{n}{0}\,,\,\slice{B^*}{n}{0}$ it is enough to bound
\begin{align}
\left\|\left(\frac{1}{\slice{H'_P}{n}{m-1}-z}\right)^{1/2} \right. & \left. \left(\widetilde A_{P,m}^{(n)}-A_{P,m-1}^{(n)}\right)\cdot[P^f-P+\slice{B}{n}{0}]\left(\frac{1}{\slice{H'_P}{n}{m-1}-z}\right)^{1/2}\right\|_{\slice{\cal F}{n}{m}}\nonumber\\
 &\leq C\left\|\left(\widetilde A_{P,m}^{(n)}-A_{P,m-1}^{(n)}\right)\cdot\left(\frac{1}{\slice{H'_P}{n}{m-1}-z}\right)^{1/2}\right\|_{\slice{\cal F}{n}{m}}\times\label{eqn:e1}\\
 &\quad\times\left[ \left\|H_{P,0}^{1/2}\left(\frac{1}{\slice{H'_P}{n}{m-1}-z}\right)^{1/2}\right\|_{\slice{\cal F}{n}{m}}+ \left\|\slice{B}{n}{0}\left(\frac{1}{\slice{H'_P}{n}{m-1}-z}\right)^{1/2}\right\|_{\slice{\cal F}{n}{m}}\right]\label{eqn:e2}
 \end{align}
 The factor \eqn{eqn:e1} can be bounded by $C|g|\gamma^{(m-1)/2}$, similarly to \eqn{eqn:ir phi}. Both terms in \eqn{eqn:e2} can be estimated as $C |g|\gamma^{-m/2}$ using inequalities (\pref{eqn:standard ineq})-(\ref{eqn:a priori}) and the uniform bound on $|\slice{E'_P}{n}{m}|$ given by \cor{cor:ground state energies}; see an analogous argument in \eqn{eqn:remaining resolvent} that exploits the bound in (\ref{eqn:a priori-2}). All the remaining terms can be controlled in a similar fashion. Hence, for $|g|$ sufficiently small and $\gamma$ satisfying the constraint \eqn{eqn:constraint}, we conclude that
\begin{align}\label{bound-norm-below}
 \|\slice{\widetilde\phi_P}{n}{m}\|\geq C\|\slice{\phi_P}{n}{m-1}\|
\end{align}
for a strictly positive constant $C$. 
 

\paragraph{Key result.} 
\thm{thm:key theorem} below is the key tool needed for proving the second main result of this paper, namely that the ground states $(\slice{\phi_P}{n}{m})_{m\in\bb N}$ converge to a non-zero vector. This theorem relies on several lemmas (\lem{lemma:pi terms}, \lem{lem:leading term}, and \lem{lem:ir trafo diff}) that will be proven later on. 

Recall that the symbol $C$ denotes any universal constant. Throughout the computation it will be important to distinguish the constants $C_i$, $1\leq i\leq 7$. 

\begin{theorem}\label{thm:key theorem}
For $|g|$, $\gamma$, and $\zeta$ sufficiently small and fulfilling the constraints in \dfn{def:gap params} the following holds true for all $n\in\bb N$, $m\geq 1$:
\begin{enumerate}[(i)]
 \item $\|\slice{\phi_{P}}{n}{m}-\slice{\widetilde\phi_{P}}{n}{m}\|\leq m\gamma^{\frac{m}{4}}$ and $\|\slice{\widetilde\phi_P}{n}{m}-\slice{\phi_P}{n}{m-1}\|\leq \gamma^{\frac{m}{4}}$,
 \item $\|\slice{\phi_P}{n}{m}\|\geq 1-\sum_{j=1}^m \gamma^{\frac{j}{4}}(1+j)\qquad (\geq \frac{1}{2})$,
 \item Let $z\in\widehat\Delta_{m+1}$ and $\delta:=\frac{1}{2}$ 
 then
 \begin{align*}
 |g|^{\delta}\left|\braket{\slice{\Gamma^{(i)}_P}{n}{m}\slice{\phi_P}{n}{m},\frac{1}{\slice{H^{W'}_P}{n}{m}-z}\slice{\Gamma^{(i)}_P}{n}{m}\slice{\phi_P}{n}{m}}\right|\leq \gamma^{-\frac{m}{2}}, \qquad i=1,2,3.
\end{align*}
\end{enumerate}
\end{theorem}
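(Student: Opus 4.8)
The three statements will be proven together by a single induction on $m$, in each step establishing first (i), then (ii), then (iii). For $m=1$ all three follow by direct computation from the $m=0$ data of \thm{thm:main uv} and \thm{thm:uv induction}: the resolvent expansion defining $\slice{\widetilde{\cal Q'}_P}{n}{1}$ converges on the shifted contour $\widehat\Delta_1$ because by \thm{thm:ir induction} the gap of $\slice{H'_P}{n}{1}$ is at least $\zeta\tau_1$, the operator $\slice{\Gamma_P}{n}{1}$ has bounded form factors since $\alpha_1$ is supported in the bounded shell $\cal B_\kappa\setminus\cal B_{\tau_1}$, and the constraints in \dfn{def:gap params} (in particular $|g|\le\gamma^2$) supply the required smallness. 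So assume (i)--(iii) at level $m-1$ and consider the step $m-1\Rightarrow m$.

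The second estimate in (i), $\|\slice{\widetilde\phi_P}{n}{m}-\slice{\phi_P}{n}{m-1}\|\le\gamma^{m/4}$, follows the scheme indicated after \dfn{def:IR contour}: expand the resolvent of $\slice{\widetilde H^{W'}_P}{n}{m}$ around $\slice{H^{W'}_P}{n}{m-1}$ over the shifted contour $\widehat\Delta_m$, using the energy localisation \eqn{eqn:energy est} and the gap bound \eqn{eqn:gap core}. In $\slice{\Delta\widehat H^{W'}_P}{m-1}{m}$ the terms quadratic in $\widetilde A^{(n)}_{P,m}-A^{(n)}_{P,m-1}$ and the commutator \eqn{eqn:quad term 2} are directly small because the coefficient of $\widetilde A^{(n)}_{P,m}-A^{(n)}_{P,m-1}$ in the standard inequalities \eqn{eqn:standard ineq} is of order $|g|\gamma^{(m-1)/2}$; the decisive term \eqn{eqn:mixed term 2}, coupling $\widetilde A^{(n)}_{P,m}-A^{(n)}_{P,m-1}$ to the marginal operator $\slice{\Gamma_P}{n}{m-1}$, is controlled by combining that smallness with the inductive hypothesis (iii) at $m-1$ (the role of \lem{lemma:pi terms} and \lem{lem:leading term}); this bounds the Neumann series by $\tfrac12\gamma^{m/4}$ and in particular gives $\slice{\widetilde\phi_P}{n}{m}\ne0$, cf. \eqn{bound-norm-below}. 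The first estimate, $\|\slice{\phi_P}{n}{m}-\slice{\widetilde\phi_P}{n}{m}\|\le m\gamma^{m/4}$, comes from writing the difference as $(W_m(\nabla\slice{E'_P}{n}{m})W_m(\nabla\slice{E'_P}{n}{m-1})^*-\mathbbm 1)\slice{\widetilde\phi_P}{n}{m}$ and estimating the Bogolyubov transformation difference: the coherent shift $\alpha_m(\nabla\slice{E'_P}{n}{m},\cdot)-\alpha_m(\nabla\slice{E'_P}{n}{m-1},\cdot)$ has $L^2$-norm of order $|g|\,|\nabla\slice{E'_P}{n}{m}-\nabla\slice{E'_P}{n}{m-1}|\,(\ln(\kappa/\tau_m))^{1/2}$, where $\ln(\kappa/\tau_m)\le Cm$ produces the factor $m$ and $|\nabla\slice{E'_P}{n}{m}-\nabla\slice{E'_P}{n}{m-1}|\le Cg^2\gamma^{m-1}$ is read off from the analyticity of $\slice{E'_P}{n}{m}$ in $P$, the representation \eqn{eqn:grad e formuar}, and the vector difference just obtained (this is \lem{lem:ir trafo diff}); this forces $|g|\le\gamma^2$. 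Finally (ii) is a telescoping triangle inequality using (i) at levels $1,\dots,m$, and the last constraint in \eqn{eqn:constraint} makes the right-hand side $\ge\tfrac12$; in particular $\slice{\phi_P}{n}{m}\ne0$, so $\slice{\Gamma_P}{n}{m}$ is well defined with $\braket{\slice{\Gamma_P}{n}{m}}_{\slice{\phi_P}{n}{m}}=0$.

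For (iii) — the core — I first pass to the ``tilde'' picture: by unitarity of $W_m$ at finite $m$ together with \eqn{eq:Htilde_H}, \eqn{eqn:gamma tilde} and \eqn{eqn:ir sequence}, the quadratic form equals $\braket{\slice{\widetilde\Gamma^{(i)}_P}{n}{m}\slice{\widetilde\phi_P}{n}{m},(\slice{\widetilde H^{W'}_P}{n}{m}-z)^{-1}\slice{\widetilde\Gamma^{(i)}_P}{n}{m}\slice{\widetilde\phi_P}{n}{m}}$. The key observation is that, since $\slice{\Gamma_P}{n}{m-1}\slice{\phi_P}{n}{m-1}\perp\slice{\phi_P}{n}{m-1}$, the map $w\mapsto\braket{\slice{\Gamma^{(i)}_P}{n}{m-1}\slice{\phi_P}{n}{m-1},(\slice{H^{W'}_P}{n}{m-1}-w)^{-1}\slice{\Gamma^{(i)}_P}{n}{m-1}\slice{\phi_P}{n}{m-1}}$ only sees the excited spectrum of $\slice{H^{W'}_P}{n}{m-1}$, hence is holomorphic on the disk $|w-\slice{E'_P}{n}{m-1}|<\gap{\slice{H^{W'}_P}{n}{m-1}\restrict{\slice{\cal F}{n}{m-1}}}$, which strictly contains the disk bounded by $\widehat\Delta_m$; the maximum modulus principle then lets the inductive bound (iii) at $m-1$, valid on $\widehat\Delta_m=\widehat\Delta_{(m-1)+1}$, control this function also on the smaller disk bounded by $\widehat\Delta_{m+1}$, by $|g|^{-\delta}\gamma^{-(m-1)/2}$. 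It remains to show that the tilde quadratic form at $z\in\widehat\Delta_{m+1}$ differs from this reference quantity by at most $(\gamma^{-1/2}-1)|g|^{-\delta}\gamma^{-(m-1)/2}$. To this end one substitutes $\slice{\widetilde\Gamma_P}{n}{m}=\slice{\Gamma_P}{n}{m-1}+(\nabla\slice{E'_P}{n}{m}-\nabla\slice{E'_P}{n}{m-1})+(\widetilde A^{(n)}_{P,m}-A^{(n)}_{P,m-1})+(\widetilde C^{(k,n)}_{P,m}-C^{(k,n)}_{P,m-1})$ from \eqn{eqn:gamma diff}, $\slice{\widetilde\phi_P}{n}{m}=\slice{\phi_P}{n}{m-1}+(\slice{\widetilde\phi_P}{n}{m}-\slice{\phi_P}{n}{m-1})$, and $\slice{\widetilde H^{W'}_P}{n}{m}=\slice{H^{W'}_P}{n}{m-1}+(\widetilde C^{(n)}_{P,m}-C^{(n)}_{P,m-1})+\slice{\Delta\widehat H^{W'}_P}{m-1}{m}$, and expands the arising resolvents; every resulting correction carries at least one small factor — the standard-inequality coefficient of $\widetilde A^{(n)}_{P,m}-A^{(n)}_{P,m-1}$, of order $|g|\gamma^{(m-1)/2}$, or $|\nabla\slice{E'_P}{n}{m}-\nabla\slice{E'_P}{n}{m-1}|\le Cg^2\gamma^{m-1}$, or $\|\slice{\widetilde\phi_P}{n}{m}-\slice{\phi_P}{n}{m-1}\|\le\gamma^{m/4}$ from (i) — while the remaining factors (occurrences of $\slice{\Gamma_P}{n}{m-1}$ on ground states or difference vectors, and resolvents of $\slice{H^{W'}_P}{n}{m-1}$) are bounded using the a priori estimate $\|\slice{\Gamma_P}{n}{m-1}\slice{\phi_P}{n}{m-1}\|^2\le C|g|$ — read off from \eqn{eqn:Hw} using that $\slice{E'_P}{n}{m-1}$ and $C^{(n)}_{P,m-1}$ are bounded uniformly in $m$ (the latter because the defining integrals over $\cal B_\kappa$ converge), that $H^f-\nabla\slice{E'_P}{n}{m-1}\cdot P^f\ge\tfrac14 H^f\ge0$ by $|\nabla\slice{E'_P}{n}{m-1}|\le\namer{c:cp}$, and that $\slice{R_P}{n}{m-1}$ is $H_{P,0}$-form-bounded with bound $O(|g|)$ — together with \lem{lem:a priori} and the gap bound \eqn{eqn:gap core}. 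This is packaged into \lem{lemma:pi terms}, \lem{lem:leading term} and \lem{lem:ir trafo diff}. With $|g|,\gamma,\zeta$ small as in \dfn{def:gap params} the total correction is $\le(\gamma^{-1/2}-1)|g|^{-\delta}\gamma^{-(m-1)/2}$, so $|g|^\delta$ times the quadratic form is $\le\gamma^{-m/2}$, closing the induction.

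The main obstacle is (iii). Since $(\slice{H^{W'}_P}{n}{m}-z)^{-1}$ blows up like $\tau_m^{-1}=\kappa^{-1}\gamma^{-m}$ as $z$ approaches the ground state energy, while $\|\slice{\Gamma_P}{n}{m}\slice{\phi_P}{n}{m}\|^2$ is only $O(|g|)$ and not exponentially small in $m$, the direct estimate gives $O(|g|\gamma^{-m})$, which exceeds the target $|g|^{-\delta}\gamma^{-m/2}$ for large $m$. The way out is to exploit the normal ordering $\braket{\slice{\Gamma_P}{n}{m}}_{\slice{\phi_P}{n}{m}}=0$, which makes the quadratic form holomorphic across the ground state energy, and to propagate the bound recursively so that the $\tau^{-1}$ growth is inherited from the previous scale — where $\slice{\Gamma_P}{n}{m-1}\slice{\phi_P}{n}{m-1}$ already lives above the larger gap $\zeta\tau_{m-1}$ — rather than regenerated at each step. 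A secondary nuisance is keeping uniform control of the $D(H_{P,0})$-norm of the marginal operator $\slice{\Gamma_P}{n}{m-1}$ applied to the various difference vectors produced by the resolvent expansions.
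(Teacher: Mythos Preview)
Your inductive scheme and the split of (iii) into ``reference term plus corrections'' match the paper's proof. Your maximum-modulus argument for transporting the inductive bound in (iii) from $\widehat\Delta_m$ to the smaller contour $\widehat\Delta_{m+1}$ is a clean variant of what the paper does: the paper instead uses the spectral theorem and a resolvent-difference estimate (the passage \eqn{eqn:z res}--\eqn{eqn:dist_term}) to pass from $z\in\widehat\Delta_{m+1}$ to $y\in\widehat\Delta_m$, picking up a universal constant $C_7$. Both rely on the same orthogonality $\braket{\slice{\phi_P}{n}{m-1},\slice{\Gamma_P}{n}{m-1}\slice{\phi_P}{n}{m-1}}=0$ to kill the ground-state pole.

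There is, however, a real gap in your treatment of the first estimate in (i). You claim $|\nabla\slice{E'_P}{n}{m}-\nabla\slice{E'_P}{n}{m-1}|\le Cg^2\gamma^{m-1}$ and attribute it to analyticity, \eqn{eqn:grad e formuar}, and \lem{lem:ir trafo diff}. That lemma concerns the Bogolyubov difference, not the gradient; and the bound you state is too strong. By \eqn{eqn:grad e formuar} the gradient difference is an expectation-value difference $\braket{\slice{\Pi_P}{n}{m-1}}_{\slice{\phi_P}{n}{m-1}}-\braket{\slice{\widetilde\Pi_P}{n}{m}}_{\slice{\widetilde\phi_P}{n}{m}}$ of an \emph{unbounded} operator, so the vector bound $\|\slice{\widetilde\phi_P}{n}{m}-\slice{\phi_P}{n}{m-1}\|\le\gamma^{m/4}$ does not translate directly. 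The paper devotes a separate lemma (\lem{lem:grad E diff}) to this: one writes $\slice{\widetilde\phi_P}{n}{m}-\slice{\phi_P}{n}{m-1}$ via the projection formula, moves $\slice{\Pi_P}{n}{m-1}$ inside the contour integral, and controls $\|(\slice{H^{W'}_P}{n}{m-1}-z)^{-1/2}\slice{\Pi_P}{n}{m-1}(\slice{H^{W'}_P}{n}{m-1}-z)^{-1/2}\|$ by $C\tau_m^{-1}$ via \lem{lem:a priori}. The outcome is that the gradient difference is bounded by $Cg^2\tau_{m-1}^{1/2}$ plus the resolvent-sandwich term of \lem{lem:leading term} plus $C\|\slice{\phi_P}{n}{m-1}-\slice{\widetilde\phi_P}{n}{m}\|$; the last of these dominates and is $O(\gamma^{m/4})$, not $O(g^2\gamma^{m-1})$. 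The final inequality $\|\slice{\phi_P}{n}{m}-\slice{\widetilde\phi_P}{n}{m}\|\le m\gamma^{m/4}$ still closes (after multiplying by the $|g|\,m\,|\ln\gamma|$ from \lem{lem:ir trafo diff}), but your intermediate claim is unjustified and the content of \lem{lem:grad E diff} is a necessary ingredient you have skipped. A related issue appears in your handling of the correction terms in (iii): you invoke $\|\slice{\Gamma_P}{n}{m-1}\slice{\phi_P}{n}{m-1}\|^2\le C|g|$, but for the cross term \eqn{eqn:second worm} one needs $\slice{\Gamma_P}{n}{m-1}$ applied to the \emph{difference} vector $\slice{\widetilde\phi_P}{n}{m}-\slice{\phi_P}{n}{m-1}$, where that a priori bound alone does not help; the paper again goes through the projection formula and uses the operator-norm estimate \eqn{gamma-term}--\eqn{final-eq-a}.
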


\begin{proof}
We prove this by induction: Statements (i)-(iii) for $(m-1)$ shall be referred to as assumptions A(i)-A(iii) while the same statements for $m$ are referred to as claims C(i)-C(iii).

A straightforward computation yields the case $m=1$.

Let $m\geq 2$ and suppose A(i)-A(iii) hold. We start proving claims C(i) and C(ii).
\begin{enumerate}
 \item 
Due to the inequality in \eqn{eqn:ir diff 1}-\eqn{eqn:sandwich}, the estimate
\begin{equation*}
\|\slice{\widetilde\phi_{P}}{n}{m}-\slice{\phi_{P}}{n}{m-1}\| \leq C_1\left\|\left(\frac{1}{\slice{H^{W'}_P}{n}{m-1}-z}\right)^{1/2}\slice{\Delta \widehat H^{W'}_n}{m-1}{m}\left(\frac{1}{\slice{H^{W'}_P}{n}{m-1}-z}\right)^{1/2}\slice{\phi_P}{n}{m-1}\right\|
\end{equation*}
holds true for $|g|$ sufficiently small, uniformly in $n$ and $m$. Furthermore, \lem{lem:leading term} states that
\begin{multline*}
\left\|\left(\frac{1}{\slice{H^{W'}_P}{n}{m-1}-z}\right)^{1/2}\slice{\Delta \widehat H^{W'}_n}{m-1}{m}\left(\frac{1}{\slice{H^{W'}_P}{n}{m-1}-z}\right)^{1/2}\slice{\phi_P}{n}{m-1}\right\|\\
\leq |g|C_2\gamma^{\frac{m-2}{2}}\left(1+
 \sum_{i=1}^3\left|\braket{\slice{\Gamma^{(i)}_P}{n}{m-1}\slice{\phi_P}{n}{m-1},\frac{1}{\slice{H^{W'}_P}{n}{m-1}-z}\slice{\Gamma^{(i)}_P}{n}{m-1}\slice{\phi_P}{n}{m-1}}\right|^{\frac{1}{2}}\right)
 \end{multline*}
which together with the induction assumption A(iii) yields
\begin{align*}
 \|\slice{\widetilde\phi_P}{n}{m}-\slice{\phi_P}{n}{m-1}\|\leq |g|C_1 C_2 \gamma^{\frac{m-2}{2}}\left(1+3|g|^{-\frac{\delta}{2}}\gamma^{-\frac{m-1}{4}}\right).
\end{align*}
For $|g|$ sufficiently small and $\gamma$ satisfying the constraints in \eqn{eqn:constraint} we have
\begin{align}\label{eqn:triangle first term}
 \|\slice{\widetilde\phi_P}{n}{m}-\slice{\phi_P}{n}{m-1}\|\leq \gamma^{\frac{m}{4}}.
\end{align}

Finally, from \eqn{eqn:triangle first term}, A(ii) and \eqn{eqn:constraint} we conclude
\begin{equation}
\label{eqn:norm estimate}
 \|\slice{\widetilde \phi_{P}}{n}{m}\| \geq \|\slice{\phi_P}{n}{m-1}\|-\|\slice{\widetilde\phi_P}{n}{m}-\slice{\phi_P}{n}{m-1}\| \geq 1-\sum_{j=1}^{m-1}\gamma^{\frac{j}{4}}(1+j)-\gamma^{\frac{m}{4}}\geq \frac{1}{2}.
\end{equation}

\item
We observe that
\begin{align}
\|\slice{\phi_{P}}{n}{m}-\slice{\widetilde\phi_{P}}{n}{m}\|
&\leq \|[W_m(\nabla \slice{E'_P}{n}{m})W_m(\nabla \slice{E'_P}{n}{m-1})^*-\id{\slice{\cal F}{n}{m}}]\;\slice{\widetilde\phi_P}{n}{m}\|
\nonumber\\
&\leq \left\|[W_m(\nabla \slice{E'_P}{n}{m})-W_m(\nabla \slice{E'_P}{n}{m-1})]\;\frac{\slice{\Psi'_P}{n}{m}}{\|\slice{\Psi'_P}{n}{m}\|}\right\|\label{eqn:est trafo diff}
\end{align}
holds because the vectors $\slice{\Psi'_P}{n}{m}$ and $W_m(\nabla \slice{E'_P}{n}{m-1})^*\slice{\widetilde\phi_P}{n}{m}$ are parallel and $\|\slice{\widetilde\phi_P}{n}{m}\|\leq 1$.
\lem{lem:ir trafo diff} yields
\begin{align}\label{eqn:ln term}
 \eqn{eqn:est trafo diff}\leq |g|\, C_3 m\; |\ln \gamma|\bigg|\nabla \slice{E'_P}{n}{m}-\nabla \slice{E'_P}{n}{m-1}\bigg|.
\end{align}
The difference of the gradients of the ground state energies in \eqn{eqn:ln term} is estimated in \lem{lem:grad E diff} which states that
\begin{align*}
 |\nabla \slice{E'_P}{n}{m}-\nabla \slice{E'_P}{n}{m-1}|&\leq g^2C_4\gamma^{\frac{m-1}{2}}+\sup_{z\in\widehat\Delta_m}\left\|\left(\frac{1}{\slice{H^{W'}_P}{n}{m}-z}\right)^{1/2}\slice{\Delta \widehat H^{W'}_n}{m-1}{m}\left(\frac{1}{\slice{H^{W'}_P}{n}{m}-z}\right)^{1/2}\slice{\phi_P}{n}{m-1}\right\|\\
&\quad+C\frac{\|\slice{\phi_P}{n}{m-1}-\slice{\widetilde\phi_P}{n}{m}\|}{\|\slice{\phi_P}{n}{m-1}\|^2\|\slice{\widetilde\phi_P}{n}{m}\|^2}.
\end{align*}
Hence, using \lem{lem:leading term}, \eqn{eqn:triangle first term}, \eqn{eqn:norm estimate} as well as assumptions A(ii) and A(iii), one finds that
\begin{align*}
 \|\slice{\phi_{P}}{n}{m}-\slice{\widetilde\phi_{P}}{n}{m}\|\leq |g| C_3 m|\ln \gamma|\left(g^2C_4\gamma^{m-1}+|g| C_2\gamma^{\frac{m-2}{2}}(1+3|g|^{-\frac{\delta}{2}}\gamma^{-\frac{m-1}{4}})+C_5\gamma^{\frac{m}{4}}\right)
\end{align*}
which implies 
\begin{align}\label{eqn:triangle second term}
 \|\slice{\phi_{P}}{n}{m}-\slice{\widetilde\phi_{P}}{n}{m}\|\leq m\gamma^{\frac{m}{4}}
\end{align}
for $|g|$ sufficiently small and $\gamma$  fulfilling the constraints in \eqn{eqn:constraint}.
\end{enumerate}
Estimates \eqn{eqn:triangle first term} and \eqn{eqn:triangle second term} prove
C(i). C(ii) follows along the same lines as \eqn{eqn:norm estimate} using the bound in \eqn{eqn:triangle second term}.

Finally, we prove claim C(iii).
Let $z\in\widehat\Delta_{m+1}$ and $i=1,2,3$. Using the unitarity of the transformations $W_m$ we get
\begin{align*}
 &\left|\braket{\slice{\Gamma^{(i)}_P}{n}{m}\slice{\phi_P}{n}{m},\frac{1}{\slice{H^{W'}_P}{n}{m}-z}\slice{\Gamma^{(i)}_P}{n}{m}\slice{\phi_P}{n}{m}}\right|=\left|\braket{\slice{\widetilde\Gamma^{(i)}_P}{n}{m}\slice{\widetilde\phi_P}{n}{m},\frac{1}{\slice{\widetilde H^{W'}_P}{n}{m}-z}\slice{\widetilde\Gamma^{(i)}_P}{n}{m}\slice{\widetilde\phi_P}{n}{m}}\right|,
\end{align*}
see identities \eqn{eq:Htilde_H}-\eqn{eqn:gamma tilde}.
For $|g|$ sufficiently small,  i.e., $|g|$ of order $\gamma^2$, we can expand the resolvent $(\slice{\widetilde H^{W'}_P}{n}{m}-z)^{-1}$ by the same reasoning as for \eqn{eqn:tilde diff}-\eqn{eqn:sandwich} even for $z\in\widehat\Delta_{m+1}$ because of the bound on the energy shifts
\begin{align}\label{eqn:the eng shift}
 \big|\slice{E'_P}{n}{m+1}-\slice{E'_P}{n}{m}\big|\leq Cg^2\gamma^m,
\end{align}
given by \lem{lem:grad E}, and because of \eqn{eqn:ir domain}. Hence, using (\ref{eqn:tilde h}) we find
\begin{align}
 &\left|\braket{\slice{\widetilde\Gamma^{(i)}_P}{n}{m}\slice{\widetilde\phi_P}{n}{m},\frac{1}{\slice{\widetilde H^{W'}_P}{n}{m}-z}\slice{\widetilde\Gamma^{(i)}_P}{n}{m}\slice{\widetilde\phi_P}{n}{m}}\right| \nonumber \\
 &\qquad\leq \sum_{j=1}^\infty \left\|\left(\frac{1}{\slice{H^{W'}_P}{n}{m-1}-z}\right)^{1/2}\big[\slice{\Delta \widehat H^{W'}_n}{m-1}{m}+\widetilde C_{P,m}^{(n)}-C_{P,m-1}^{(n)}\big]\left(\frac{1}{\slice{H^{W'}_P}{n}{m-1}-z}\right)^{1/2}\right\|^{j-1}_{\slice{\cal F}{n}{m}}\times\label{eqn:neumann sum}\\
 &\qquad\phantom{\leq}\times\left\|\left(\frac{1}{\slice{H^{W'}_P}{n}{m-1}-z}\right)^{1/2}\slice{\widetilde\Gamma^{(i)}_P}{n}{m}\slice{\widetilde\phi_P}{n}{m}\right\|^2\nonumber\\
 &\qquad\leq C\left\|\left(\frac{1}{\slice{H^{W'}_P}{n}{m-1}-z}\right)^{1/2}\slice{\widetilde\Gamma^{(i)}_P}{n}{m}\slice{\widetilde\phi_P}{n}{m}\right\|^2\nonumber
\end{align}
 Furthermore,
\begin{align}
\left\|\left(\frac{1}{\slice{H^{W'}_P}{n}{m-1}-z}\right)^{1/2}\slice{\widetilde\Gamma^{(i)}_P}{n}{m}\slice{\widetilde\phi_P}{n}{m}\right\|^2 &\leq 2\left\|\left(\frac{1}{\slice{H^{W'}_P}{n}{m-1}-z}\right)^{1/2}\slice{\Gamma^{(i)}_P}{n}{m-1}\slice{\phi_P}{n}{m-1}\right\|^2+\label{eqn:almost our induction assumption}\\
 &\quad+2\left\|\left(\frac{1}{\slice{H^{W'}_P}{n}{m-1}-z}\right)^{1/2}(\slice{\widetilde\Gamma^{(i)}_P}{n}{m}\slice{\widetilde\phi_P}{n}{m}-\slice{\Gamma^{(i)}_P}{n}{m-1}\slice{\phi_P}{n}{m-1})\right\|^2.\label{eqn:anyway small term}
\end{align}
Term \eqn{eqn:almost our induction assumption}: Exploiting the property
\begin{align*}
\braket{\slice{\phi_P}{n}{m-1},\slice{\Gamma_P}{n}{m-1}\slice{\phi_P}{n}{m-1}}=0 
\end{align*}
and the spectral theorem, one can show that the term on the right-hand side of \eqn{eqn:almost our induction assumption} fulfills
\begin{align}
 &\left\|\left(\frac{1}{\slice{H^{W'}_P}{n}{m-1}-z}\right)^{1/2}\slice{\Gamma^{(i)}_P}{n}{m-1}\slice{\phi_P}{n}{m-1}\right\|^2 =\braket{\slice{\Gamma^{(i)}_P}{n}{m-1}\slice{\phi_P}{n}{m-1},\left|\frac{1}{\slice{H^{W'}_P}{n}{m-1}-z}\right|\slice{\Gamma^{(i)}_P}{n}{m-1}\slice{\phi_P}{n}{m-1}}\nonumber\\
 &\qquad\leq C \left|\braket{\slice{\Gamma^{(i)}_P}{n}{m-1}\slice{\phi_P}{n}{m-1},\frac{1}{\slice{H^{W'}_P}{n}{m-1}-z}\slice{\Gamma^{(i)}_P}{n}{m-1}\slice{\phi_P}{n}{m-1}}\right|\label{eqn:z res}\\
 &\qquad\leq C \left|\braket{\slice{\Gamma^{(i)}_P}{n}{m-1}\slice{\phi_P}{n}{m-1},\frac{1}{\slice{H^{W'}_P}{n}{m-1}-y}\slice{\Gamma^{(i)}_P}{n}{m-1}\slice{\phi_P}{n}{m-1}}\right|\label{eqn:z prime res}\\
 &\qquad\quad+ C\frac{\sup_{y\in\widehat\Delta_m,z\in\widehat\Delta_{m+1}}|z-y|}{\mathrm{dist}\left(z,\spec{\slice{H'_P}{n}{m-1}\restrict{\slice{\cal F}{n}{m-1}}}\setminus\{\slice{E'_P}{n}{m-1}\}\right)} \left|\braket{\slice{\Gamma_P}{n}{m-1}\slice{\phi}{n}{m-1},\frac{1}{\slice{H'_P}{n}{m-1}-y}\slice{\Gamma_P}{n}{m-1}\slice{\phi}{n}{m-1}}\right|\label{eqn:dist_term}\\
 &\qquad\leq C_7 \left|\braket{\slice{\Gamma_P}{n}{m-1}\slice{\phi}{n}{m-1},\frac{1}{\slice{H'_P}{n}{m-1}-y}\slice{\Gamma_P}{n}{m-1}\slice{\phi}{n}{m-1}}\right|.
\end{align}
for $y\in\widehat\Delta_{m}$ (recall that $z\in\widehat\Delta_{m+1}$). In passing from \eqn{eqn:z res} to \eqn{eqn:z prime res} we have used the property $\braket{\slice{\Gamma_P}{n}{m-1}\slice{\phi}{n}{m-1},\slice{\phi_P}{n}{m-1}}=0$ which implies that the vector $\slice{\Gamma^{(i)}_P}{n}{m-1}\slice{\phi_P}{n}{m-1}$ has spectral support (with respect to $\slice{H^{W'}_P}{n}{m-1}$) contained in the interval $(\slice{E'_P}{n}{m-1}+\zeta\tau_{m-1},\infty)$, and hence:
\begin{enumerate}[a)]
 \item
\begin{align*}
 \braket{\slice{\Gamma_P}{n}{m-1}\slice{\phi}{n}{m-1},\left|\frac{1}{\slice{H'_P}{n}{m-1}-y}\right|\slice{\Gamma_P}{n}{m-1}\slice{\phi}{n}{m-1}}\leq C\left|\braket{\slice{\Gamma_P}{n}{m-1}\slice{\phi}{n}{m-1},\frac{1}{\slice{H'_P}{n}{m-1}-y}\slice{\Gamma_P}{n}{m-1}\slice{\phi}{n}{m-1}}\right|
\end{align*}
 \item
\begin{multline*}
 \left|\braket{\slice{\Gamma_P}{n}{m-1}\slice{\phi}{n}{m-1},\frac{1}{\slice{H'_P}{n}{m-1}-z}\frac{1}{\slice{H'_P}{n}{m-1}-y}\slice{\Gamma_P}{n}{m-1}\slice{\phi}{n}{m-1}}\right|\\
 \leq \frac{1}{\mathrm{dist}\left(z,\spec{\slice{H'_P}{n}{m-1}\restrict{\slice{\cal F}{n}{m-1}}}\setminus\{\slice{E'_P}{n}{m-1}\}\right)}\braket{\slice{\Gamma_P}{n}{m-1}\slice{\phi}{n}{m-1},\left|\frac{1}{\slice{H'_P}{n}{m-1}-y}\right|\slice{\Gamma_P}{n}{m-1}\slice{\phi}{n}{m-1}}.
\end{multline*}
\end{enumerate}
In the step from \eqn{eqn:z res}-\eqn{eqn:dist_term} we used inequality \eqn{eqn:the eng shift}. Therefore, we can conclude that
\begin{align}
 \eqn{eqn:almost our induction assumption}\leq C_7\left|\braket{\slice{\Gamma_P}{n}{m-1}\slice{\phi}{n}{m-1},\frac{1}{\slice{H'_P}{n}{m-1}-y}\slice{\Gamma_P}{n}{m-1}\slice{\phi}{n}{m-1}}\right|.\label{eqn:C7}
\end{align}
Term \eqn{eqn:anyway small term}: We first observe that
\begin{align}
 \eqn{eqn:anyway small term}&\leq 4\left\|\left(\frac{1}{\slice{H^{W'}_P}{n}{m-1}-z}\right)^{1/2}(\slice{\widetilde\Gamma^{(i)}_P}{n}{m}-\slice{\Gamma^{(i)}_P}{n}{m-1})\slice{\widetilde\phi_P}{n}{m}\right\|^2+\label{eqn:first worm}\\
 &\quad+4\left\|\left(\frac{1}{\slice{H^{W'}_P}{n}{m-1}-z}\right)^{1/2}\slice{\Gamma^{(i)}_P}{n}{m-1}(\slice{\widetilde\phi_P}{n}{m}-\slice{\phi_P}{n}{m-1})\right\|^2\label{eqn:second worm}.
\end{align} 
In order to estimate \eqn{eqn:first worm} we use the identity in \eqn{eqn:gamma diff} and the following ingredients:
\begin{enumerate}[a)]
\setcounter{enumi}{2}
\item  The bound on $|\nabla \slice{E'_P}{n}{m}-\nabla \slice{E'_P}{n}{m-1}|$ from \lem{lem:grad E diff}
 \item The estimate in \eqn{eqn:c est}, i.e. $|\widetilde C^{(k,n)}_{P,m}-C^{(k,n)}_{P,m-1}|\leq g^2C\gamma^{m-1}$
 \item The bound
\begin{equation*}
\left\|\left(\frac{1}{\slice{H^{W'}_P}{n}{m-1}-z}\right)^{1/2}\int dk\;k[\alpha_m(\nabla \slice{E'_P}{n}{m-1},k)-\alpha_{m-1}(\nabla \slice{E'_P}{n}{m-1},k)](b(k)+b^*(k))\right\|_{\slice{\cal F}{n}{m}}^2 \leq g^2C\gamma^{m-3}.
\end{equation*}

\end{enumerate}

\noindent Hence, we obtain
\begin{align}
 \eqn{eqn:first worm}
 &\leq \frac{C}{\tau_{m+1}}\Bigg[g^2\tau_{m-1}^{1/2}+\sup_{y\in\widehat\Delta_m}\bigg\|\left(\frac{1}{\slice{H^{W'}_P}{n}{m-1}-y}\right)^{1/2}\slice{\Delta \widehat H^{W'}_n}{m-1}{m}\left(\frac{1}{\slice{H^{W'}_P}{n}{m-1}-y}\right)^{1/2}\slice{\phi_P}{n}{m-1}\bigg\|\label{eqn:w1}+ \\
&\qquad\qquad+\frac{\|\slice{\phi_P}{n}{m-1}-\slice{\widetilde\phi_P}{n}{m}\|}{\|\slice{\phi_P}{n}{m-1}\|^2\|\slice{\widetilde\phi_P}{n}{m}\|^2}\Bigg]^2+ \label{eqn:w2}\\
&\quad + \frac{C}{\tau_{m+1}}\left[g^2 C\gamma^{m-1}\right]^{2}\label{eqn:w5}\\
&\quad + g^2 C\gamma^{m-3}\label{eqn:w4}
\end{align}
where \eqn{eqn:w1}-\eqn{eqn:w2}, \eqn{eqn:w5} and \eqn{eqn:w4} are related to ingredients c), d) and e) respectively. 

For the remaining term \eqn{eqn:second worm} we use analytic perturbation theory to find
\begin{align*}
 \sqrt{\eqn{eqn:second worm}}
 &\leq C\tau_{m} \sup_{y\in\widehat\Delta_m}\left\|\left(\frac{1}{\slice{H^{W'}_P}{n}{m-1}-z}\right)^{1/2}\slice{\Gamma^{(i)}_P}{n}{m-1}\left(\frac{1}{\slice{H^{W'}_P}{n}{m-1}-y}\right)^{1/2}\right\|_{\slice{\cal F}{n}{m}}\times\\
 &\quad\times \sum_{j=1}^\infty \left\|\left(\frac{1}{\slice{H^{W'}_P}{n}{m-1}-y}\right)^{1/2}\slice{\Delta\widehat H^{W'}_n}{m-1}{m}\left(\frac{1}{\slice{H^{W'}_P}{n}{m-1}-y}\right)^{1/2}\right\|_{\slice{\cal F}{n}{m}}^{j-1}\times\\
 &\quad\times\left\|\left(\frac{1}{\slice{H^{W'}_P}{n}{m-1}-y}\right)^{1/2}\slice{\Delta\widehat H^{W'}_n}{m-1}{m}\left(\frac{1}{\slice{H^{W'}_P}{n}{m-1}-y}\right)^{1/2}\slice{\phi_P}{n}{m-1}\right\|\left\vert \frac{1}{\slice{E'_P}{n}{m-1}-y} \right\vert^{1/2}\\
 &\leq \frac{C}{\gamma^{\frac{1}{2}}}\frac{1}{\gamma^{\frac{m}{2}}}\sup_{y\in\widehat\Delta_m}\left\|\left(\frac{1}{\slice{H^{W'}_P}{n}{m-1}-y}\right)^{1/2}\slice{\Delta\widehat H^{W'}_n}{m-1}{m}\left(\frac{1}{\slice{H^{W'}_P}{n}{m-1}-y}\right)^{1/2}\slice{\phi_P}{n}{m-1}\right\|,
\end{align*}
where we have used the estimates in (\ref{eqn:ir diff 1})-(\ref{eqn:sandwich}) for $y\in\widehat\Delta_m$, and, using the identity in (\ref{gamma-identity}) 
\begin{align}\label{gamma-term}
 &\left\|\left(\frac{1}{\slice{H^{W'}_P}{n}{m-1}-z}\right)^{1/2}\slice{\Gamma^{(i)}_P}{n}{m-1}\left(\frac{1}{\slice{H^{W'}_P}{n}{m-1}-y}\right)^{1/2}\right\|_{\slice{\cal F}{n}{m}}\\
 &\qquad = \left\|\left(\frac{1}{\slice{H'_P}{n}{m-1}-z}\right)^{1/2}[P^f-P+\nabla \slice{E'_P}{n}{m-1}+\slice{B}{n}{0}+\slice{B^*}{n}{0}]\left(\frac{1}{\slice{H'_P}{n}{m-1}-y}\right)^{1/2}\right\|_{\slice{\cal F}{n}{m}}\nonumber \\
& \qquad \leq C\tau_m^{-1}\gamma^{-1/2}.\label{final-eq-a}
\end{align}
The inequality in (\ref{final-eq-a}) can be derived by combining the first inequality in (\ref{eqn:uv standard ineq}) with \lem{lem:ir a priori}.

Using \lem{lem:leading term}, Assumption A(iii), the estimates \eqn{eqn:triangle first term}, \eqn{eqn:norm estimate} and the constraints \eqn{eqn:constraint}
we get
\begin{align*}
 \eqn{eqn:first worm}&\leq C\left[g^4\gamma^{-2}+g^2\gamma^{-3}\left(1+\gamma^{-\frac{m-1}{2}}g^{-\delta}\right)+\gamma^{-\frac{m+2}{2}}+g^4\gamma^{m-3}+g^2\gamma^{m-3}\right]\leq \frac{C}{\gamma^{\frac{m+2}{2}}},\\
 \eqn{eqn:second worm}&\leq C g^2\gamma^{-3}(1+\gamma^{-\frac{m-1}{2}}|g|^{-\delta})\leq \frac{C}{\gamma^{\frac{m+2}{2}}},
\end{align*}
and hence,
\begin{align}\label{eqn:C6}
 \eqn{eqn:anyway small term}&\leq \frac{C_{6}}{\gamma^{\frac{m+2}{2}}}.
\end{align}

Finally, we collect inequalities \eqn{eqn:C7}, \eqn{eqn:C6} and make use of assumption A(iii) to derive
\begin{align*}
 &|g|^{\delta}\left|\braket{\slice{\Gamma^{(i)}_P}{n}{m}\slice{\phi_P}{n}{m},\frac{1}{\slice{H^{W'}_P}{n}{m}-z}\slice{\Gamma^{(i)}_P}{n}{m}\slice{\phi_P}{n}{m}}\right|\leq  C_7\gamma^{-\frac{m-1}{2}} + |g|^{\delta}\frac{C_{6}}{\gamma^{\frac{m+2}{2}}}\leq \gamma^{-\frac{m}{2}}
\end{align*}
for $\gamma$ and $|g|$ sufficiently small and fulfilling the constraints in (\ref{eqn:constraint}). This proves claim C(iii).
\end{proof}

We shall now provide the lemmas we have used.

\begin{lemma}\label{lemma:pi terms}
Let $|g|$ be sufficiently small. For $n,m\in\bb N$ the following expectation values are uniformly bounded:
\begin{align*}
 \left|\braket{\slice{\phi_P}{n}{m},\slice{\Pi_P}{n}{m} \slice{\phi_P}{n}{m}}\right|,\left|\braket{\slice{\widetilde\phi_P}{n}{m},\slice{\widetilde\Pi_P}{n}{m} \slice{\widetilde\phi_P}{n}{m}}\right|\leq C,
\end{align*}
\end{lemma}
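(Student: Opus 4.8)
The plan is to evaluate $\braket{\slice{\Pi_P}{n}{m}}_{\slice{\phi_P}{n}{m}}$ in closed form and then bound the handful of resulting terms uniformly in $n,m$. Recall from \eqn{eqn:pi def} that
\[
 \slice{\Pi_P}{n}{m}=W_m(\nabla\slice{E'_P}{n}{m})\left(P^f+\slice{B}{n}{0}+\slice{B^*}{n}{0}\right)W_m(\nabla\slice{E'_P}{n}{m})^*-C^{(k,n)}_{P,m},
\]
and that, by construction, $W_m(\nabla\slice{E'_P}{n}{m})^*\slice{\phi_P}{n}{m}$ is proportional to the ground state $\slice{\Psi'_P}{n}{m}$ of $\slice{H'_P}{n}{m}\restrict{\slice{\cal F}{n}{m}}$, which is non-zero for finite $n,m$ by \thm{thm:ir induction}. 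Since the ratio $\braket{\cdot}_\psi$ is insensitive to the normalization of $\psi$, this reduces the expectation value to one in $\slice{\Psi'_P}{n}{m}$, and then the gradient formula \eqn{eqn:grad e formuar} (\lem{lem:grad E}) gives precisely the identity recorded in the preliminaries,
\[
 \braket{\slice{\Pi_P}{n}{m}}_{\slice{\phi_P}{n}{m}}=P-\nabla\slice{E'_P}{n}{m}-C^{(k,n)}_{P,m}.
\]
The same computation, now using that $W_m(\nabla\slice{E'_P}{n}{m-1})^*\slice{\widetilde\phi_P}{n}{m}$ is proportional to $\slice{\Psi'_P}{n}{m}$ (non-zero by \eqn{bound-norm-below}), yields the analogous identity $\braket{\slice{\widetilde\Pi_P}{n}{m}}_{\slice{\widetilde\phi_P}{n}{m}}=P-\nabla\slice{E'_P}{n}{m}-\widetilde C^{(k,n)}_{P,m}$.

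It then remains to bound the three terms. One has $|P|\leq P_{\mathrm{max}}=\tfrac14$, and $|\nabla\slice{E'_P}{n}{m}|\leq\namer{c:cp}=\tfrac34$ by \lem{lem:grad E}(ii) (equivalently \cor{cor:ir ground state energies}(ii)). For the coefficient $C^{(k,n)}_{P,m}=\int dk\,k\,\alpha_m(\nabla\slice{E'_P}{n}{m},k)^2$ I would use the explicit form of $\alpha_m$: since $|\nabla\slice{E'_P}{n}{m}|\leq\tfrac34<1$ the denominator obeys $|1-\widehat k\cdot\nabla\slice{E'_P}{n}{m}|\geq\tfrac14$, while $\rho(k)^2/\omega(k)^2=C|k|^{-3}$, so that $|k|\,\alpha_m(\nabla\slice{E'_P}{n}{m},k)^2\leq C|g|^2\,|k|^{-2}\charf{\cal B_\kappa\setminus\cal B_{\tau_m}}(k)$; since $\int_{\cal B_\kappa}|k|^{-2}\,dk=4\pi\kappa<\infty$ this gives $|C^{(k,n)}_{P,m}|\leq C|g|^2$, uniformly in $n,m$, and identically $|\widetilde C^{(k,n)}_{P,m}|\leq C|g|^2$ (using $|\nabla\slice{E'_P}{n}{m-1}|\leq\tfrac34$). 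Combining, $|\braket{\slice{\Pi_P}{n}{m}}_{\slice{\phi_P}{n}{m}}|\leq\tfrac14+\tfrac34+C|g|^2\leq C$, and likewise for the tilde expectation value; finally, since $\|\slice{\phi_P}{n}{m}\|\leq 1$ and $\|\slice{\widetilde\phi_P}{n}{m}\|\leq 1$ one gets the stated bounds on $|\braket{\slice{\phi_P}{n}{m},\slice{\Pi_P}{n}{m}\slice{\phi_P}{n}{m}}|$ and $|\braket{\slice{\widetilde\phi_P}{n}{m},\slice{\widetilde\Pi_P}{n}{m}\slice{\widetilde\phi_P}{n}{m}}|$ without any loss.

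There is no genuine obstacle in this lemma: the only point requiring a little care is the bookkeeping that turns expectation values in the transformed states into expectation values in $\slice{\Psi'_P}{n}{m}$ (via unitarity of $W_m$ at finite $m$) so that \eqn{eqn:grad e formuar} applies, after which everything reduces to the elementary estimate above. Uniformity in $n$ and $m$ is automatic, since decreasing $\tau_m$ only shrinks the integration domain in $C^{(k,n)}_{P,m}$ while the gradients $\nabla\slice{E'_P}{n}{m}$ are controlled uniformly by the results of Section~\ref{sec:ground states with ir cutoff}.
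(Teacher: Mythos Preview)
Your argument is correct and follows essentially the same route as the paper's: reduce via unitarity of $W_m$ to the expectation of $P^f+\slice{B}{n}{0}+\slice{B^*}{n}{0}$ in the normalized $\slice{\Psi'_P}{n}{m}$, apply the gradient formula \eqn{eqn:grad e formuar}, and bound $|P|+|\nabla\slice{E'_P}{n}{m}|+|C^{(k,n)}_{P,m}|$. One minor slip in your last paragraph: decreasing $\tau_m$ \emph{enlarges} the integration domain $\cal B_\kappa\setminus\cal B_{\tau_m}$ rather than shrinking it, but since you already bounded the integrand over all of $\cal B_\kappa$ this does not affect the conclusion.
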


\begin{proof}
We only prove the bound for the first term. The second can be bounded analogously. Let $n,m\in\bb N$. 
By definition of the transformations $W_m$ and using the fact that the vectors
\[
\slice{\Psi'_P}{n}{m},\quad W_m(\nabla \slice{E'_P}{n}{m})^*\slice{\phi_P}{n}{m}, \quad W_m(\nabla \slice{E'_P}{n}{m-1})^*\slice{\widetilde\phi_P}{n}{m}
\]
are parallel and their norm is less than one, we have
\begin{align*}
 &\left|\braket{\slice{\phi_P}{n}{m},\slice{\Pi_P}{n}{m} \slice{\phi_P}{n}{m}}\right|\\
 &\qquad\leq C\Big|\braket{\frac{\slice{\Psi_P'}{n}{m}}{\|\slice{\Psi'_P}{n}{m}\|},\left[P^f+\slice{B}{n}{0}+
 \slice{B^*}{n}{0}-C^{(k,n)}_{P,m}\right] \frac{\slice{\Psi_P'}{n}{m}}{\|\slice{\Psi'_P}{n}{m}\|}}\Big| \leq C[|P|+|\nabla\slice{E'_P}{n}{m}|+|C^{(k,n)}_{P,m}|]. 
\end{align*}
where the last inequality holds by \lem{lem:grad E}.
\end{proof}

\begin{lemma}\label{lem:leading term} Let $|g|,\zeta,\gamma$ be sufficiently small.  Furthermore, let $n\in\bb N$, $m\geq 2$ and $z\in\widehat\Delta_{m}$. Then
\begin{align}\nonumber
 &\left\|\left(\frac{1}{\slice{H^{W'}_P}{n}{m-1}-z}\right)^{1/2}\slice{\Delta \widehat H^{W'}_n}{m-1}{m}\left(\frac{1}{\slice{H^{W'}_P}{n}{m-1}-z}\right)^{1/2}\slice{\phi_P}{n}{m-1}\right\|\\
 &\qquad\leq |g|C\gamma^{\frac{m-2}{2}}\left(1+
 \sum_{i=1}^3\left|\braket{\slice{\Gamma^{(i)}_P}{n}{m-1}\slice{\phi_P}{n}{m-1},\frac{1}{\slice{H^{W'}_P}{n}{m-1}-z}\slice{\Gamma^{(i)}_P}{n}{m-1}\slice{\phi_P}{n}{m-1}}\right|^{\frac{1}{2}}\right)\label{eqn:delta H estimate}
\end{align}
holds true, where $\slice{\Delta \widehat H^{W'}_n}{m-1}{m}$ is defined in \eqn{eqn:tilde h}.
\end{lemma}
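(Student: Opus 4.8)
The plan is to insert the explicit decomposition of $\slice{\Delta \widehat H^{W'}_n}{m-1}{m}$ from \eqn{eqn:quad term}--\eqn{eqn:mixed term 2} and to estimate each of its four summands separately, keeping track of which ones generate the $\slice{\Gamma_P}{n}{m-1}$--dependent piece on the right-hand side of \eqn{eqn:delta H estimate}. Throughout I write $R:=(\slice{H^{W'}_P}{n}{m-1}-z)^{-1}$ and $D:=\widetilde A_{P,m}^{(n)}-A_{P,m-1}^{(n)}$, and I rely on three structural facts. First, since $\alpha_m$ and $\alpha_{m-1}$ have the same argument $\nabla\slice{E'_P}{n}{m-1}$ and differ only through $\charf{\cal B_{\kappa}\setminus\cal B_{\tau_m}}-\charf{\cal B_{\kappa}\setminus\cal B_{\tau_{m-1}}}=\charf{\cal B_{\tau_{m-1}}\setminus\cal B_{\tau_m}}$, the operator $D$ is a field operator whose form factor is supported on the thin shell $\cal B_{\tau_{m-1}}\setminus\cal B_{\tau_m}$; by \eqn{eqn:standard ineq} its two relevant $L^2$--norms are $O(|g|\tau_{m-1}^{1/2})$ against $(\slice{H^f}{m-1}{m})^{1/2}$ and $O(|g|\tau_{m-1})$ in operator norm, and a direct shell integral gives $|\widetilde C^{(k,n)}_{P,m}-C^{(k,n)}_{P,m-1}|=O(g^2\tau_{m-1})$, cf.\ \eqn{eqn:c est}. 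Second, $D$ and the $c$-number vector $\widetilde C^{(k,n)}_{P,m}-C^{(k,n)}_{P,m-1}$ act only on the tensor factor $\slice{\cal F}{m-1}{m}$ and hence commute with $\slice{H^{W'}_P}{n}{m-1}$ and with $R^{1/2}$, whereas $\slice{\Gamma_P}{n}{m-1}$ acts on and leaves invariant $\slice{\cal F}{n}{m-1}$. Third, by construction $\slice{\phi_P}{n}{m-1}$ is the ground state of $\slice{H^{W'}_P}{n}{m-1}$ with eigenvalue $\slice{E'_P}{n}{m-1}$, so $R^{1/2}\slice{\phi_P}{n}{m-1}=(\slice{E'_P}{n}{m-1}-z)^{-1/2}\slice{\phi_P}{n}{m-1}$, and for $z\in\widehat\Delta_m$ both $|\slice{E'_P}{n}{m-1}-z|^{-1/2}$ and $\|R^{1/2}\|_{\slice{\cal F}{n}{m}}$ are $O(\tau_m^{-1/2})$ (cf.\ \eqn{eqn:energy est}, \eqn{eqn:gap core}, together with the unitarity of $W_{m-1}$).

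The two ``diagonal'' summands carry no $\Gamma$. For the quadratic term \eqn{eqn:quad term} I would apply it to $R^{1/2}\slice{\phi_P}{n}{m-1}$, which contains no bosons in the shell $\cal B_{\tau_{m-1}}\setminus\cal B_{\tau_m}$; the annihilation parts then drop out and, together with the smallness of $\widetilde C^{(k,n)}_{P,m}-C^{(k,n)}_{P,m-1}$, one obtains the bound $Cg^2\tau_{m-1}^2\,\|R^{1/2}\slice{\phi_P}{n}{m-1}\|$, so multiplying by $\|R^{1/2}\|_{\slice{\cal F}{n}{m}}$ and using the third fact gives $Cg^2\tau_{m-1}^2/\tau_m\leq Cg^2\gamma^{m-2}\leq|g|C\gamma^{\frac{m-2}{2}}$. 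For the commutator \eqn{eqn:quad term 2} I note that the contributions of $A_{P,m-1}^{(n)}$, $\slice{B}{n}{0}$, $\slice{B^*}{n}{0}$ and of the scalar inside $\slice{\Gamma_P}{n}{m-1}$ all vanish (the first because two operators of the form $\int f(k)(b(k)+b^*(k))\,dk$ commute, the next two by disjoint boson momenta, the last because it is a $c$-number), so $[D,\slice{\Gamma_P}{n}{m-1}]=[D,P^f]$ is again a shell field operator, with form factor $\propto|k|^2(\alpha_m-\alpha_{m-1})$ of $L^2$--norm $O(|g|\tau_{m-1}^2)$; the same computation then yields $\|R^{1/2}\tfrac12[D,\slice{\Gamma_P}{n}{m-1}]R^{1/2}\slice{\phi_P}{n}{m-1}\|\leq C|g|\tau_{m-1}^2/\tau_m\leq|g|C\gamma^{m-2}$.

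The $\Gamma$--dependent piece comes from the ``off-diagonal'' term \eqn{eqn:mixed term 2}, i.e.\ from $\sum_i D^{(i)}\slice{\Gamma^{(i)}_P}{n}{m-1}+\sum_i(\widetilde C^{(k,n)}_{P,m}-C^{(k,n)}_{P,m-1})^{(i)}\slice{\Gamma^{(i)}_P}{n}{m-1}$. By the second and third facts, $R^{1/2}D^{(i)}\slice{\Gamma^{(i)}_P}{n}{m-1}R^{1/2}\slice{\phi_P}{n}{m-1}=(\slice{E'_P}{n}{m-1}-z)^{-1/2}D^{(i)}R^{1/2}\slice{\Gamma^{(i)}_P}{n}{m-1}\slice{\phi_P}{n}{m-1}$; since $R^{1/2}\slice{\Gamma^{(i)}_P}{n}{m-1}\slice{\phi_P}{n}{m-1}$ still lies in $\slice{\cal F}{n}{m-1}$, only the creation part of the shell operator $D^{(i)}$ survives and this vector is bounded by $|\slice{E'_P}{n}{m-1}-z|^{-1/2}\,C|g|\tau_{m-1}\,\|R^{1/2}\slice{\Gamma^{(i)}_P}{n}{m-1}\slice{\phi_P}{n}{m-1}\|$. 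Finally, $\braket{\slice{\Gamma_P}{n}{m-1}}_{\slice{\phi_P}{n}{m-1}}=0$ forces $\slice{\Gamma^{(i)}_P}{n}{m-1}\slice{\phi_P}{n}{m-1}$ to be orthogonal to the non-degenerate ground state of $\slice{H^{W'}_P}{n}{m-1}$, hence to have spectral support in $[\slice{E'_P}{n}{m-1}+\zeta\tau_{m-1},\infty)$; since $\gamma<\tfrac12$ the contour $\widehat\Delta_m$ stays at distance $\sim\zeta\tau_m\ll\zeta\tau_{m-1}$ from $\slice{E'_P}{n}{m-1}$ and at distance $\gtrsim\zeta\tau_{m-1}$ from that support, so $|R|\leq C\,\Re R$ on that subspace and $\|R^{1/2}\slice{\Gamma^{(i)}_P}{n}{m-1}\slice{\phi_P}{n}{m-1}\|^2\leq C\,|\braket{\slice{\Gamma^{(i)}_P}{n}{m-1}\slice{\phi_P}{n}{m-1},R\,\slice{\Gamma^{(i)}_P}{n}{m-1}\slice{\phi_P}{n}{m-1}}|$ (the estimate already used around \eqn{eqn:z res}). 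Using $|\slice{E'_P}{n}{m-1}-z|^{-1/2}\tau_{m-1}\leq C\gamma^{\frac{m-2}{2}}$ and summing over $i=1,2,3$ produces exactly the $\Gamma$--term of \eqn{eqn:delta H estimate}; the contribution of $\widetilde C^{(k,n)}_{P,m}-C^{(k,n)}_{P,m-1}$ is treated identically and carries an extra factor $|g|$. Adding the four contributions gives \eqn{eqn:delta H estimate}.

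I expect the main obstacle to be the bookkeeping in the third paragraph, in particular justifying rigorously the reduction $\|R^{1/2}\slice{\Gamma^{(i)}_P}{n}{m-1}\slice{\phi_P}{n}{m-1}\|^2\lesssim|\braket{\slice{\Gamma^{(i)}_P}{n}{m-1}\slice{\phi_P}{n}{m-1},R\,\slice{\Gamma^{(i)}_P}{n}{m-1}\slice{\phi_P}{n}{m-1}}|$ (rather than the useless $\|R^{1/2}\|\,\|\slice{\Gamma^{(i)}_P}{n}{m-1}\slice{\phi_P}{n}{m-1}\|^2$): this rests on the spectral-support argument together with the quantitative separation $\zeta\tau_m<\tfrac12\zeta\tau_{m-1}$ furnished by $\gamma<\tfrac12$, and on the domain facts needed to commute $D^{(i)}$ past $R^{1/2}$ and to apply the eigenvalue identity for $\slice{\phi_P}{n}{m-1}$. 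The estimates for the diagonal summands and the shell-integral bounds are otherwise routine consequences of \eqn{eqn:standard ineq}.
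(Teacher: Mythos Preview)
Your ``second structural fact'' is false, and the off-diagonal estimate collapses without it. The Hamiltonian $\slice{H^{W'}_P}{n}{m-1}$ on $\slice{\cal F}{n}{m}$ is \emph{not} the identity on the shell factor $\slice{\cal F}{m-1}{m}$: the free part $\tfrac12(P-P^f)^2+H^f$ involves $\slice{P^f}{m-1}{m}$ and $\slice{H^f}{m-1}{m}$, so the shell field operator $D=\widetilde A_{P,m}^{(n)}-A_{P,m-1}^{(n)}$ does \emph{not} commute with $\slice{H^{W'}_P}{n}{m-1}$ or with $R^{1/2}$. Consequently the identity
\[
R^{1/2}D^{(i)}\slice{\Gamma^{(i)}_P}{n}{m-1}R^{1/2}\slice{\phi_P}{n}{m-1}
=(\slice{E'_P}{n}{m-1}-z)^{-1/2}\,D^{(i)}\,R^{1/2}\slice{\Gamma^{(i)}_P}{n}{m-1}\slice{\phi_P}{n}{m-1}
\]
is unjustified, and with it the whole reduction of the $\slice{\Gamma_P}{n}{m-1}$--term to $\|R^{1/2}\slice{\Gamma^{(i)}_P}{n}{m-1}\slice{\phi_P}{n}{m-1}\|$.

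The paper handles exactly this obstruction and the fix is substantially more work than your outline suggests. After using the orthogonality $\braket{\slice{\phi_P}{n}{m-1},D\cdot\slice{\Gamma_P}{n}{m-1}\slice{\phi_P}{n}{m-1}}=0$ to replace $|R|$ by $R$ (your spectral-support step is correct here), one writes $D\cdot\slice{\Gamma_P}{n}{m-1}\slice{\phi_P}{n}{m-1}$ as $\int dk\,(\alpha_m-\alpha_{m-1})b^*(k)\,k\cdot\slice{\Gamma_P}{n}{m-1}\slice{\phi_P}{n}{m-1}$ and applies the \emph{pull-through formula}
\[
\frac{1}{\slice{H^{W'}_P}{n}{m-1}-z}\,b^*(k)
=b^*(k)\,\frac{1}{\slice{H^{W'}_P}{n}{m-1}+\tfrac12 k^2+k\cdot\slice{\Gamma_P}{n}{m-1}+|k|-\nabla\slice{E'_P}{n}{m-1}\cdot k-z}\,,
\]
which is the correct substitute for your commutation. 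The shifted resolvent must then be Neumann-expanded in $k\cdot\slice{\Gamma_P}{n}{m-1}$, and the convergence of that expansion is the nontrivial estimate \eqn{eqn:exp cond}; it uses \lem{lem:a priori}, the bound $|\nabla\slice{E'_P}{n}{m-1}|\leq\namer{c:cp}$, and crucially the restriction $P_{\mathrm{max}}<\tfrac14$ to make the limiting constant $2P_{\mathrm{max}}/(1-P_{\mathrm{max}})<1$. Only after this expansion does one recover the factor $\|(\slice{H^{W'}_P}{n}{m-1}+f_{P,m-1}(k)-z)^{-1/2}\slice{\Gamma^{(i)}_P}{n}{m-1}\slice{\phi_P}{n}{m-1}\|^2$, which is then bounded by the right-hand side of \eqn{eqn:delta H estimate} via the spectral-support argument you describe. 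Your treatment of the diagonal terms \eqn{eqn:quad term}--\eqn{eqn:quad term 2} is essentially fine.
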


\begin{proof}
Recall the expression for $\slice{\Delta \widehat H_n^{W'}}{m-1}{m}$ given in \eqn{eqn:quad term}-\eqn{eqn:mixed term 2}
. With the usual estimates one can show that
\begin{align}\label{eqn:first couple of terms}
 &\left\|\left(\frac{1}{\slice{H^{W'}_P}{n}{m-1}-z}\right)^{1/2}\left(\eqn{eqn:quad term} + \eqn{eqn:quad term 2}\right)\left(\frac{1}{\slice{H^{W'}_P}{n}{m-1}-z}\right)^{1/2}\slice{\phi_P}{n}{m-1}\right\|\leq |g|C\gamma^{\frac{m-1}{2}}.
\end{align}

Next, we control the first term in \eqn{eqn:mixed term 2}. First, observe that
\begin{align}
 &\left\|\left(\frac{1}{\slice{H^{W'}_P}{n}{m-1}-z}\right)^{1/2}(\widetilde A_{P,m}^{(n)}-A_{P,m-1}^{(n)})\cdot\slice{\Gamma_P}{n}{m-1}\left(\frac{1}{\slice{H^{W'}_P}{n}{m-1}-z}\right)^{1/2}\slice{\phi_P}{n}{m-1}\right\|^2\nonumber\\
 &=\frac{1}{|\slice{E'_P}{n}{m-1}-z|}\braket{(\widetilde A_{P,m}^{(n)}-A_{P,m-1}^{(n)})\cdot\slice{\Gamma_P}{n}{m-1}\slice{\phi_P}{n}{m-1},\left|\frac{1}{\slice{H^{W'}_P}{n}{m-1}-z}\right|(\widetilde A_{P,m}^{(n)}-A_{P,m-1}^{(n)})\cdot\slice{\Gamma_P}{n}{m-1}\slice{\phi_P}{n}{m-1}}.\label{eqn:braket with norm}
\end{align}
Second, we recall that $\widetilde A_{P,m}^{(n)}-A_{P,m-1}^{(n)}$ contains boson creation operators restricted to the range $(\tau_{m},\tau_{m-1}]$ in momentum space. Therefore, 
\[
  \braket{\slice{\phi_P}{n}{m-1},\widetilde A_{P,m}^{(n)}-A_{P,m-1}^{(n)}\cdot\slice{\Gamma_P}{n}{m-1}\slice{\phi_P}{n}{m-1}}=0,
\]
which implies
\begin{align}
 &\eqn{eqn:braket with norm}\leq \frac{C}{\gamma^m}\left|\braket{(\widetilde A_{P,m}^{(n)}-A_{P,m-1}^{(n)})\cdot\slice{\Gamma_P}{n}{m-1}\slice{\phi_P}{n}{m-1},\frac{1}{\slice{H^{W'}_P}{n}{m-1}-z}(\widetilde A_{P,m}^{(n)}-A_{P,m-1}^{(n)})\cdot\slice{\Gamma_P}{n}{m-1}\slice{\phi_P}{n}{m-1}}\right|\label{eqn:before pull-through}
\end{align}
by using the spectral theorem and the gap estimate for $\slice{H^{W'}_P}{n}{m-1}\restrict{\slice{\cal F}{n}{m}}$.
Note further that
\begin{align*}
 (\widetilde A_{P,m}^{(n)}-A_{P,m-1}^{(n)})\cdot\slice{\Gamma_P}{n}{m-1}\slice{\phi_P}{n}{m-1}=\int dk\;(\alpha_m(\nabla \slice{E'_P}{n}{m-1})-\alpha_{m-1}(\nabla \slice{E'_P}{n}{m-1}))b^*(k)k\cdot\slice{\Gamma_P}{n}{m-1}\slice{\phi_P}{n}{m-1}.
\end{align*}
Using the pull-through formula we get
\begin{align*}
\frac{1}{\slice{H^{W'}_P}{n}{m-1}-z}b^*(k) = b^*(k)\frac{1}{\slice{H^{W'}_P}{n}{m-1}+\frac{1}{2}k^2+k\cdot\slice{\Gamma_P}{n}{m-1}+|k|-\nabla \slice{E'_P}{n}{m-1}\cdot k-z}
\end{align*}
so that we can rewrite the right-hand side of \eqn{eqn:before pull-through} as follows:
\begin{align}
\label{eqn:after pull-through}
  \eqn{eqn:before pull-through}&=\frac{C}{\gamma^m}\int dk\;\big[\alpha_m(\nabla \slice{E'_P}{n}{m-1})-\alpha_{m-1}(\nabla \slice{E'_P}{n}{m-1})\big]^2\times \nonumber\\
 &\quad\times\braket{k\cdot\slice{\Gamma_P}{n}{m-1}\slice{\phi_P}{n}{m-1},\frac{1}{\slice{H^{W'}_P}{n}{m-1}+\frac{1}{2}k^2+k\cdot\slice{\Gamma_P}{n}{m-1}+|k|-\nabla \slice{E'_P}{n}{m-1}\cdot k-z}k\cdot\slice{\Gamma_P}{n}{m-1}\slice{\phi_P}{n}{m-1}}.
\end{align}
In order to expand the resolvent in \eqn{eqn:after pull-through} in terms of $k\cdot\slice{\Gamma_P}{n}{m-1}$ we have to provide the bound
\begin{align}\label{eqn:exp cond}
 \left\|\left(\frac{1}{\slice{H^{W'}_P}{n}{m-1}+f_{P,m-1}(k)-z}\right)^{1/2}k\cdot\slice{\Gamma_P}{n}{m-1}\left(\frac{1}{\slice{H^{W'}_P}{n}{m-1}+f_{P,m-1}(k)-z}\right)^{1/2}\right\|_{\slice{\cal F}{n}{m}}<1
\end{align}
for $\tau_m<|k|\leq \tau_{m-1}$ and $z\in\widehat\Delta_m$, where we have defined
\begin{align*}
 f_{P,m-1}(k):=\frac{1}{2}k^2+|k|(1-\nabla \slice{E'_P}{n}{m-1}\cdot \widehat k).
\end{align*}
Recall that
\begin{align*}
 \slice{\Gamma_P}{n}{m-1}=P^f+A_{P,m-1}^{(n)}+\slice{B}{n}{0}+\slice{B^*}{n}{0}-\braket{\slice{\Pi_P}{n}{m-1}}_{\slice{\phi_{P}}{n}{m-1}}.
\end{align*}
The necessary estimates are:
\begin{enumerate}
 
 \item For $|g|$ sufficiently small,  the lower bound
\begin{align}
 f_{P,m-1}(k)-|\slice{E'_P}{n}{m-1}-z|>|k|\left(1-\nabla \slice{E'_P}{n}{m-1}\cdot \widehat k-\frac{1}{2}\zeta-g^2\gamma^{-1}C\right)>0\label{eqn:fest}
\end{align}
holds because $z$ belongs to the shifted contour $\widehat\Delta_m$ so that
\begin{align*}
 |\slice{E'_P}{n}{m-1}-z|\leq \frac{1}{2}\zeta\tau_m+g^2 C\tau_{m-1}.
\end{align*}
The inequality in \eqn{eqn:fest} implies
\begin{align*}
 \left\|\left(\frac{1}{\slice{H'_P}{n}{m-1}+f_{P,m-1}(k)-z}\right)^{1/2}\right\|^2_{\slice{\cal F}{n}{m-1}} &\leq \frac{1}{|k|\left(1-\nabla \slice{E'_P}{n}{m-1}\cdot \widehat k-\frac{1}{2}\zeta-g^2\gamma^{-1}C\right)}.
\end{align*}
  
 \item By the unitarity of $W_{m-1}(\nabla \slice{E'_P}{n}{m-1})$ and using $[\slice{B}{n}{0},W_{m-1}(\nabla \slice{E'_P}{n}{m-1})]=0$ as well as the standard inequalities \eqn{eqn:standard ineq}, we have
\begin{align*}
 &\left\|k\cdot \slice{B}{n}{0}\left(\frac{1}{\slice{H^{W'}_P}{n}{m-1}+f_{P,m-1}(k)-z}\right)^{1/2}\right\|_{\slice{\cal F}{n}{m-1}}\leq|g|\;|k|\;C\left\|H_{P,0}^{1/2}\left(\frac{1}{\slice{H'_P}{n}{m-1}+f_{P,m-1}(k)-z}\right)^{1/2}\right\|_{\slice{\cal F}{n}{m-1}}.
\end{align*}
 
\item By definition of the transformation $W_{m-1}(\nabla \slice{E'_P}{n}{m-1})$ and the transformation formulae \eqn{eqn:transformation formulae},
\begin{align*}
 W_{m-1}(\nabla \slice{E'_P}{n}{m-1})(P-P^f)W_{m-1}(\nabla \slice{E'_P}{n}{m-1})^*=P-P^f-A_{P,m-1}^{(n)}-C^{(k,n)}_{P,m-1}
\end{align*}
holds on $D(H_{P,0})$. Hence, we have the bound
\begin{align*}
&\left\|k\cdot (P^f+A_{P,m-1}^{(n)})\left(\frac{1}{\slice{H^{W'}_P}{n}{m-1}+f_{P,m-1}(k)-z}\right)^{1/2}\right\|_{\slice{\cal F}{n}{m-1}}\\
 &\qquad\qquad\qquad\leq|k|\;\left\|(P-P^f)\left(\frac{1}{\slice{H'_P}{n}{m-1}+f_{P,m-1}(k)-z}\right)^{1/2}\right\|_{\slice{\cal F}{n}{m-1}} \\
  &\qquad\qquad\qquad\quad +|k|(|P|+g^2 C)\;\left\|\left(\frac{1}{\slice{H'_P}{n}{m-1}+f_{P,m-1}(k)-z}\right)^{1/2}\right\|_{\slice{\cal F}{n}{m-1}}\\
&\qquad\qquad\qquad\leq|k|\sqrt 2\left\|H_{P,0}^{1/2}\left(\frac{1}{\slice{H'_P}{n}{m-1}+f_{P,m-1}(k)-z}\right)^{1/2}\right\|_{\slice{\cal F}{n}{m}}\\
 &\qquad\qquad\qquad\quad +|k|(|P|+g^2 C)\;\left\|\left(\frac{1}{\slice{H'_P}{n}{m-1}+f_{P,m-1}(k)-z}\right)^{1/2}\right\|_{\slice{\cal F}{n}{m-1}}. 
\end{align*}

 \item Using the a priori estimate \eqn{eqn:a priori} in \lem{lem:a priori} one derives 
\begin{multline*}
\left\|H_{P,0}^{1/2}\left(\frac{1}{\slice{H'_P}{n}{m-1}+f_{P,m-1}(k)-z}\right)^{1/2}\right\|_{\slice{\cal F}{n}{m-1}}\\
\leq \frac{1}{\sqrt{1-|g|\namer{a}}}\left(\left\|(\slice{H'_P}{n}{m-1})^{1/2}\;\left(\frac{1}{\slice{H'_{P}}{n}{m-1}+f_{P,m-1}(k)-z}\right)^{1/2}\right\|^2_{\slice{\cal F}{n}{m-1}} \right.\\
+\left.|g|\namer{b}\left\|\left(\frac{1}{\slice{H'_{P}}{n}{m-1}+f_{P,m-1}(k)-z}\right)^{1/2}\right\|^2_{\slice{\cal F}{n}{m-1}}\right)^{1/2}.
\end{multline*}
 \end{enumerate}
Collecting these estimates, we find:
\begin{align}\label{eqn:key term}
 &\left\|\left(\frac{1}{\slice{H^{W'}_P}{n}{m-1}+f_{P,m-1}(k)-z}\right)^{1/2}k\cdot\slice{\Gamma_P}{n}{m-1}\left(\frac{1}{\slice{H^{W'}_P}{n}{m-1}+f_{P,m-1}(k)-z}\right)^{1/2}\right\|_{\slice{\cal F}{n}{m-1}}\\
 &\leq|k|\;\left\|\left(\frac{1}{\slice{H^{W'}_P}{n}{m-1}+f_{P,m-1}(k)-z}\right)^{1/2}\right\|_{\slice{\cal F}{n}{m-1}}\times\\
 &\quad\times\left[\frac{\sqrt 2+|g|C}{\sqrt{1-|g|\namer{a}}}\;
 \left(\left\|(\slice{H'_P}{n}{m-1})^{1/2}\;\left(\frac{1}{\slice{H'_{P}}{n}{m-1}+f_{P,m-1}(k)-z}\right)^{1/2}\right\|^2_{\slice{\cal F}{n}{m-1}}+\nonumber \right.\right.\\
 &\left.\left.\quad+|g|\namer{b}\left\|\left(\frac{1}{\slice{H'_{P}}{n}{m-1}+f_{P,m-1}(k)-z}\right)^{1/2}\right\|^2_{\slice{\cal F}{n}{m-1}}\right)^{1/2}+(|P|+g^2 C)\;\left\|\left(\frac{1}{\slice{H_P}{n}{m-1}+f_{P,m-1}(k)-z}\right)^{1/2}\right\|_{\slice{\cal F}{n}{m-1}}\right]\nonumber.
\end{align}
Note that
\begin{align*}
 &\left\|(\slice{H'_P}{n}{m-1})^{1/2}\left(\frac{1}{\slice{H'_P}{n}{m-1}+f_{P,m-1}(k)-z}\right)^{1/2}\right\|_{\slice{\cal F}{n}{m-1}}\leq \left(1+\frac{|\slice{E'_P}{n}{m-1}|}{f_{P,m-1}(k)-|\slice{E'_P}{n}{m-1}-z|}\right)^{1/2}.
\end{align*}
Finally we obtain
\begin{align*}
 \eqn{eqn:key term}&\leq \frac{1}{\left(1-\nabla \slice{E'_P}{n}{m-1}\cdot \widehat k-\frac{1}{2}\zeta-g^{2}\gamma^{-1}C\right)}\times\\
 &\quad\times\Bigg[ 
   \frac{\sqrt 2+|g|C}{\sqrt{1-|g|\namer{a}}}\;\bigg(
     |\slice{E'_P}{n}{m-1}| +\tau_{m-1}\bigg(1-\nabla \slice{E'_P}{n}{m-1}\cdot \widehat k-\frac{1}{2}\zeta-Cg^2\gamma^{-1}\bigg)+g\namer{b}
   \bigg)^{1/2}+(|P|+g^2 C)
 \Bigg]
\end{align*}
so that 
\begin{align*}
 \lim\sup_{|g|,\gamma,\zeta\to 0}\eqn{eqn:key term}\leq \frac{2P_{\mathrm{max}}}{1-P_{\mathrm{max}}}<\frac{2}{3}
\end{align*}
for $P_{\mathrm{max}}<\frac{1}{4}$. By continuity, inequality \eqn{eqn:exp cond} holds for $g,\zeta,\gamma$ in a neighborhood of zero.\\

Going back to equation \eqn{eqn:after pull-through} we can proceed with the expansion (in $k\cdot\slice{\Gamma_P}{n}{m-1}$) of the resolvent:
\begin{align}
\eqn{eqn:before pull-through} &\leq Cg^{2}\gamma^{m-2}\sup_{\tau_m\leq |k|\leq \tau_{m-1}}\sum_{i,l=1}^3\Bigg\langle\left[\left(\frac{1}{\slice{H^{W'}_P}{n}{m-1}+f_{P,m-1}(k)-z}\right)^{1/2}\right]^*\slice{\Gamma_P^{(i)}}{n}{m-1}\slice{\phi_P}{n}{m-1},\\
 &\quad\sum_{j=0}^\infty\left[\left(\frac{1}{\slice{H^{W'}_P}{n}{m-1}+f_{P,m-1}(k)-z}\right)^{1/2}k\cdot\slice{\Gamma_P}{n}{m-1}
 \left(\frac{1}{\slice{H^{W'}_P}{n}{m-1}+f_{P,m-1}(k)-z}\right)^{1/2}\right]^j\times\nonumber\\
 &\quad\times
 \left(\frac{1}{\slice{H^{W'}_P}{n}{m-1}+f_{P,m-1}(k)-z}\right)^{1/2}\slice{\Gamma_P^{(l)}}{n}{m-1}\slice{\phi_P}{n}{m-1}\Bigg\rangle\nonumber\\
 &\leq Cg^{2}\gamma^{m-2}\sum_{i=1}^3\sup_{\tau_m\leq |k|\leq \tau_{m-1}}\left\|\left(\frac{1}{\slice{H^{W'}_P}{n}{m-1}+f_{P,m-1}(k)-z}\right)^{1/2}\slice{\Gamma_P^{(i)}}{n}{m-1}\slice{\phi_P}{n}{m-1}\right\|^2.\label{eqn:last pt}
\end{align}
Since $f_{P,m-1}(k)\geq 0$ and because of the property $\braket{\slice{\phi_P}{n}{m-1},\slice{\Gamma_P}{n}{m-1}\slice{\phi_P}{n}{m-1}}=0$ it follows that
\begin{equation*}
\left\|\left(\frac{1}{\slice{H^{W'}_P}{n}{m-1}+f_{P,m-1}(k)-z}\right)^{1/2}\slice{\Gamma_P^{(i)}}{n}{m-1}\slice{\phi_P}{n}{m-1}\right\|^2 \leq C\left|\braket{\slice{\Gamma_P^{(i)}}{n}{m-1}\slice{\phi_P}{n}{m-1},\frac{1}{\slice{H^{W'}_P}{n}{m-1}-z}\slice{\Gamma_P^{(i)}}{n}{m-1}\slice{\phi_P}{n}{m-1}}\right|.
\end{equation*}
Combining the estimates in \eqn{eqn:last pt} and \eqn{eqn:first couple of terms} yields the claim of the lemma.
\end{proof}

\begin{lemma}\label{lem:ir trafo diff}
 For all $n,m\in\bb N$ and $Q,Q'\in\bb R^3$ with $|Q|,|Q'|\leq 1$ the estimate
\begin{align*}
 \|[W_m(Q)-W_m(Q')]\;\slice{\Psi'_P}{n}{m}\|\leq |g| C|Q-Q'||\ln \tau_m|
\end{align*}
 holds.
\end{lemma}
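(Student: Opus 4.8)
The plan is to reduce the difference of the two Bogolyubov transformations, evaluated on the well‑controlled vector $\slice{\Psi'_P}{n}{m}$, to a bound on the generator of a single such transformation, and then to estimate that bound using the Fröhlich pull‑through formula \eqn{eqn:froehlich} instead of the naive creation‑operator estimate \eqn{eqn:standard ineq}. First, for $|Q|<1$ the transformation \eqn{eqn:W trafo} is the Weyl operator $W_m(Q)=e^{L_m(Q)}$ with skew‑adjoint generator $L_m(Q):=\int dk\,\alpha_m(Q,k)(b(k)-b^*(k))$; since $\tau_m>0$ and $|Q|$ is bounded away from $1$, $\alpha_m(Q,\cdot)$ is a real function in $L^2(\cal B_\kappa\setminus\cal B_{\tau_m})$. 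Reality of $\alpha_m$ gives $[b(k)-b^*(k),\,b(l)-b^*(l)]=0$, hence $[L_m(Q),L_m(Q')]=0$, so that $W_m(Q)=W_m(Q')W[\delta\alpha]$ with $\delta\alpha:=\alpha_m(Q,\cdot)-\alpha_m(Q',\cdot)$ and $W[\delta\alpha]:=e^M$, $M:=\int dk\,\delta\alpha(k)(b(k)-b^*(k))$. Using unitarity of $W_m(Q')$ and the Duhamel identity $e^M-1=\int_0^1 e^{sM}M\,ds$ (all $e^{sM}$ unitary, and $\slice{\Psi'_P}{n}{m}\in D(H_{P,0})\subset D(M)$) this yields the reduction
\begin{align*}
 \|[W_m(Q)-W_m(Q')]\slice{\Psi'_P}{n}{m}\|=\|[W[\delta\alpha]-1]\slice{\Psi'_P}{n}{m}\|\le\|M\slice{\Psi'_P}{n}{m}\| .
\end{align*}

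To bound $\|M\slice{\Psi'_P}{n}{m}\|$, write $a_-:=\int dk\,\delta\alpha(k)b(k)$ and $a_+:=a_-^*=\int dk\,\delta\alpha(k)b^*(k)$, so $M=a_--a_+$ and $[a_-,a_+]=\|\delta\alpha\|_{L^2}^2$. The canonical commutation relations then give
\begin{align*}
 \|a_+\slice{\Psi'_P}{n}{m}\|^2=\braket{\slice{\Psi'_P}{n}{m},\,a_-a_+\slice{\Psi'_P}{n}{m}}=\|a_-\slice{\Psi'_P}{n}{m}\|^2+\|\delta\alpha\|_{L^2}^2\|\slice{\Psi'_P}{n}{m}\|^2 ,
\end{align*}
whence $\|M\slice{\Psi'_P}{n}{m}\|\le 2\|a_-\slice{\Psi'_P}{n}{m}\|+\|\delta\alpha\|_{L^2}\|\slice{\Psi'_P}{n}{m}\|$. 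This is the crucial step, and the one where a naive treatment fails: bounding $a_+$ directly by \eqn{eqn:standard ineq} would introduce the non‑integrable weight $|k|^{-1}$ and a spurious factor $\tau_m^{-1/2}$, whereas here only $a_-$ and $\|\delta\alpha\|_{L^2}$ enter. Since $\delta\alpha$ is supported on $\{\tau_m<|k|<\kappa\}$, \eqn{eqn:froehlich} applies to $a_-\slice{\Psi'_P}{n}{m}$: $b(k)\slice{\Psi'_P}{n}{m}=g\rho(k)(\slice{E'_P}{n}{m}-|k|-\slice{H'_{P-k}}{n}{m})^{-1}\slice{\Psi'_P}{n}{m}$, and by \eqn{eq:Grad E} (see \cor{cor:ir ground state energies}(ii), with $\namer{c:cp}=\tfrac34$) one has, on $\slice{\cal F}{n}{m}$, $\slice{E'_P}{n}{m}-|k|-\slice{H'_{P-k}}{n}{m}\le-(1-\namer{c:cp})|k|=-\tfrac14|k|$, so the resolvent has norm $\le 4/|k|$; hence
\begin{align*}
 \|a_-\slice{\Psi'_P}{n}{m}\|\le 4|g|\,\|\slice{\Psi'_P}{n}{m}\|\int_{\cal B_\kappa\setminus\cal B_{\tau_m}}dk\;\frac{|\delta\alpha(k)|\,\rho(k)}{|k|} .
\end{align*}

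It remains to carry out elementary integrals over a spherical shell. From $\tfrac1{1-\widehat k\cdot Q}-\tfrac1{1-\widehat k\cdot Q'}=\tfrac{\widehat k\cdot(Q-Q')}{(1-\widehat k\cdot Q)(1-\widehat k\cdot Q')}$ and $\rho(k)^2=\tfrac{1}{2(2\pi)^3}|k|^{-1}$ one gets, on the shell, $|\delta\alpha(k)|\le C|g|\,|Q-Q'|\,|k|^{-3/2}(1-\widehat k\cdot Q)^{-1}(1-\widehat k\cdot Q')^{-1}$. In polar coordinates the radial integrals in both $\|\delta\alpha\|_{L^2}^2$ and the integral above equal $\int_{\tau_m}^\kappa|k|^{-1}d|k|=\ln(\kappa/\tau_m)$, while the angular integrals $\int_{S^2}d\widehat k\,(1-\widehat k\cdot Q)^{-j}(1-\widehat k\cdot Q')^{-j}$, $j=1,2$, are bounded because $|Q|,|Q'|$ are bounded away from $1$ (by Cauchy--Schwarz and $\int_{S^2}(1-\widehat k\cdot Q)^{-2j}d\widehat k\le C(1-|Q|^2)^{-(2j-1)}$); in the application $Q=\nabla\slice{E'_P}{n}{m}$, $Q'=\nabla\slice{E'_P}{n}{m-1}$, which are $\le\namer{c:cp}=\tfrac34$ by \cor{cor:ir ground state energies}(ii). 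Thus $\|\delta\alpha\|_{L^2}^2\le Cg^2|Q-Q'|^2\ln(\kappa/\tau_m)$ and $\|a_-\slice{\Psi'_P}{n}{m}\|\le Cg^2|Q-Q'|\ln(\kappa/\tau_m)\|\slice{\Psi'_P}{n}{m}\|$. Using $\|\slice{\Psi'_P}{n}{m}\|\le1$ by construction (\thm{thm:ir induction}), together with $\ln(\kappa/\tau_m)=m|\ln\gamma|\ge1$ and $\ln(\kappa/\tau_m)\le2|\ln\tau_m|$ under the constraints \eqn{eqn:constraint} on $\gamma$ (so also $\sqrt{\ln(\kappa/\tau_m)}\le\ln(\kappa/\tau_m)$), we conclude $\|M\slice{\Psi'_P}{n}{m}\|\le C|g|\,|Q-Q'|\,|\ln\tau_m|$, which with the reduction above proves the lemma. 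The only genuinely delicate point is recognising that the creation part must be handled by the commutator identity for $a_\pm$ combined with \eqn{eqn:froehlich}, not by \eqn{eqn:standard ineq}; everything else is a bounded shell integral.
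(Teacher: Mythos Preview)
Your proof is correct and follows essentially the same approach as the paper's: reduce $W_m(Q)-W_m(Q')$ to the generator $M=\int dk\,\delta\alpha(k)(b(k)-b^*(k))$ acting on $\slice{\Psi'_P}{n}{m}$ via Duhamel, then control the annihilation part with the Fr\"ohlich pull-through identity \eqn{eqn:froehlich} together with the bound \eqn{eq:Grad E}, and the creation part via the CCR identity $\|a_+\psi\|^2=\|a_-\psi\|^2+\|\delta\alpha\|_{L^2}^2\|\psi\|^2$. The paper's proof records exactly these ingredients (its item~1 is the pull-through, item~2 is your shell integral $\int|\delta\alpha(k)|\,|k|^{-3/2}dk$, item~3 is $\|\slice{\Psi'_P}{n}{m}\|\le1$) but leaves the reduction step and the treatment of $b^*$ implicit; you have simply spelled these out. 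Your remark that the lemma as stated requires $|Q|,|Q'|$ bounded away from~$1$ (rather than merely $\le1$) for the angular integrals to be uniformly bounded is a fair observation; in the paper's applications one always has $|Q|,|Q'|\le C_{\nabla E}=\tfrac34$.
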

\begin{proof}
 The Bogolyubov transformations $W_m$ defined in \eqn{eqn:W trafo} can be explicitly written as
\begin{equation*}
W_m(Q) = \exp\left(\int dk\, \alpha_m(Q,k)\big(b(k) - b^*(k)\big)\right)\,,
\end{equation*}
so that
\begin{equation}
\|[W_m(Q)-W_m(Q')]\;\slice{\Psi'_P}{n}{m}\|
\leq \left\|\int dk\;[\alpha_m(Q,k)-\alpha_m(Q',k)](b(k)-b^*(k))\;\slice{\Psi'_P}{n}{m}\right\|\label{eqn:ir trafo 1}
\end{equation}
In order to estimate this term we employ:
\begin{enumerate}
 \item The identity \eqn{eqn:froehlich} in  \cite[Equation (1.26)]{fraehlich_infrared_1973}
that relies on the bound $\slice{E'_{P-k}}{n}{m}-\slice{E'_{P}}{n}{m}\geq -\namer{c:cp}|k|$, $|P|\leq P_{\mathrm{max}}$, from \cor{cor:ground state energies}(iii).
\item By definition of $\alpha_m$ it holds
\begin{align*}
 \int dk \Big|\alpha_m(Q,k)-\alpha_m(Q',k)\Big|\frac{1}{|k|^{3/2}}&\leq |g|C|Q-Q'|\;|\ln\kappa-\ln\tau_m|.
\end{align*}
 \item $\|\slice{\Psi'_P}{n}{m}\|\leq 1$
\end{enumerate}
With these estimates, the claim is proven.
\end{proof}

\begin{lemma}\label{lem:grad E diff}
Let $|g|$ be sufficiently small. For $n,m\in\bb N$ the following estimate holds:
\begin{multline*}
|\nabla \slice{E'_P}{n}{m}-\nabla \slice{E'_P}{n}{m-1}|\\
\leq g^2C\tau_{m-1}^{1/2}+C\sup_{z\in\widehat\Delta_m}\left\|\left(\frac{1}{\slice{H^{W'}_P}{n}{m-1}-z}\right)^{1/2}\slice{\Delta \widehat H^{W'}_n}{m-1}{m}\left(\frac{1}{\slice{H^{W'}_P}{n}{m-1}-z}\right)^{1/2}\slice{\phi_P}{n}{m-1}\right\|
+C\frac{\|\slice{\phi_P}{n}{m-1}-\slice{\widetilde\phi_P}{n}{m}\|}{\|\slice{\phi_P}{n}{m-1}\|^2\|\slice{\widetilde\phi_P}{n}{m}\|^2}.
\end{multline*}
\end{lemma}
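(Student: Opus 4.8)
The plan is to reduce the bound to a thin-shell computation plus the resolvent sandwich that already appears on the right-hand side. The starting point is to combine the differential formula \eqn{eqn:grad e formuar} with the identity \eqn{eqn:gamma diff}: taking the expectation of \eqn{eqn:gamma diff} in the state $\slice{\widetilde\phi_P}{n}{m}$ and using $\braket{\slice{\widetilde\Gamma_P}{n}{m}}_{\slice{\widetilde\phi_P}{n}{m}}=0$ together with $\braket{\slice{\Gamma_P}{n}{m-1}}_{\slice{\phi_P}{n}{m-1}}=0$ (see \eqn{eqn:def tilde Gamma}, \eqn{eqn:def Gamma}) yields
\begin{align*}
 \nabla\slice{E'_P}{n}{m}-\nabla\slice{E'_P}{n}{m-1}=-\braket{\slice{\Gamma_P}{n}{m-1}}_{\slice{\widetilde\phi_P}{n}{m}}-\braket{\widetilde A_{P,m}^{(n)}-A_{P,m-1}^{(n)}}_{\slice{\widetilde\phi_P}{n}{m}}-\bigl(\widetilde C^{(k,n)}_{P,m}-C^{(k,n)}_{P,m-1}\bigr),
\end{align*}
so it suffices to estimate these three terms. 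Throughout, write $\delta:=\slice{\widetilde\phi_P}{n}{m}-\slice{\phi_P}{n}{m-1}$.

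The two ``soft'' terms are elementary. Since $\alpha_m$ and $\alpha_{m-1}$ evaluated at the common momentum $\nabla\slice{E'_P}{n}{m-1}$ agree outside the shell $\cal B_{\tau_{m-1}}\setminus\cal B_{\tau_m}$, and $|\alpha_m(Q,k)|\leq C|g|\,|k|^{-3/2}$, a direct integration over that shell gives $|\widetilde C^{(k,n)}_{P,m}-C^{(k,n)}_{P,m-1}|\leq Cg^2\tau_{m-1}\leq Cg^2\tau_{m-1}^{1/2}$. For the field term, $\widetilde A_{P,m}^{(n)}-A_{P,m-1}^{(n)}=b(f)+b^*(f)$ with $f$ supported on the same shell; since $\slice{\phi_P}{n}{m-1}$ carries no bosons there, the purely $\slice{\phi_P}{n}{m-1}$ contribution in $\langle\slice{\phi_P}{n}{m-1}+\delta,(b(f)+b^*(f))(\slice{\phi_P}{n}{m-1}+\delta)\rangle$ vanishes, and the cross and diagonal terms are controlled by the standard inequalities \eqn{eqn:standard ineq}, the a priori bounds of \lem{lem:ir a priori} and \cor{cor:ir ground state energies} on the field content of $\slice{\widetilde\phi_P}{n}{m}$, and $\|\delta\|\leq 2$; this produces a bound of the form $Cg^2\tau_{m-1}^{1/2}+C\|\delta\|$, which fits into the first and last terms of the claim because $\|\delta\|\leq\|\delta\|\,\|\slice{\phi_P}{n}{m-1}\|^{-2}\|\slice{\widetilde\phi_P}{n}{m}\|^{-2}$ (both norms being at most one).

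The main term is $\braket{\slice{\Gamma_P}{n}{m-1}}_{\slice{\widetilde\phi_P}{n}{m}}=\|\slice{\widetilde\phi_P}{n}{m}\|^{-2}\langle\slice{\widetilde\phi_P}{n}{m},\slice{\Gamma_P}{n}{m-1}\slice{\widetilde\phi_P}{n}{m}\rangle$. Using $\braket{\slice{\Gamma_P}{n}{m-1}}_{\slice{\phi_P}{n}{m-1}}=0$ and symmetry of $\slice{\Gamma_P}{n}{m-1}$, the scalar product equals $2\Re\langle\slice{\Gamma_P}{n}{m-1}\slice{\phi_P}{n}{m-1},\delta\rangle+\langle\delta,\slice{\Gamma_P}{n}{m-1}\delta\rangle$. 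Now $\slice{\phi_P}{n}{m-1}$ is the ground state of $\slice{H^{W'}_P}{n}{m-1}$ with eigenvalue $\slice{E'_P}{n}{m-1}$, so $\delta$ is given by the convergent Neumann expansion of $(\slice{\widetilde H^{W'}_P}{n}{m}-z)^{-1}$ around $\slice{H^{W'}_P}{n}{m-1}$ on the contour $\widehat\Delta_m$, exactly as in \eqn{eqn:tilde diff}--\eqn{eqn:sandwich}, with each rightmost resolvent hitting $\slice{\phi_P}{n}{m-1}$ replaced by the scalar $(\slice{E'_P}{n}{m-1}-z)^{-1}$. Commuting one half-resolvent through $\slice{\phi_P}{n}{m-1}$ contributes a scalar $(\slice{E'_P}{n}{m-1}-z)^{1/2}$ of size $\bigoh(\tau_m^{1/2})$ on $\widehat\Delta_m$ (by \eqn{eqn:energy est}), leaving the sandwich operator $(\frac{1}{\slice{H^{W'}_P}{n}{m-1}-z})^{1/2}\slice{\Delta \widehat H^{W'}_n}{m-1}{m}(\frac{1}{\slice{H^{W'}_P}{n}{m-1}-z})^{1/2}$ acting on $\slice{\phi_P}{n}{m-1}$, paired with $(\frac{1}{\slice{H^{W'}_P}{n}{m-1}-z})^{1/2}\slice{\Gamma^{(i)}_P}{n}{m-1}\slice{\phi_P}{n}{m-1}$. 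Since $\slice{\Gamma^{(i)}_P}{n}{m-1}\slice{\phi_P}{n}{m-1}$ is orthogonal to $\slice{\phi_P}{n}{m-1}$, \lem{lem:ir gap} puts its $\slice{H^{W'}_P}{n}{m-1}$-spectral support in $[\slice{E'_P}{n}{m-1}+\zeta\tau_{m-1},\infty)$, so on $\widehat\Delta_m$ this last factor is $\leq C\,|\langle\slice{\Gamma^{(i)}_P}{n}{m-1}\slice{\phi_P}{n}{m-1},(\slice{H^{W'}_P}{n}{m-1}-z)^{-1}\slice{\Gamma^{(i)}_P}{n}{m-1}\slice{\phi_P}{n}{m-1}\rangle|^{1/2}$, which is bounded uniformly in $n$ by the bound (iii) of \thm{thm:key theorem}, available at step $m-1$ in the induction (up to a power of $|g|^{-1}$). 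Counting powers of $\gamma$ then shows the leading Neumann term is $\leq C\sup_{z\in\widehat\Delta_m}\|(\frac{1}{\slice{H^{W'}_P}{n}{m-1}-z})^{1/2}\slice{\Delta \widehat H^{W'}_n}{m-1}{m}(\frac{1}{\slice{H^{W'}_P}{n}{m-1}-z})^{1/2}\slice{\phi_P}{n}{m-1}\|$; the higher Neumann terms and $\langle\delta,\slice{\Gamma_P}{n}{m-1}\delta\rangle$ are of lower order and, after dividing by $\|\slice{\widetilde\phi_P}{n}{m}\|^2$, are subsumed into the last term $C\|\slice{\phi_P}{n}{m-1}-\slice{\widetilde\phi_P}{n}{m}\|\,\|\slice{\phi_P}{n}{m-1}\|^{-2}\|\slice{\widetilde\phi_P}{n}{m}\|^{-2}$.

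The main obstacle is precisely this main term: $\slice{\Gamma_P}{n}{m-1}$ is unbounded and $\|\slice{\Gamma_P}{n}{m-1}\slice{\phi_P}{n}{m-1}\|$ in fact diverges logarithmically as $n\to\infty$, so one must never estimate $\braket{\slice{\Gamma_P}{n}{m-1}}_{\slice{\widetilde\phi_P}{n}{m}}$ crudely; every occurrence of $\slice{\Gamma_P}{n}{m-1}\slice{\phi_P}{n}{m-1}$ has to be kept squeezed between resolvents so that it enters only through the $n$-uniform regularized quantity $\langle\slice{\Gamma^{(i)}_P}{n}{m-1}\slice{\phi_P}{n}{m-1},(\slice{H^{W'}_P}{n}{m-1}-z)^{-1}\slice{\Gamma^{(i)}_P}{n}{m-1}\slice{\phi_P}{n}{m-1}\rangle$ and, ultimately, through the resolvent sandwich on the right-hand side. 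Arranging the algebra so that this squeezing always works — in particular making the half-resolvent cancellations against the eigenvector $\slice{\phi_P}{n}{m-1}$ land correctly and never producing a bare $\slice{\Gamma_P}{n}{m-1}\slice{\phi_P}{n}{m-1}$ factor — is the real content of the lemma.
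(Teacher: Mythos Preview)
Your starting identity is correct and equivalent to the paper's, but your treatment of the main term $\braket{\slice{\Gamma_P}{n}{m-1}}_{\slice{\widetilde\phi_P}{n}{m}}$ has two concrete problems.

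First, the quadratic term $\langle\delta,\slice{\Gamma_P}{n}{m-1}\delta\rangle$ is not ``subsumed into $C\|\delta\|$''. You yourself stress that $\slice{\Gamma_P}{n}{m-1}$ is unbounded and $\|\slice{\Gamma_P}{n}{m-1}\slice{\phi_P}{n}{m-1}\|$ diverges as $n\to\infty$; the same divergence affects $\slice{\Gamma_P}{n}{m-1}\slice{\widetilde\phi_P}{n}{m}$, so neither $\|\slice{\Gamma_P}{n}{m-1}\delta\|$ nor $\langle\delta,\slice{\Gamma_P}{n}{m-1}\delta\rangle$ is controlled by $\|\delta\|$ alone. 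Any honest bound must route $\slice{\Gamma_P}{n}{m-1}$ through resolvents again, and then the result lands in the sandwich term, not in the $\|\delta\|$ term.

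Second, your bound on the cross term $\langle\slice{\Gamma_P}{n}{m-1}\slice{\phi_P}{n}{m-1},\delta\rangle$ invokes \thm{thm:key theorem}(iii) at step $m-1$. But \lem{lem:grad E diff} is used \emph{inside} the induction step of \thm{thm:key theorem}; the lemma is stated and proved as a standalone estimate, with no such hypothesis. Your argument therefore does not prove the lemma as stated, and at best would have to be folded into the induction with an extra assumption.

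The paper sidesteps both issues by a small but decisive change of decomposition. Instead of expanding $\langle\slice{\widetilde\phi_P}{n}{m},\slice{\Gamma_P}{n}{m-1}\slice{\widetilde\phi_P}{n}{m}\rangle$ around $\slice{\phi_P}{n}{m-1}$ (which produces the quadratic $\langle\delta,\Gamma\delta\rangle$), it writes the gradient difference as $\braket{\slice{\Pi_P}{n}{m-1}}_{\slice{\phi_P}{n}{m-1}}-\braket{\slice{\widetilde\Pi_P}{n}{m}}_{\slice{\widetilde\phi_P}{n}{m}}+(\widetilde C^{(k,n)}_{P,m}-C^{(k,n)}_{P,m-1})$ and then telescopes the numerator $\langle\slice{\phi_P}{n}{m-1},\slice{\Pi_P}{n}{m-1}\slice{\phi_P}{n}{m-1}\rangle-\langle\slice{\widetilde\phi_P}{n}{m},\slice{\widetilde\Pi_P}{n}{m}\slice{\widetilde\phi_P}{n}{m}\rangle$ into three terms, each containing $\delta$ exactly once and one full eigenvector. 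The eigenvector absorbs a resolvent into a scalar, and the remaining factor is bounded by the \emph{operator norm} estimate
\[
\left\|\left[\left(\frac{1}{\slice{H^{W'}_P}{n}{m-1}-z}\right)^{1/2}\right]^*\slice{\Pi_P}{n}{m-1}\left(\frac{1}{\slice{H^{W'}_P}{n}{m-1}-z}\right)^{1/2}\right\|_{\slice{\cal F}{n}{m}}\leq C\tau_m^{-1},
\]
which follows directly from \lem{lem:a priori} and the standard inequalities \eqn{eqn:standard ineq} (cf.\ the argument for \eqn{gamma-term}--\eqn{final-eq-a}), with no appeal to \thm{thm:key theorem}. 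The $\tau_m^{-1}$ cancels against the $\tau_m$ from the contour length, yielding exactly $C\sup_{z\in\widehat\Delta_m}\|\cdot\|$ for the sandwich. The difference-of-norms piece gives the last term of the claim via \lem{lemma:pi terms}. This telescope is the missing idea: it keeps every $\Pi$ (equivalently $\Gamma$) between resolvents without ever generating a $\langle\delta,\Gamma\delta\rangle$, and it makes the lemma independent of the induction hypothesis.
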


\begin{proof}
 Let $n,m\in\bb N$. Using \lem{lem:grad E} we have
\begin{align*}
 \nabla \slice{E'_P}{n}{m}-\nabla \slice{E'_P}{n}{m-1}=
 \braket{P^f+\slice{B}{n}{0}+\slice{B^*}{n}{0}}_{\slice{\Psi'_{P}}{n}{m-1}}-
 \braket{P^f+\slice{B}{n}{0}+\slice{B^*}{n}{0}}_{\slice{\Psi'_{P}}{n}{m}}
\end{align*}
which by unitarity of the transformation $W_{m-1}(\nabla\slice{E'_P}{n}{m-1})$ and $W_m(\nabla\slice{E'_P}{n}{m-1})$ can be rewritten as
\begin{align*}
 \nabla \slice{E'_P}{n}{m}-\nabla \slice{E'_P}{n}{m-1}=
 \braket{\slice{\Pi_P}{n}{m-1}}_{\slice{\phi_P}{n}{m-1}}-
 \braket{\slice{\widetilde\Pi_{P}}{n}{m}}_{\slice{\widetilde\phi_P}{n}{m}}+\widetilde C^{(k,n)}_{P,m}-C^{(k,n)}_{P,m-1}.
\end{align*}
We have already noted that $|\widetilde C^{(k,n)}_{P,m}-C^{(k,n)}_{P,m-1}|\leq g^2C\tau_{m-1}$. Moreover, we observe
\begin{align}
 &\left|\braket{\slice{\Pi_P}{n}{m-1}}_{\slice{\phi_P}{n}{m-1}}-
 \braket{{\widetilde\Pi}_{P,m}}_{\slice{\widetilde\phi_P}{n}{m}}\right|=
 \left|\frac{\braket{\slice{\phi_P}{n}{m-1},\slice{\Pi_P}{n}{m-1}\slice{\phi_P}{n}{m-1}}}{\|\slice{\phi_P}{n}{m-1}\|^2}-
 \frac{\braket{\slice{\widetilde\phi_P}{n}{m},\slice{\widetilde\Pi_{P}}{n}{m}\slice{\widetilde\phi_P}{n}{m}}}{\|\slice{\widetilde\phi_P}{n}{m}\|^2}\right|\nonumber\\
 &\leq \|\slice{\phi_P}{n}{m-1}\|^{-2} \left|\braket{\slice{\phi_P}{n}{m-1},\slice{\Pi_P}{n}{m-1}\slice{\phi_P}{n}{m-1}}-\braket{\slice{\widetilde\phi_P}{n}{m},\slice{\widetilde\Pi_{P}}{n}{m}\slice{\widetilde\phi_P}{n}{m}}
 \right|+\label{eqn:pi term 1}\\
 &\quad + \left|\braket{\slice{\widetilde\phi_P}{n}{m},\slice{\widetilde\Pi_{P}}{n}{m}\slice{\widetilde\phi_P}{n}{m}}\right| \left|\|\slice{\phi_P}{n}{m-1}\|^{-2}-\|\slice{\widetilde\phi_P}{n}{m}\|^{-2}\right|\label{eqn:pi term 2}.
\end{align}

We know that the norms $\|\slice{\phi_P}{n}{m-1}\|$ and $\|\slice{\widetilde\phi_P}{n}{m}\|$ are by construction smaller than one and non-zero. Using \lem{lemma:pi terms} we find
\begin{align*}
 \eqn{eqn:pi term 2}\leq C\frac{\|\slice{\phi_P}{n}{m-1}-\slice{\widetilde\phi_P}{n}{m}\|}{\|\slice{\phi_P}{n}{m-1}\|^2\|\slice{\widetilde\phi_P}{n}{m}\|^2}.
\end{align*}
In order to bound the term \eqn{eqn:pi term 1} we use
\begin{align}
\|\slice{\phi_P}{n}{m-1}\|^{2}\eqn{eqn:pi term 1} &=\Bigg|\braket{\left(\slice{\phi_P}{n}{m-1}-\slice{\widetilde\phi_P}{n}{m}\right),\slice{\Pi_P}{n}{m-1}\slice{\phi_P}{n}{m-1}}+\label{eqn:tele term 1}\\
 &\quad+\braket{\slice{\widetilde\phi_P}{n}{m},\left[\slice{\Pi_P}{n}{m-1}-\slice{\widetilde\Pi_{P}}{n}{m}\right]\slice{\phi_P}{n}{m-1}}+\label{eqn:tele term 2}\\
 &\quad+\braket{\slice{\widetilde\phi_P}{n}{m},\slice{\widetilde\Pi_{P}}{n}{m}\left(\slice{\phi_P}{n}{m-1}-\slice{\widetilde\phi_P}{n}{m}\right)}\Bigg|\label{eqn:tele term 3}.
\end{align}
The term \eqn{eqn:tele term 2} is bounded by
\begin{align*}
 |\eqn{eqn:tele term 2}|\leq\left|\braket{\slice{\widetilde\phi_P}{n}{m},\left[\widetilde A_{P,m}^{(n)}-A_{P,m-1}^{(n)}\right]\slice{\phi_P}{n}{m-1}}\right|\leq |g|C\tau_{m-1}^{1/2}
\end{align*}
because by the standard inequalities \eqn{eqn:standard ineq}
\begin{multline*}
\left\|\int dk\; k[\alpha_m(\nabla \slice{E'_P}{n}{m-1},k)-\alpha_{m-1}(\nabla \slice{E'_P}{n}{m-1},k)]b(k)\slice{\phi_P}{n}{m-1}\right\|\\
\leq C\left(\int dk\; \left|\frac{k[\alpha_m(\nabla \slice{E'_P}{n}{m-1},k)-\alpha_{m-1}(\nabla \slice{E'_P}{n}{m-1},k)}{|k|^{1/2}}\right|^2\right)^{1/2}\left\|(\slice{H^f}{m-1}{m})\left(\frac{1}{\slice{H^{W'}_P}{n}{m-1}-i}\right)^{1/2}\slice{\phi_P}{n}{m-1}\right\|
\leq |g|C\tau_{m-1}^{1/2}.
\end{multline*}

Terms \eqn{eqn:tele term 1} and \eqn{eqn:tele term 3} can be treated in the same way, and we only demonstrate the bound on the former. Using analytic perturbation theory we get
\begin{align}
 &\left|\braket{\left(\slice{\phi_P}{n}{m-1}-\slice{\widetilde\phi_P}{n}{m}\right),\slice{\Pi_P}{n}{m-1}\slice{\phi_P}{n}{m-1}}\right|\label{eqn:ir phi diff 1}\\
 &\qquad\leq C\tau_m \sup_{z\in\widehat\Delta_m}\sum_{j=1}^\infty \Bigg|\Bigg\langle\left[\left(\frac{1}{\slice{H^{W'}_P}{n}{m-1}-z}\right)^{1/2}\slice{\Delta \widehat H^{W'}_n}{m-1}{m}\left(\frac{1}{\slice{H^{W'}_P}{n}{m-1}-z}\right)^{1/2}\right]^j\slice{\phi_P}{n}{m-1},\nonumber\\
 &\qquad\hskip3cm,\left[\left(\frac{1}{\slice{H^{W'}_P}{n}{m-1}-z}\right)^{1/2}\right]^*\slice{\Pi_P}{n}{m-1}\left(\frac{1}{\slice{H^{W'}_P}{n}{m-1}-z}\right)^{1/2}\slice{\phi_P}{n}{m-1}\Bigg\rangle\Bigg|\nonumber\\
 &\qquad\leq C\tau_m \sup_{z\in\widehat\Delta_m}\left\|\left(\frac{1}{\slice{H^{W'}_P}{n}{m-1}-z}\right)^{1/2}\slice{\Delta \widehat H^{W'}_n}{m-1}{m}\left(\frac{1}{\slice{H^{W'}_P}{n}{m-1}-z}\right)^{1/2}\slice{\phi_P}{n}{m-1}\right\|\times\nonumber\\
 &\qquad\phantom{\leq C\tau_m \sup_{z\in\widehat\Delta_m}}\times\left\|\left[\left(\frac{1}{\slice{H^{W'}_P}{n}{m-1}-z}\right)^{1/2}\right]^*\slice{\Pi_P}{n}{m-1}\left(\frac{1}{\slice{H^{W'}_P}{n}{m-1}-z}\right)^{1/2}\right\|_{\slice{\cal F}{n}{m}}\label{eqn:ir phi diff 3}.
\end{align}
The term in  \eqn{eqn:ir phi diff 3} can be controlled  similarly to (\ref{gamma-term}) in the ultraviolet regime
so that we finally have
\begin{equation}
\left\|\left[\left(\frac{1}{\slice{H^{W'}_P}{n}{m-1}-z}\right)^{1/2}\right]^*\slice{\Pi_P}{n}{m-1}\left(\frac{1}{\slice{H^{W'}_P}{n}{m-1}-z}\right)^{1/2}\right\|_{\slice{\cal F}{n}{m}} \leq C\tau_m^{-1}.
\end{equation}
Combining these results, we obtain the estimate
\begin{align*}
 \left|\braket{\left(\slice{\phi_P}{n}{m-1}-\slice{\widetilde\phi_P}{n}{m}\right),\slice{\Pi_P}{n}{m-1}\slice{\phi_P}{n}{m-1}}\right| \leq C\sup_{z\in\widehat\Delta_m}\left\|\left(\frac{1}{\slice{H^{W'}_P}{n}{m-1}-z}\right)^{1/2}\slice{\Delta \widehat H^{W'}_n}{m-1}{m}\left(\frac{1}{\slice{H^{W'}_P}{n}{m-1}-z}\right)^{1/2}\slice{\phi_P}{n}{m-1}\right\|,
\end{align*}
which concludes the proof.
\end{proof}

\section{Ground States of the Transformed Hamiltonians $\slice{H^{W'}_P}{\infty}{\infty}$}\label{sec:Winfinity}
 In this section, we finally remove both the UV and the IR cut-off ($\sigma_n$ and $\tau_m$, respectively). In our study of the removal of the IR cut-off in \sct{sec:W ground states} we have proven that
\begin{equation*}
\|\slice{\phi_{P}}{n}{m}-\slice{\phi_{P}}{n}{m-1}\|\leq (m+1)\gamma^{\frac{m}{4}}
\end{equation*}
holds for any $n\in\bb N$. We shall now provide the analogous bound
\begin{equation}\label{eqn:phi_diff}
 \|\slice{\phi_P}{n}{m}-\slice{\phi_P}{n-1}{m}\|\leq C m K^{3m+1}|\ln\gamma|^{m+1}\left(\frac{n}{\beta^n\gamma^m}\right)^{1/2}
\end{equation}
as the UV cut-off is shifted from $\sigma_{n-1}$ to $\sigma_n$. The constant $K\geq 1$ will be introduced in \thm{thm:IRUV induction}. The latter bound, derived in \cor{cor:state conv}, holds for any IR cut-off $\tau_m$ and uses a particular scaling $\mathbb{N}\ni n:=n(m) > \alpha m$ for
\begin{equation}
\label{eq:alpha}
\alpha := \frac{-\ln|\gamma|}{\ln\beta}\geq 1.
\end{equation}
These two estimates will enable us to prove the second main result \thm{thm:ir} at the end of this section.

\begin{remark}
In this section  we implicitly assume  the constraints  $|P|<P_{\mathrm{max}}$ and $1<\kappa<2$. Furthermore, $g, \beta$, and  $\gamma$ are such that all  the results of Sections \ref{sec:main proof}, \ref{sec:ground states with ir cutoff}
, and \ref{sec:W ground states} hold true. 
 \end{remark}

 In order to control the norm difference $\|\slice{\phi_{P}}{n}{m}-\slice{\phi_{P}}{n-1}{m}\|$ we notice  that for $m\geq 1$ the vectors $\slice{\phi_P}{n}{m}$ can be rewritten in the following way
\begin{align*}
 \slice{\phi_P}{n}{m} = W_m(\nabla\slice{E'_P}{n}{m}) \; \slice{\cal Q'_P}{n}{m} \slice{W}{m-1}{m}(\nabla\slice{E'_P}{n}{m-1})^* \cdots
  \slice{\cal Q'_P}{n}{2} \slice{W}{1}{2}(\nabla\slice{E'_P}{n}{1})^* \slice{\cal Q'_P}{n}{1}\;\slice{W}{0}{1}(\nabla\slice{E'_P}{n}{0})^* \frac{\slice{\Psi'_P}{n}{0}}{\Vert \slice{\Psi'_P}{n}{0} \Vert},
\end{align*}
where $\slice{\cal Q'_{P}}{n}{m}$ is defined in \eqn{eqn:projection} and
\begin{equation*}
\slice{W}{m'}{m}(Q)^* := W_m(Q)^*W_{m'}(Q),\qquad \slice{W}{0}{1}(Q)^{*}=W_{1}(Q).
\end{equation*}
The following definition will be convenient.

\begin{definition}\label{def:etas}
For $n\in\bb N$ and $m\geq1$, we define
\begin{align}\label{eqn:eta_m}
 \slice{\eta_P}{n}{m}:=W_m(\nabla\slice{E'_P}{n}{m})^*\slice{\phi_P}{n}{m},
\end{align}
and $\slice{\eta_P}{n}{0}: = \slice{\phi_P}{n}{0} = \slice{\Psi_P}{n}{0} / \Vert   \slice{\Psi_P}{n}{0} \Vert\,$ in the case $m=0$.
\end{definition}
Note that by construction we have the identity
\begin{equation}\label{eqn:eta_mp1}
 \slice{\eta_P}{n}{m+1}=\slice{\cal Q'_P}{n}{m+1} \slice{W}{m}{m+1}(\nabla\slice{E'_P}{n}{m})^* \slice{\eta_P}{n}{m},
\end{equation}
and moreover, since the transformation $W_m$ is unitary and due to \thm{thm:key theorem}, the bounds
\begin{equation}\label{eqn:norm bounds}
 1\geq \|\slice{\phi_P}{n}{m}\|=\|\slice{\eta_P}{n}{m}\|\geq\frac{1}{2}
\end{equation} 
hold true for all $m,n\in\bb N$. First, we prove two a priori lemmas that can be combined to yield \thm{thm:IRUV induction}.
 
\begin{lemma} \label{lem: Delta eta}
For any $m \in \mathbb{N}$, let $\mathbb{N}\ni n > \alpha m\geq 1$. There exists a constant $K_1$ such  that for $|g|$ sufficiently small  the following estimates hold true:
\begin{equation} \label{eq: Delta eta}
\Vert \slice{\eta_P}{n}{m+1} -\slice{\eta_P}{n-1}{m+1} \Vert
\leq \Vert \slice{\eta_P}{n}{m} -\slice{\eta_P}{n-1}{m} \Vert + K_1\left[\left(\frac{n}{\beta^n \gamma^{m+1}}\right)^{1/2} + \left \vert \nabla\slice{E'_P}{n}{m} - \nabla\slice{E'_P}{n-1}{m} \right\vert \right].
\end{equation}
\end{lemma}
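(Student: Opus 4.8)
The plan is to telescope the recursion \eqn{eqn:eta_mp1}, estimate the three resulting contributions, and recognise two of them as the two terms on the right of \eqn{eq: Delta eta} and the third as the ``previous'' difference. Abbreviating $\cal Q_n:=\slice{\cal Q'_P}{n}{m+1}$, $U_n:=\slice{W}{m}{m+1}(\nabla\slice{E'_P}{n}{m})^{*}$ and $\eta_n:=\slice{\eta_P}{n}{m}$, so that $\slice{\eta_P}{n}{m+1}=\cal Q_nU_n\eta_n$ by \eqn{eqn:eta_mp1}, I would write
\[
 \slice{\eta_P}{n}{m+1}-\slice{\eta_P}{n-1}{m+1}=\cal Q_nU_n(\eta_n-\eta_{n-1})+\cal Q_n(U_n-U_{n-1})\eta_{n-1}+(\cal Q_n-\cal Q_{n-1})U_{n-1}\eta_{n-1}.
\]
By \thm{thm:ir induction}, $\slice{H'_P}{n}{m+1}\restrict{\slice{\cal F}{n}{m+1}}$ is self-adjoint with $\slice{E'_P}{n}{m+1}$ a non-degenerate isolated eigenvalue, so $\cal Q_n$ is the \emph{orthogonal} projection onto its ground state and $\|\cal Q_n\|=1$, and for finite $m$ the transformation $U_n$ is unitary; hence the first term is \emph{exactly} $\|\slice{\eta_P}{n}{m}-\slice{\eta_P}{n-1}{m}\|$. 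It remains to bound the transformation term $\|(U_n-U_{n-1})\eta_{n-1}\|$ and the projection term $\|(\cal Q_n-\cal Q_{n-1})U_{n-1}\eta_{n-1}\|$.

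For the transformation term I would use that, by \eqn{eqn:W trafo}, $W_m(Q)=\exp(\int dk\,\alpha_m(Q,k)(b(k)-b^{*}(k)))$ and that all these generators commute because $[b(k)-b^{*}(k),b(l)-b^{*}(l)]=0$ by the CCR. Thus $U_n=W_{m+1}(\nabla\slice{E'_P}{n}{m})^{*}W_m(\nabla\slice{E'_P}{n}{m})$ is again a single Bogolyubov transformation, with kernel $\alpha_m(\nabla\slice{E'_P}{n}{m},\cdot)-\alpha_{m+1}(\nabla\slice{E'_P}{n}{m},\cdot)$ supported on $\cal B_{\tau_m}\setminus\cal B_{\tau_{m+1}}$, and $U_n-U_{n-1}$ factors through a unitary times $e^{L}-1$ with $L$ skew-adjoint of the same form. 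Using this factorization, unitarity, and that $\eta_{n-1}$ is the vacuum on $\cal B_{\tau_m}\setminus\cal B_{\tau_{m+1}}$, the standard inequalities \eqn{eqn:standard ineq} reduce $\|(U_n-U_{n-1})\eta_{n-1}\|$ to the $L^{2}$-norm over that shell of $\alpha_{m+1}(\nabla\slice{E'_P}{n}{m},\cdot)-\alpha_{m+1}(\nabla\slice{E'_P}{n-1}{m},\cdot)$. Since $|\nabla\slice{E'_P}{n}{m}|,|\nabla\slice{E'_P}{n-1}{m}|\le\namer{c:cp}<1$ by \lem{lem:grad E}, one has $|1-\widehat k\cdot\nabla\slice{E'_P}{n}{m}|\ge\tfrac14$ on the shell, hence $|\alpha_{m+1}(Q,k)-\alpha_{m+1}(Q',k)|\le C|g|\,|k|^{-3/2}|Q-Q'|$; together with $\int_{\cal B_{\tau_m}\setminus\cal B_{\tau_{m+1}}}|k|^{-3}\,dk=4\pi|\ln\gamma|$ this gives $\|(U_n-U_{n-1})\eta_{n-1}\|\le C|g|\,|\ln\gamma|^{1/2}\,|\nabla\slice{E'_P}{n}{m}-\nabla\slice{E'_P}{n-1}{m}|$. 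As $|g|\,|\ln\gamma|^{1/2}$ is bounded by \eqn{eqn:constraint} and $\|\cal Q_n\|=1$, the transformation term is $\le K_1|\nabla\slice{E'_P}{n}{m}-\nabla\slice{E'_P}{n-1}{m}|$.

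For the projection term, $\|U_{n-1}\eta_{n-1}\|=\|\eta_{n-1}\|\le1$, so it suffices to bound $\|\cal Q_n-\cal Q_{n-1}\|_{\slice{\cal F}{n}{m+1}}$. I would represent each projection by a contour integral over the corresponding $\Delta_{m+1}$ of \dfn{def:IR contour} and merge onto a common contour $\Delta$ around $\slice{E'_P}{n}{m+1}$ of radius of order $\zeta\tau_{m+1}$; this is legitimate because the energy increments $|\slice{E'_P}{n}{m+1}-\slice{E'_P}{n-1}{m+1}|$ and $|\slice{E'_P}{n}{m+1}-\slice{E'_P}{n}{m}|$ are small compared with $\zeta\tau_{m+1}$ in the regime $n>\alpha m$, i.e. $\beta^{-n}<\gamma^m$ by \eqn{eq:alpha}, by the ultraviolet energy-shift bound of \thm{thm:uv induction}(v) (which is of the same type for any infrared cut-off) and \lem{lem:grad E}. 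The second resolvent identity and $\slice{H'_P}{n}{m+1}-\slice{H'_P}{n-1}{m+1}=\slice{\Delta H'_P}{n}{n-1}$ (acting only on the ultraviolet shell $[\sigma_{n-1},\sigma_n)$) then give
\[
 \cal Q_n-\cal Q_{n-1}=\frac{1}{2\pi i}\oint_{\Delta}\frac{1}{\slice{H'_P}{n}{m+1}-z}\;\slice{\Delta H'_P}{n}{n-1}\;\frac{1}{\slice{H'_P}{n-1}{m+1}-z}\,dz.
\]
Expanding $\slice{\Delta H'_P}{n}{n-1}$ via \eqn{eqn:delta H} and distributing the resolvents, every summand carries one ultraviolet-shell factor $\slice{B}{n}{n-1}$ or $\slice{B^*}{n}{n-1}$, whose $(\slice{H^f}{n}{n-1})^{1/2}$ is held bounded by $\theta^{-1/2}$ with the commutation/spectral trick of \eqn{eqn-teo-spectr}, whereas the old-shell factors $\slice{B}{n-1}{0},\slice{B^*}{n-1}{0},(P-P^f)$ produce an $H_{P,0}^{1/2}$ attached to a resolvent, bounded by \eqn{eqn:uv standard ineq} and the a priori estimate \eqn{eqn:a priori-2}; this yields the ultraviolet smallness $|g|((\beta-1)n/\beta^n)^{1/2}$ for the sandwiched interaction, while each surviving bare resolvent factor is $\le C\tau_{m+1}^{-1/2}$ by the infrared gap $\gap{\slice{H'_P}{n}{m+1}\restrict{\slice{\cal F}{n}{m+1}}}\ge\zeta\tau_{m+1}$ of \thm{thm:ir induction}. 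Counting powers---three inverse half-powers of $\tau_{m+1}$ from the resolvents against one power of $\tau_{m+1}$ from the length of $\Delta$---leaves $\|\cal Q_n-\cal Q_{n-1}\|_{\slice{\cal F}{n}{m+1}}\le C|g|((\beta-1)n/\beta^n)^{1/2}\tau_{m+1}^{-1/2}\le K_1(n/(\beta^n\gamma^{m+1}))^{1/2}$.

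Adding the three bounds and taking $K_1$ to be the larger of the two constants above gives \eqn{eq: Delta eta}. I expect the projection term to be the main obstacle: the infrared gap $\sim\gamma^{m+1}$ simultaneously shortens the contour and blows up the resolvents, so one must exploit precisely the ultraviolet decay $\beta^{-n}<\gamma^m$ supplied by $n>\alpha m$ for the exponent to close, and one has to be careful both to keep the new-shell factor $(\slice{H^f}{n}{n-1})^{1/2}$ bounded (rather than letting it absorb an inverse power of the gap) and to justify that the $n$-dependent contours $\Delta_{m+1}$ can be consistently aligned---which is exactly where the energy-shift estimates of \thm{thm:uv induction} and \lem{lem:grad E} are needed. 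The transformation term, by contrast, is routine once the Bogolyubov generators are seen to commute.
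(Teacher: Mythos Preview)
Your approach is the paper's: the same three-term telescope (in a different but equivalent order), the same use of $\|\cal Q\|=1$ and unitarity for the $\eta$-difference term, essentially the same Bogolyubov estimate for the $W$-difference term (the paper simply invokes the argument of \lem{lem:ir trafo diff}), and the same contour-integral/UV-smallness bound for the projection difference. Your remarks on contour alignment are in fact more explicit than the paper, which silently works on a single $\Delta_{m+1}$ in \eqn{eqn:uv ir res exp}. One trivial slip: the first telescoped term is \emph{bounded by}, not ``exactly equal to'', $\|\slice{\eta_P}{n}{m}-\slice{\eta_P}{n-1}{m}\|$, since $\cal Q_n$ is a projection.

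There is one point in your projection-term argument that does not go through as written. The spectral trick \eqn{eqn-teo-spectr} you invoke for the new-slice factor rests on $[\slice{H^f}{n}{n-1},\slice{H'_P}{n-1}{m+1}]=0$. Your second-resolvent-identity formula places $(\slice{H'_P}{n}{m+1}-z)^{-1}$ on one side, and $\slice{H^f}{n}{n-1}$ does \emph{not} commute with $\slice{H'_P}{n}{m+1}$. For a summand such as $-\slice{B^*}{n}{n-1}\cdot(P-P^f)$, the new-slice factor $\slice{B^*}{n}{n-1}$ then sits against the $H_n$-resolvent; bounding it via $H_{P,0}^{1/2}$ and \lem{lem:a priori} costs an extra $\tau_{m+1}^{-1/2}$, which destroys your power count (you would get $\tau_{m+1}^{-1}$ instead of $\tau_{m+1}^{-1/2}$, and no uniform $K_1$ exists). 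The paper avoids this by expanding $(\slice{H'_P}{n}{m+1}-z)^{-1}$ as a Neumann series about $(\slice{H'_P}{n-1}{m+1}-z)^{-1}$, see \eqn{eqn:uv ir res exp}--\eqn{eq:YetAnotherRes}, so that \emph{every} resolvent is $H_{n-1}$ and the commutation trick applies symmetrically; the series converges precisely because $n>\alpha m$ forces the sandwich \eqn{eq:YetAnotherRes} below $1$. Your resolvent identity is the resummed form of this expansion, so the repair is immediate---but you must pass through the Neumann series (or an equivalent symmetrization) before invoking \eqn{eqn-teo-spectr}.
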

\begin{proof} By using \eqn{eqn:eta_m} and \eqn{eqn:eta_mp1} we get the bound
\begin{align}
\left\|\slice{\eta_P}{n}{m+1} -\slice{\eta_P}{n-1}{m+1}\right\|
&\leq \left\|\left(\slice{\cal Q'_P}{n}{m+1}-\slice{\cal Q'_P}{n-1}{m+1} \right) \slice{W}{m}{m+1}(\nabla\slice{E'_P}{n}{m})^* \slice{\eta_P}{n}{m}\right\| \label{eq:eta diff 1}\\
&\quad+ \left\|\slice{\cal Q'_P}{n-1}{m+1}\left( \slice{W}{m}{m+1}(\nabla\slice{E'_P}{n}{m})^*-\slice{W}{m}{m+1}(\nabla\slice{E'_P}{n-1}{m})^* \right) \slice{\eta_P}{n}{m} \right\|\label{eq:eta diff 2}\\
&\quad+ \left\|\slice{\cal Q'_P}{n-1}{m+1} \slice{W}{m}{m+1}(\nabla\slice{E'_P}{n-1}{m})^*\left(\slice{\eta_P}{n}{m} - \slice{\eta_P}{n-1}{m}\right)\right\|. \label{eq:eta diff 3}
\end{align}
Furthermore, the expansion
\begin{align}
\begin{split}
  \slice{\cal Q'_P}{n}{m+1}-\slice{\cal Q'_P}{n-1}{m+1}
 =
- \frac{1}{2\pi i}\oint_{\Delta_{m+1}} dz &\left\{ \left(\frac{1}{\slice{H'_P}{n-1}{m+1}-z}\right)^{1/2}\right.\times\\
 &\quad\times\sum_{j=1}^\infty
 \left[
   \left(\frac{1}{\slice{H'_P}{n-1}{m+1}-z}\right)^{1/2}\slice{\Delta H'}{n}{n-1}\left(\frac{1}{\slice{H'_P}{n-1}{m+1}-z}\right)^{1/2}
 \right]^j\times\\
 &\quad\left. \times\left(\frac{1}{\slice{H'_P}{n-1}{m+1}-z}\right)^{1/2}\right\},
\end{split}\label{eqn:uv ir res exp}
\end{align}
can be controlled by noting that
\begin{align}\label{eqn:new res est}
 \left\|\left(\frac{1}{\slice{H'_P}{n-1}{m+1}-z}\right)^{1/2}\right\|^2_{\slice{\cal F}{n}{m+1}} \leq \frac{2}{\zeta\tau_{m+1}}
\end{align}
(see \lem{lem:ir gap}), which yields
\begin{equation} \label{eq:YetAnotherRes}
 \sup_{z\in\Delta_{m+1}}
 \left\|\left(\frac{1}{\slice{H'_P}{n-1}{m+1}-z}\right)^{1/2}\slice{\Delta H'}{n}{n-1}\left(\frac{1}{\slice{H'_P}{n-1}{m+1}-z}\right)^{1/2}\right\|_{\slice{\cal F}{n}{m+1}} \leq C|g|\left(\frac{ n}{\beta^n \zeta \tau_{m+1}}\right)^{1/2}
\end{equation}
by a similar computation as for \eqn{eqn:uv int est}. Now, by the choice $n>\alpha m$ and $|g|$ sufficiently small, the right-hand side in \eqn{eq:YetAnotherRes} is strictly smaller than $1$. Hence, we get
\begin{equation*}
\Vert\slice{\cal Q'_P}{n}{m+1}-\slice{\cal Q'_P}{n-1}{m+1}\Vert\leq C|g|\left(\frac{ n}{\beta^n \zeta \tau_{m+1}}\right)^{1/2}.
\end{equation*}
Moreover, under the constraint in (\ref{eqn:constraint}) we get the bound
\begin{equation*}
\eqn{eq:eta diff 2} \leq C|g| |\ln\gamma|\left \vert \nabla\slice{E'_P}{n}{m} - \nabla\slice{E'_P}{n-1}{m} \right\vert\leq C\left \vert \nabla\slice{E'_P}{n}{m} - \nabla\slice{E'_P}{n-1}{m} \right\vert 
\end{equation*}
by a similar procedure as used in the proof of \lem{lem:ir trafo diff}.
The remaining term \eqn{eq:eta diff 3} can be estimated using the unitarity of $W_m$. This concludes the proof.
\end{proof}

\begin{lemma} \label{lem: Delta grad}
For any $m \in \mathbb{N}$, let $\mathbb{N}\ni n > \alpha m\geq 1$. There exists a constant $K_2$ such  that   for $|g|$ sufficiently small  the following estimate holds true:
\begin{equation} \label{eq: Delta grad}
\left \vert \nabla\slice{E'_P}{n}{m} - \nabla\slice{E'_P}{n-1}{m} \right\vert \leq K_2\left[\left(\frac{n}{\beta^n\gamma^m}\right)^{1/2}+\|\slice{\eta_P}{n}{m}-\slice{\eta_P}{n-1}{m}\| +  \left \vert \nabla\slice{E'_P}{n}{m-1} - \nabla\slice{E'_P}{n-1}{m-1} \right\vert\right].
 \end{equation}
\end{lemma}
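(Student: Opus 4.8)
The plan is to start from the Feynman--Hellmann formula \eqn{eqn:grad e formuar}, which reads $\nabla\slice{E'_P}{n}{m}=\braket{\nabla_P\slice{H'_P}{n}{m}}_{\slice{\Psi'_P}{n}{m}}$ with $\nabla_P\slice{H'_P}{n}{m}=P-P^f-\slice{B}{n}{0}-\slice{B^*}{n}{0}$ (read off from \eqn{eqn:G trafo Hamiltonain}). Since $\slice{\eta_P}{n}{m}$ of \dfn{def:etas} is parallel to $\slice{\Psi'_P}{n}{m}$ and, by \thm{thm:key theorem}(ii), has norm in $[\tfrac{1}{2},1]$ uniformly in $n,m$, the expectation may be taken in $\slice{\eta_P}{n}{m}$ instead of in $\slice{\Psi'_P}{n}{m}$ -- this matters because $\|\slice{\Psi'_P}{n}{m}\|\to0$ as $m\to\infty$. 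Subtracting the level-$(n-1)$ identity and using $\nabla_P\slice{H'_P}{n}{m}-\nabla_P\slice{H'_P}{n-1}{m}=-(\slice{B}{n}{n-1}+\slice{B^*}{n}{n-1})$ gives
\[
 \nabla\slice{E'_P}{n}{m}-\nabla\slice{E'_P}{n-1}{m}=-\braket{\slice{B}{n}{n-1}+\slice{B^*}{n}{n-1}}_{\slice{\eta_P}{n}{m}}+\Bigl(\braket{G}_{\slice{\eta_P}{n-1}{m}}-\braket{G}_{\slice{\eta_P}{n}{m}}\Bigr),
\]
with $G:=\nabla_P\slice{H'_P}{n-1}{m}=P-P^f-\slice{B}{n-1}{0}-\slice{B^*}{n-1}{0}$.

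Both pieces are treated with the Fröhlich pull-through identity \eqn{eqn:froehlich}: since $\{\kappa\le|k|<\sigma_{n-1}\}$ and $\{\sigma_{n-1}\le|k|<\sigma_n\}$ both lie in the interaction support $\cal B_{\sigma_n}\setminus\cal B_{\tau_m}$, one has $b(k)\slice{\eta_P}{n}{m}=g\rho(k)\,\bigl(\slice{E'_P}{n}{m}-|k|-\slice{H'_{P-k}}{n}{m}\bigr)^{-1}\slice{\eta_P}{n}{m}$ on the relevant range of $k$. By \cor{cor:ir ground state energies}(ii) one has $\slice{E'_{P-k}}{n}{m}-\slice{E'_P}{n}{m}\ge-\namer{c:cp}|k|$ for all $k$, whence this resolvent has operator norm at most $4/|k|$, with extra decay for large $|P-k|$. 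Two cancellations are then exploited. First, because the vector form factor $k\beta(k)$ is odd in $k$ while $g\rho(k)$ and the $k\mapsto-k$ even part of the resolvent matrix element are even, only the odd part survives the $k$-integral; one controls it via the second resolvent identity, which replaces $\bigl(\slice{E'_P}{n}{m}-|k|-\slice{H'_{P-k}}{n}{m}\bigr)^{-1}$ by the commutator $k\cdot(P-P^f)$ squeezed between two resolvents, gaining a factor $|P|/|k|$ (the norm $\|(P-P^f)(\ldots)^{-1/2}\slice{\eta_P}{n}{m}\|$ being controlled by the a priori estimate \eqn{eqn:a priori} of \lem{lem:a priori}). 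For the new-slice term, with $|k|\ge\sigma_{n-1}$, this already yields a bound of the form $Cg^2(\beta-1)\beta^{-n/2}\le(n/(\beta^n\gamma^m))^{1/2}$.

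Second, for the comparison term one additionally invokes analytic perturbation theory in the ultraviolet slice: $\slice{H'_P}{n}{m}=\slice{H'_P}{n-1}{m}+\slice{\Delta H'}{n}{n-1}$ with the same slice interaction as in \sct{sec:main proof}, so $\slice{\Psi'_P}{n}{m}-\slice{\Psi'_P}{n-1}{m}=(\slice{\cal Q'_P}{n}{m}-\slice{\cal Q'_P}{n-1}{m})\slice{\Psi'_P}{n-1}{m}$, the projection difference being controlled on a contour of radius $\sim\zeta\tau_m$ exactly as in \eqn{eqn:uv ir res exp}--\eqn{eq:YetAnotherRes} by $C|g|(n/(\beta^n\zeta\tau_m))^{1/2}$, together with the corresponding energy-shift bound on $|\slice{E'_P}{n}{m}-\slice{E'_P}{n-1}{m}|$. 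Inserting these into the difference of the two Fröhlich representations of $\braket{G}$ -- so that the integrand difference is organized into the change of energy, the change of Hamiltonian $\slice{\Delta H'}{n}{n-1}$ inside the resolvent (handled by one more resolvent identity), and the change of state $\slice{\eta_P}{n}{m}-\slice{\eta_P}{n-1}{m}$ -- and again discarding the even-in-$k$ part, produces the bound $C\bigl[(n/(\beta^n\gamma^m))^{1/2}+\|\slice{\eta_P}{n}{m}-\slice{\eta_P}{n-1}{m}\|\bigr]$. The gradient difference at the lower infrared level enters through the dependence of $\slice{\eta_P}{n}{m}$ on $W_{m-1}(\nabla\slice{E'_P}{n}{m-1})$ in the recursion \eqn{eqn:eta_mp1}, bounded as in \lem{lem:ir trafo diff} by a fixed multiple of $|\nabla\slice{E'_P}{n}{m-1}-\nabla\slice{E'_P}{n-1}{m-1}|$; the base case $m=0$ is \eqn{diff-grad-uv}.

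The main obstacle is the ultraviolet self-energy logarithm concealed in both terms. Estimated crudely, the resolvent matrix elements are only $O(1/|k|)$ and, since $k\beta(k)\sim g|k|^{-3/2}$, the $k$-integrals over $[\kappa,\sigma_{n-1})$ and $[\sigma_{n-1},\sigma_n)$ fail to converge, or fail to decay in $n$. This is precisely why one must work with the combination $P-P^f-\slice{B}{n-1}{0}-\slice{B^*}{n-1}{0}$ -- the $P$-gradient of the \emph{Gross-transformed} Hamiltonian, for which \lem{lem:a priori} holds with $n$-independent constants -- rather than with $P^f$ and $\slice{B}{\cdot}{0}$ separately, and why the oddness of $k\beta(k)$ and the extra resolvent identity are indispensable. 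Carrying out this bookkeeping with all constants uniform in $n$ and $m$, and checking that the resolvent control of \cor{cor:ir ground state energies} and the a priori bound of \lem{lem:a priori} interlock correctly across the whole range of $|k|$, is the technical heart of the argument.
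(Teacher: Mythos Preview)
Your proposal rests on a misapplication of the pull-through identity. Equation \eqn{eqn:froehlich} with source $g\rho(k)$ holds only for $|k|<\kappa$, where the interaction in $\slice{H'_P}{n}{m}$ is the linear term $g\slice{\Phi}{0}{m}$. For $|k|\ge\kappa$ the interaction comes from the Gross-transformed $B$-terms in \eqn{eqn:G trafo Hamiltonain}, and commuting $b(k)$ through $\slice{H'_P}{n}{m}$ produces instead the operator-valued source $-k\beta(k)\cdot(P-P^f-\slice{B}{n}{0}-\slice{B^*}{n}{0})$. Because $|k\beta(k)|\sim g|k|^{-3/2}$ rather than $g\rho(k)\sim g|k|^{-1/2}$, the correct UV pull-through carries an extra $|k|^{-1}$ and no logarithm ever appears. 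Your ``UV self-energy logarithm'' is therefore a phantom created by the wrong formula, and the odd/even cancellation plus second-resolvent gymnastics you propose to cure it are unmotivated (and, as stated, imprecise: the resolvent $(E'-|k|-\slice{H'_{P-k}}{n}{m})^{-1}$ is not even in $k$).

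The paper's argument avoids pull-through entirely and is much more direct. Writing $V_n:=P^f+\slice{B}{n}{0}+\slice{B^*}{n}{0}$ and passing to the $\eta$-vectors, the new-slice piece $|\langle\slice{\eta_P}{n-1}{m},(\slice{B}{n}{n-1}+\slice{B^*}{n}{n-1})\slice{\eta_P}{n}{m}\rangle|$ is bounded by inserting $(\slice{H'_P}{n}{m}-i)^{-1/2}$ next to the eigenvector $\slice{\eta_P}{n}{m}$ and using the standard inequality \eqn{eqn:uv standard ineq} together with \lem{lem:a priori}; this gives $C|g|\beta^{-n/2}$ outright. For the comparison term $\langle(\slice{\eta_P}{n}{m}-\slice{\eta_P}{n-1}{m}),V_n\slice{\eta_P}{n}{m}\rangle$ one cannot simply factor out $\|\slice{\eta_P}{n}{m}-\slice{\eta_P}{n-1}{m}\|$ because $V_n$ is unbounded; instead the paper expands $\slice{\eta_P}{n}{m}-\slice{\eta_P}{n-1}{m}$ via the recursion \eqn{eqn:eta_mp1} into a projector difference (controlled as in \eqn{eqn:uv ir res exp}--\eqn{eq:YetAnotherRes}), a $W$-difference (giving the $|\nabla\slice{E'_P}{n}{m-1}-\nabla\slice{E'_P}{n-1}{m-1}|$ contribution), and a state difference at level $m-1$. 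The key device is that the rank-one projector $\slice{\cal Q'_P}{n-1}{m}$ on the left converts the unbounded $V_n$ into the scalar $\langle\slice{\eta_P}{n-1}{m},V_n\slice{\eta_P}{n}{m}\rangle$, which is uniformly bounded by the same $(\slice{H'_P}{\cdot}{m}-i)^{-1/2}$ insertion. The $\|\slice{\eta_P}{n}{m}-\slice{\eta_P}{n-1}{m}\|$ term in the statement arises separately from the norm-denominator difference $\|\slice{\eta_P}{n}{m}\|^{-2}-\|\slice{\eta_P}{n-1}{m}\|^{-2}$.
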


\begin{proof}
Let us start with the  equality
\begin{align}\label{eqn:diff grad uv}
 \left|\nabla\slice{E'_P}{n}{m}-\nabla\slice{E'_P}{n-1}{m}\right|=\left|\braket{P^f+\slice{B}{n}{0}+\slice{B^*}{n}{0}}_{\slice{\Psi'_{P}}{n}{m}}-\braket{P^f+\slice{B}{n-1}{0}+\slice{B^*}{n-1}{0}}_{\slice{\Psi'_{P}}{n-1}{m}}\right|.
\end{align}
As $\slice{\Psi'_P}{n}{m}$ and $\slice{\eta_P}{n}{m}$ belong to the same ray in $\cal H_P$, we obtain
\begin{align*}
  \eqn{eqn:diff grad uv}&=\left|\braket{P^f+\slice{B}{n}{0}+\slice{B^*}{n}{0}}_{\slice{\eta_P}{n}{m}}-\braket{P^f+\slice{B}{n-1}{0}+\slice{B^*}{n-1}{0}}_{\slice{\eta_P}{n-1}{m}}\right|.
\end{align*}
In order to shorten the formulae we define
\begin{align*}
 V_n:=P^f+\slice{B}{n}{0}+\slice{B^*}{n}{0}
\end{align*}
so that
\begin{align}
   \eqn{eqn:diff grad uv}&\leq \frac{1}{\|\slice{\eta_P}{n-1}{m}\|^2}\left|\braket{\slice{\eta_P}{n}{m},V_n\slice{\eta_P}{n}{m}}-\braket{\slice{\eta_P}{n-1}{m},V_{n-1}\slice{\eta_P}{n-1}{m}}\right|\label{eqn:grad diff 2}\\
  &\quad+\left|\frac{1}{\|\slice{\eta_P}{n}{m}\|^2}-\frac{1}{\|\slice{\eta_P}{n-1}{m}\|^2}\right|\left|\braket{\slice{\eta_P}{n}{m},V_n\slice{\eta_P}{n}{m}}\right|\label{eqn:grad diff 1}.
\end{align}
Furthermore, by the definitions in \eqn{eqn:def A Ck}, \eqn{eqn:pi def} and \eqn{eqn:eta_m} we have
\begin{align}\label{eqn:est by grad e}
 \left|\braket{\slice{\eta_P}{n}{m}, V_n \slice{\eta_P}{n}{m}}\right|=\left|\braket{\slice{\phi_P}{n}{m}, \slice{\Pi_P}{n}{m} \slice{\phi_P}{n}{m}}+C^{(k,n)}_{P,m} \Vert \slice{\phi_P}{n}{m} \Vert^2 \right|\leq C\,,
\end{align}
where we used \lem{lemma:pi terms}. Hence, by \eqn{eqn:norm bounds} we get the estimate
\begin{align}
 \eqn{eqn:grad diff 1} \leq C\frac{\|\slice{\eta_P}{n}{m}-\slice{\eta_P}{n-1}{m}\|}{\|\slice{\eta_P}{n}{m}\|^2\|\slice{\eta_P}{n-1}{m}\|^2}\leq C\|\slice{\eta_P}{n}{m}-\slice{\eta_P}{n-1}{m}\|\label{eqn:gd4}.
\end{align}
Next, we proceed with
\begin{align}
 \eqn{eqn:grad diff 2}\leq C\Bigg[&
   \left|\braket{(\slice{\eta_P}{n}{m}-\slice{\eta_P}{n-1}{m}),V_n\slice{\eta_P}{n}{m}}\right|\label{eqn:gd 1}\\
   &+\left|\braket{\slice{\eta_P}{n-1}{m},(V_n-V_{n-1})\slice{\eta_P}{n}{m}}\right|\label{eqn:gd 2}\\
   &+\left|\braket{\slice{\eta_P}{n-1}{m},V_{n-1}(\slice{\eta_P}{n}{m}-\slice{\eta_P}{n-1}{m})}\right|\label{eqn:gd 3}
 \Bigg].
\end{align}
First, we observe that
\begin{equation*}
 \eqn{eqn:gd 2} \leq C\left|\braket{\slice{\eta_P}{n-1}{m},(\slice{B}{n}{n-1}+\slice{B^*}{n}{n-1})\slice{\eta_P}{n}{m}}\right|
 \leq C\left|\slice{E'_P}{n}{m}-i\right|^{1/2}\left|\braket{\slice{\eta_P}{n-1}{m} ,\slice{B}{n}{n-1}\left(\frac{1}{\slice{H'_P}{n}{m}-i}\right)^{1/2}\slice{\eta_P}{n}{m}}\right|
\end{equation*}
holds. Invoking the standard inequalities in \eqn{eqn:uv standard ineq} and the boundedness of 
\begin{align}
 \left\|H_{P,0}^{1/2}\left(\frac{1}{\slice{H'_P}{n}{m}-i}\right)^{1/2}\right\|\leq C\,,\label{eqn:hp0 const}
\end{align}
which holds by \lem{lem:a priori}, one has
\begin{align*}
 \left\|\slice{B}{n}{n-1}\left(\frac{1}{\slice{H'_P}{n}{m}-i}\right)^{1/2}\right\|_{\slice{\cal F}{n}{m}}\leq C|g|\left(\frac{1}{\beta^n}\right)^{1/2}.
\end{align*}
Hence, since the ground state energies are bounded from above and below by \cor{cor:ground state energies},
\begin{align}\label{eqn:middle term}
  \eqn{eqn:gd 2}&\leq C\left(\frac{1}{\beta^n}\right)^{1/2}
\end{align}
holds true.
Terms \eqn{eqn:gd 1} and \eqn{eqn:gd 3} can be treated similarly. By recalling the identity in \eqn{eqn:eta_mp1} we can write
\begin{align}
  \eqn{eqn:gd 1} &=\left|
    \braket{(\slice{\cal Q'_P}{n}{m} \slice{W}{m-1}{m}(\nabla\slice{E'_P}{n}{m-1})^*\slice{\eta_P}{n}{m-1}-\slice{\cal Q'_P}{n-1}{m}\slice{W}{m-1}{m}(\nabla\slice{E'_P}{n-1}{m-1})^*\slice{\eta_P}{n-1}{m-1}),V_n\slice{\eta_P}{n}{m}}
  \right|\nonumber\\
  &\leq
    \left|\braket{ (\slice{\cal Q'_P}{n}{m}-\slice{\cal Q'_P}{n-1}{m}) \slice{W}{m-1}{m}(\nabla\slice{E'_P}{n}{m-1})^* \slice{\eta_P}{n}{m-1},V_n\slice{\eta_P}{n}{m}}\right|\label{eqn:lastgd1} \\
    &\quad + \left|\braket{ \slice{\cal Q'_P}{n-1}{m} \left( \slice{W}{m-1}{m}(\nabla\slice{E'_P}{n}{m-1})^* - \slice{W}{m-1}{m}(\nabla\slice{E'_P}{n-1 }{m-1})^*\right) \slice{\eta_P}{n}{m-1} ,V_n\slice{\eta_P}{n}{m} }\right| \label{eqn:lastgd2} \\
    &\quad +
    \left|\braket{\slice{\cal Q'_P}{n-1}{m} \slice{W}{m-1}{m}(\nabla\slice{E'_P}{n-1}{m-1})^* \left(\slice{\eta_P}{n}{m-1} - \slice{\eta_P}{n-1}{m-1}\right) ,V_n\slice{\eta_P}{n}{m}}\right|. \label{eqn:lastgd3}
\end{align}
Observe that
\begin{multline*}
  \braket{\slice{\cal Q'_P}{n-1}{m} \slice{W}{m-1}{m}(\nabla\slice{E'_P}{n-1}{m-1})^* \left(\slice{\eta_P}{n}{m-1} - \slice{\eta_P}{n-1}{m-1}\right) ,V_n\slice{\eta_P}{n}{m}}\\
  = \braket{ \slice{W}{m-1}{m}(\nabla\slice{E'_P}{n-1}{m-1})^* \left(\slice{\eta_P}{n}{m-1} - \slice{\eta_P}{n-1}{m-1}\right) , \slice{\cal Q'_P}{n-1}{m}V_n\slice{\eta_P}{n}{m}}\\
  = \frac{1}{\|\slice{\eta_P}{n-1}{m}\|^{2}}\braket{ \slice{W}{m-1}{m}(\nabla\slice{E'_P}{n-1}{m-1})^* \left(\slice{\eta_P}{n}{m-1} - \slice{\eta_P}{n-1}{m-1}\right) , \slice{\eta_P}{n-1}{m}}\braket{\slice{\eta_P}{n-1}{m}, V_n\slice{\eta_P}{n}{m}}.
\end{multline*}
With 
\begin{align*}
 \left|\braket{\slice{\eta_P}{n-1}{m},V_n\slice{\eta_P}{n}{m}}\right| &\leq C \left|\slice{E'_P}{n}{m}-i\right|^{1/2}\left|\braket{\slice{\eta_P}{n-1}{m},H_{P,0}^{1/2}\left(\frac{1}{\slice{H'_P}{n}{m}-i}\right)^{1/2}\slice{\eta_P}{n}{m}} \right| \\
  &\quad+ C \left|\slice{E'_P}{n-1}{m}-i\right|^{1/2}\left|\braket{\slice{\eta_P}{n-1}{m},\left(\frac{1}{\slice{H'_P}{n-1}{m}-i}\right)^{1/2} H_{P,0}^{1/2}\slice{\eta_P}{n}{m}} \right|
\end{align*}
and \eqn{eqn:hp0 const}, we obtain the first estimate
\begin{align*}
 \eqn{eqn:lastgd3}\leq C \|\slice{\eta_P}{n}{m-1}-\slice{\eta_P}{n-1}{m-1}\| 
 \left|\braket{\slice{\eta_P}{n-1}{m},V_n\slice{\eta_P}{n}{m}}\right|\leq C\|\slice{\eta_P}{n}{m-1}-\slice{\eta_P}{n-1}{m-1}\|.
\end{align*}
Furthermore, \eqn{eqn:lastgd2} can be bounded by
\begin{align*}
 \eqn{eqn:lastgd2} &\leq C \left\Vert \left(\slice{W}{m-1}{m}(\nabla\slice{E'_P}{n}{m-1})^* - \slice{W}{m-1}{m}(\nabla\slice{E'_P}{n-1 }{m-1})^*\right) \slice{\eta_P}{n}{m-1} \right\Vert \left|\braket{\slice{\eta_P}{n-1}{m},V_n\slice{\eta_P}{n}{m}}\right| \\
 &\leq  C|g||\ln\gamma | \left \vert \nabla\slice{E'_P}{n}{m-1} - \nabla\slice{E'_P}{n-1}{m-1} \right\vert\leq C  \left \vert \nabla\slice{E'_P}{n}{m-1} - \nabla\slice{E'_P}{n-1}{m-1} \right\vert
\end{align*}
where the constraints (\ref{eqn:constraint}) has been used again.
Finally, using the resolvent expansion in \eqn{eqn:uv ir res exp} we get
\begin{align*}
 \eqn{eqn:lastgd1}\leq C \tau_m^{\frac{1}{2}}\left(\frac{ n}{\beta^n\gamma^m}\right)^{1/2}\left|\slice{E'_P}{n}{m}-i\right|^{1/2}  \sup_{z\in\Delta_m}
 \left\|\left[\left(\frac{1}{\slice{H'_P}{n-1}{m}-z}\right)^{1/2}\right]^*V_n \left(\frac{1}{\slice{H'_P}{n}{m}-i}\right)^{1/2}\slice{\eta_P}{n}{m}\right\|
\end{align*}
and the standard inequalities in \eqn{eqn:standard ineq} and \lem{lem:a priori} yield
\begin{align*}
 \eqn{eqn:lastgd1}\leq C\left(\frac{n}{\beta^n\gamma^m}\right)^{1/2}\,.
\end{align*}

Carrying out the same argument for term \eqn{eqn:gd 3} one obtains
\begin{align*}
 \eqn{eqn:gd 1}+\eqn{eqn:gd 3}\leq C\left[\left(\frac{n}{\beta^n\gamma^m}\right)^{1/2}+\|\slice{\eta_P}{n}{m}-\slice{\eta_P}{n-1}{m}\| +  \left \vert \nabla\slice{E'_P}{n}{m-1} - \nabla\slice{E'_P}{n-1}{m-1} \right\vert\right]
 \end{align*}
which, together with estimate \eqn{eqn:middle term}, proves the claim.
\end{proof}

\begin{theorem}\label{thm:IRUV induction}
There exist constants $K \geq \mathrm{max}(K_1,K_2,5)$, $g_*>0$ and $\frac{1}{2}>\gamma_*>0$ such that for $|g|\leq g_*$ and $\gamma \leq \gamma_*$ the following estimates hold true for all finite $n\in\bb N$ and $\bb N\ni m<n/\alpha$:
\begin{enumerate}[(i)]
\item
$\left \vert \nabla\slice{E'_P}{n}{m} - \nabla\slice{E'_P}{n-1}{m} \right\vert 
\leq K^{3m+1}\left(\frac{n}{\beta^n\gamma^m}\right)^{1/2}$.
\item
$
\Vert \slice{\eta_P}{n}{m} -\slice{\eta_P}{n-1}{m} \Vert
\leq K^{3m+1}\left(\frac{n}{\beta^n\gamma^m}\right)^{1/2}$.
\end{enumerate}
\end{theorem}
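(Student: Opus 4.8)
The plan is a single induction on $m$, carried out at a fixed but arbitrary finite $n$, proving (i) and (ii) simultaneously and feeding them into the two a priori estimates \lem{lem: Delta eta} and \lem{lem: Delta grad}, whose standing hypothesis $n>\alpha m$ is exactly the restriction $m<n/\alpha$ in the statement. Abbreviate $a_m:=\|\slice{\eta_P}{n}{m}-\slice{\eta_P}{n-1}{m}\|$, $b_m:=|\nabla\slice{E'_P}{n}{m}-\nabla\slice{E'_P}{n-1}{m}|$ and $s_m:=(n/(\beta^n\gamma^m))^{1/2}$, so that $s_{m-1}=\gamma^{1/2}s_m\le s_m$ because $\gamma<1$. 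For the base case $m=0$ the vector $\slice{\eta_P}{n}{0}$ is the normalized $\slice{\Psi'_P}{n}{0}$, and combining the convergence estimates of \thm{thm:uv induction} (cf. \eqn{diff-uv}, \eqn{eqn:EnShift}, \eqn{diff-grad-uv}) with the uniform lower bound \eqn{bound from below} yields $a_0,b_0\le C_0(n/\beta^n)^{1/2}=C_0\,s_0$ for some universal $C_0$; taking $K\ge C_0$ this is the asserted bound with exponent $3\cdot 0+1=1$.

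For the inductive step $m-1\Rightarrow m$ (with $m\ge 1$, hence $n>\alpha m\ge\alpha$), the key move is \emph{not} to bound $b_m$ using the bound on $a_m$ already established at the same level — \lem{lem: Delta grad} carries a constant $K_2>1$ in front of $a_m$ and this would fail to close. Instead I would substitute the \lem{lem: Delta eta} estimate
\[
  a_m\le a_{m-1}+K_1 s_m+K_1 b_{m-1}
\]
into $b_m\le K_2 s_m+K_2 a_m+K_2 b_{m-1}$ from \lem{lem: Delta grad}, obtaining $b_m\le K_2(1+K_1)s_m+K_2 a_{m-1}+K_2(1+K_1)b_{m-1}$; together with the displayed bound for $a_m$ itself this expresses both level-$m$ quantities purely through level-$(m-1)$ data. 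Inserting the inductive hypothesis $a_{m-1},b_{m-1}\le K^{3m-2}s_{m-1}\le K^{3m-2}s_m$ and dividing by $K^{3m-2}\ge 1$, the two claims $a_m,b_m\le K^{3m+1}s_m$ reduce to the numerical inequalities $1+2K_1\le K^3$ and $K_2(3+2K_1)\le K^3$, both of which hold as soon as $K\ge\max(K_1,K_2,5)$, since then $K_2(3+2K_1)\le K(3+2K)=3K+2K^2\le K^3$ and $1+2K_1\le 1+2K\le K^3$. One then fixes $K:=\max(K_1,K_2,C_0,5)$ and chooses $g_*>0$, $\gamma_*\in(0,\tfrac12)$ small enough that \lem{lem: Delta eta}, \lem{lem: Delta grad}, the constraints \eqn{eqn:constraint}, and all results of Sections~\ref{sec:main proof}--\ref{sec:W ground states} are in force.

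The main difficulty is precisely this bookkeeping of the exponential prefactor: the substitution is what allows a single geometric factor $K^3$ per step to absorb the products $K_1$, $K_2$, $K_1K_2$ generated by the coupled recursions, and the exponent $3m$ (rather than something smaller) is dictated by the fact that each step can amplify the coefficient by roughly $K_2(2+K_1)$. A minor point to watch is the first step $m=1$, which applies \lem{lem: Delta eta} and \lem{lem: Delta grad} using the $m=0$ data of \thm{thm:uv induction}; the identities \eqn{eqn:eta_m}--\eqn{eqn:eta_mp1} and the proofs of those two lemmas carry over verbatim to this case, so no separate argument is required. Everything else — the contour integrals around $\Delta_{m}$, the resolvent expansions, the standard inequalities \eqn{eqn:standard ineq} — is routine and already packaged inside the quoted lemmas.
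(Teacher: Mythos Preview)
Your proof is correct and follows essentially the same route as the paper: induction on $m$ with base case $m=0$ coming from \thm{thm:uv induction}, and the inductive step obtained by substituting the bound of \lem{lem: Delta eta} into that of \lem{lem: Delta grad} so that both $a_m$ and $b_m$ are expressed purely through level-$(m-1)$ data, then verifying that a factor $K^3$ per step suffices when $K\geq\max(K_1,K_2,5)$. Your bookkeeping (dividing by $K^{3m-2}$ and reducing to $1+2K_1\le K^3$, $K_2(3+2K_1)\le K^3$) is a slightly coarser but perfectly valid variant of the paper's explicit fractions $1/K+3/K^2+1/K^3\le 1$ and $2/K^3+1/K^2\le 1$, and your remark about the applicability of the two lemmas at the first step $m=1$ (where their formal hypothesis $\alpha m\ge 1$ with $m=0$ is not literally met) is well taken and correctly resolved.
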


\begin{proof}
Let $n\in\bb N$ and fix $K \geq \mathrm{max}(K_1,K_2,5)$.
We prove the claim by induction in $m$ for $m<n/\alpha$.  Statements (i)-(ii) for $m$ will be referred to as assumptions A(i)-A(ii) while the same statements for $m+1$ are claims C(i)-C(ii).
 We recall that  $ \slice{\eta_P}{n}{0}\equiv \slice{\phi_P}{n}{0}\equiv \slice{\Psi'_P}{n}{0}/\|\slice{\Psi'_P}{n}{0}\|$ so that C(i) and C(ii) for $m=0$ are consequence of  (\ref{diff-grad-uv}) and (\ref{diff-uv}) for $|g|$ sufficiently small. The induction step $m \Rightarrow (m+1)$ for $(m+1)<\frac{n}{\alpha}$ is a straightforward consequence of inequalities \eqn{eq: Delta grad} and \eqn{eq: Delta eta}: For C(i) we estimate
\begin{align*}
 \left|\nabla\slice{E'_P}{n}{m+1}-\nabla\slice{E'_P}{n-1}{m+1}\right|
 &\leq  K_2\left[\left(\frac{n}{\beta^n\gamma^{m+1}}\right)^{1/2}+\|\slice{\eta_P}{n}{m+1}-\slice{\eta_P}{n-1}{m+1}\| +  \left \vert \nabla\slice{E'_P}{n}{m} - \nabla\slice{E'_P}{n-1}{m} \right\vert\right]\\
  &\leq  K_2\Bigg[\left(\frac{n}{\beta^n\gamma^{m+1}}\right)^{1/2}+\left \vert \nabla\slice{E'_P}{n}{m} - \nabla\slice{E'_P}{n-1}{m} \right\vert\\
  &\qquad+\Vert \slice{\eta_P}{n}{m} -\slice{\eta_P}{n-1}{m} \Vert + K_1\left[\left(\frac{n}{\beta^n \gamma^{m+1}}\right)^{1/2} + \left \vert \nabla\slice{E'_P}{n}{m} - \nabla\slice{E'_P}{n-1}{m} \right\vert \right]\Bigg]\\
  &\leq K(K+1)\left(\frac{n}{\beta^n\gamma^{m+1}}\right)^{1/2}+K(K+1)\left \vert \nabla\slice{E'_P}{n}{m} - \nabla\slice{E'_P}{n-1}{m} \right\vert\\
  &\quad + K \Vert \slice{\eta_P}{n}{m} -\slice{\eta_P}{n-1}{m} \Vert.
\end{align*}
Hence, A(i) and A(ii) and $\gamma<\frac{1}{2}$ imply
\begin{align*}
  \left|\nabla\slice{E'_P}{n}{m+1}-\nabla\slice{E'_P}{n-1}{m+1}\right|&\leq K^{3(m+1)+1}\left(\frac{n}{\beta^n\gamma^{m+1}}\right)^{1/2}\left[\left(\frac{1}{K^2}+\frac{1}{K^3}\right)+\left(\frac{1}{K}+\frac{1}{K^2}\right)+\frac{1}{K^2}\right],
\end{align*}
which by the assumption on $K$ proves C(i). For C(ii), using \eqn{eq: Delta eta} again, we get
\begin{align*}
\Vert \slice{\eta_P}{n}{m+1} -\slice{\eta_P}{n-1}{m+1} \Vert
&\leq \Vert \slice{\eta_P}{n}{m} -\slice{\eta_P}{n-1}{m} \Vert + K_1\left[\left(\frac{n}{\beta^n \gamma^{m+1}}\right)^{1/2} + \left \vert \nabla\slice{E'_P}{n}{m} - \nabla\slice{E'_P}{n-1}{m} \right\vert \right]\\
&\leq K^{3(m+1)+1}\left(\frac{n}{\beta^n\gamma^{m+1}}\right)^{1/2}\left[\frac{1}{K^3}+\frac{1}{K^3}+\frac{1}{K^2}\right],
\end{align*}
which by the assumption on $K$ and $\gamma<\frac{1}{2}$ proves C(ii) and concludes the proof.
\end{proof}

\begin{corollary}\label{cor:state conv}
 Let $n > \alpha m\geq 1$. For $|g|$ and $\gamma$ as in \thm{thm:IRUV induction}  the estimate
\begin{align*}
 \|\slice{\phi_P}{n}{m}-\slice{\phi_P}{n-1}{m}\|\leq  C m K^{3m+1}\left(\frac{n}{\beta^n\gamma^m}\right)^{1/2}
\end{align*}
holds true.
\end{corollary}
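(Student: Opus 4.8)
The plan is to reduce the statement entirely to \thm{thm:IRUV induction} together with a Bogolyubov‑transform estimate of exactly the type already proven in \lem{lem:ir trafo diff}. By \dfn{def:etas} we have the factorization $\slice{\phi_P}{n}{m}=W_m(\nabla\slice{E'_P}{n}{m})\,\slice{\eta_P}{n}{m}$, and $W_m(Q)$ is unitary for every finite $n,m$ and every $|Q|\leq 1$ (here $|\nabla\slice{E'_P}{n}{m}|\leq\namer{c:cp}<1$ by \lem{lem:grad E}). Inserting and subtracting $W_m(\nabla\slice{E'_P}{n}{m})\,\slice{\eta_P}{n-1}{m}$ and using unitarity, the triangle inequality gives
\begin{align*}
 \|\slice{\phi_P}{n}{m}-\slice{\phi_P}{n-1}{m}\|
 &\leq \|\slice{\eta_P}{n}{m}-\slice{\eta_P}{n-1}{m}\|
 + \big\|\big[W_m(\nabla\slice{E'_P}{n}{m})-W_m(\nabla\slice{E'_P}{n-1}{m})\big]\,\slice{\eta_P}{n-1}{m}\big\|.
\end{align*}
The first term is precisely what \thm{thm:IRUV induction}(ii) controls: it is $\leq K^{3m+1}(n/(\beta^n\gamma^m))^{1/2}$.

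For the second term I would use that $\slice{\eta_P}{n-1}{m}$ lies on the same ray in $\cal F$ as the ground state $\slice{\Psi'_P}{n-1}{m}$ (the parallelism recorded in the preliminaries of \sct{sec:W ground states}) and that $\|\slice{\eta_P}{n-1}{m}\|\leq 1$ by \eqn{eqn:norm bounds}. Being a scalar multiple of the ground state, $\slice{\eta_P}{n-1}{m}$ satisfies the Fr\"ohlich identity \eqn{eqn:froehlich}; since the proof of \lem{lem:ir trafo diff} invokes only that identity, the a priori bound $\|\cdot\|\leq1$, and the $L^1$‑estimate on $\alpha_m(Q,\cdot)-\alpha_m(Q',\cdot)$ weighted by $|k|^{-3/2}$, it goes through verbatim with $\slice{\Psi'_P}{n}{m}$ replaced by $\slice{\eta_P}{n-1}{m}$, yielding
\begin{align*}
 \big\|\big[W_m(Q)-W_m(Q')\big]\,\slice{\eta_P}{n-1}{m}\big\|\leq |g|\,C\,|Q-Q'|\,|\ln\tau_m|,\qquad |Q|,|Q'|\leq 1.
\end{align*}
Taking $Q=\nabla\slice{E'_P}{n}{m}$, $Q'=\nabla\slice{E'_P}{n-1}{m}$, bounding $|Q-Q'|$ by \thm{thm:IRUV induction}(i), and using $|\ln\tau_m|=|\ln\kappa+m\ln\gamma|\leq C m|\ln\gamma|$ for $m\geq1$, the second term is $\leq C|g|\,m\,|\ln\gamma|\,K^{3m+1}(n/(\beta^n\gamma^m))^{1/2}$.

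Adding the two contributions gives $\|\slice{\phi_P}{n}{m}-\slice{\phi_P}{n-1}{m}\|\leq K^{3m+1}(n/(\beta^n\gamma^m))^{1/2}\big(1+C|g|\,m\,|\ln\gamma|\big)$, and since the constraint $|g|\leq\gamma^2$ from \dfn{def:gap params} keeps $|g|\,|\ln\gamma|$ bounded (indeed small as $\gamma\to0$), the parenthesis is $\leq C m$ for $m\geq1$, which is the claimed estimate. The computation is routine once \thm{thm:IRUV induction} is in hand; the only points deserving attention are the transfer of \lem{lem:ir trafo diff} from the genuine ground state to the auxiliary vector $\slice{\eta_P}{n-1}{m}$ — harmless because that proof is homogeneous in the ground‑state ray and only uses $\|\cdot\|\leq1$ — and the bookkeeping of the logarithm $|\ln\tau_m|\sim m|\ln\gamma|$, which is exactly the source of the explicit prefactor $m$ in the statement.
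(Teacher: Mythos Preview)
Your proof is correct and follows essentially the same route as the paper: decompose $\slice{\phi_P}{n}{m}-\slice{\phi_P}{n-1}{m}$ via the unitary $W_m$ into an $\eta$-difference controlled by \thm{thm:IRUV induction}(ii) and a $W_m$-difference acting on a vector of the ground-state ray, controlled by \lem{lem:ir trafo diff} together with \thm{thm:IRUV induction}(i). The only cosmetic difference is that the paper applies the $W_m$-difference to $\slice{\eta_P}{n}{m}$ rather than $\slice{\eta_P}{n-1}{m}$; your explicit justification of why \lem{lem:ir trafo diff} transfers to $\eta$ (homogeneity in the ground-state ray plus $\|\cdot\|\leq 1$) and your bookkeeping of $|\ln\tau_m|\leq Cm|\ln\gamma|$ with $|g|\,|\ln\gamma|$ bounded by \eqn{eqn:constraint} are exactly what the paper's terse invocation of ``\lem{lem:ir trafo diff} and the constraints \eqn{eqn:constraint}'' stands for.
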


\begin{proof}
By \dfn{def:etas} and the unitarity of the transformations $W_m$ we have that
\begin{align}\label{eqn:the important term}
 \|\slice{\phi_P}{n}{m}-\slice{\phi_P}{n-1}{m}\|\leq \|[W_m(\nabla\slice{E'_P}{n}{m})-W_m(\nabla\slice{E'_P}{n-1}{m})]\slice{\eta_P}{n}{m}\| + \|\slice{\eta_P}{n}{m}-\slice{\eta_P}{n-1}{m}\|.
\end{align}
The lower bound on the norm of $\slice{\eta_P}{n}{m}$ in \eqn{eqn:norm bounds} together with \lem{lem:ir trafo diff} and the constraints (\ref{eqn:constraint}) yield the estimate
\begin{align*}
 \|[W_m(\nabla\slice{E'_P}{n}{m})-W_m(\nabla\slice{E'_P}{n-1}{m})]\slice{\eta_P}{n}{m}\|\leq C m\left|\nabla\slice{E'_P}{n}{m}-\nabla\slice{E'_P}{n-1}{m}\right|.
\end{align*}
The claim then follows from a direct application of \thm{thm:IRUV induction}.
\end{proof}

Before we can prove the second main result, we must show the convergence of the fiber Hamiltonians under the simultaneous removal of the UV and IR cut-off, $\slice{H^{W'}_P}{n(m)}{m} \to \slice{H^{W'}_P}{\infty}{\infty}$. For this, we need a slightly faster scaling $n(m)$.
\begin{lemma}\label{lem:res conv-2}
 Under the same assumptions of \thm{thm:IRUV induction}, there exist $\bar\alpha\geq \alpha$ such that for any $\bb N\ni \alpha'> \bar\alpha$ and $n(m)=\alpha'm$, the Hamiltonians $(\slice{H^{W'}_P}{n(m)}{m})_{m\in\bb N}$ converge in the norm resolvent sense as $m\to \infty$.
\end{lemma}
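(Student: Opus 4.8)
The plan is to fix $z_0\in\bb C$ with $\Im z_0=1$ and prove that $\big((\slice{H^{W'}_P}{n(m)}{m}-z_0)^{-1}\big)_{m\in\bb N}$ is Cauchy in operator norm. Since, for finite $n,m$, $\slice{H^{W'}_P}{n}{m}$ is unitarily equivalent to $\slice{H'_P}{n}{m}$ through $W_m(\nabla\slice{E'_P}{n}{m})$, every such resolvent has norm $\le 1$ and dense range $W_m(\nabla\slice{E'_P}{n}{m})D(H_{P,0})$, and the operators are uniformly bounded below by \lem{lem:ir a priori}; hence, once the Cauchy property holds, the Trotter--Kato theorem \cite[Theorem VIII.22]{reed_methods_1981} produces the self-adjoint limit $\slice{H^{W'}_P}{\infty}{\infty}$, as in the proof of \lem{lem:ir res conv}. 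The essential analytic inputs are: with $\Im z_0=1$ one has $\|(\slice{H'_P}{n}{m}-z_0)^{-1}\|\le 1$; the log-free a priori bound of \lem{lem:ir a priori} makes $\|H_{P,0}^{1/2}(\slice{H'_P}{n}{m}-z_0)^{-1/2}\|$ uniformly bounded in $n,m$; and the dressings $W_m(Q)$, being Weyl operators with real symbol supported on boson frequencies $<\kappa$, commute with every operator supported on frequencies $>\kappa$, in particular with $\slice{B}{n}{0}$ and $\slice{B^\ast}{n}{0}$.

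I would telescope along the diagonal path $(n(m-1),m-1)\to(n(m),m-1)\to(n(m),m)$, i.e. a run of single ultraviolet steps $n-1\to n$ at the frozen infrared index $\ell:=m-1$, followed by one infrared step $m-1\to m$ at the frozen ultraviolet index $n:=n(m)$. For one ultraviolet step, set $Q_j:=\nabla\slice{E'_P}{j}{\ell}$ and $f_n:=\alpha_\ell(Q_n,\cdot)-\alpha_\ell(Q_{n-1},\cdot)$, and, using $W_\ell(Q_n)W_\ell(Q_{n-1})^\ast=W(f_n)$, write
\begin{multline*}
 (\slice{H^{W'}_P}{n}{\ell}-z_0)^{-1}-(\slice{H^{W'}_P}{n-1}{\ell}-z_0)^{-1}\\
 = W_\ell(Q_{n-1})\Big(\big[(\slice{H'_P}{n}{\ell}-z_0)^{-1}-(\slice{H'_P}{n-1}{\ell}-z_0)^{-1}\big]+\big[W(f_n),(\slice{H'_P}{n}{\ell}-z_0)^{-1}\big]\Big)W_\ell(Q_{n-1})^\ast.
\end{multline*}
The first bracket is bounded in norm by $C|g|(n/\beta^n)^{1/2}$ via the Neumann expansion in $\slice{\Delta H'_P}{n}{n-1}$, exactly as in \cor{lem:resolvent diff} and \eqn{eq:YetAnotherRes}, now with $z_0$ at distance one from the real axis rather than $z\in\Gamma_n$. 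The second bracket equals, up to a sign, $(\slice{H'_P}{n}{\ell}-z_0)^{-1}[\slice{H'_P}{n}{\ell},W(f_n)](\slice{H'_P}{n}{\ell}-z_0)^{-1}$, and the key point is that under the shift $b(k)\mapsto b(k)+f_n(k)$ every $f_n$-dependent term of $W(f_n)^\ast\slice{H'_P}{n}{\ell}W(f_n)-\slice{H'_P}{n}{\ell}$ carries a compensating factor $\omega(k)$ or $k$ — from $H^f$, from $(P-P^f)^2$, and from the cross terms $(P-P^f)\cdot\slice{B}{n}{0}$, $\slice{B^\ast}{n}{0}\cdot(P-P^f)$ — except for a single pure c-number, coming from $g\slice{\Phi}{0}{\ell}$, equal to $2g\int\rho\,f_n$ and bounded uniformly since $\rho\,f_n\in L^1(\cal B_\kappa)$. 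Consequently no infrared enhancement $\tau_\ell^{-1/2}$ appears; moving each occurrence of $\slice{B^\ast}{n}{0}$ onto the adjoint resolvent as $\slice{B}{n}{0}$ (log-free by \eqn{eqn:uv standard ineq}) wherever possible, and bounding the one unavoidable factor $(\slice{H'_P}{n}{\ell}-z_0)^{-1}\slice{B}{n}{0}$ by $|g|C(\ln\sigma_n)^{1/2}$ (again via \eqn{eqn:uv standard ineq} applied to its adjoint $\slice{B^\ast}{n}{0}(\slice{H'_P}{n}{\ell}-\bar z_0)^{-1}$), this commutator is $\le C|g|(\ln\sigma_n)^{1/2}|Q_n-Q_{n-1}|$. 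By \thm{thm:IRUV induction}(i), applicable since $n-1\ge n(m-1)=\alpha'(m-1)>\alpha\ell$, we get $|Q_n-Q_{n-1}|\le K^{3\ell+1}(n/(\beta^n\gamma^\ell))^{1/2}$, so a single ultraviolet step is $\le C|g|K^{3\ell+1}(n\ln\sigma_n/(\beta^n\gamma^\ell))^{1/2}$.

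For the single infrared step at $n=n(m)$ I would reuse the apparatus of \sct{sec:W ground states}: passing through $\slice{\widetilde H^{W'}_P}{n}{m}$ via \eqn{eq:Htilde_H}--\eqn{eqn:tilde h}, the resolvent difference splits into the dressing change $[W(f),(\slice{\widetilde H^{W'}_P}{n}{m}-z_0)^{-1}]$ with $f=\alpha_m(\nabla\slice{E'_P}{n}{m},\cdot)-\alpha_m(\nabla\slice{E'_P}{n}{m-1},\cdot)$ — handled exactly as the ultraviolet dressing change above and bounded by $C|g|(\ln\sigma_n)^{1/2}|\nabla\slice{E'_P}{n}{m}-\nabla\slice{E'_P}{n}{m-1}|$, where $|\nabla\slice{E'_P}{n}{m}-\nabla\slice{E'_P}{n}{m-1}|\le C\gamma^{(m-3)/4}$ uniformly in $n$ by \lem{lem:grad E diff}, \lem{lem:leading term} and \thm{thm:key theorem} — plus a Neumann expansion in $\slice{\Delta \widehat H^{W'}_n}{m-1}{m}$ together with the c-number shift $\widetilde C^{(n)}_{P,m}-C^{(n)}_{P,m-1}$ of size $\le g^2C\tau_{m-1}$ (cf. \eqn{eqn:c est}), whose form factors live in the thin slice $(\tau_m,\tau_{m-1}]$ and with $\slice{\Gamma_P}{n}{m-1}$ being $H_{P,0}$-bounded, so that this contribution is $\le C|g|(\ln\sigma_n)^{1/2}\gamma^{(m-1)/2}$ (again at worst a $(\ln\sigma_n)^{1/2}$ loss from $\slice{B^\ast}{n}{0}$). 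With $n=n(m)=\alpha'm$ one has $(\ln\sigma_{n(m)})^{1/2}\le C\sqrt m$, so the whole infrared step is $\le C|g|\sqrt m\,\gamma^{(m-3)/4}$, which is summable in $m$. Summing the ultraviolet steps over $n$ from $n(m-1)+1$ to $\alpha'm$ (a geometric tail, dominated by $n=\alpha'(m-1)$, and hence of the form $K^{3\ell+1}(\gamma^{-1/2}\beta^{-\alpha'/2})^\ell$ up to factors polynomial in $\ell$) and then over $m$ gives a convergent series precisely when $\beta^{\alpha'}\gamma>K^6$; this inequality determines $\bar\alpha:=(6\ln K+|\ln\gamma|)/\ln\beta\,(\ge\alpha)$ and is exactly why the faster scaling $n(m)=\alpha'm$ with integer $\alpha'>\bar\alpha$ is needed. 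I expect the hard part to be precisely the coupled ultraviolet--infrared ``dressing-change'' commutator $[\slice{H'_P}{n}{\ell},W(f_n)]$: one must verify that the infrared weight $\rho(k)/\omega(k)$ in $f_n$ is always dominated by the $\omega(k)$ or $k$ supplied by the kinetic and cross terms, so that the only price is a harmless $(\ln\sigma_n)^{1/2}$ and the ultraviolet gain $\beta^{-n}$ still wins after summation against the $K^{3\ell}\gamma^{-\ell/2}$ growth coming from \thm{thm:IRUV induction}.
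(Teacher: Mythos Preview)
Your proposal is correct and follows the same architecture as the paper's proof: fix $z$ with $|\Im z|\ge 1$, telescope along the diagonal, split each step into a Hamiltonian-change piece and a dressing-change piece, and invoke Trotter--Kato. Two differences are worth noting. First, you telescope UV-then-IR whereas the paper does IR-then-UV; this is cosmetic. Second, your treatment of the dressing-change commutator $[\slice{H'_P}{n}{\ell},W(f_n)]$ is sharper: the paper does not exploit the $\omega(k)$/$k$ compensation you identify and simply accepts a $\tau_m^{-1/2}$ loss, bounding the term by $C|g|\,\tau_m^{-1/2}\,|\nabla\slice{E'_P}{l}{m}-\nabla\slice{E'_P}{l-1}{m}|$ times uniformly bounded resolvent factors. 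This cruder estimate forces the stronger summability condition $K^3/(\gamma\beta^{\alpha'/2})<1$, i.e.\ $\bar\alpha\sim(6\ln K+2|\ln\gamma|)/\ln\beta$, while your argument gives the smaller $\bar\alpha=(6\ln K+|\ln\gamma|)/\ln\beta$. The gain is only quantitative (a smaller $\alpha_{\min}$ in \thm{thm:ir}), but the observation that every $f_n$-dependent term inherits a compensating momentum weight is a genuine simplification. Your extra $(\ln\sigma_n)^{1/2}$ in the infrared step is unnecessary --- \lem{lem:a priori} is log-free and the paper's bound \eqn{eqn:res3} has no such factor --- but it does no harm to the summability.
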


\begin{proof}
The convergence of the resolvent of $\slice{H^{W'}_P}{n(m)}{m}$ consists of direct applications of results of \sct{sec:main proof}, \sct{sec:W ground states} and the present section. Let $z=i\lambda$ with $|\lambda|>1$. First, we observe that for all $m\in\bb N$ the range of $(\slice{H^{W'}_P}{n(m)}{m}-z)^{-1}$ equals $D(H_{P,0})$ which is dense in $\cal F$.
By the Trotter-Kato Theorem \cite[Theorem VIII.22]{reed_methods_1981} it suffices to prove that the family of resolvents $([\slice{H^{W'}_P}{n(m)}{m}-z]^{-1})_{m\in\bb N}$ is convergent. We begin with
\begin{multline*}
\left\Vert \frac{1}{\slice{H^{W'}_P}{l}{m}-z} - \frac{1}{\slice{H^{W'}_P}{l-1}{m}-z} \right\Vert 
\leq \left\Vert \frac{1}{\slice{H_P'}{l}{m} -z} - \frac{1}{\slice{H_P'}{l-1}{m}-z} \right\Vert \\
+ \left\Vert \frac{1}{W_m(\nabla\slice{E'_P}{l}{m})^*\slice{H_P'}{l-1}{m}W_m(\nabla\slice{E'_P}{l}{m}) -z} - \frac{1}{W_m(\nabla\slice{E'_P}{l-1}{m})^*\slice{H_P'}{l-1}{m}W_m(\nabla\slice{E'_P}{l-1}{m})-z} \right\Vert
\end{multline*}
where we used unitarity of $W_{m}$ in the first line. Mimicking \cor{lem:resolvent diff}, the first term is bounded above by
\begin{equation*}
C|g| \left(\frac{l}{\beta^l}\right)^{1/2}.
\end{equation*}
With the standard inequalities, the second term is bounded by
\begin{equation*}
C|g|\left\Vert\frac{1}{(\slice{H_P}{l-1}{m}-z)^{1/2}} \right\Vert \cdot 
\left\Vert(H^f)^{1/2}\frac{1}{(\slice{H_P}{l-1}{m}-z)^{1/2}} \right\Vert\cdot 
\frac{1}{\tau_m^{1/2}}\left \vert \nabla\slice{E'_P}{l}{m} - \nabla\slice{E'_P}{l-1}{m} \right\vert,
\end{equation*}
which can be further bounded by
\begin{equation*}
C|g|\frac{1}{|\Im z|}\gamma^{-m/2} K^{3m+1}\left(\frac{l}{\beta^l\gamma^m}\right)^{1/2}
\end{equation*}
with the help of \lem{lem:a priori} and \thm{thm:IRUV induction}.
Hence, it holds
\begin{equation}
\left\Vert \frac{1}{\slice{H^{W'}_P}{n(m)}{m}-z} - \frac{1}{\slice{H^{W'}_P}{n(m-1)}{m}-z} \right\Vert 
\leq \alpha' C|g| K \left(\alpha' m\right)^{1/2} \left(\frac{K^3}{\gamma \beta^{\alpha'/2}}\right)^m\label{eqn:res1}
\end{equation}
where 
\begin{equation*}
  \frac{K^3}{\gamma \beta^{\alpha'/2}}<1
\end{equation*}
for $\alpha'\geq\bar\alpha$ and $\bar \alpha$ sufficiently large.\\

Moreover, using the explicit expressions \eqn{eqn:Hw}, \eqn{eqn:tilde H}, \lem{lem:a priori}, the bound
\begin{equation}\label{eqn:fin_e_diff}
\left\vert \nabla\slice{E'_P}{n}{m} - \nabla\slice{E'_P}{n}{m-1} \right \vert \leq C\gamma^{m/4}
\end{equation}
at fixed $n$ from \lem{lem:grad E diff}, and a resolvent expansion one can show that
\begin{multline}
\left\Vert \frac{1}{\slice{H^{W'}_P}{n(m-1)}{m}-z} - \frac{1}{\slice{\widetilde{H}^{W'}_P}{n(m-1)}{m}-z} \right\Vert \\
\leq \frac{C}{|\Im z|} \left\Vert \left(\frac{1}{\slice{H^{W'}_P}{n(m-1)}{m}-z}\right)^{1/2} \left[ \slice{\widetilde{H}^{W'}_P}{n(m-1)}{m}-\slice{H^{W'}_P}{n(m-1)}{m}\right] \left(\frac{1}{\slice{H^{W'}_P}{n(m-1)}{m}-z}\right)^{1/2}\right\Vert\label{eqn:res2}
\end{multline}
where the right-hand side in \eqn{eqn:res2} can be controlled in terms of \eqn{eqn:fin_e_diff}.\\

Furthermore, we observe that
\begin{equation}
\left\Vert \frac{1}{\slice{\widetilde{H}^{W'}_P}{n(m-1)}{m}-z} - \frac{1}{\slice{H^{W'}_P}{n(m-1)}{m-1}-z} \right\Vert\leq \frac{C|g|\gamma^{(m-1)/2}}{|\Im z|}.\label{eqn:res3}
\end{equation}
by operator estimates similar to those used to control (\ref{eqn:ir diff 2}).\\
 
 Finally, for $\alpha'\geq\bar\alpha$ and $\bar \alpha$ sufficiently large, the estimates in \eqn{eqn:res1},\eqn{eqn:res2} and \eqn{eqn:res3} imply that the family of resolvents $([\slice{H^{W'}_P}{n(m)}{m}-z]^{-1})_{m\in\bb N}$ is a Cauchy sequence in the norm topology, which concludes the proof.
\end{proof}

We can now prove the second main result, namely the convergence of the ground state vectors $\slice{\phi_P}{n}{m}$ as $n,m\to\infty$ with $n\equiv n(m)$.

\begin{proof}[Proof of \thm{thm:ir} in \sct{sec:main results}]\mbox{}
\begin{enumerate}[(i)]
 \item  Define
\begin{align}\label{eqn:alpha}
 \alpha_{\mathrm{min}} :=\max\left\{\left|\frac{6\ln K -\ln|\gamma|}{\ln\beta}\right|,\bar\alpha\right\}.
\end{align}
For any $\bb N\ni \alpha' > \alpha_\mathrm{min}$, let $n(m)=\alpha' m$. By \thm{thm:key theorem} and \cor{cor:state conv} we can estimate
\begin{align*}
 \|\slice{\phi_P}{n(m)}{m}-\slice{\phi_P}{n(m-1)}{m-1}\|&\leq \|\slice{\phi_P}{n(m-1)}{m-1}-\slice{\phi_P}{n(m-1)}{m}\|+\sum_{l=\alpha'(m-1)}^{\alpha'm}\|\slice{\phi_P}{l}{m}-\slice{\phi_P}{l-1}{m}\| \nonumber\\
 &\leq m\gamma^{\frac{m-1}{4}}
 +\alpha' \left[ CmK^{3m+1} \left(\frac{\alpha'm}{\beta^{\alpha'(m-1)\gamma^m}}\right)^{1/2} \right]\nonumber\\
 &\leq m\gamma^{\frac{m-1}{4}} + m^{3/2} {\alpha'}^{3/2}C K \beta^{\alpha'/2} \left( \frac{K^3}{(\beta^{\alpha'}\gamma)^{1/2}} \right)^m 
\end{align*}
Due to \eqn{eqn:alpha} the term $\frac{K^3}{(\beta^{\alpha'}\gamma)^{1/2}}<1$ so that $(\slice{\phi_P}{n(m)}{m})_{m\in\bb N}$ is a Cauchy sequence. We denote its limit by $\slice{\phi_{P}}{\infty}{\infty}$.  Finally \thm{thm:key theorem} ensures that the vector  $\slice{\phi_{P}}{\infty}{\infty}$ has norm larger than $\frac{1}{2}$.
 \item Let $\slice{E'_P}{\infty}{\infty}:=\lim_{m\to\infty}\slice{E'}{\infty}{m}$ which exists by \cor{cor:ground state energies}. By \lem{lem:res conv-2} and (i), $\slice{E'_P}{\infty}{\infty}$ is the eigenvalue corresponding to the eigenvector $\slice{\phi_P}{\infty}{\infty}$ of $\slice{H^{W'}_P}{\infty}{\infty}$. Furthermore,
\begin{align*}
 \spec{\slice{H^{W'}_P}{n}{m}}=\spec{\slice{H'_P}{n}{m}}\subseteq[\slice{E'_P}{n}{m},\infty).
\end{align*}
By the nonexpansion property of the norm resolvent convergence for self-adjoint operators \cite[Theorem VIII.24]{reed_methods_1981}, this implies that $\slice{\phi_P}{\infty}{\infty}$  is ground state of $\slice{H^{W'}_P}{\infty}{\infty}$ and $\slice{E'_P}{\infty}{\infty}$ is the ground state energy.
\end{enumerate}
\end{proof}

\appendix

\section{Proofs of  \lem{lem:a priori} and \cor{cor:ground state energies}}\label{sec:Apriori}

\begin{proof}[Proof of \lem{lem:a priori}]
Let $\psi\in D(H_{P,0}^{1/2})$. We start with the identity
\begin{align}\label{eqn:h0}
  \braket{\psi,H_{P,0}\psi}=\braket{\psi,\slice{H'_P}{n}{m}\psi}-\braket{\psi,\slice{\Delta H_P'}{n}{0}\psi}-\braket{\psi,g\slice{\Phi}{0}{m}\psi}
\end{align}
where
\begin{align*}
 \braket{\psi,\slice{\Delta H_P'}{n}{0}\psi}&=\braket{\psi,\left[\frac{1}{2}\left((\slice{B}{n}{0})^2+({{\slice{B^*}{n}{0}}})^2\right)+{\slice{B^*}{n}{0}}\cdot\slice{B}{n}{0}-(P-P^f)\cdot\slice{B}{n}{0}-\slice{B^*}{n}{0}\cdot(P-P^f)\right]\psi}\\
 &=\Re\left[\braket{\psi,(\slice{B}{n}{0})^2\psi}+\braket{\slice{B}{n}{0}\psi,\slice{B}{n}{0}\psi}-2\braket{(P-P^f)\psi,\slice{B}{n}{0}\psi}\right].
\end{align*}
We denote the number operator of bosons in the momentum range $[\kappa,\sigma_n)$ by
\[\slice{N}{n}{0}:=\int_{\cal B_{\sigma_n}\setminus\cal B_\kappa} dk\;b(k)^*b(k)\]
and express the vector $\psi\in\cal F$ as a sequence $(\psi^j)_{j\geq 0}$ of $j$-particle wave functions $\psi^j\in L^2(\bb R^{3j},\bb C)$, $j\geq 1$, and $\psi^0\in\bb C$.
Following \cite[Proof of Lemma 5]{nelson_interaction_1964} it is convenient to consider an estimate of the following type
\begin{align}
 \Re\braket{\psi,(\slice{B}{n}{0})^2\psi}&=\Re\braket{(\slice{N}{n}{0}+3)^{1/2}\psi,(\slice{N}{n}{0}+3)^{-1/2}(\slice{B}{n}{0})^2\psi}\nonumber\\
 &\leq \left\|(\slice{N}{n}{0}+3)^{1/2}\psi\right\| \left\|(\slice{N}{n}{0}+3)^{-1/2}(\slice{B}{n}{0})^2\psi\right\|.\label{eqn:h0_1}
\end{align}
We consider the two norms in \eqn{eqn:h0_1} separately. For $I\subset\bb R^{+}_{0}$ let $\charf{I}(k)\equiv\charf{I}(|k|)$ denote the characteristic function of $I$. Schwarz's inequality gives
\begin{multline}
  \left\Vert \left(N|_{0}^{n}+3\right)^{-1/2}\left(B|_{0}^{n}\right)^{2}\psi\right\Vert ^{2} \\
\leq  c_1g^{4}\sum_{j=0}^{\infty}\int dk_{1}\ldots\int dk_{j+2}\frac{(j+1)(j+2)\omega(k_{j+1})^{1/2}\charf{[\kappa,\infty)}(k_{j+1})\omega(k_{j+2})^{1/2}\charf{[\kappa,\infty)}(k_{j+2})}{\sum_{i=1}^{j}\charf{[\kappa,\infty)}(k_{i})+3} \times \\
\times\left|\psi^{(j+2)}(k_{1}\ldots k_{j+2}) \right|^{2} \\
=  c_1g^{4}\sum_{j=0}^{\infty}\int dk_{1}\ldots\int dk_{j+2}\frac{(j+1)(j+2)\omega(k_{j+1})^{1/2}\omega(k_{j+2})^{1/2}\charf{[\kappa,\infty)}(k_{j+1})\charf{[\kappa,\infty)}(k_{j+2})}{\sum_{i=1}^{j+2}\charf{[\kappa,\infty)}(k_{i})+1} \times \\
\times \left|\psi^{(j+2)}(k_{1}\ldots k_{j+2})\right|^{2} \\
\leq  c_1g^{4}\sum_{j=0}^{\infty}\int dk_{1}\ldots\int dk_{j+2}(j+1)(j+2)\frac{1}{2}\left[\omega(k_{j+1})\charf{[\kappa,\infty)}(k_{j+2})+\omega(k_{j+2})\charf{[\kappa,\infty)}(k_{j+1})\right] \times \\
\times\frac{\left|\psi^{(j+2)}(k_{1}\ldots k_{j+2}) \right|^{2}}{\sum_{i=1}^{j+2}\charf{[\kappa,\infty)}(k_{i})+1}\label{eqn:Nest}.
\end{multline}
for an $n$-independent and finite constant
\[c_1:=\left(\int dk\;\left|k\frac{\rho(k)}{\frac{|k|^2}{2}+\omega(k)}\frac{\charf{[\kappa,\infty)}(k)}{\omega(k)^{1/4}}\right|^2\right)^{1/2}.\]
Using the symmetry we get
\begin{align*}
\eqn{eqn:Nest} = & g^{4}c_1\sum_{j=0}^{\infty}\int dk_{1}\ldots\int dk_{j+2}\sum_{l=1}^{j+2}\sum_{m\neq l}\omega(k_{l})\charf{[\kappa,\infty)}(k_{m})\frac{\left|\psi^{(j+2)}(k_{1}\ldots k_{j+2}) \right|^{2}}{\sum_{i=1}^{j+2}\charf{[\kappa,\infty)}(k_{i})+1}\\
\leq & g^{4}c_1\sum_{j=0}^{\infty}\int dk_{1}\ldots\int dk_{j+2}\left[\sum_{l=1}^{j+2}\omega(k_{l})\right]\frac{\sum_{m=1}^{j+2}\charf{[\kappa,\infty)}(k_{m})}{\sum_{i=1}^{j+2}\charf{[\kappa,\infty)}(k_{i})+1}\left|\psi^{(j+2)}(k_{1}\ldots k_{j+2}) \right|^{2}\\
\leq & g^{4}c_1\left\Vert \left(H^{f}\right)^{1/2}\psi\right\Vert ^{2}.
\end{align*}
For the remaining term in \eqn{eqn:h0_1} we compute
\begin{align}
 \braket{\psi,(\slice{N}{n}{0}+3)\psi}\leq \frac{1}{\kappa}\braket{\psi,H^f\psi}+3\braket{\psi,\psi}.\label{eqn:h0_3}
\end{align}
Moreover, we estimate
\begin{align}
 \left|\braket{\psi,(P-P^f)\slice{B}{n}{0}\psi}\right|&\leq \|(P-P^f)\psi\|\;\|\slice{B}{n}{0}\psi\|\leq \sqrt{2}\|H_{P,0}^{1/2}\psi\|\;\|\slice{B}{n}{0}\psi\| \label{eqn:h0_4}
\end{align}
where by the standard inequalitites in \eqn{eqn:uv standard ineq}
\begin{align}
 \|\slice{B}{n}{0}\psi\|&\leq |g| c_2\|(H^f)^{1/2}\psi\|\label{eqn:h0_5}
\end{align}
holds true for an $n$-independent and finite constant
\[c_2:=\left(\int dk\; \left|k\frac{\rho(k)}{\frac{|k|^2}{2}+\omega(k)}\frac{\charf{[\kappa,\infty)}(k)}{\omega(k)^{1/2}}\right|^2\right)^{1/2}.\]
Finally, using the standard inequalities in \eqn{eqn:standard ineq} again, we find
\begin{align}
 \left|\braket{\psi,g\slice{\Phi}{0}{m}\psi}\right|\leq 2 |g| c_3\;\|\psi\|\;\|(H^f)^{1/2}\psi\|\leq |g|c_3\left(\braket{\psi,H_{P,0}\psi}+\braket{\psi,\psi}\right)\label{eqn:h0_6}
\end{align}
for an $m$-independent and finite constant
\begin{align*}
 c_3:=\left(\int dk\;\left|\frac{\rho(k)\charf{[0,\kappa)}(k)}{\omega(k)^{1/2}}\right|\right)^{1/2}
\end{align*}

Hence, for $|g|\leq 1$ the identity \eqn{eqn:h0} and the estimates \eqn{eqn:h0_1}-\eqn{eqn:h0_6} yield the bound
\begin{align}
 \left|\braket{\psi,\slice{\Delta H_P'}{n}{0}\psi}\right|+\left|\braket{\psi,g\slice{\Phi}{0}{m}\psi}\right|&\leq |g|\left[\namer{a}\braket{\psi,H_{P,0}\psi}+\namer{b}\braket{\psi,\psi}\right]\label{eqn:h0ineq}
\end{align}
for $m$ and $n$-independent positive constants $\namer{a}$ and $\namer{b}$.
For $|g|<\frac{1}{\namer{a}}$ inequality \eqn{eqn:h0ineq} proves the claim.
\end{proof}

\begin{proof}[Proof of \cor{cor:ground state energies}]\mbox{}
\begin{enumerate}[(i)]
 \item We note that $\slice{E'_P}{n}{m}\leq \braket{\Omega,\slice{H'_P}{n}{m}\Omega}=\frac{P^2}{2}$ and, furthermore, by applying \lem{lem:a priori} we observe that for any $\phi \in D(H_{P,0}^{1/2})$, $\|\phi\|=1$,
\begin{align*}
 0\leq (1-|g|\namer{a})\braket{\phi,H_{P,0}\phi}\leq \braket{\phi,\slice{H'_P}{n}{m}\phi}+|g|\namer{b}.
\end{align*}
 \item First we study  the case $|k|<1$ where we follow a strategy similar to \cite[Section VI]{chen_infraparticle_2009}: 
\begin{align*}
\slice{E'_{P-k}}{n}{m}-\slice{E'_{P}}{n}{m} &= \inf_{\|\varphi\|=1}\left[\braket{\varphi,(\slice{H'_{P-k}}{n}{m}-\slice{H'_{P}}{n}{m})\varphi}+\braket{\varphi,\slice{H'_{P}}{n}{m}\varphi}-\slice{E'_{P}}{n}{m}\right]\\
 &\geq \inf_{\|\varphi\|=1}\left[\frac{k^2}{2}-|k|\;|\braket{\varphi,(P-P^f+\slice{B}{n}{0}+\slice{B^*}{n}{0})\varphi}|+\braket{\varphi,\slice{H'_{P}}{n}{m}\varphi}-\slice{E'_P}{n}{m}\right]
\end{align*}
 where the infimum is meant to be taken over $\varphi\in D(H_{P,0}^{1/2})\cap\slice{\cal F}{n}{m}$ only. By the standard estimates \eqn{eqn:uv standard ineq} we get
\begin{align}\label{eqn:grad E estimate}
  |\braket{\varphi,(P-P^f+\slice{B}{n}{0}+\slice{B^*}{n}{0})\varphi}|\leq (\sqrt 2+2|g|C)\|H^{1/2}_{P,0}\varphi\|
\end{align}
 where $C$ does not depend on $n$ since $\slice{B^*}{n}{0}$ can be seen to act to the left as $\slice{B}{n}{0}$ and the integral in \eqn{eqn:uv standard ineq} converges for any $n\in\bb N\cup\{\infty\}$. Using \lem{lem:a priori} it turns out that  $\slice{E'_{P-k}}{n}{m}-\slice{E'_{P}}{n}{m}$ is bounded from below by
\begin{multline*}
\inf_{\|\varphi\|=1}\left[\frac{k^2}{2}-|k|\;\frac{\sqrt 2+2C|g|}{\sqrt{1-|g|\namer{a}}}\sqrt{\braket{\varphi,\slice{H'_{P}}{n}{m}\varphi}+|g|\namer{b}}+\braket{\varphi,\slice{H'_{P}}{n}{m}\varphi}-\slice{E'_{P}}{n}{m}\right]\\
\geq \inf_{\lambda\geq 0}\left[\frac{k^2}{2}-|k|\;\frac{\sqrt 2+2C|g|}{\sqrt{1-|g|\namer{a}}}\sqrt{\lambda+\slice{E'_{P}}{n}{m}+|g|\namer{b}}+\lambda\right]=:\inf_{\lambda\geq 0}f(\lambda)
\end{multline*}
where 
\begin{equation}
f(\lambda):=\frac{k^2}{2}-|k|\;\frac{\sqrt 2+2C|g|}{\sqrt{1-|g|\namer{a}}}\sqrt{\lambda+\slice{E'_{P}}{n}{m}+|g|\namer{b}}+\lambda
\end{equation}
The infimum can be attained either at $\lambda^*=0$ or at $\lambda^*$ such that $f'(\lambda^*)=0$, i.e.
\begin{equation}\label{expression}
\lambda^*=\frac{|k|^2}{4}\frac{(\sqrt 2+2C|g|)^2}{{1-|g|\namer{a}}}-(\slice{E'_{P}}{n}{m}+|g|\namer{b})
\end{equation}

\begin{enumerate}
 \item[Case $\lambda^*=0$:] Since
\begin{align*}
 f(0)\geq-|k|\frac{\sqrt 2+2C|g|}{\sqrt{1-|g|\namer{a}}}\sqrt{\slice{E'_{P}}{n}{m}+|g|\namer{b}}
\end{align*}
and, by claim (ii),
\begin{align*}
0\leq \slice{E'_{P}}{n}{m}+|g|\namer{b}\leq \frac{P^2}{2}+|g|\namer{b}\leq \frac{P^2_{\mathrm{max}}}{2}+|g|\namer{b}\,,
\end{align*}
we obtain the lower bound
\begin{align}\label{eqn:b1}
 f(0)\geq -|k|\frac{\sqrt 2+2C|g|}{\sqrt{1-|g|\namer{a}}}\left(\frac{P_{\mathrm{max}}}{\sqrt 2}+\cal O(|g|)\right)=-|k|P_{\mathrm{max}}\left(1+\cal O(|g|)\right).
\end{align}
 \item[Case $\lambda^*>0$:] To evaluate
\begin{align*}
 f(\lambda^*)=\frac{k^2}{2}\left(1-\frac{1}{2}\frac{(\sqrt 2+2C|g|)^2}{1-|g|\namer{a}}\right)-(\slice{E'_{P}}{n}{m}+|g|\namer{b})
\end{align*}
we consider that $\lambda^* $ given in (\ref{expression}) is assumed to be larger than zero. This implies that
\begin{align}\label{eqn:b2}
 f(\lambda^*)>\frac{k^2}{2}\left(1-\frac{(\sqrt 2+2C|g|)^2}{1-|g|\namer{a}}\right)=-k^2\left(\frac{1}{2}+\cal O(g)\right) > -|k|\left(\frac{1}{2}+\cal O(g)\right)
\end{align}
where we have used that $|k|<1$.
\end{enumerate}
Recall that $P_{\mathrm{max}}=\frac{1}{4}$. Therefore, taking the minimum of both lower bounds \eqn{eqn:b1} and \eqn{eqn:b2} for $|g|$ sufficiently small proves that, for all $|k|<1$,
\begin{align}\label{main}
 \slice{E'_{P-k}}{n}{m}-\slice{E'_{P}}{n}{m}\geq -c|k|\,,
\end{align}
for any $c>\frac{1}{2}$, and in particular for $c=\namer{c:cp} := \frac{3}{4}$.
 
For the case $|k|\geq 1$ \thm{thm:ground state energies} implies:
\begin{align}
 \slice{E'_{P-k}}{n}{m}-\slice{E'_{P}}{n}{m}&=(\slice{E'_{P-k}}{n}{m}-\slice{E'_{0}}{n}{m})+(\slice{E'_{0}}{n}{m}-\slice{E'_{P}}{n}{m}) \geq \slice{E'_{0}}{n}{m}-\slice{E'_{P}}{n}{m}\label{eqn:case k big 1}\\
 &\geq -\namer{c:cp}|P_{\mathrm{max}}|\geq -\namer{c:cp}|k|\label{eqn:case k big 2},
\end{align}
where the step from \eqn{eqn:case k big 1} to \eqn{eqn:case k big 2} is justified by invoking the result in the case $|k|<1$, i.e., by replacing   $k=P$ in (\ref{main}) .
 \item Let $\slice{\Psi'_P}{n}{m}$ be the eigenvector corresponding to $ \slice{E'_P}{n}{m}$, then we get
\begin{align*}
 \slice{E'_P}{n+1}{m}\leq \braket{\frac{\slice{\Psi'_P}{n}{m}}{\|\slice{\Psi'_P}{n}{m}\|}\otimes\Omega,[\slice{H'_{P}}{n}{0}+\slice{\Delta H_P'}{n+1}{n}+g\slice{\Phi}{0}{m}]\frac{\slice{\Psi'_P}{n}{m}}{\|\slice{\Psi'_P}{n}{m}\|}\otimes\Omega}=\braket{\frac{\slice{\Psi'_P}{n}{m}}{\|\slice{\Psi'_P}{n}{m}\|},\slice{H'_{P}}{n}{0}\frac{\slice{\Psi'_P}{n}{m}}{\|\slice{\Psi'_P}{n}{m}\|}}=\slice{E'_P}{n}{m}
\end{align*}
as well as
\begin{align*}
 \slice{E'_P}{n}{m+1}\leq \braket{\frac{\slice{\Psi'_P}{n}{m}}{\|\slice{\Psi'_P}{n}{m}\|}\otimes\Omega,[\slice{H'_{P}}{n}{m}+g\slice{\Phi}{m}{m+1}]\frac{\slice{\Psi'_P}{n}{m}}{\|\slice{\Psi'_P}{n}{m}\|}\otimes\Omega}=\braket{\frac{\slice{\Psi'_P}{n}{m}}{\|\slice{\Psi'_P}{n}{m}\|},\slice{H'_{P}}{n}{m}\frac{\slice{\Psi'_P}{n}{m}}{\|\slice{\Psi'_P}{n}{m}\|}}=\slice{E'_P}{n}{m}.
\end{align*}

\end{enumerate}
\end{proof}

\section{Transformed Hamiltonians: derivation of  identities  \eqn{eqn:Hw}, \eqn{eqn:tilde H} and \eqn{eqn:gamma diff} }\label{sec:formal}

\begin{proof}[Derivation of identity \eqn{eqn:Hw}] 
 Let $n,m\in\bb N$. Recalling \eqn{eqn:G trafo Hamiltonain} we can start with the expression
\begin{align*}
 \slice{H'_P}{n}{m}&=\frac{1}{2}\left(P-P^f\right)^2 + H^f+\frac{1}{2}[(\slice{B}{n}{0})^2+(\slice{B^*}{n}{0})^2]+\slice{B^*}{n}{0}\cdot \slice{B}{n}{0}\\
 &\quad-(P-P^f)\cdot \slice{B}{n}{0}-\slice{B^*}{n}{0}\cdot (P-P^f)+ g\slice{\Phi}{0}{m}.
\end{align*}
This Hamiltonian can be written in the form
\begin{align*}
 \slice{H'_P}{n}{m}&=\frac{1}{2}\left(P-P^f-\slice{B}{n}{0}-\slice{B^*}{n}{0}\right)^2 + H^f + g\slice{\Phi}{0}{m} + S_{P,n}
\end{align*}
where we collected terms acting in the ultraviolet region in
\begin{align*}
 S_{P,n} := -\frac{1}{2}\left([\slice{B}{n}{0},P-P^f]+[P-P^f,\slice{B^*}{n}{0}]+[\slice{B}{n}{0},\slice{B^*}{n}{0}]\right).
\end{align*}
The conjugation by $W_m(\nabla \slice{E'_P}{n}{m})$ on these various terms reads
\begin{align}\label{eqn:transformation formulae}
\begin{split}
  W_m(\nabla \slice{E'_P}{n}{m})\; P^f \;W_m(\nabla \slice{E'_P}{n}{m})^*&=P^f+A_{P,m}^{(n)}+C^{(k,n)}_{P,m} \\
 W_m(\nabla \slice{E'_P}{n}{m}) \;H^f\; W_m(\nabla \slice{E'_P}{n}{m})^*&=H^f+L_{P,m}^{(n)}+C^{(\omega,n)}_{P,m} \\
 W_m(\nabla \slice{E'_P}{n}{m}) \;\slice{\Phi}{0}{m}\; W_m(\nabla \slice{E'_P}{n}{m})^*&=\slice{\Phi}{0}{m}+C^{(\rho,n)}_{P,m}\\
 W_m(\nabla \slice{E'_P}{n}{m}) \;S_{P,n}\; W_m(\nabla \slice{E'_P}{n}{m})^*&=S_{P,n}
\end{split}
\end{align}
for
\begin{align*}
 L_{P,m}^{(n)} &:= \int dk\;\omega(k)\;\alpha_m(\nabla\slice{E'_P}{n}{m},k)[b(k)+b^*(k)].
\end{align*}
and $A_{P,m}^{(n)},C^{(k,n)}_{P,m},C^{(\omega,n)}_{P,m},C^{(\rho,n)}_{P,m}$ given in equations \eqn{eqn:def A Ck}.

Using these formulae we find
\begin{align*}
 W_m(\nabla \slice{E'_P}{n}{m})\;\slice{H'_P}{n}{m}\;W_m(\nabla \slice{E'_P}{n}{m})^*&=\frac{1}{2}\Big(P-P^f-A_{P,m}^{(n)}-\slice{B}{n}{0}-\slice{B^*}{n}{0}-C^{(k,n)}_{P,m}\Big)^2 \\
 & \quad +\Big(H^f +L_{P,m}^{(n)}+C^{(\omega,n)}_{P,m}\Big) + \Big(g\slice{\Phi}{0}{m}+C^{(\rho)}_m\Big) + S_{P,n}.
\end{align*}
Applying the identity \eqn{eqn:grad e formuar} we further have
\begin{align}
 P&=\nabla \slice{E'_P}{n}{m}+\braket{[P^f+\slice{B}{n}{0}+\slice{B^*}{n}{0}]}_{\slice{\Psi'_{P}}{n}{m}} = \nabla \slice{E'_P}{n}{m}+\braket{P^f+A_{P,m}^{(n)}+\slice{B}{n}{0}+\slice{B^*}{n}{0}}_{W_m\slice{\Psi'_{P}}{n}{m}}+C^{(k,n)}_{P,m}\nonumber\\
 &=\nabla \slice{E'_P}{n}{m}+\braket{\slice{\Pi_P}{n}{m}}_{\slice{\phi_P}{n}{m}}+C^{(k,n)}_{P,m}\,,\label{eqn:exp pi}
\end{align}
so that we obtain
\begin{align*}
 &W_m(\nabla \slice{E'_P}{n}{m})\;\slice{H'_P}{n}{m}\;W_m(\nabla \slice{E'_P}{n}{m})^*\\
 &\qquad=\frac{1}{2}\left(\nabla \slice{E'_P}{n}{m}+\braket{P^f+A_{P,m}^{(n)}+\slice{B}{n}{0}+\slice{B^*}{n}{0}}_{\slice{\phi_P}{n}{m}}-\left(P^f+A_{P,m}^{(n)}+\slice{B}{n}{0}+\slice{B^*}{n}{0}\right)\right)^2 \\
 &\qquad \quad + H^f +L_{P,m}^{(n)}+C^{(\omega,n)}_{P,m} + g\slice{\Phi}{\kappa}{\tau_m} + C^{(\rho)}_m + S_{P,n}\\
 &\qquad=\frac{1}{2}\slice{\Gamma_P}{n}{m}^2 + \frac{1}{2}\nabla \slice{E'_P}{n}{m}^2\\
 &\qquad\quad+\nabla \slice{E'_P}{n}{m} \cdot\left(\braket{P^f+A_{P,m}^{(n)}+\slice{B}{n}{0}+\slice{B^*}{n}{0}}_{\slice{\phi_P}{n}{m}}-\left(P^f+A_{P,m}^{(n)}+\slice{B}{n}{0}+\slice{B^*}{n}{0}\right)\right)\\
 &\qquad \quad + H^f +L_{P,m}^{(n)}+C^{(\omega,n)}_{P,m} + g\slice{\Phi}{\kappa}{\tau_m} + C^{(\rho,n)}_{P,m} + S_{P,n}.
\end{align*}
The transformation $W_m$ was designed to yield the following cancellation
\begin{equation}\label{eqn:cancellation}
 -\nabla \slice{E'_P}{n}{m}\cdot A_{P,m}^{(n)}+L_m+g\slice{\Phi}{\kappa}{\tau_m}=0.
\end{equation}
Hence, using the abbreviations introduced in the beginning of \sct{sec:W ground states}, we finally arrive at the form
\begin{equation}\label{eqn:Hw2}
  \slice{H^{W'}_P}{n}{m}:=W_m(\nabla \slice{E'_P}{n}{m})\;\slice{H'_P}{n}{m}\;W_m(\nabla \slice{E'_P}{n}{m})^* = \frac{1}{2}\slice{\Gamma_P}{n}{m}^2+ H^f-\nabla \slice{E'_P}{n}{m} \cdot P^f + C_{P,m}^{(n)} + \slice{R_P}{n}{m}.
\end{equation}
By analogous methods as in \cite{nelson_interaction_1964} for the ultraviolet region it can then we verified that this equality actually holds on $D(H_{P,0})$.
\end{proof}

\begin{proof}[Derivation of Identity \eqn{eqn:tilde H}]
From the definition of $\slice{\widetilde H_P^{W'}}{n}{m}$, we can write
\begin{equation*}
\slice{\widetilde H_P^{W'}}{n}{m}=W_{m}(\nabla \slice{E'_P}{n}{m-1})\;W_{m-1}(\nabla \slice{E'_P}{n}{m-1})^*\;[\slice{ H_P^{W'}}{n}{m-1}+g\slice{\Phi}{m-1}{m}]\;W_{m-1}(\nabla \slice{E'_P}{n}{m-1})\;W_{m}(\nabla \slice{E'_P}{n}{m-1})^*
\end{equation*} 
which by virtue of the formulae \eqn{eqn:transformation formulae} as well as identity \eqn{eqn:Hw2} gives
\begin{align*}
 \slice{\widetilde H_P^{W'}}{n}{m}&=\frac{1}{2}\left(\slice{\Gamma_P}{n}{m-1}+\widetilde A_{P,m}^{(n)}-A_{P,m-1}^{(n)}+\widetilde C^{(k,n)}_{P,m}-C^{(k,n)}_{P,m-1}\right)^2\\
 &\quad+ H^f + \widetilde L_{P,m}^{(n)}-L_{P,m-1}+\widetilde C^{(\omega,n)}_{P,m}-
C^{(\omega,n)}_{P,m-1}\\
 &\quad- \nabla\slice{E'_P}{n}{m-1}\cdot\left(P^f + \widetilde A_{P,m}^{(n)}-A_{P,m-1}^{(n)}+\widetilde C^{(k,n)}_{P,m}-C^{(k,n)}_{P,m-1}\right)\\
 &\quad+ g\slice{\Phi}{0}{m}+\widetilde C^{(\rho,n)}_{P,m}-C^{(\rho,n)}_{P,m-1} + C_{P,m-1}^{(n)} + \slice{R_P}{n}{m-1}
\end{align*}
for
\begin{align*}
 \widetilde L_{P,m}^{(n)} &:= \int dk\;\omega(k)\;\alpha_m(\nabla\slice{E'_P}{n}{m-1},k)[b(k)+b^*(k)].
\end{align*}
Due to the cancellation \eqn{eqn:cancellation} and
\begin{align*}
 \widetilde C^{(k,n)}_{P,m}=C^{(k,n)}_{P,m-1}-\nabla\slice{E'_P}{n}{m-1}\cdot\left(\widetilde C^{(k,n)}_{P,m}-C^{(k,n)}_{P,m-1}\right)+\widetilde C^{(\omega,n)}_{P,m}-
C^{(\omega,n)}_{P,m-1}+\widetilde C^{(\rho,n)}_{P,m}-
C^{(\rho,n)}_{P,m-1}
\end{align*}
we finally obtain
\begin{equation*} 
  \slice{\widetilde H^{W'}_P}{n}{m}=\frac{1}{2}(\slice{\Gamma_P}{n}{m-1}+\widetilde A_{P,m}^{(n)}-A_{P,m-1}^{(n)}+\widetilde C^{(k,n)}_{P,m}-C^{(k,n)}_{P,m-1})^2 + H^f-\nabla \slice{E'_P}{n}{m-1} \cdot P^f+\widetilde C_{P,m}^{(n)}+\slice{R_P}{n}{m-1}.
\end{equation*}
One can verify that this identity holds on $D(H_{P,0})$.
\end{proof}

\begin{proof}[Derivation of Identity \eqn{eqn:gamma diff}]
By definitions \eqn{eqn:def Gamma} and \eqn{eqn:def tilde Gamma},
\begin{equation*}
  \slice{\widetilde \Gamma_P}{n}{m}-\slice{\Gamma_P}{n}{m-1}= \braket{\slice{\Pi_P}{n}{m}}_{\slice{\phi_P}{n}{m-1}}-\braket{\slice{\widetilde\Pi_P}{n}{m}}_{\slice{\widetilde \phi_P}{n}{m}} +\slice{\widetilde\Pi_P}{n}{m}-\slice{\Pi_P}{n}{m}.
\end{equation*}
so that \eqn{eqn:exp pi} yields
\begin{equation*}
  \slice{\widetilde \Gamma_P}{n}{m}-\slice{\Gamma_P}{n}{m-1}= \nabla \slice{E'_P}{n}{m}-\nabla \slice{E'_P}{n}{m-1} + \widetilde A_{P,m}^{(n)}-A_{P,m-1}^{(n)} + \widetilde C^{(k,n)}_{P,m}-C^{(k,n)}_{P,m-1}.
\end{equation*}
One can verify that this identity holds on $D(H_{P,0})$.
\end{proof}

%

\bibliographystyle{alpha}

\begin{thebibliography}{BCFS07}

\bibitem[AH10]{abdesselam_analyticity_2010}
Abdelmalek Abdesselam and David Hasler.
\newblock Analyticity of the ground state energy for massless nelson models.
\newblock {\em 1008.4628}, August 2010.

\bibitem[BCFS07]{bach_renormalized_2007}
Volker Bach, Thomas Chen, J{\"u}rg Fr{\"o}hlich, and Israel~Michael Sigal.
\newblock The renormalized electron mass in non-relativistic quantum
  electrodynamics.
\newblock {\em Journal of Functional Analysis}, 243(2):426--535, February 2007.

\bibitem[Can71]{cannon_quantum_1971}
John~T. Cannon.
\newblock Quantum field theoretic properties of a model of nelson: Domain and
  eigenvector stability for perturbed linear operators.
\newblock {\em Journal of Functional Analysis}, 8(1):101--152, August 1971.

\bibitem[CFP09]{chen_infraparticle_2009}
Thomas Chen, J{\"u}rg Fr{\"o}hlich, and Alessandro Pizzo.
\newblock Infraparticle scattering states in nonrelativistic quantum
  electrodynamics. {II.} mass shell properties.
\newblock {\em Journal of Mathematical Physics}, 50(1):012103, 2009.

\bibitem[Che08]{chen_infrared_2008}
Thomas Chen.
\newblock Infrared renormalization in non-relativistic {QED} and scaling
  criticality.
\newblock {\em Journal of Functional Analysis}, 254(10):2555--2647, May 2008.

\bibitem[Fr{\"o}73]{fraehlich_infrared_1973}
J{\"u}rg Fr{\"o}hlich.
\newblock On the infrared problem in a model of scalar electrons and massless,
  scalar bosons.
\newblock {\em Ann. Inst. H. Poincar{\'e} Sect. A {(N.S.)}}, 19:1--103, 1973.

\bibitem[Gro62]{gross_particle-like_1962}
Eugene~P. Gross.
\newblock Particle-like solutions in field theory.
\newblock {\em Annals of Physics}, 19(2):219--233, August 1962.

\bibitem[Gro72]{gross_existence_1972}
Leonard Gross.
\newblock Existence and uniqueness of physical ground states.
\newblock {\em Journal of Functional Analysis}, 10(1):52--109, May 1972.

\bibitem[HHS05]{hirokawa_ground_2005}
Masao Hirokawa, Fumio Hiroshima, and Herbert Spohn.
\newblock Ground state for point particles interacting through a massless
  scalar bose field.
\newblock {\em Advances in Mathematics}, 191(2):339--392, March 2005.

\bibitem[Kat95]{kato_perturbation_1995}
Tosio Kato.
\newblock {\em Perturbation Theory for Linear Operators}.
\newblock Springer, February 1995.

\bibitem[Nel64]{nelson_interaction_1964}
Edward Nelson.
\newblock Interaction of nonrelativistic particles with a quantized scalar
  field.
\newblock {\em Journal of Mathematical Physics}, 5(9):1190, 1964.

\bibitem[Piz03]{pizzo_one-particle_2003}
Alessandro Pizzo.
\newblock One-particle (improper) states in nelson's massless model.
\newblock {\em Annales Henri Poincare}, 4(3):439--486, June 2003.

\bibitem[RS81]{reed_methods_1981}
Michael Reed and Barry Simon.
\newblock {\em Methods of Modern Mathematical Physics, Vol. 1: Functional
  Analysis}.
\newblock Academic Pr Inc, {REV} {AND} {ENL.} edition, February 1981.

\end{thebibliography}

\end{document}

